\titleformat{\chapter}[hang]
  {\normalfont\bfseries\fontsize{16pt}{19pt}\selectfont}
  {\thechapter.}{0.75em}{\MakeUppercase}
\titlespacing*{\chapter}{0pt}{1.0\baselineskip}{0.6\baselineskip}
\bfseries\fontsize{14pt}{17pt}\selectfont}
\titlespacing*{\section}{0pt}{0.9\baselineskip}{0.4\baselineskip}
\titlespacing*{\subsection}{0pt}{*0.7}{0.3em}
\declaretheorem[name=Proposition]{prop}
\declaretheorem[name=Corollary]{cor}
\declaretheorem[name=Lemma]{lemma}
\declaretheorem[name=Assumption]{assump}
\declaretheorem[name=Definition,style=definition]{definition}
\declaretheorem[name=Remark,style=remark]{remark}
\declaretheorem[name=Example,style=definition,sibling=definition]{example}
\newcommand{\R}{\mathbb{R}}
\newcommand{\N}{\mathbb{N}}
\newcommand{\diag}{\operatorname{diag}}
\title{Arbitrage-Free Option Price Surfaces via Chebyshev Tensor Bases and a Hamiltonian Fog Post-Fit}
\author{Robert Jenkinson \'Alvarez}
\date{December 1, 2025}
\begin{document}

\makeatletter
    \begin{titlepage}
        \begin{center}
            \vspace{1cm}
            {\huge \bfseries  \@title }\\[15ex] 
            {\Large  \@author}\\[50ex] 
            {\large \@date}
        \end{center}
    \end{titlepage}
\makeatother
\thispagestyle{empty}
\newpage

\thispagestyle{empty}
\newpage


\begin{abstract}
We study the construction of arbitrage-free option price surfaces from noisy bid-ask quotes across strike and maturity. Our starting point is a Chebyshev representation of the call price surface on a warped log-moneyness/maturity rectangle, together with linear sampling and no-arbitrage operators acting on a collocation grid. Static no-arbitrage requirements are enforced as linear inequalities, while the surface is fitted directly to prices via a coverage-seeking quadratic objective that trades off squared band misfit against spectral and transport-inspired regularisation of the Chebyshev coefficients. This yields a strictly convex quadratic program in the modal coefficients, solvable at practical scales with off-the-shelf solvers (OSQP).

On top of the global backbone, we introduce a local post-fit layer based on a discrete fog of risk-neutral densities on a three-dimensional lattice \((m,\tau,u)\) and an associated Hamiltonian-type energy. On each patch of the \((m,\tau)\) plane, the fog variables are coupled to a nodal price field obtained from the baseline surface, yielding a joint convex optimisation problem that reweights noisy quotes and applies noise-aware local corrections while preserving global static no-arbitrage and locality.

The method is designed such that for equity options panels, the combined procedure achieves high inside–spread coverage in stable regimes (in calm years, $98-99\%$ of quotes are priced inside the bid–ask intervals) and low rates of static no–arbitrage violations (below $1\%$). In stressed periods, the fog layer provides a mechanism for controlled leakage outside the band: when local quotes are mutually inconsistent or unusually noisy, the optimiser allocates fog mass outside the bid–ask tube and justifies small out–of–band deviations of the post–fit surface, while preserving a globally arbitrage–free and well–regularised description of the option surface.
\end{abstract}

\tableofcontents

\chapter{Introduction}

Liquid option markets require a smooth, stable and \emph{arbitrage-free}
surface of prices or implied volatilities over strike and maturity.
Such a surface underpins marking, risk management and model calibration, and
feeds directly into trading and hedging decisions.
In practice, the raw quote grid is sparse, noisy and often inconsistent with
the static no-arbitrage conditions implied by absence of butterfly and
calendar spreads.
Production systems therefore interpolate and regularise the observed quotes
into a dense surface subject to no-arbitrage constraints.

There is substantial literature on arbitrage-free surface
construction.
Parametric approaches such as SVI and its extensions impose functional forms
on implied volatility and derive analytical sufficient conditions for absence
of static arbitrage.
Alternatively, nonparametric smoothing methods reconstruct prices or
volatilities on a grid while enforcing no-arbitrage inequalities either as
hard constraints or via penalisation.
These methods have been successfully deployed in practice, but there remains
a trade-off between fidelity to the bid-ask quotes, strict enforcement of
no-arbitrage on a dense grid, and computational cost on large universes of
names and dates.

This paper contributes a practical convex-programming formulation for option price
surfaces that aims to balance these considerations, together with a local geometric
post-fit layer that explicitly models quote noise on difficult regions of the surface.
The key ingredients are:
\begin{itemize}
  \item a global Chebyshev representation of the price surface on a warped
  log-moneyness / maturity rectangle, which provides high approximation power;

  \item linear operators that encode static no-arbitrage constraints on a dense
  collocation grid (monotonicity in strike, convexity in strike, calendar
  monotonicity and simple bounds), so that absence of butterfly and calendar
  arbitrage is enforced directly in price space;

  \item a coverage-seeking quadratic objective aligned to the bid–ask geometry,
  augmented by spectral and transport-inspired regularisers (ridge in the
  Chebyshev coefficients, discrete $H^{-1}$ smoothing of the density, short-end
  anchoring and frequency tapering) that stabilise the fit while preserving
  convexity;

  \item a patch-wise post-fit in price space, built on a discrete ``fog'' of
  risk-neutral densities on a three-dimensional lattice $(m,\tau,u)$ and a
  Hamiltonian-type energy on that fog, which yields a convex, noise-aware local
  correction of the baseline surface on problematic regions while preserving
  global static no-arbitrage.
\end{itemize}

The resulting baseline surface is obtained as the solution of a single medium-scale
QP with sparse structure, solvable reliably with off-the-shelf solvers, and tuned to
reach high within-band coverage and low static no-arbitrage violation rates on a
dense grid. The discrete Hamiltonian fog layer appears as a second, fully convex
post-fit stage defined on local patches in $(m,\tau)$; it is implemented in this
paper in a finite-dimensional setting (Chapter \ref{sec:postfit-hamiltonian-fog}) and used to refine the baseline surface in stressed regimes.

The rest of the paper is organised as follows. Chapter \ref{sec:markt_coor_trg} defines the market
coordinates, targets and static no-arbitrage axioms. Chapters \ref{sec:wrapped_cheb_dsg}-\ref{sec:noarb-discrete} build the warped
Chebyshev tensor basis and the no-arbitrage operators on a collocation grid.
Chapters \ref{sec:coverage-data}-\ref{sec:global-noarb-grid} specify the coverage-seeking data term and the soft no-arbitrage
penalties, and Chapter \ref{sec:convex_prog} assembles the global QP and discusses convexity and
solution. Chapters \ref{sec:ridge-etc} and \ref{sec:dw-omega} develop the spectral and transport-inspired quadratic
regularisers, and Chapter \ref{sec:diag_and_imp} collects structure monitors that diagnose stability.
Chapter \ref{sec:postfit-hamiltonian-fog} then introduces the patch-wise Hamiltonian fog post-fit in price space,
formulated as a joint convex optimisation in the nodal surface and fog variables.
Chapter \ref{sec:conclusion} concludes and outlines a continuum version of the fog/Hamiltonian
geometry, which is deferred to a separate theoretical follow-up paper.

\chapter[Market coordinates, targets, and no-arbitrage axioms]%
  {Market coordinates, targets, and\\ no-arbitrage axioms}
\label{sec:markt_coor_trg}

Let $t$ be a trading date and $F_t(T)$ denote the discount adjusted forward for maturity time $T$. Set $\tau=T-t>0$, the time to maturity. We work in forward discounted prices:
\[
C_f(K,\tau) \;\coloneqq\; e^{r(\tau)\tau} C(K,\tau)
\quad\text{and}\quad
m \;\coloneqq\; \log\!\frac{K}{F_t(\tau)}.
\]
Throughout, we fit a surface $C_f(m,\tau)$ from quoted calls. Puts follow from put-call parity.

These are sufficient conditions for Static no-arbitrage for calls (for a.e. $m,\tau$):
\begin{align}
\partial_m C_f(m,\tau) &\le 0, \label{eq:monoK}\\
\partial_{mm} C_f(m,\tau) &\ge 0, \label{eq:convK}\\
\partial_{\tau} C_f(m,\tau)\big|_{K} &\ge 0. \label{eq:cal}
\end{align}
Bounds: $0 \le C_f(m,\tau)\le F_t(\tau)$ and $C_f(m,0^+)=\big(F_t(0)-K\big)^+$.

Throughout this section we tacitly assume enough regularity for the continuum
derivatives in \eqref{eq:monoK}–\eqref{eq:cal} to be well defined on the compact rectangle
where we approximate the surface. In particular, on the box $[m_{\min},m_{\max}] \times [\tau_{\min},\tau_{\max}]$ used in Section~\ref{sec:wrapped_cheb_dsg}, we work under
\[
C_f \in C^{2,1}\bigl([m_{\min},m_{\max}] \times [\tau_{\min},\tau_{\max}]\bigr),
\qquad
F \in C^{1}\bigl([\tau_{\min},\tau_{\max}]\bigr),
\]
so that $\partial_m C_f$, $\partial_{mm} C_f$, $(\partial_\tau C_f)\vert_{K}$ and
$\frac{\mathrm{d}}{\mathrm{d}\tau}\log F(\tau)$ all exist and are continuous. The later
discrete QP only uses linear operators on a grid, but these smoothness conditions
provide the natural continuum axioms behind \eqref{eq:monoK}–\eqref{eq:cal}.

\begin{remark}[Calendar derivative at fixed strike]
Since the basis uses $(m,\tau)$, the calendar derivative at fixed $K$ becomes
\(
\partial_{\tau} C_f|_{K}=\partial_{\tau} C_f + \partial_m C_f \cdot \partial_{\tau} m\big|_{K}
= \partial_{\tau} C_f - \partial_m C_f \cdot \big(r_{\text{cc}}(\tau)+\tau r'_{\text{cc}}(\tau)\big).
\)
This is implemented exactly in the operators below.
\end{remark}

\section*{Notation}
\begin{table}[h]
\centering
\begin{tabular}{@{}ll@{}}
\toprule
Symbol & Meaning \\
\midrule
\(K\) & Strike \\
\(\tau\) & Time to maturity \(T-t\) \\
\(F_t(\tau)\) & Forward (discount‑adjusted) underlying for \(\tau\) \\
\(m\) & Log‑moneyness \(\log(K/F_t(\tau))\) \\
\(C_f\) & Forward‑discounted call price \\
\(\rho\) & Risk‑neutral density \(\partial_{KK} C_f\) \\
\(A,\ A_m,\ A_{mm},\ A_\tau\) & Design/derivative blocks in coefficient space \\
\bottomrule
\end{tabular}
\end{table}

\chapter{Warped tensor Chebyshev basis and design matrices}\label{sec:wrapped_cheb_dsg}

This chapter builds the approximation space and its derivative blocks.

\section{Why normalise to \texorpdfstring{$[-1,1]^2$}{[-1,1]\string^2} and why Chebyshev?}

The problem is to approximate a continuous surface \(C_f(m,\tau)\) on a compact rectangle
\([m_{\min},m_{\max}]\times[\tau_{\min},\tau_{\max}]\).
On a compact interval, polynomials are dense (Weierstrass), and Chebyshev polynomials are numerically preferred because:
\begin{enumerate}[label=(\roman*)]
    \item they minimise Runge oscillations on \([-1,1]\),
    \item they admit stable three–term recurrences and Clenshaw evaluation,
    \item they possess explicit derivative identities useful for Greeks.
\end{enumerate}
We therefore map each axis to \([-1,1]\) and expand in a tensor-product
Chebyshev basis.

\section{Coordinate warps (endpoint preserving)}

Let $[m_{\min},m_{\max}]$ and $[\tau_{\min},\tau_{\max}]$ be robust, date-adaptive intervals.

Define the warps
\[
x=\Phi_m(m)\in[-1,1],\qquad y=\Phi_\tau(\tau)\in[-1,1],
\]
so that the interval endpoints map exactly to \(\pm1\).

\paragraph{Log-moneyness warp (asinh).}
Let \(c_m\) be a centre (e.g.\ the liquidity-weighted median of \(m\)),
and \(\lambda_m>0\) a tail-compression parameter. Set
\[
\phi_{m,\pm}\coloneqq \operatorname{asinh}\!\big(\lambda_m(m_{\max/\min}-c_m)\big),\quad
W_m\coloneqq \phi_{m,+}-\phi_{m,-},
\]
\[
\boxed{\ \Phi_m(m)=\frac{2}{W_m}\Big(\operatorname{asinh}(\lambda_m(m-c_m))-\phi_{m,-}\Big)-1\ }.
\]
Then \(\Phi_m(m_{\min})=-1\), \(\Phi_m(m_{\max})=+1\). Derivatives (by the chain rule) are
\[
\Phi_m'(m)=\frac{2\lambda_m}{W_m}\frac{1}{\sqrt{1+\lambda_m^2(m-c_m)^2}},\qquad
\Phi_m''(m)=-\frac{2\lambda_m^3}{W_m}\frac{(m-c_m)}{\big(1+\lambda_m^2(m-c_m)^2\big)^{3/2}}.
\]

\paragraph{Maturity warp (square-root).}
Let \(\Delta_\tau\coloneqq \tau_{\max}-\tau_{\min}>0\) and \(s(\tau)\coloneqq
(\tau-\tau_{\min})/\Delta_\tau\in[0,1]\). Set
\[
\boxed{\ \Phi_\tau(\tau)=2\sqrt{s(\tau)}-1\ },
\qquad
\Phi_\tau'(\tau)=\frac{1}{\sqrt{\Delta_\tau\,(\tau-\tau_{\min})}}\;\;(\text{finite if }\tau>\tau_{\min}).
\]
The square-root allocates higher resolution near short maturities.
(If \(\tau_{\min}\) is very close to \(0\), a small positive floor avoids the
endpoint singularity in \(\Phi_\tau'\).)

\begin{remark}[Why these warps]
The asinh warp allocates more resolution near \(m\approx c_m\) (ATM) while compressing
deep wings; the square-root warp concentrates basis power near short maturities
where curvature in \(\tau\) is largest. Both preserve endpoints and expose simple
chain-rule factors for derivatives.
\end{remark}

\section{Chebyshev polynomials on \texorpdfstring{$[-1,1]$}{[-1,1]}}
For \(x\in[-1,1]\), the Chebyshev polynomials of the first kind are
\[
T_k(x)=\cos\big(k\arccos x\big),\qquad
T_0=1,\ \ T_1=x,\ \ T_{k+1}=2x\,T_k-T_{k-1}.
\]
The derivatives needed later are available in closed form:
\[
T_k'(x)=k\,U_{k-1}(x),\qquad
\]
and, for $|x|<1$,
\[
(1-x^2)\,T_k''(x)-x\,T_k'(x)+k^2\,T_k(x)=0
\;\Rightarrow\;
\boxed{\ T_k''(x)=\frac{x\,k\,U_{k-1}(x)-k^2 T_k(x)}{1-x^2}\ },
\]
where \(U_n\) are Chebyshev polynomials of the second kind, which are defined recursively
(\(U_0(x)=1,\,U_1(x)=2x,\,U_{n+1}(x)= 2x \cdot U_n(x)-U_{n-1}(x)\)).
In practice, we evaluate \(T_k, U_{k-1}\) stably via Clenshaw recurrences.

\section{Tensor-product basis for the surface}
Let \(K,L\in\N\) be polynomial degrees in \(m\) and \(\tau\), and define
the coefficient array \(a=\{a_{k\ell}\}_{k=0,\dots,K;\ \ell=0,\dots,L}\).
We approximate
\[
\boxed{\ C_f(m,\tau)\;=\;\sum_{k=0}^{K}\sum_{\ell=0}^{L}
a_{k\ell}\,T_k\!\big(\Phi_m(m)\big)\,T_\ell\!\big(\Phi_\tau(\tau)\big)\ }.
\]
Stacking \(a\) into a vector in \(\R^{P}\) with
\(P=(K{+}1)(L{+}1)\) yields a linear map from coefficients to prices.

\section{Design matrices at arbitrary points}
Given data points \(\{(m_i,\tau_i)\}_{i=1}^N\), set \(x_i=\Phi_m(m_i)\),
\(y_i=\Phi_\tau(\tau_i)\). Define the following row vectors
\[
\bm{t}(x_i)=\big[T_0(x_i),\dots,T_K(x_i)\big],
\]
\[
\bm{s}(y_i)=\big[T_0(y_i),\dots,T_L(y_i)\big].
\]
The pair \((k,\ell)\) defines the index of the column (where \(k \in \{0,\dots , K\}\) and \(l \in \{0,\dots , L\}\)). 

\begin{definition}
    The (price) design matrix \(A\in\R^{N\times P}\) is defined as
    \[
    \boxed{\ A_{i,(k,\ell)} = T_k(x_i)\,T_\ell(y_i)\ }.
    \]
\end{definition}

Equivalently, if \(\Phi_K\in\R^{N\times(K+1)}\) stacks \(\bm{t}(x_i)\) and
\(\Phi_L\in\R^{N\times(L+1)}\) stacks \(\bm{s}(y_i)\), then \(A\) is the
row-wise Khatri–Rao product \(A=\Phi_K \odot \Phi_L\); for grid evaluations,
Kronecker structure \((\Phi_L\otimes \Phi_K)\) can be exploited.

\section{Derivative design blocks via the chain rule}

\begin{prop}[Closed-form derivative design blocks]
\label{prop:derivative-blocks}
Fix integers $K,L\ge 0$ and let $P=(K{+}1)(L{+}1)$. For each data point $(m_i,\tau_i)$ define
\[
x_i\coloneqq \Phi_m(m_i),\qquad y_i\coloneqq \Phi_\tau(\tau_i),
\]
where $\Phi_m\in C^2$ and $\Phi_\tau\in C^1$ on their domains. Let the price design matrix
$A\in\mathbb{R}^{N\times P}$ be
\[
A_{i,(k,\ell)} \;=\; T_k(x_i)\,T_\ell(y_i),\qquad 0\le k\le K,\ 0\le \ell\le L,
\]
with any fixed stacking $(k,\ell)\mapsto (k,\ell)$-column. Define the ``inner-variable'' derivative
matrices by
\[
(\partial_x A)_{i,(k,\ell)} \coloneqq T_k'(x_i)\,T_\ell(y_i),\qquad
(\partial_{xx} A)_{i,(k,\ell)} \coloneqq T_k''(x_i)\,T_\ell(y_i),
\]
\[
(\partial_y A)_{i,(k,\ell)} \coloneqq T_k(x_i)\,T_\ell'(y_i).
\]
For any coefficient vector $a\in\mathbb{R}^P$, consider the model values
\[
\widehat C_i = (Aa)_i = \sum_{k,\ell} a_{k\ell}\,T_k(x_i)T_\ell(y_i).
\]
Then the vectors of physical derivatives evaluated at the same points are linear images of $a$:
\[
\big(\partial_m \widehat C\big)_i = (A_m a)_i,\qquad
\big(\partial_{mm} \widehat C\big)_i = (A_{mm} a)_i,\qquad
\big(\partial_{\tau} \widehat C\big)_i = (A_{\tau} a)_i,
\]
where the derivative design blocks are
\[
\boxed{\ A_m \;=\; \diag\!\big(\Phi_m'(m)\big)\,\partial_x A,
}
\]
\[
\boxed{
A_{mm}\;=\;\diag\!\big((\Phi_m')^2\big)\,\partial_{xx}A \;+\; \diag\!\big(\Phi_m''(m)\big)\,\partial_x A,}
\]
\[
\boxed{\ A_\tau \;=\; \diag\!\big(\Phi_\tau'(\tau)\big)\,\partial_y A. }
\]
\end{prop}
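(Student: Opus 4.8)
The plan is to establish the three boxed identities by a direct termwise application of the chain rule. The crucial structural facts are that, for a fixed coefficient vector $a\in\R^{P}$, the model value $\widehat C(m,\tau)=\sum_{k,\ell}a_{k\ell}\,T_k(\Phi_m(m))\,T_\ell(\Phi_\tau(\tau))$ is a \emph{finite} sum, so differentiation commutes with summation without any convergence concerns, and that each summand is a composition of a polynomial $T_k$ (hence $C^{\infty}$) with the warp $\Phi_m\in C^{2}$ (resp.\ $\Phi_\tau\in C^{1}$), hence of class $C^{2,1}$ on the rectangle. Linearity in $a$ is then automatic at every stage: $a$ enters the sum linearly and differentiation is a linear operation, so the only genuine task is to read off the correct diagonal rescaling matrices and to match the remaining sums to the pre-defined inner-variable blocks $\partial_x A$, $\partial_{xx}A$, $\partial_y A$.

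\emph{First-order blocks.} Write $x=\Phi_m(m)$, $y=\Phi_\tau(\tau)$. Differentiating each summand with the one-dimensional chain rule gives $\partial_m\widehat C=\Phi_m'(m)\sum_{k,\ell}a_{k\ell}\,T_k'(x)\,T_\ell(y)$. Evaluating at a sample point $(m_i,\tau_i)$, the sum is by definition $\big((\partial_x A)a\big)_i$, so $(\partial_m\widehat C)_i=\Phi_m'(m_i)\big((\partial_x A)a\big)_i=\big(\diag(\Phi_m'(m))\,\partial_x A\,a\big)_i$, which is the stated $A_m$. The calendar derivative is handled identically with the roles of the two tensor factors exchanged: $\partial_\tau\widehat C=\Phi_\tau'(\tau)\sum_{k,\ell}a_{k\ell}\,T_k(x)\,T_\ell'(y)$, whence $A_\tau=\diag(\Phi_\tau'(\tau))\,\partial_y A$.

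\emph{Second-order block.} Differentiate the expression for $\partial_m\widehat C$ once more in $m$. The product rule yields $\partial_{mm}\widehat C=\Phi_m''(m)\sum a_{k\ell}T_k'(x)T_\ell(y)+\Phi_m'(m)\,\partial_m\!\Big(\sum a_{k\ell}T_k'(x)T_\ell(y)\Big)$, and a second chain-rule step converts the inner derivative into $\Phi_m'(m)\sum a_{k\ell}T_k''(x)T_\ell(y)$. Collecting terms gives $\partial_{mm}\widehat C=(\Phi_m'(m))^2\sum a_{k\ell}T_k''(x)T_\ell(y)+\Phi_m''(m)\sum a_{k\ell}T_k'(x)T_\ell(y)$; evaluating at $(m_i,\tau_i)$ and identifying the two sums with $(\partial_{xx}A)a$ and $(\partial_x A)a$ produces exactly $A_{mm}=\diag((\Phi_m')^{2})\,\partial_{xx}A+\diag(\Phi_m''(m))\,\partial_x A$. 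Since all of this holds for every $a$, the three relations are matrix identities, not merely pointwise ones.

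\emph{Where care is needed.} The argument has no deep difficulty, but two bookkeeping points must be handled cleanly. First, the closed-form expression for $T_k''$ recorded earlier carries the factor $1/(1-x^2)$ and is valid only for $|x|<1$, whereas $T_k''$ is in fact a polynomial; the block $\partial_{xx}A$ must be populated with the polynomial value, i.e.\ with the removable-singularity limit $T_k''(\pm1)$ at any sample point landing on $m_{\min}$ or $m_{\max}$, so that $\partial_{xx}A$ is finite everywhere. Second, the scalar factors $\Phi_m'(m_i)$, $\Phi_m''(m_i)$, $\Phi_\tau'(\tau_i)$ must be finite: this is exactly the content of the regularity hypotheses $\Phi_m\in C^{2}$, $\Phi_\tau\in C^{1}$, together with the small positive floor near $\tau_{\min}$ that keeps $\Phi_\tau'$ from blowing up. Under these conditions the diagonal matrices are well defined and the identities follow.
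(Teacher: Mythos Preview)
Your proof is correct and follows essentially the same approach as the paper: a direct, row-wise application of the chain rule to each basis function $T_k(\Phi_m(m))T_\ell(\Phi_\tau(\tau))$, followed by identification of the resulting sums with the inner-variable blocks $\partial_x A$, $\partial_{xx}A$, $\partial_y A$. Your ``Where care is needed'' paragraph on the endpoint values of $T_k''$ and the finiteness of the warp derivatives is a useful addition; the paper handles the same point in a separate remark following the proof.
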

\begin{proof}
All statements follow from linearity and the chain rule, applied row-wise.

\paragraph{Setup:}
Write the one–dimensional warped basis functions
\[
\phi_k(m)\coloneqq T_k(\Phi_m(m)),\qquad \psi_\ell(\tau)\coloneqq T_\ell(\Phi_\tau(\tau)).
\]
Then the model at $(m_i,\tau_i)$ is
\[
\widehat C_i \;=\; \sum_{k=0}^K\sum_{\ell=0}^L a_{k\ell}\,\phi_k(m_i)\,\psi_\ell(\tau_i)
\;=\; \sum_{k,\ell} a_{k\ell}\,T_k(x_i)\,T_\ell(y_i).
\]
By construction, the $i$-th row of $A$ consists of the basis values
$\{T_k(x_i)T_\ell(y_i)\}_{k,\ell}$, so $\widehat C = A a$.

\paragraph{First derivative in $m$:}
Differentiating $\phi_k(m)=T_k(\Phi_m(m))$ with respect to $m$, by the chain rule we obtain the following:
\[
\frac{d}{dm}\phi_k(m) \;=\; T_k'\!\big(\Phi_m(m)\big)\,\Phi_m'(m).
\]
Fixing $\tau$ , the derivative of each product is
\[
\frac{\partial}{\partial m}\big[\phi_k(m)\psi_\ell(\tau)\big]
= T_k'\!\big(\Phi_m(m)\big)\,\Phi_m'(m)\cdot T_\ell\!\big(\Phi_\tau(\tau)\big).
\]
Evaluating at $(m_i,\tau_i)$ and summing over $(k,\ell)$:
\[
\big(\partial_m \widehat C\big)_i
= \Phi_m'(m_i)\sum_{k,\ell} a_{k\ell}\,T_k'(x_i)\,T_\ell(y_i)
= \big(\ \Phi_m'(m_i)\cdot (\partial_x A\,a)_i\ \big).
\]
Taking the scalar formula for each $i$ and writing as a vector-matrix equation yields 
\[
\partial_m \widehat C = \diag(\Phi_m'(m))\,\partial_x A\,a,
\]
where
$A_m=\diag(\Phi_m'(m))\,\partial_x A$.

\paragraph{Second derivative in $m$:}
Differentiate once more, using the product rule and chain rule:
\[
\frac{d^2}{dm^2}\phi_k(m)
= \frac{d}{dm}\Big(T_k'\!\big(\Phi_m(m)\big)\,\Phi_m'(m)\Big)
\]
\[
= T_k''\!\big(\Phi_m(m)\big)\,(\Phi_m'(m))^2 + T_k'\!\big(\Phi_m(m)\big)\,\Phi_m''(m).
\]
Therefore,
\[
\frac{\partial^2}{\partial m^2}\big[\phi_k(m)\psi_\ell(\tau)\big]
= \Big(T_k''(\Phi_m(m))(\Phi_m')^2 + T_k'(\Phi_m(m))\Phi_m''\Big)\,T_\ell(\Phi_\tau(\tau)).
\]
Evaluating at $(m_i,\tau_i)$ and summing,
\[
\big(\partial_{mm}\widehat C\big)_i
= (\Phi_m'(m_i))^2\sum_{k,\ell} a_{k\ell}\,T_k''(x_i)\,T_\ell(y_i)
  + \Phi_m''(m_i)\sum_{k,\ell} a_{k\ell}\,T_k'(x_i)\,T_\ell(y_i).
\]
In matrix form,
\[
\partial_{mm}\widehat C
= \diag\!\big((\Phi_m'(m))^2\big)\,\partial_{xx}A\,a
 \;+\; \diag\!\big(\Phi_m''(m)\big)\,\partial_x A\,a,
\]
so $A_{mm}=\diag((\Phi_m')^2)\partial_{xx}A+\diag(\Phi_m'')\partial_x A$.

\paragraph{First derivative in $\tau$:}
Analogously, with $y=\Phi_\tau(\tau)$,
\[
\frac{d}{d\tau}\psi_\ell(\tau) \;=\; T_\ell'\!\big(\Phi_\tau(\tau)\big)\,\Phi_\tau'(\tau),
\]
Fixing $m$ , 
\[
\frac{\partial}{\partial\tau}\big[\phi_k(m)\psi_\ell(\tau)\big]
= T_k(\Phi_m(m))\,T_\ell'(\Phi_\tau(\tau))\,\Phi_\tau'(\tau).
\]
Evaluating at $(m_i,\tau_i)$ and summing over $(k,\ell)$ yields:
\[
\big(\partial_\tau \widehat C\big)_i
= \Phi_\tau'(\tau_i)\sum_{k,\ell} a_{k\ell}\,T_k(x_i)\,T_\ell'(y_i)
= \big(\ \Phi_\tau'(\tau_i)\cdot (\partial_y A\,a)_i\ \big),
\]
Taking the scalar formula for each $i$ and writing as a vector-matrix equation yields 
\[
\partial_\tau \widehat C = \diag(\Phi_\tau'(\tau))\,\partial_y A\,a
\]
where
$A_\tau=\diag(\Phi_\tau'(\tau))\,\partial_y A$.

\paragraph{Conclusion.}
In each case the derivative vector equals a fixed matrix (depending only on the warps and basis) times $a$, establishing the stated formulas.
\end{proof}

\begin{remark}
Precompute \(\{T_k(x_i),U_{k-1}(x_i)\}_{k\le K}\) and
\(\{T_\ell(y_i),U_{\ell-1}(y_i)\}_{\ell\le L}\) via Clenshaw recurrences.
Then obtain \(T_k'(x_i)=k\,U_{k-1}(x_i)\) and, for interior points $|x_i|<1$,
\(T_k''(x_i)=\big(x_i\,k\,U_{k-1}(x_i)-k^2 T_k(x_i)\big)/(1-x_i^2)\).
At the Chebyshev–Lobatto endpoints $x=\pm1$ the denominator $1-x^2$ vanishes, but
$T_k$ is a polynomial so $T''_k(x)$ exists and is finite there. In practice we define
$T''_k(\pm 1)$ by continuity (or via the closed forms
$T''_k(1)=\tfrac{k^2(k^2-1)}{3}$ and
$T''_k(-1)=(-1)^k\tfrac{k^2(k^2-1)}{3}$ for $k\ge2$) and use these values whenever
$|1-x_i^2|$ is numerically small. With this convention, all entries of $\partial_{xx}A$
are well defined and the assembly of $A,\partial_x A,\partial_{xx}A,\partial_y A$ uses
only closed-form expressions, with no numerical differencing.
\end{remark}

\chapter{No-arbitrage operators on a collocation grid}\label{sec:no-arb-grid}\label{sec:noarb-discrete}

From the previous chapter we have for any set of evaluation points $(m,\tau)$, the
price design matrix $A$ and the derivative blocks $A_m, A_{mm}, A_\tau$ are defined by
\[
(Aa)(m,\tau)=C_f(m,\tau),\quad
(A_m a)(m,\tau)=\partial_m C_f,\quad
\]
\[
(A_{mm} a)(m,\tau)=\partial_{mm} C_f,\quad
(A_\tau a)(m,\tau)=\partial_\tau C_f.
\]
All maps are linear in $a$ and are computed pointwise via the chain rule.

\section{Collocation grid and evaluation}
Let $\{(m_g,\tau_g)\}_{g=1}^{G}$ be a fixed collocation grid used to test the
no-arbitrage shape conditions (Chebyshev nodes in $m$, uniform in $\tau$ are a robust choice).
On this grid define the forward (no-arbitrage price of receiving one unit of the underlying at time $(t+\tau)$) and strike (rearrangement from the definition of $m$)
\[
F_g \coloneqq F_t(\tau_g),\qquad K_g \coloneqq F_g\,e^{m_g}.
\]
We evaluate the same derivative blocks on the grid (rather than on \(\{(m_i,\tau_i)\}_{i=1}^N\), we evaluate on $\{(m_g,\tau_g)\}_{g=1}^{G}$); keeping the
symbols $A, A_m, A_{mm}, A_\tau$ for the $G\times P$ versions where the $g$-th row corresponds
to $(m_g,\tau_g)$.

\begin{remark}[Why a separate grid]
Quotes can be sparse or clustered. A collocation grid decouples shape testing from where data happen to lie and gives uniform control of violations over the rectangle in $(m,\tau)$.
\end{remark}

\section{Strike-space operators (monotonicity and convexity)}
Static no-arb for \emph{calls} requires $\partial_K C_f\le 0$ and $\partial_{KK} C_f\ge 0$ at fixed $\tau$. 
However, our derivative blocks are in the $m$ coordinate, where $m=\ln\!\big(K/F(\tau)\big)$.
At fixed $\tau$,
\[
\frac{\partial m}{\partial K}=\frac{1}{K},\qquad
\frac{\partial^2 m}{\partial K^2}=-\frac{1}{K^2}.
\]
For any smooth $f(m,\tau)$, by the chain rule
\[
\frac{\partial f}{\partial K}\Big|_{\tau}=\frac{1}{K}\,f_m,\qquad
\frac{\partial^2 f}{\partial K^2}\Big|_{\tau}=\frac{1}{K^2}\,(f_{mm}-f_m).
\]

Apply this with $f=C_f$ row-wise on the grid, replacing $f_m$ and $f_{mm}$ by $A_m a$ and
$A_{mm} a$.
\[
\partial _K C_f (K_g, \tau_g) = \frac{1}{K_g} \partial _m C_f (m_g, \tau_g) = \frac{1}{K_g}(A_m a)_g
\]
\[
\partial _{KK} C_f (K_g, \tau_g) = - \frac{1}{K_g^2} \partial _m C_f (m_g, \tau_g) + \frac{1}{K_g^2} \partial _{mm} C_f (m_g, \tau_g) = \frac{(A_{mm} a)_g - (A_m a)_g}{K_g^2}
\]
This yields the \emph{linear operators} that map coefficients $a$ to strike derivatives:
\begin{align}
\text{Monotonicity in strike:}\qquad
&A_{K}\;=\;\diag(K_g)^{-1}\,A_m,\label{eq:A_K}\\[4pt]
\text{Convexity in strike:}\qquad
&A_{KK}\;=\;\diag(K_g)^{-2}\,\big(A_{mm}-A_m\big).\label{eq:A_KK}
\end{align}
Thus, $(A_K a)_g=\partial_K C_f(K_g,\tau_g)$ and $(A_{KK} a)_g=\partial_{KK} C_f(K_g,\tau_g)$.

\section{Calendar derivative at fixed strike}\label{cal_der}\label{sec:calendar-operator}

Calendar no-arbitrage requires $\partial_\tau C_f\big|_{K}\ge 0$.
The block $A_\tau$ computes $\partial_\tau C_f$ at fixed $m$; to switch to fixed $K$ use the following relation
\[
\Big(\partial_\tau C_f\Big)_{\!K}
= \Big(\partial_\tau C_f\Big)_{\!m}
+ \Big(\partial_m C_f\Big)\Big(\partial_\tau m\Big)_{\!K}.
\]
With $m=\log\big(K/F(\tau)\big)$ and $K$ fixed,
\[
\Big(\partial_\tau m\Big)_{\!K}=-\,\frac{d}{d\tau}\log F(\tau).
\]
Two equivalent parameterisations of $F$ give:

\smallskip
\noindent\emph{(i) General form.} Let $\rho(\tau)\coloneqq \tfrac{d}{d\tau}\log F(\tau)$.
Then
\[
A_{\tau|K} \;=\; A_\tau \;-\; \diag\!\big(\rho(\tau_g)\big)\,A_m.
\]

\smallskip
\noindent\emph{(ii) Report convention.} If $\log F(\tau)=\tau\,r_{\mathrm{cc}}(\tau)$ (net continuously
compounded carry), then
$\rho(\tau)=r_{\mathrm{cc}}(\tau)+\tau\,r'_{\mathrm{cc}}(\tau)$ and
\begin{equation}
A_{\tau|K}\;=\;A_\tau \;+\; \diag\!\big(-r_{\mathrm{cc}}(\tau_g)-\tau_g r'_{\mathrm{cc}}(\tau_g)\big)\,A_m.\label{eq:A_tK}
\end{equation}

\begin{remark}[Sanity checks]
If $F$ is flat (zero carry), then $\rho\equiv 0$ and $A_{\tau|K}=A_\tau$.
If carry is constant $r$, then $\rho\equiv r$ and $A_{\tau|K}=A_\tau - r\,A_m$.
\end{remark}

\section{Price map and bound operators}
Price non-negativity and upper bounds by the forward read on the grid as
\[
0 \ \le\ (A a)_g \ \le\ F_g,\qquad g=1,\dots,G.
\]
Simply write $A_{\text{price}}=A$ and use the known vector $F=(F_g)_g$ when
imposing hard constraints or soft penalties for violations.

\section{Row scaling and a single no-arb weight}\label{sec:row-scaling-single-weight}
The three no-arbitrage defect maps have different natural magnitudes and units:
\[
A_K a \quad(\text{``price per strike''}),\qquad
-A_{KK} a \quad(\text{``price per strike}^2\text{''}),
\]
\[
-A_{\tau|K} a \quad(\text{``price per time''}).
\]
If a single penalty weight $\lambda_{\mathrm{NA}}$ is applied to all three without normalisation, the largest magnitude block dominates and the others become numerically inert. Therefore, normalise each block by a positive scalar so that a single $\lambda_{\mathrm{NA}}$ can control them comparably.
\paragraph{Blocks to be normalised:}
On the collocation grid (size $G\times P$), set
\[
B_1 \coloneqq A_K,\qquad
B_2 \coloneqq -\,A_{KK},\qquad
B_3 \coloneqq -\,A_{\tau|K}.
\]
\paragraph{Robust block scales:}
For each $j\in\{1,2,3\}$, compute Euclidean row $\ell_2$ norms 
\[
r^{(j)}_g \coloneqq \| (B_j)_{g,:}\|_2 = \sqrt{\sum_{p=1}^{P}(B_j)^2_{gp}}, \qquad g=1,\dots,G
\]
Sorting the list in ascending order, pick the value $s_j$ below which $95\%$ of the $r^{(j)}_g$ fall. 

\begin{definition}
    The robust scale is defined as
    \[
    s_j \;\coloneqq\; \mathrm{q}_{0.95}\!\big(\{\,r^{(j)}_g\,:\,g=1,\dots,G\}\big),
    \]
    (Other robust choices are possible; $q_{0.95}$ balances outliers vs.\ typical rows.)
\end{definition}

\paragraph{Scaled blocks and unified weight:}
Define the scaled operators
\[
\widetilde B_j \;\coloneqq\; \frac{1}{s_j}\,B_j,\qquad j=1,2,3.
\]
Using a single $\lambda_{\mathrm{NA}}$ for all three terms, the soft no--arbitrage penalty becomes
\[
\frac{\lambda_{\mathrm{NA}}}{2}\sum_{j=1}^3 \big\|(\widetilde B_j a)_+\big\|_2^2
\;=\;
\frac{1}{2}\sum_{j=1}^3 \underbrace{\Big(\frac{\lambda_{\mathrm{NA}}}{s_j^2}\Big)}_{\text{effective weight for block $j$}}\;\big\|(B_j a)_+\big\|_2^2,
\]
so that the typical (p95) row magnitude of each block is $\approx1$ and one knob $\lambda_{\mathrm{NA}}$ moves all three violation shares on a comparable scale.

\begin{prop}[Invariance of hard constraints under positive scaling]
\label{prop:scaling-invariance}
Let $D$ be any positive diagonal matrix (in particular $D=\alpha I$ with $\alpha>0$). Then, for any $B\in\R^{G\times P}$ and any $a\in\R^P$,
\[
Ba\le 0 \;\Longleftrightarrow\; (DB)a\le 0.
\]
Hence replacing $B_j$ by $\widetilde B_j=\frac{1}{s_j}B_j$ leaves the hard no--arbitrage feasible set unchanged; only numerical conditioning and relative penalty weights are affected.
\end{prop}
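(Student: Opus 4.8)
The plan is to prove the logical equivalence $Ba \le 0 \iff (DB)a \le 0$ componentwise, exploiting that $D$ is a positive diagonal matrix so that left-multiplication by $D$ acts as coordinatewise multiplication by strictly positive scalars; the claim about the feasible set is then an immediate corollary, and the closing sentence about conditioning and relative weights requires no proof (it is an interpretive remark).

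First I would fix notation: write $D = \diag(d_1,\dots,d_G)$ with each $d_g > 0$, and let $v \coloneqq Ba \in \R^G$, so that $(DB)a = D(Ba) = Dv$ has $g$-th entry $d_g v_g$. The inequality $Ba \le 0$ is, by definition of the entrywise partial order on $\R^G$, the conjunction of the scalar inequalities $v_g \le 0$ for $g = 1,\dots,G$; similarly $(DB)a \le 0$ is the conjunction of $d_g v_g \le 0$ for all $g$. So it suffices to prove, for each fixed $g$, that $v_g \le 0 \iff d_g v_g \le 0$. This is the elementary fact that multiplication by a strictly positive real preserves and reflects the sign condition: if $v_g \le 0$ then $d_g v_g \le 0$ since $d_g > 0$; conversely if $d_g v_g \le 0$ then dividing by $d_g > 0$ (which is legitimate precisely because $d_g \ne 0$ and $1/d_g > 0$) gives $v_g \le 0$. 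I would state this one-line lemma and then conjoin over $g$.

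With the equivalence in hand, the statement about feasible sets follows directly: the hard no-arbitrage set for block $B$ is $\{a \in \R^P : Ba \le 0\}$ and for $\widetilde B_j = \tfrac{1}{s_j} B_j$ it is $\{a : (\tfrac{1}{s_j} I) B_j a \le 0\}$; taking $D = \tfrac{1}{s_j} I$ (a positive diagonal matrix since $s_j = q_{0.95}(\cdot) > 0$, being a quantile of nonnegative row norms that are not all zero) the two sets coincide. I would remark that the special case $D = \alpha I$ is what is actually used in Section~\ref{sec:row-scaling-single-weight} but that the general positive-diagonal version costs nothing extra.

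There is essentially no obstacle here — the result is a packaging of a triviality, and the only thing to be careful about is not overclaiming: the equivalence is exactly an equivalence of the \emph{hard} constraint sets, and I would explicitly flag that the \emph{soft} penalty $\tfrac{\lambda_{\mathrm{NA}}}{2}\sum_j \|(\widetilde B_j a)_+\|_2^2$ is \emph{not} invariant (the positive-part operator commutes with $D$ but the squared norm rescales by $1/s_j^2$), which is the whole point of introducing the $s_j$ in the first place and is already made explicit in the displayed identity preceding the proposition. Thus the final sentence of the statement — "only numerical conditioning and relative penalty weights are affected" — is the honest summary and needs no separate argument.
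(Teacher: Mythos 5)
Your proof is correct and follows essentially the same route as the paper's: reduce to componentwise scalar inequalities $v_g \le 0 \iff d_g v_g \le 0$ with $d_g>0$, then specialise to $D = \tfrac{1}{s_j}I$. The extra remark distinguishing invariance of the hard feasible set from the non-invariance of the soft penalty is accurate and a useful clarification, but the core argument matches the paper's.
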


\begin{proof}
All inequalities are understood componentwise.

Let $D = \operatorname{diag}(d_1,\dots,d_G)$ with $d_i > 0$ for all $i$. For any
$a \in \R^P$ we have
\[
(DB)a = D(Ba),
\]
so on the $i$-th component
\[
\bigl((DB)a\bigr)_i = d_i \,(Ba)_i.
\]

($\Rightarrow$) Suppose $Ba \le 0$. Then for every $i$,
\[
(Ba)_i \le 0 \quad\Longrightarrow\quad d_i (Ba)_i \le 0
\]
because $d_i > 0$. Hence $(DB)a = D(Ba) \le 0$.

($\Leftarrow$) Conversely, suppose $(DB)a \le 0$. Then for every $i$,
\[
d_i (Ba)_i = \bigl((DB)a\bigr)_i \le 0.
\]
Since $d_i > 0$, dividing by $d_i$ preserves the inequality sign and yields
\[
(Ba)_i \le 0 \quad\text{for all } i,
\]
i.e.\ $Ba \le 0$.

Thus $\{a : Ba \le 0\} = \{a : (DB)a \le 0\}$, proving the equivalence
$Ba \le 0 \;\Longleftrightarrow\; (DB)a \le 0$.

For the final claim, take $D = \frac{1}{s_j} I$ with $s_j > 0$ and $B$ replaced
by a given block $B_j$. Then
\[
B_j a \le 0 \;\Longleftrightarrow\; \Bigl(\frac{1}{s_j} I\, B_j\Bigr) a \le 0
\;\Longleftrightarrow\; \widetilde B_j a \le 0,
\]
so replacing $B_j$ by $\widetilde B_j = \frac{1}{s_j} B_j$ leaves the hard
no–arbitrage feasible set $\{a : B_j a \le 0\}$ (and hence the intersection over
all blocks $j$) unchanged. Only the numerical conditioning of the operators and
the effective relative weights in any soft penalties involving $B_j$ are affected.
\end{proof}

\begin{remark}[Exact recipe used in this paper]
\label{rem:our-scaling}
\leavevmode\par\vspace{0.25\baselineskip}
\begin{enumerate}[leftmargin=2em,itemsep=2pt,topsep=2pt]
\item \emph{Where scaling is applied.} We first apply any coefficient reparameterisation $U$ (price--invariant transform), i.e.\ replace each block by $A_\bullet U$. Scaling is computed and applied to these \emph{post-$U$} blocks.
\item \emph{Which blocks.} We scale $B_1=A_K$, $B_2=-A_{KK}$, $B_3=-A_{\tau|K}$ by \emph{one scalar per block}: $s_K,s_{KK},s_{\tau}$ given by the p95 of row $\ell_2$ norms on the collocation grid.
\item \emph{How it enters the objective.} The no--arb penalty uses the scaled operators $\widetilde B_j=B_j/s_j$ with a \emph{single} weight $\lambda_{\mathrm{NA}}$:
\[
\mathcal{P}_{\mathrm{NA}}(a)=\frac{\lambda_{\mathrm{NA}}}{2}\!\left(
\big\|(\widetilde A_K a)_+\big\|_2^2+
\big\|(-\widetilde A_{KK} a)_+\big\|_2^2+
\big\|(-\widetilde A_{\tau|K} a)_+\big\|_2^2\right).
\]
\item \emph{Bounds kept separate.} Price bounds $0\le Aa\le F$ are handled with a separate weight $\lambda_B$; we do \underline{not} include $A$ in the no--arb scaling group.
\item \emph{Reporting.} Diagnostics/violation shares are computed from the \underline{unscaled} physical operators $A_K, A_{KK}, A_{\tau|K}$.
\end{enumerate}
\end{remark}

\begin{remark}[Alternative (not used): row--by--row equalisation]
One may also scale \emph{each row} to equalise row influence by taking $D_j=\diag(d^{(j)}_g)$ with $d^{(j)}_g=1/\max(\|(B_j)_{g,:}\|_2,\varepsilon)$ and using $D_j B_j$. This preserves feasibility for hard constraints (Prop.~\ref{prop:scaling-invariance}) but reweights the grid non--uniformly. We \emph{do not} use this in our main results; we use the block--scalar scheme of Remark~\ref{rem:our-scaling}.
\end{remark}

\section{Summary (operators used in the optimiser)}
On the collocation grid, the no-arbitrage conditions become linear maps of $a$:
\[
\begin{array}{ll}
\text{strike monotonicity:} & A_K a \ \le\ 0,\\[2pt]
\text{strike convexity:} & -\,A_{KK} a \ \le\ 0,\\[2pt]
\text{calendar at fixed $K$:} & -\,A_{\tau|K} a \ \le\ 0,\\[2pt]
\text{bounds:} & 0\ \le\ A a \ \le\ F.
\end{array}
\]
Enforce these either as hard linear inequalities or as convex quadratic penalties on the positive parts, all while keeping the problem a single QP.

\chapter[Coverage-seeking data term with bid-ask geometry]%
  {Coverage-seeking data term with\\ bid-ask geometry}
\label{sec:coverage-data}

On date $t$, let $\{(m_i,\tau_i)\}_{i=1}^N$ be the quote locations, and let
\[
b_i\coloneqq \text{bid}_i,\qquad
a_i\coloneqq \text{ask}_i,\qquad
y_i\coloneqq \tfrac{1}{2}(\text{bid}_i+\text{ask}_i),
\]
be the forward–discounted band endpoints and mids ($0\le b_i\le a_i$ after standard cleaning).
Let $A\in\R^{N\times P}$ be the price design matrix so that $\widehat y(a)\coloneqq A a$ are model
prices at the quote points. Set heteroscedastic residual weights
\[
w_i \;=\; \frac{\mathrm{liq}_i}{\max(a_i-b_i,\varepsilon)^2},\qquad W\coloneqq \diag(w_1,\dots,w_N),
\]
where $\mathrm{liq}_i = 1 + \sqrt{\mathrm{volume}_i} + 0.1\sqrt{\mathrm{open\_interest}_i}$ and with a small floor $\varepsilon>0$.

\section{Loss components and their roles}
Use two convex terms:
\begin{equation}\label{eq:fit-loss}
\mathcal{L}_{\text{fit}}(a)
\;=\; \underbrace{\frac{1}{2}\,\| W^{1/2} (A a - y)\|_2^2}_{\text{within-band centre anchor}}
\;+\;
\underbrace{\mu \sum_{i=1}^N \ell_{\text{band}}\big((Aa)_i; b_i,a_i\big)}_{\text{coverage/Slack pricing}},
\end{equation}
where the \emph{quadratic band hinge} for a scalar $\hat y$ and interval $[b,a]$ is
\begin{equation}\label{eq:band-hinge}
\ell_{\text{band}}(\hat y; b,a)
\;=\;\frac{1}{2}\big(\max\{b-\hat y,0\}^2 + \max\{\hat y-a,0\}^2\big)
\;=\;\frac{1}{2}\,\mathrm{dist}\!\big(\hat y,[b,a]\big)^2.
\end{equation}
This means that $\ell_{\text{band}}(\hat y;b,a)=0$ iff $\hat y\in[b,a]$, and otherwise equals one–half the
squared Euclidean distance to the band. The first term in \eqref{eq:fit-loss} selects a point
\emph{inside} the band (preferentially near $y$ defined as the mid) whenever that is compatible with the other constraints;
the second term is a convex surrogate that drives \emph{coverage} by penalizing exactly the squared
violation outside the band.

\begin{lemma}[Convexity of band loss and fit objective]
\label{lem:band-convex}
Fix $b\le a$ and define $\ell_{\mathrm{band}}$ as in \eqref{eq:band-hinge}.
Then $\hat y\mapsto \ell_{\mathrm{band}}(\hat y;b,a)$ is a convex function on $\R$.
Consequently, for any $\mu\ge0$, any weight matrix $W\succeq 0$, design matrix
$A\in\R^{N\times P}$ and vector $y\in\R^N$, the loss
\[
\mathcal{L}_{\mathrm{fit}}(a)
= \frac12\big\|W^{1/2}(Aa-y)\big\|_2^2
+ \mu\sum_{i=1}^N \ell_{\mathrm{band}}\big((Aa)_i;b_i,a_i\big)
\]
is convex in $a\in\R^P$.
\end{lemma}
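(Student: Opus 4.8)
The plan is to establish convexity in two stages: first the scalar band hinge $\ell_{\mathrm{band}}(\cdot;b,a)$, then the full objective by composition and summation rules for convex functions.

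First I would show that $\hat y \mapsto \ell_{\mathrm{band}}(\hat y;b,a)$ is convex on $\R$. The cleanest route is to observe that $t \mapsto \max\{t,0\}$ is convex and nondecreasing, while $\hat y \mapsto b - \hat y$ and $\hat y \mapsto \hat y - a$ are affine; hence $\hat y \mapsto \max\{b-\hat y,0\}$ and $\hat y \mapsto \max\{\hat y - a,0\}$ are each convex and nonnegative. Squaring a nonnegative convex function preserves convexity (since $s \mapsto s^2$ is convex and nondecreasing on $[0,\infty)$), so each of $\max\{b-\hat y,0\}^2$ and $\max\{\hat y-a,0\}^2$ is convex, and their sum (scaled by $\tfrac12$) is convex. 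Equivalently, one may invoke the stated identity $\ell_{\mathrm{band}}(\hat y;b,a) = \tfrac12\,\mathrm{dist}(\hat y,[b,a])^2$ together with the standard fact that the squared distance to a nonempty convex set is convex; I would mention this as the conceptual reason but prove the elementary version for self-containedness.

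Next I would assemble $\mathcal{L}_{\mathrm{fit}}$. The term $\tfrac12\|W^{1/2}(Aa-y)\|_2^2$ is a composition of the convex quadratic $z \mapsto \tfrac12\|W^{1/2} z\|_2^2$ (convex because $W \succeq 0$ makes its Hessian $A^\top W A \succeq 0$) with the affine map $a \mapsto Aa - y$, hence convex in $a$. For the second term, each summand $a \mapsto \ell_{\mathrm{band}}((Aa)_i;b_i,a_i)$ is the composition of the convex scalar function $\ell_{\mathrm{band}}(\cdot;b_i,a_i)$ from the first step with the affine map $a \mapsto (Aa)_i = e_i^\top A a$, and composition of a convex function with an affine map is convex. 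A nonnegative linear combination of convex functions is convex, so multiplying by $\mu \ge 0$ and summing over $i$ keeps convexity; adding the quadratic term preserves it.

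I do not anticipate a genuine obstacle here, as every step is a textbook convexity-preservation rule; the only mild care needed is the squaring step, where one must use that the inner function is nonnegative (squaring is not monotone on all of $\R$, so convexity of $g$ alone does not give convexity of $g^2$ in general). I would make that point explicit. If a reader prefers an even more elementary argument, I would add the alternative of noting that $\ell_{\mathrm{band}}$ is $C^1$ with nondecreasing derivative $\hat y \mapsto -\max\{b-\hat y,0\} + \max\{\hat y - a,0\}$ (so $\ell_{\mathrm{band}}'' \ge 0$ in the a.e./distributional sense), which gives convexity directly; this is optional colour rather than the main line.
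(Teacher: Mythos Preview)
Your proposal is correct and follows essentially the same approach as the paper's proof: both first argue that the scalar hinge terms $\max\{b-\hat y,0\}$ and $\max\{\hat y-a,0\}$ are convex and nonnegative (as compositions of the convex hinge with affine maps), then use that $s\mapsto s^2$ is convex and nondecreasing on $[0,\infty)$ to get convexity of the squares, and finally assemble $\mathcal{L}_{\mathrm{fit}}$ by composing with the affine maps $a\mapsto Aa-y$ and $a\mapsto (Aa)_i$ and invoking closure of convexity under nonnegative sums. Your extra remarks (the squared-distance-to-a-convex-set viewpoint and the nondecreasing-derivative alternative) are correct optional colour not present in the paper, and your explicit caution about the squaring step needing nonnegativity mirrors the paper's use of monotonicity on $[0,\infty)$.
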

\begin{proof}
Write
\[
\ell_{\mathrm{band}}(\hat y;b,a)
= \frac12\Big(\max\{b-\hat y,0\}^2 + \max\{\hat y-a,0\}^2\Big).
\]

Each map $\hat y\mapsto b-\hat y$ and $\hat y\mapsto \hat y-a$ is affine, hence convex.
The hinge map $t\mapsto \max\{t,0\}$ is convex as a pointwise maximum of two affine
functions ($t$ and $0$). Therefore
\[
\hat y \mapsto \max\{b-\hat y,0\},
\qquad
\hat y \mapsto \max\{\hat y-a,0\}
\]
are convex functions. Moreover, both are nonnegative.

The square map $s\mapsto s^2$ is convex and nondecreasing on $[0,\infty)$.
The composition of a convex, nondecreasing function with a convex, nonnegative
function is convex. Hence
\[
\hat y \mapsto \max\{b-\hat y,0\}^2,
\qquad
\hat y \mapsto \max\{\hat y-a,0\}^2
\]
are convex, and so is their sum. Multiplication by $\tfrac12>0$ preserves convexity,
therefore $\ell_{\mathrm{band}}(\cdot;b,a)$ is convex.

For the second claim, the map $a\mapsto Aa-y$ is affine, $W^{1/2}$ is linear, and
$f(z)=\tfrac12\|z\|_2^2$ is convex; the composition $a\mapsto f\big(W^{1/2}(Aa-y)\big)$
is therefore convex. We also have just shown that $\hat y\mapsto \ell_{\mathrm{band}}(\hat y;b_i,a_i)$
is convex for each $i$. Composition with the affine map $a\mapsto (Aa)_i$ preserves
convexity, so $a\mapsto \ell_{\mathrm{band}}\big((Aa)_i;b_i,a_i\big)$ is convex for all $i$.
Summation over $i$ and scaling by $\mu\ge0$ preserve convexity. Adding the two convex
terms yields that $\mathcal{L}_{\mathrm{fit}}$ is convex in $a$.
\end{proof}

\begin{remark}[Optional dead–zone/margin]
To avoid hugging the band edges, one may widen the interior by a margin $\delta_i\ge 0$ and
replace $[b_i,a_i]$ with $[b_i+\delta_i,a_i-\delta_i]$ in \eqref{eq:band-hinge}. All results below are unchanged.
\end{remark}

\section{Quadratic–program form via auxiliary slacks}\label{sec:band-hinge-qp}
While \eqref{eq:fit-loss} is already convex in $a$, it is possible to cast it as a QP with \emph{only} a quadratic
objective and linear constraints. Introduce non negative slacks $(u_i,v_i)$ per quote:
\begin{equation}\label{eq:hinge-qp}
\ell_{\text{band}}(\hat y;b,a)
\;=\;
\min_{u,v\ge 0}\ \frac{1}{2}(u^2+v^2)\quad
\text{s.t.}\quad u\ge b-\hat y,\ \ v\ge \hat y-a.
\end{equation}

\begin{lemma}[Exact equivalence of \eqref{eq:band-hinge} and \eqref{eq:hinge-qp}]
\label{lem:hinge-eq}
For any $b\le a$ and any $\hat y\in\R$, the optimal slacks in \eqref{eq:hinge-qp} are
$u^\star=(b-\hat y)_+$ and $v^\star=(\hat y-a)_+$, and the optimal value equals
$\tfrac{1}{2}[(b-\hat y)_+^2+(\hat y-a)_+^2]=\ell_{\text{band}}(\hat y;b,a)$.
\end{lemma}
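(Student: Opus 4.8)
The statement is a small separable minimisation, so the plan is to fix $\hat y$ and solve the two-variable problem in $(u,v)$ by noting it decouples completely: the objective $\tfrac12(u^2+v^2)$ is a sum of a term in $u$ only and a term in $v$ only, and the constraints $u\ge b-\hat y$, $v\ge\hat y-a$, $u\ge 0$, $v\ge 0$ likewise separate. Hence it suffices to solve twice the scalar problem $\min\{\tfrac12 t^2 : t\ge 0,\ t\ge c\}$ for the appropriate constant $c$ (namely $c=b-\hat y$ for the $u$-block and $c=\hat y-a$ for the $v$-block).

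\textbf{Key steps.} First I would state and dispose of the scalar lemma: for $c\in\R$, the problem $\min\{\tfrac12 t^2 : t\ge\max(c,0)\}$ has a convex objective that is strictly increasing on $[0,\infty)$, so the minimiser is the smallest feasible nonnegative $t$, which is exactly $t^\star=\max(c,0)=c_+$, with optimal value $\tfrac12 c_+^2$. (One can make this rigorous either by the KKT conditions — stationarity $t=\lambda_1+\lambda_2$ with $\lambda_i\ge 0$, complementary slackness forcing $t=c_+$ — or simply by monotonicity of $t\mapsto t^2$ on the feasible ray.) Second, I would observe that $\min_{u,v\ge 0}\{\tfrac12(u^2+v^2): u\ge b-\hat y,\ v\ge\hat y-a\}$ separates as the sum of $\min_{u\ge 0}\{\tfrac12 u^2: u\ge b-\hat y\}$ and $\min_{v\ge 0}\{\tfrac12 v^2: v\ge\hat y-a\}$, because neither the objective nor the feasible set couples $u$ and $v$. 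Applying the scalar lemma with $c=b-\hat y$ and $c=\hat y-a$ respectively gives $u^\star=(b-\hat y)_+$, $v^\star=(\hat y-a)_+$ and optimal value $\tfrac12[(b-\hat y)_+^2+(\hat y-a)_+^2]$. Third, I would note this last expression is literally the definition of $\ell_{\text{band}}(\hat y;b,a)$ from \eqref{eq:band-hinge}, closing the equivalence. Finally, I might remark that the hypothesis $b\le a$ is not actually needed for the algebra but is kept for consistency with the bid-ask setting; and that consequently, stacking $(u_i,v_i)$ over all quotes lets one rewrite $\mathcal L_{\text{fit}}$ with a purely quadratic objective and the linear constraints $u_i\ge b_i-(Aa)_i$, $v_i\ge (Aa)_i-a_i$, $u_i,v_i\ge 0$.

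\textbf{Main obstacle.} There is essentially no hard step here — the result is a one-line separable convex optimisation fact. The only thing to be careful about is writing the scalar subproblem cleanly: one must handle both signs of $c$ in a single stroke (when $c\le 0$ the constraint $t\ge c$ is slack and the unconstrained minimiser $t=0$ is feasible, giving value $0=\tfrac12 c_+^2$; when $c>0$ the constraint binds and $t^\star=c$, giving value $\tfrac12 c^2=\tfrac12 c_+^2$), so that the formula $t^\star=c_+$ with value $\tfrac12 c_+^2$ covers all cases uniformly. Getting that case split stated compactly, and making explicit that the two slack variables genuinely decouple, are the only points requiring any care.
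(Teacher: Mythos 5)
Your proof is correct and reaches the same conclusion by essentially the same elementary observation (the objective is increasing on the feasible ray, so the minimiser is the smallest feasible point); the paper simply organises the case split by the position of $\hat y$ relative to $[b,a]$ rather than by first decoupling $u$ and $v$ and then solving each scalar subproblem. Your separable/scalar-lemma framing is marginally cleaner and correctly notes that the hypothesis $b\le a$ is not actually used.
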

\begin{proof}
If $\hat y\in[b,a]$, feasibility with $u=v=0$ gives value $0$; nonnegativity enforces $u=v=0$ at optimum.
If $\hat y<b$, the constraints reduce to $u\ge b-\hat y>0$ and $v\ge 0$, so the quadratic objective
is minimised at $(u^\star,v^\star)=(b-\hat y,0)$. The case $\hat y>a$ is symmetric.
\end{proof}

Stacking \eqref{eq:hinge-qp} over quotes yields the QP
\begin{equation}\label{eq:data-qp}
\min_{a,u,v}\ 
\frac{1}{2}\|W^{1/2}(Aa-y)\|_2^2
+\frac{\mu}{2}\big(\|u\|_2^2+\|v\|_2^2\big)
\quad\text{s.t.}\quad
\begin{cases}
u\ge b-Aa,\ \ u\ge 0,\\
v\ge Aa-a,\ \ v\ge 0,
\end{cases}
\end{equation}
where all inequalities are coordinate-wise. 

\begin{remark}[KKT and projection viewpoint]
At the solution of \eqref{eq:hinge-qp} for a fixed $\hat y$, $(u^\star,v^\star)$ is precisely the
vector of signed violations projected onto the nonnegative orthant; equivalently,
$\sqrt{2\,\ell_{\text{band}}(\hat y;b,a)}=\mathrm{dist}(\hat y,[b,a])$.
Thus the second term in \eqref{eq:fit-loss} is $\tfrac{\mu}{2}\|\mathrm{dist}(Aa,[b,a])\|_2^2$ (coordinatewise distance).
\end{remark}

\subsection*{Strict convexity and uniqueness of the data QP}\label{sec:strict-convex-data}

\begin{definition}[Positive definiteness on the span of $A$]
We say that the quadratic form $Q(a)=\tfrac12\,a^\top A^\top W A\,a$ is \emph{positive definite on the span of $A$} if
\[
a\neq 0\ \text{ and }\ Aa\neq 0 \quad\Longrightarrow\quad a^\top A^\top W A\,a=(Aa)^\top W (Aa)>0.
\]
In particular, $Q$ is strictly convex in the prediction variable $p:=Aa$, and in coefficient space its only flat directions are those in $\ker(A)$: for each fixed $p$ the restriction of $Q$ to the affine fibre $\{a:\ Aa=p\}$ is constant.
\end{definition}

\begin{prop}[Strict convexity $\Rightarrow$ uniqueness]\label{prop:strict-convex-data}
Assume $W\succ 0$ (symmetric positive definite, ie $x^TWx>0$ for every non-zero vector $x$) and $\mu>0$. If $A^\top W A$ is positive definite on the span of $A$ (in particular, if $A$ has full column rank), then the objective of \eqref{eq:data-qp} is strictly convex in $(a,u,v)$, and hence \eqref{eq:data-qp} has a unique optimiser $(a^\star,u^\star,v^\star)$ whenever the feasible set is nonempty. Here and throughout, uniqueness is understood modulo the nullspace of $A$: if $A$ is rank-deficient and $(a_1,u_1,v_1)$ and $(a_2,u_2,v_2)$ are both optimal solutions of \eqref{eq:data-qp}, then $Aa_1 = Aa_2$, $u_1 = u_2$, $v_1 = v_2$, and $a_2 - a_1 \in \ker(A)$. Moreover, if a ridge term $\tfrac{\lambda}{2}\|a\|_2^2$ with $\lambda>0$ is added, the objective is strictly convex \emph{unconditionally} (regardless of $\mathrm{rank}(A)$), yielding uniqueness of the optimiser.
\end{prop}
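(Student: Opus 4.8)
The plan is to reduce the statement to a standard fact: a convex quadratic objective $\tfrac12 z^\top H z + c^\top z + \text{const}$ over a nonempty polyhedron has a unique minimiser as soon as it is strictly convex on the feasible directions, and it is strictly convex everywhere if $H \succ 0$. So the bulk of the work is to identify the Hessian $H$ of the objective of \eqref{eq:data-qp} in the stacked variable $z=(a,u,v)$ and to analyse its definiteness.

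First I would write the objective of \eqref{eq:data-qp} as the sum of three pieces: $\tfrac12 (Aa-y)^\top W (Aa-y)$, which contributes a Hessian block $A^\top W A$ in the $a$-coordinates; $\tfrac{\mu}{2}\|u\|_2^2$, contributing $\mu I$ in the $u$-block; and $\tfrac{\mu}{2}\|v\|_2^2$, contributing $\mu I$ in the $v$-block. Since there are no cross terms, the full Hessian is block-diagonal, $H=\operatorname{diag}(A^\top W A,\ \mu I,\ \mu I)$. Each block is symmetric PSD under the hypotheses ($W\succeq 0$ so $A^\top W A\succeq 0$; $\mu>0$ so $\mu I\succ 0$), which re-confirms convexity (already known from Lemma~\ref{lem:band-convex}), and more importantly it localises the only possible flat directions of $H$ to the $a$-block. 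Concretely, $z^\top H z = (Aa)^\top W (Aa) + \mu\|u\|_2^2 + \mu\|v\|_2^2$, and since $W\succ 0$ this vanishes iff $Aa=0$, $u=0$, $v=0$. So $\ker H = \ker A \times \{0\}\times\{0\}$; under full column rank of $A$ this is $\{0\}$ and $H\succ 0$, so the objective is strictly convex in $(a,u,v)$ and the standard fact gives uniqueness on the nonempty feasible set.

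For the rank-deficient case I would argue the same way but only claim uniqueness ``modulo $\ker A$'': if $(a_1,u_1,v_1)$ and $(a_2,u_2,v_2)$ are both optimal with optimal value $V$, then by convexity the midpoint is also optimal, and evaluating the objective at the midpoint versus the average of the two endpoint values forces the ``parallelogram'' defect $\tfrac14\big[(A(a_1-a_2))^\top W (A(a_1-a_2)) + \mu\|u_1-u_2\|_2^2 + \mu\|v_1-v_2\|_2^2\big]$ to be zero; since $W\succ 0$ and $\mu>0$ this yields $A(a_1-a_2)=0$, $u_1=u_2$, $v_1=v_2$, i.e. $Aa_1=Aa_2$, equal slacks, and $a_2-a_1\in\ker A$. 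Finally, for the ridge variant I would note that adding $\tfrac{\lambda}{2}\|a\|_2^2$ adds $\lambda I$ to the $a$-block of $H$, giving $A^\top W A + \lambda I \succ 0$ whenever $\lambda>0$ regardless of $\operatorname{rank}(A)$, hence $H\succ 0$ unconditionally and the optimiser is unique outright.

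The main obstacle is essentially bookkeeping rather than mathematics: one must be careful that the quantifier in the ``positive definite on the span of $A$'' hypothesis is used correctly — it only guarantees $(Aa)^\top W(Aa)>0$ when $Aa\neq 0$, which is exactly what the argument needs — and that the uniqueness claim is phrased in the quotient sense in the rank-deficient case, matching the statement. I would also make explicit the (standard, but worth stating) lemma that a strictly convex function on a nonempty convex set attains its infimum at most once, and that over a polyhedron with a strictly convex quadratic the infimum is in fact attained, so that ``whenever the feasible set is nonempty'' is the only hypothesis needed for existence.
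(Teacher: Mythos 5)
Your proposal is correct and follows essentially the same route as the paper: identify the block-diagonal Hessian $H=\operatorname{diag}(A^\top W A,\ \mu I,\ \mu I)$, compute $z^\top H z=(Aa)^\top W(Aa)+\mu\|u\|_2^2+\mu\|v\|_2^2$, and conclude that the only flat directions are $\ker A\times\{0\}\times\{0\}$, with the ridge term eliminating them. The one place you go slightly beyond the paper is the explicit midpoint/parallelogram argument for the rank-deficient case (the paper merely asserts uniqueness modulo $\ker A$ without spelling it out), and your closing remark that attainment of the infimum over a nonempty polyhedron should really be stated separately is a fair observation, but neither changes the underlying argument.
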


\begin{proof}

The slack QP has decision variables $z=(a, u, v)$ and objective
\[
F(a, u, v)  = \min_{a,u,v}\ 
\frac{1}{2}\|W^{1/2}(Aa-y)\|_2^2
+\frac{\mu}{2}\big(\|u\|_2^2+\|v\|_2^2\big)
\]
Expanding the first term
\[
\frac{1}{2} (Aa-y)^\top W (Aa-y)=\frac{1}{2}a^\top A^\top WAa-y^\top WAa+\frac{1}{2}y^\top Wy
\]
Firstly we can see that $\frac{1}{2}y^TWy$ does not depend on the decision variable, so it is a constant. A quadratic function can be written as:
\[
q(z)=\frac{1}{2}z^\top Hz+c^\top z+constant
\]
with $H$ symmetric. We can see from the expansion that $F(a,u,v)$ is a quadratic function of $(a,u,v)$ with Hessian
\[
H \;=\;
\begin{bmatrix}
A^\top W A & 0 & 0\\[2pt]
0 & \mu I & 0\\[2pt]
0 & 0 & \mu I
\end{bmatrix}.
\]
Also note that from the equation, the constant is $\frac{1}{2}y^TWy$ and 
\[
C \;=\;
\begin{bmatrix}
-A^\top W y\\[2pt]
0\\[2pt]
0
\end{bmatrix}.
\]
Since $\mu>0$, the $u$- and $v$-blocks are positive definite (namely $\mu I\succ 0$). 

For the $a$-block, take any direction $\delta a \in \R^P$, and define the prediction perturbation
\[
\delta p =A \delta a \in \R^N
\]
Then
\[
\delta a^\top A^\top W A\,\delta a = (A\delta a)^\top W (A\delta a) = (\delta p)^\top W (\delta p) = \|\delta p\|_{W}^2,
\]
where $\|z\|_{W}^2:=z^\top W z$ is the weighted Euclidean norm (since $W \succ 0$ it is in fact a norm. The interpretation is that curvature in the $a$-block aling $\delta a$ equals the weighted squared change in predictions produced by that $\delta a$:
\begin{itemize}
    \item If $A \delta a \neq 0$, predictions move and the term is $>0$
    \item If $A \delta a = 0$, predictions don't move and the term is $=0$
\end{itemize}
The set $ker(A)=\{ \delta a : A\delta a =0 \}$ is the nullspace (directions in coefficient space that leave predictions unchanged).

By the assumption “positive definite on the span of $A$”, $\|A\delta a\|_{W}^2>0$ for every $\delta a$ with $A\delta a\neq 0$; hence along any nonzero direction $(\delta a,\delta u,\delta v)$ with $(\delta u,\delta v)\neq 0$ or $A\delta a\neq 0$ we have
\[
(\delta a,\delta u,\delta v)^\top H\,(\delta a,\delta u,\delta v)
= \|A\delta a\|_{W}^2 + \mu\|\delta u\|_2^2 + \mu\|\delta v\|_2^2 \;>\; 0.
\]
Thus the objective is strictly convex on $\mathbb{R}^{P}\times \mathbb{R}^{N}\times \mathbb{R}^{N}$ modulo the trivial flat directions $\delta a\in\ker(A)$ with $\delta u=\delta v=0$. If $A$ has full column rank, $\ker(A)=\{0\}$ and $H\succ 0$, so the objective is strictly convex in $(a,u,v)$. A strictly convex objective over a convex (polyhedral) feasible set admits at most one minimiser; feasibility of \eqref{eq:data-qp} then yields uniqueness.

If a ridge term $\tfrac{\lambda}{2}\|a\|_2^2$ with $\lambda>0$ is added, the Hessian becomes
\[
H_\lambda \;=\;
\begin{bmatrix}
A^\top W A + \lambda I & 0 & 0\\[2pt]
0 & \mu I & 0\\[2pt]
0 & 0 & \mu I
\end{bmatrix}\ \succ\ 0,
\]
which is positive definite regardless of $\mathrm{rank}(A)$, hence the objective is strictly convex in $(a,u,v)$ and the minimiser is unique.
\end{proof}

\begin{remark}[What is unique when $A$ is rank-deficient]
If $A$ is rank-deficient and no ridge is used, the objective is strictly convex in the \emph{predictions} $p:=Aa$ and in $(u,v)$, but flat along $\ker(A)$. Consequently, the optimiser’s predictions $p^\star=Aa^\star$ and slacks $(u^\star,v^\star)$ are unique, while $a^\star$ is unique only up to additions by vectors in $\ker(A)$. Adding a small ridge fixes $a^\star$ uniquely.
\end{remark}

\section{Weights, units, and invariance}
The choice $w_i\propto (a_i-b_i)^{-2}$ makes the mid–squared error scale–free with respect to the
local band width; and the multiplicative factor $\mathrm{liq}_i$ up-weights more reliable quotes. The hinge
term already measures squared band distance, so $\mu$ is dimensionless. If one rescales all prices by a
factor $c>0$, then $A\!\gets cA$, $y\!\gets cy$, $b\!\gets cb$, $a\!\gets ca$; the minimiser of \eqref{eq:data-qp}
is unchanged after dividing $\mu$ by $c^2$ and multiplying $W$ by $c^{-2}$—this is the standard
homogeneity of quadratic objectives.

\section{Binned variant (optional)}
To stabilise sparse regions, let $G\in\{0,1\}^{B\times N}$ be a selector that sums quotes in $(m,\tau)$ bins.
Replacing per–quote hinge terms by binned terms yields
\begin{equation}\label{eq:binned-hinge}
\sum_{i=1}^N \ell_{\text{band}}\big((Aa)_i;b_i,a_i\big)
\;\leadsto\;
\sum_{b=1}^B \ell_{\text{band}}\Big( (G Aa)_b;\ (G b)_b,\ (G a)_b\Big),
\end{equation}
which is still a QP by Lemma~\ref{lem:hinge-eq}, with slacks now attached to bins. The binned form
penalises average violations in each cell and reduces sensitivity to isolated outliers.

\section{Feasibility and the role of \texorpdfstring{$\mu$}{mu}}
Let $\mathcal{S}_{\text{band}}=\{a\in\R^P:\ b\le Aa\le a\}$ be the band–feasible set (coordinate-wise).
\begin{itemize}[leftmargin=1.25em]
\item If $\mathcal{S}_{\text{band}}\neq\emptyset$ and other constraints (no–arb penalties or hard
inequalities) admit a feasible intersection, then taking $\mu\to\infty$ in \eqref{eq:data-qp}
forces $Aa$ into the band while the WLS term selects the point closest to $y$ among the band–feasible reconstructions.
\item If the intersection is empty, \eqref{eq:data-qp} finds the unique pair $(a,u,v)$ that minimises the
$\ell_2$–distance of $Aa$ to the rectangle $[b,a]$ while trading off the mid anchor through $W$.
\end{itemize}

\begin{remark}[Targeting coverage]
Define coverage$(a)=\frac{1}{N}\sum_i \mathbf{1}\{b_i\le (Aa)_i\le a_i\}$. Increasing $\mu$ reduces hinge violations and typically (empirically) increases coverage; we adjust $\mu$ with a short controller to hit a target coverage level. Formal monotonicity in $\mu$ is not required for the optimiser or the QP structure.
\end{remark}

\section{What is used in this paper (precise choices)}
\begin{enumerate}[leftmargin=2em,itemsep=2pt]
\item \textbf{Forward–discounted bands and mids:} $(b_i,a_i,y_i)$ constructed at each quote and robustly cleaned so $0\le b_i\le a_i$.
\item \textbf{Weights:} $w_i=\mathrm{liq}_i/\max(a_i-b_i,\varepsilon)^2$ with $\varepsilon$ a small fixed floor; $W=\diag(w)$.
\item \textbf{Band hinge:} quadratic $\ell_{\text{band}}$ as in \eqref{eq:band-hinge}; no interior margin unless stated (set $\delta_i=0$ by default).
\item \textbf{QP form:} auxiliary slacks $(u,v)\ge 0$ with linear constraints \eqref{eq:data-qp}, solved jointly with the rest of the QP (ridge, no–arb penalties, etc.).
\item \textbf{Optional binning:} $G$–aggregation in \eqref{eq:binned-hinge} enabled on sparse books; otherwise per–quote hinge.
\item \textbf{Controller for $\mu$:} simple scheduler that increases $\mu$ until the observed coverage reaches the target (with caps); WLS weight $W$ is held fixed across the schedule.
\end{enumerate}

\chapter[Ridge, spectral geometry, and transport regularisation]%
  {Ridge, spectral geometry, and\\ transport regularisation}
\label{sec:ridge-etc}
This chapter specifies the quadratic regularisers added to the objective, and the
price invariant reparameterisation used to improve conditioning. Every term below is a
fixed quadratic form in the coefficient vector $a\in\R^P$ (or in a linear reparameterisation
$\tilde a$), so the overall problem remains a convex QP.

\section{Spectral ridge (modal energy control)}\label{sec:ridge-gcv}
Let the tensor index be $(k,\ell)$ with $k=0,\dots,K$ (log–moneyness) and $\ell=0,\dots,L$
(maturity). Define a diagonal weight 
\[
\Lambda_{(k,\ell),(k,\ell)} \;=\; \big(1+\alpha\,k^2+\beta\,\ell^2\big)^{\,s},
\qquad \alpha,\beta>0,\ s>0,
\]
and set $\Lambda=\diag(\Lambda_{(k,\ell),(k,\ell)})\in\R^{P\times P}$ with $P=(K{+}1)(L{+}1)$.
The spectral ridge is defined as
\begin{equation}
\label{eq:ridge}
\mathcal{R}_{\mathrm{ridge}}(a)
\;=\; \frac{\lambda_{\mathrm{ridge}}}{2}\,\| \Lambda^{1/2} a \|_2^2
\;=\; \frac{\lambda_{\mathrm{ridge}}}{2}\,a^\top \Lambda a
\;=\; \frac{\lambda_{\mathrm{ridge}}}{2}\sum_{k=0}^{K}\sum_{l=0}^{L}(1+\alpha\,k^2+\beta\,\ell^2\big)^{\,s} a_{kl}^2.
\end{equation}

\paragraph{Interpretation:} Each coefficient is penalised by a weight that grows with its modal index. Low modes (small $k$ and $l$) get weight $\approx 1$; higher $(k,\ell)$ modes carry larger weights (are expensive). $\alpha,\beta$ tune the relative penalisation across $m$ vs.\ $\tau$, and $s$ controls the growth rate (asymptotically the weights grow like $(\alpha k^2 + \beta \ell^2)^s$, so $s=1$ gives quadratic growth in the indices and $s=2$ gives quartic growth). This damps high–frequency oscillations while leaving low modes essentially unchanged. In spectral methods, smooth functions have rapidly decaying coefficients and non-smooth noisy features push energy into high indices. Penalising $a_{kl}^2$ with a weight increasing in $k, l$ is the discrete analogue of a Sobolev $H^s$ seminorm, suppressing high frequency components components while leaving low modes mostly alone. The wraps $\Phi_m$ and $\Phi_\tau$ mean smoothness is enforced in the wrapped coordinates where the basis is well-conditioned (ATM focus and short-$\tau$ density).

Unless otherwise stated, we fix $\alpha=\beta=1$ and $s=2$. The scalar
$\lambda_{\mathrm{ridge}}$ is chosen once per date by a small-subsample generalised
cross–validation (GCV) pass on the \emph{linear} WLS subproblem. Build $A_\mathrm{sub}$ and $W_\mathrm{sub}$ on a random $8\%$ subset of quotes. Namely $A_{sub}\in \R ^{N_{sub}\times P}$, $W_{sub}=diag(w_{sub}) \succ 0$ and $y_{sub}\in \R ^{N_{sub}}$. For any $\lambda > 0$, solve the ridge-regularised weighted least squares:
\[
\min_{a}\;\frac12\|W_\mathrm{sub}^{1/2}(A_\mathrm{sub}a-y_\mathrm{sub})\|_2^2
+\frac{\lambda}{2}\| \Lambda^{1/2} a\|_2^2,
\]
This has the closed form
\[
a(\lambda)= (A_{\mathrm{sub}}^\top W_{\mathrm{sub}} A_{\mathrm{sub}}+\lambda \Lambda)^{-1}A_{\mathrm{sub}}^\top W_{\mathrm{sub}} y_{\mathrm{sub}}
\]
Define the weighted residual and the hat matrix
\[
r(\lambda)=W_\mathrm{sub}^{1/2}(A_\mathrm{sub}a(\lambda)-y_\mathrm{sub}); \quad H(\lambda)=W_\mathrm{sub}^{1/2}A_\mathrm{sub}(A_\mathrm{sub}^\top W_\mathrm{sub}A_\mathrm{sub}
+\lambda\Lambda)^{-1}A_\mathrm{sub}^\top W_\mathrm{sub}^{1/2}.
\]

\begin{remark}
    A useful identity for computation is 
    \[
    tr( H(\lambda)) = tr \left( SG(\lambda)^{-1} \right), \quad S=(A_{\mathrm{sub}}^\top W_{\mathrm{sub}} A_{\mathrm{sub}}), \quad G(\lambda)=S+\lambda \Lambda .  
    \]
\end{remark}

We compute the GCV (Generalised cross-validation) score 
\[
\mathrm{GCV}(\lambda)=\frac{\|r(\lambda)\|_2^2}{\big(N_\mathrm{sub}-\mathrm{tr}\,H(\lambda)\big)^2}.
\]
Choose
$\lambda_{\mathrm{ridge}}=\arg\min_\lambda\mathrm{GCV}(\lambda)$ and use it in the full QP.

\begin{lemma}[Spectral ridge is a fixed quadratic form]
\label{lem:ridge-quadratic}
Fix the tensor grid $(k,\ell)$ with $k=0,\dots,K$, $\ell=0,\dots,L$, and
hyperparameters $\alpha,\beta>0$, $s>0$. Let $\Lambda\in\R^{P\times P}$,
$P=(K{+}1)(L{+}1)$, be the diagonal matrix defined above and let
$\lambda_{\mathrm{ridge}}\ge 0$ be fixed. Then for every $a\in\R^P$,
the spectral ridge \eqref{eq:ridge} can be written as
\[
\mathcal{R}_{\mathrm{ridge}}(a)
\;=\; \frac{1}{2}\,a^\top Q_{\mathrm{ridge}} a,
\qquad
Q_{\mathrm{ridge}} := \lambda_{\mathrm{ridge}}\,\Lambda,
\]
with $Q_{\mathrm{ridge}}$ symmetric positive semidefinite and independent of $a$.
In particular, $\mathcal{R}_{\mathrm{ridge}}$ is a convex quadratic function of
the coefficient vector $a$, and enters any optimisation problem as a fixed
quadratic form (for given $(\alpha,\beta,s)$ and $\lambda_{\mathrm{ridge}}$).
\end{lemma}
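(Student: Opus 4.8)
The statement is essentially an algebraic unfolding of the definition \eqref{eq:ridge}, so the plan is to (i) rewrite the weighted norm as an explicit quadratic form, (ii) identify the matrix $Q_{\mathrm{ridge}}$ and verify its three claimed properties (symmetry, positive semidefiniteness, independence of $a$), and (iii) deduce convexity from the sign of the Hessian. No limiting or approximation argument is needed; everything is finite-dimensional and exact.

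\textbf{Key steps, in order.} First I would record that $\Lambda$ is a real diagonal matrix with entries $\Lambda_{(k,\ell),(k,\ell)} = (1+\alpha k^2+\beta\ell^2)^s$, each of which is $\ge 1>0$ for all admissible $(k,\ell)$ since $\alpha,\beta,s>0$; hence $\Lambda\succ 0$, its symmetric square root $\Lambda^{1/2}=\diag\!\big((1+\alpha k^2+\beta\ell^2)^{s/2}\big)$ is well defined, and $(\Lambda^{1/2})^\top\Lambda^{1/2}=\Lambda$. Then, for any $a\in\R^P$,
\[
\mathcal{R}_{\mathrm{ridge}}(a)=\frac{\lambda_{\mathrm{ridge}}}{2}\,\|\Lambda^{1/2}a\|_2^2
=\frac{\lambda_{\mathrm{ridge}}}{2}\,(\Lambda^{1/2}a)^\top(\Lambda^{1/2}a)
=\frac{\lambda_{\mathrm{ridge}}}{2}\,a^\top\Lambda a
=\frac12\,a^\top\big(\lambda_{\mathrm{ridge}}\Lambda\big)a,
\]
so setting $Q_{\mathrm{ridge}}:=\lambda_{\mathrm{ridge}}\Lambda$ gives the claimed representation, and matching with the triple-sum expression in \eqref{eq:ridge} confirms the entrywise formula. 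Next I would verify the properties of $Q_{\mathrm{ridge}}$: it is diagonal, hence symmetric; its eigenvalues are exactly $\lambda_{\mathrm{ridge}}(1+\alpha k^2+\beta\ell^2)^s\ge 0$ because $\lambda_{\mathrm{ridge}}\ge 0$ and each factor is positive, so $Q_{\mathrm{ridge}}\succeq 0$; and every entry depends only on the fixed data $(k,\ell,\alpha,\beta,s,\lambda_{\mathrm{ridge}})$ and not on $a$, so $Q_{\mathrm{ridge}}$ is a constant matrix. Finally, a quadratic $a\mapsto\tfrac12 a^\top Q a$ with $Q\succeq 0$ has Hessian $Q\succeq 0$ everywhere, hence is convex; therefore $\mathcal{R}_{\mathrm{ridge}}$ is a convex quadratic function of $a$ and enters any objective as a fixed quadratic term, as claimed.

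\textbf{Main obstacle.} There is no substantive difficulty here; the only points requiring a word of care are (a) noting that $\Lambda^{1/2}$ is genuinely well defined because the diagonal entries are strictly positive (so the passage $\|\Lambda^{1/2}a\|_2^2=a^\top\Lambda a$ is legitimate), and (b) being precise that $Q_{\mathrm{ridge}}$ is only guaranteed positive \emph{semi}definite rather than definite, since $\lambda_{\mathrm{ridge}}=0$ is permitted by the hypothesis $\lambda_{\mathrm{ridge}}\ge 0$; for $\lambda_{\mathrm{ridge}}>0$ one in fact has $Q_{\mathrm{ridge}}\succ 0$, but this stronger conclusion is not needed for the lemma.
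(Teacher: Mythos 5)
Your proof is correct and follows essentially the same route as the paper's: expand $\|\Lambda^{1/2}a\|_2^2 = a^\top\Lambda a$, identify $Q_{\mathrm{ridge}} = \lambda_{\mathrm{ridge}}\Lambda$, and read off symmetry, positive semidefiniteness, and independence of $a$ directly from the diagonal structure. The extra remarks about the square root being well defined and the eigenvalue characterisation are harmless elaborations of the same argument.
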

\begin{proof}
By definition,
\[
\Lambda_{(k,\ell),(k,\ell)}
= \bigl(1+\alpha k^2+\beta \ell^2\bigr)^s > 0
\quad\text{for all } 0\le k\le K,\ 0\le \ell\le L,
\]
so $\Lambda$ is diagonal with strictly positive diagonal entries and hence
$\Lambda\succeq 0$ (indeed $\Lambda\succ 0$). For any $a\in\R^P$,
\[
\|\Lambda^{1/2}a\|_2^2
= a^\top \Lambda^{1/2}\Lambda^{1/2} a
= a^\top \Lambda a.
\]
Thus \eqref{eq:ridge} can be rewritten as
\[
\mathcal{R}_{\mathrm{ridge}}(a)
= \frac{\lambda_{\mathrm{ridge}}}{2}\,a^\top \Lambda a
= \frac{1}{2}\,a^\top Q_{\mathrm{ridge}} a,
\qquad
Q_{\mathrm{ridge}} := \lambda_{\mathrm{ridge}}\Lambda.
\]
The matrix $Q_{\mathrm{ridge}}$ is symmetric. Since $\lambda_{\mathrm{ridge}}\ge 0$
and $\Lambda\succeq 0$, we have $Q_{\mathrm{ridge}}\succeq 0$, so the map
$a\mapsto \tfrac12 a^\top Q_{\mathrm{ridge}} a$ is a convex quadratic function.
For fixed hyperparameters $(\alpha,\beta,s)$, grid sizes $(K,L)$, and a chosen
value of $\lambda_{\mathrm{ridge}}$, the matrix $Q_{\mathrm{ridge}}$ is completely
determined and does not depend on $a$. Hence $\mathcal{R}_{\mathrm{ridge}}$ is a
fixed quadratic form in the coefficient vector.
\end{proof}

\section{\texorpdfstring{$\Lambda$–module: price–invariant reparameterisation}
                     {Lambda–module: price–invariant reparameterisation}}
\label{sec:Lambda-module}
Now propose a change of coordinates in the coefficient space to make the optimisation numerically well-behaved.  Since the variables and matrices are changed in a consistent way, all prices and constraint values stay identical, and the conditioning of the problem improves.

Let $U\in\R^{P\times P}$ be a fixed, invertible linear map. Define new coefficients
$\tilde a:=U^{-1}a$ and replace every block by post–multiplication with $U$:
\[
A\leftarrow A U,\quad
A_m\leftarrow A_m U,\quad
A_{mm}\leftarrow A_{mm} U,\quad
A_\tau\leftarrow A_\tau U,\quad
\text{etc.}
\]
Predictions are unchanged: $(AU)\tilde a=A(U\tilde a)=Aa$. The ridge becomes
\[
\mathcal{R}_{\mathrm{ridge}}(a)
=\frac{\lambda_{\mathrm{ridge}}}{2}\,\| \Lambda^{1/2} U \tilde a \|_2^2
=\frac{\lambda_{\mathrm{ridge}}}{2}\,\tilde a^\top \underbrace{U^\top \Lambda U}_{\widetilde\Lambda}\,\tilde a,
\]
i.e.\ the same quadratic form in $\tilde a$ with $\widetilde\Lambda=U^\top\Lambda U$.

\paragraph{Blockwise whitening and why it is safe:}
Firstly, partition columns by maturity slice $\ell$ (all $m$–modes for that $\ell$) and define the sets $\{\mathcal G_\ell\}$.
Then for each block form the \emph{weighted} thin QR
\[
W^{1/2}A_{[:,\mathcal G_\ell]} \;=\; Q_\ell R_\ell,\qquad Q_\ell^\top Q_\ell=I,\ R_\ell \text{ invertible upper–triangular},
\]
and assemble a block–diagonal $U$ with $U_{\mathcal G_\ell,\mathcal G_\ell}:=R_\ell^{-1}$ (zeros off–block). Optionally, right–scale
columns so $\|W^{1/2}(A U)_{[:,j]}\|_2=1$ by replacing $U\leftarrow U D^{-1}$ with $D=\diag(d_j)$, $d_j=\|W^{1/2}(AU)_{[:,j]}\|_2$.

We implicitly require that each slice $A[:,G_\ell]$ have full column rank under the
$W$–inner product, so that the thin QR factorisation with a square, invertible
$R_\ell$ exists. This condition is satisfied for the Chebyshev grids used in our
experiments; in degenerate cases one can replace $R_\ell^{-1}$ by a pseudo–inverse
obtained from a rank–revealing QR or SVD, at the price of a slightly lower–dimensional
reparameterisation.

\begin{prop}[W–orthonormality within slices]\label{prop:whiten-block}
With $U$ defined above,
\[
(AU)_{[:,\mathcal G_\ell]} \;=\; W^{-1/2}Q_\ell
\quad\Longrightarrow\quad
(AU)_{[:,\mathcal G_\ell]}^\top W\,(AU)_{[:,\mathcal G_\ell]} \;=\; I_{|\mathcal G_\ell|}.
\]
In particular, the WLS normal matrix becomes block–identity within each slice (collinearity removed).
\end{prop}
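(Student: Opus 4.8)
The plan is to unwind the definition of $U$ on the slice $\mathcal{G}_\ell$ and substitute the thin QR factorisation $W^{1/2}A_{[:,\mathcal{G}_\ell]} = Q_\ell R_\ell$. Since $U$ is block-diagonal with $U_{\mathcal{G}_\ell,\mathcal{G}_\ell} = R_\ell^{-1}$ and zeros off-block, post-multiplying $A$ by $U$ acts slice-by-slice: the columns of $(AU)$ indexed by $\mathcal{G}_\ell$ are exactly $A_{[:,\mathcal{G}_\ell]} R_\ell^{-1}$, because the off-block zeros of $U$ mean no other columns of $A$ contribute. So the first step is to record $(AU)_{[:,\mathcal{G}_\ell]} = A_{[:,\mathcal{G}_\ell]} R_\ell^{-1}$.

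Next I would left-multiply by $W^{1/2}$ and use the QR identity: $W^{1/2}(AU)_{[:,\mathcal{G}_\ell]} = W^{1/2}A_{[:,\mathcal{G}_\ell]} R_\ell^{-1} = Q_\ell R_\ell R_\ell^{-1} = Q_\ell$, which uses invertibility of $R_\ell$ (guaranteed by the stated full-column-rank-under-$W$ assumption). Multiplying both sides by $W^{-1/2}$ gives $(AU)_{[:,\mathcal{G}_\ell]} = W^{-1/2}Q_\ell$, the first claimed equality. Then the second equality is a direct computation: $(AU)_{[:,\mathcal{G}_\ell]}^\top W (AU)_{[:,\mathcal{G}_\ell]} = (W^{-1/2}Q_\ell)^\top W (W^{-1/2}Q_\ell) = Q_\ell^\top W^{-1/2} W W^{-1/2} Q_\ell = Q_\ell^\top Q_\ell = I_{|\mathcal{G}_\ell|}$, using symmetry of $W^{1/2}$ and $W^{-1/2}$ together with $Q_\ell^\top Q_\ell = I$ from the thin QR.

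For the final sentence about the WLS normal matrix, I would note that the WLS normal matrix in the reparameterised coordinates is $(AU)^\top W (AU)$, and its diagonal block corresponding to slice $\mathcal{G}_\ell$ is precisely $(AU)_{[:,\mathcal{G}_\ell]}^\top W (AU)_{[:,\mathcal{G}_\ell]} = I$, so each slice's self-interaction is the identity and within-slice collinearity is eliminated. (Cross-slice blocks need not vanish, which is why the claim is stated "within each slice.")

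There is no real obstacle here — the proof is essentially a one-line substitution of the QR factorisation. The only point requiring care is the bookkeeping that post-multiplication by a block-diagonal $U$ really does act column-block-wise and picks out $A_{[:,\mathcal{G}_\ell]} R_\ell^{-1}$ with no contamination from other slices; making the column-index slicing explicit handles this. I would also explicitly invoke the standing full-column-rank assumption so that $R_\ell^{-1}$ exists, and remark that the optional right-scaling by $D$ changes this to a diagonal (rather than identity) block but does not affect the orthogonality structure.
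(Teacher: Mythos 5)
Your proposal is correct and follows essentially the same route as the paper's proof: substitute the thin QR factorisation $W^{1/2}A_{[:,\mathcal G_\ell]}=Q_\ell R_\ell$ together with the block-diagonal form of $U$ to obtain $(AU)_{[:,\mathcal G_\ell]}=W^{-1/2}Q_\ell$, then compute $(AU)_{[:,\mathcal G_\ell]}^\top W(AU)_{[:,\mathcal G_\ell]}=Q_\ell^\top Q_\ell=I$. Your additional remarks on why the block-diagonal structure of $U$ prevents cross-slice contamination, and on the effect of the optional column rescaling $D$, are accurate elaborations of points the paper leaves implicit.
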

\begin{proof}
$A_{[:,\mathcal G_\ell]}=W^{-1/2}Q_\ell R_\ell$ and $U_{\mathcal G_\ell,\mathcal G_\ell}=R_\ell^{-1}$ give
$(AU)_{[:,\mathcal G_\ell]}=W^{-1/2}Q_\ell$. 

Hence $(AU)_{[:,\mathcal G_\ell]}^\top W (AU)_{[:,\mathcal G_\ell]}=Q_\ell^\top Q_\ell=I$.
\end{proof}

\begin{prop}[Price/constraint invariance]\label{prop:inv}\label{prop:equiv}
Let $\tilde a=U^{-1}a$, $A':=AU$, and for any block $A_\bullet$ set $A_\bullet':=A_\bullet U$. Then
\[
A'\tilde a=Aa,\qquad A_\bullet'\tilde a=A_\bullet a.
\]
Consequently, hard inequalities $A_\bullet a\le 0$ are equivalent to $(A_\bullet U)\tilde a\le 0$, and soft penalties that depend on $A_\bullet a$ take the same values when written in $\tilde a$.
\end{prop}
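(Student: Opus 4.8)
The plan is to reduce everything to associativity of matrix multiplication together with the identity $U U^{-1} = U^{-1} U = I_P$, which holds because $U$ is assumed invertible. Concretely, for the price block I would write $A' \tilde a = (AU)(U^{-1}a) = A\,(U U^{-1})\,a = A\,I_P\,a = Aa$, and for an arbitrary derivative/no-arbitrage block $A_\bullet \in \{A_m, A_{mm}, A_\tau, A_K, A_{KK}, A_{\tau|K}, \dots\}$ the identical computation gives $A_\bullet' \tilde a = (A_\bullet U)(U^{-1}a) = A_\bullet a$. No property of the specific block is used beyond the fact that it is a matrix acting on the right by $U$, so a single line covers all blocks simultaneously.

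Next I would record the consequence for hard constraints. Fix a block $A_\bullet$ and note that, by the identity just proved, the two vectors $A_\bullet a$ and $A_\bullet' \tilde a$ are \emph{literally equal} in $\R^{G}$; hence the componentwise inequality $A_\bullet a \le 0$ holds if and only if $A_\bullet' \tilde a \le 0$. To phrase this as an equivalence of feasible sets rather than of a single vector, I would also observe that the map $a \mapsto \tilde a = U^{-1} a$ is a linear bijection of $\R^P$ onto itself (with inverse $\tilde a \mapsto U\tilde a$), so the correspondence $a \leftrightarrow \tilde a$ is one-to-one and onto; therefore $\{a : A_\bullet a \le 0\}$ and $\{\tilde a : (A_\bullet U)\tilde a \le 0\}$ are carried onto each other under this bijection, and the same holds for any finite intersection of such constraints (e.g.\ the full static no-arbitrage cone $A_K a \le 0$, $-A_{KK} a \le 0$, $-A_{\tau|K} a \le 0$, $0 \le Aa \le F$).

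Finally, for the soft-penalty claim I would argue that every penalty used in the objective is of the form $\Psi(A_\bullet a)$ for some fixed function $\Psi$ (for instance $\Psi(z) = \tfrac{1}{2}\|(z)_+\|_2^2$ for the no-arbitrage hinges, or $\Psi(z) = \tfrac{1}{2}\|W^{1/2}(z - y)\|_2^2$ for the data term via $A_\bullet = A$); since $A_\bullet' \tilde a = A_\bullet a$, we get $\Psi(A_\bullet' \tilde a) = \Psi(A_\bullet a)$, i.e.\ the penalty takes the same value whether expressed in $a$ or in $\tilde a$. I expect no real obstacle here: the entire proposition is a bookkeeping statement about reparameterisation, and the only point requiring a word of care is making explicit that invertibility of $U$ upgrades the pointwise equality $A_\bullet' \tilde a = A_\bullet a$ into a genuine bijective equivalence of the feasible sets (so that nothing is lost or gained by optimising in $\tilde a$ rather than $a$).
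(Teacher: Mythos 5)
Your proof is correct and follows essentially the same route as the paper's (very terse) proof, which just computes $A'\tilde a=AUU^{-1}a=Aa$; you simply spell out the downstream implications for feasible sets and soft penalties that the paper leaves implicit. The extra remark that $a\mapsto U^{-1}a$ is a bijection, so the correspondence of feasible sets is genuine and not merely pointwise, is a worthwhile clarification but not a different approach.
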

\begin{proof}
We see by definition of $A'$ and $\tilde a$, $A'\tilde a=AUU^{-1}a=Aa$ as required. 
\end{proof}

\begin{prop}[Ridge congruence]\label{prop:ridge-cong}
For any symmetric $Q\succeq 0$,
\[
\frac12\,a^\top Q a \;=\; \frac12\,\tilde a^\top (U^\top Q U)\,\tilde a.
\]
In particular, the spectral ridge becomes $\frac{\lambda_{\mathrm{ridge}}}{2}\,\tilde a^\top \widetilde\Lambda\,\tilde a$
with $\widetilde\Lambda:=U^\top \Lambda U\succeq 0$.
\end{prop}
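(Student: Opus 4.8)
The plan is to verify the single algebraic identity $\tfrac12 a^\top Q a = \tfrac12 \tilde a^\top (U^\top Q U)\tilde a$ by direct substitution, and then read off the two consequences. This is an entirely elementary congruence-of-quadratic-forms computation, so there is essentially no obstacle; the only point requiring a (trivial) remark is the positive semidefiniteness of $U^\top Q U$.

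First I would recall the defining relation $\tilde a = U^{-1}a$, equivalently $a = U\tilde a$, with $U$ invertible (so the substitution is a genuine change of variables, not a projection). Substituting $a = U\tilde a$ into $\tfrac12 a^\top Q a$ gives $\tfrac12 (U\tilde a)^\top Q (U\tilde a) = \tfrac12 \tilde a^\top U^\top Q U \tilde a$, using only $(U\tilde a)^\top = \tilde a^\top U^\top$. This is the claimed identity. Since the identity holds for every $a\in\R^P$ (equivalently every $\tilde a\in\R^P$, by invertibility of $U$), the two quadratic forms agree as functions.

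Next I would check that $\widetilde Q := U^\top Q U$ is symmetric positive semidefinite whenever $Q$ is: symmetry is immediate from $(U^\top Q U)^\top = U^\top Q^\top U = U^\top Q U$, and for any $z\in\R^P$ we have $z^\top U^\top Q U z = (Uz)^\top Q (Uz) \ge 0$ because $Q\succeq 0$. (If one wants $\widetilde Q\succ 0$ from $Q\succ 0$, the same line gives it using invertibility of $U$, since $Uz\neq 0$ when $z\neq 0$.) Finally, specialising $Q = \Lambda$ — which is symmetric positive semidefinite (in fact positive definite) by Lemma~\ref{lem:ridge-quadratic} — and multiplying through by the fixed scalar $\lambda_{\mathrm{ridge}}\ge 0$ yields $\mathcal{R}_{\mathrm{ridge}}(a) = \tfrac{\lambda_{\mathrm{ridge}}}{2}\,\tilde a^\top \widetilde\Lambda\,\tilde a$ with $\widetilde\Lambda = U^\top\Lambda U\succeq 0$, which is exactly the stated reparameterised form and matches the expression already derived informally in Section~\ref{sec:Lambda-module}. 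No step here is delicate; the ``hard part,'' such as it is, is merely noting that invertibility of $U$ is what makes the claim a statement about the \emph{same} function rather than a one-sided inequality or a restriction.
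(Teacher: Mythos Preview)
Your proof is correct and follows exactly the same approach as the paper's: substitute $a=U\tilde a$ and observe that congruence preserves positive semidefiniteness. The paper's proof is a single line; yours simply spells out the substitution and the PSD check in more detail.
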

\begin{proof}
Substitute $a=U\tilde a$ and regroup; congruence preserves positive semidefiniteness.
\end{proof}

\begin{remark}[Global whitening as a special case]
If one QR–factorises $W^{1/2}A=QR$ once and sets $U=R^{-1}$, then $(AU)^\top W (AU)=I_P$ (full whitening); the blockwise construction above is its per–slice counterpart.
\end{remark}

\section{\texorpdfstring
  {DW--module: discrete transport ($H^{-1}$) smoothing of density}
  {DW--module: discrete transport H-1 smoothing of density}}
\label{sec:dw-mod}

This section penalises oscillations of the risk-neutral density $\rho$ along $m$ by measuring how much potential $\phi$ is needed so its discrete derivative matches $\rho$. High frequency wiggles are expensive and slowly varying shapes cost little.

Let $\rho=\partial_{KK} C_f$ denote the risk–neutral density. On each maturity slice
$\tau=\tau_g$, we discretise the $m$–axis on the collocation nodes $m_{j}$ and build:
\begin{itemize}[leftmargin=1.5em]
\item a diagonal mass matrix $M_m=\diag(w^{(m)})$ with Gauss–Lobatto (Chebyshev) quadrature weights.
\item a first–difference matrix $D_m\in\R^{(M_m-1)\times M_m}$ (forward differences with
homogeneous Neumann boundary, i.e. zero–flux ends) with $(D_m \phi )_i=\phi_{i+1}-\phi_i $ for $i=1,...,M_m-1$.
\end{itemize}

Define the discrete Neumann Laplacian in $m$ by
\[
L_m \;:=\; D_m^\top M_m^{-1} D_m \;\in\; \R^{M_m\times M_m}.
\]
$L_m$ is symmetric positive semidefinite and its nullspace is the span of the constant vector
(along each slice). For a discrete function $f\in\R^{M_m}$, the discrete $H^{-1}(m)$
seminorm is defined by
\[
\|f\|_{H^{-1}(m)}^2 \;:=\; f^\top \underbrace{L_m^{+}}_{\text{Moore--Penrose pseudoinverse of }L_m} f,
\]
where $^{+}$ is the Moore–Penrose pseudoinverse on the range.
This is the standard discrete Neumann $H^{-1}$ seminorm: $L_m^{+}$ plays the role of
the inverse Laplacian, so only the mean–zero component of $f$ is penalised and the
constant/mean mode lies in the nullspace.

Assemble the full grid operator
$L^{+}=\mathrm{blkdiag}(L_m^{+},\dots,L_m^{+})$ across slices and the sampling matrix
$S:\R^G\to\R^{M_m(M_\tau{+}1)}$ that reshapes grid values into slice stacks. With
$E:=S\,A_{KK}\in\R^{M_m(M_\tau{+}1)\times P}$ (density map in slice–stacked order), the DW penalty is
\begin{equation}
\label{eq:dw-penalty}
\mathcal{R}_{\mathrm{DW}}(a)
\;=\; \frac{\lambda_{\mathrm{DW}}}{2}\,\| \rho(a)\|_{H^{-1}}^2
\;=\; \frac{\lambda_{\mathrm{DW}}}{2}\, a^\top E^\top L^{+} E\, a,
\end{equation}
a fixed quadratic form once $L^{+}$ is precomputed (e.g. Cholesky on each $L_m$ on the
mean–zero subspace, plus a rank–1 fix for the constant nullspace).

In Fourier language, $\|f\|_{H^{-1}(m)}^2 \sim \sum _k |f_k|^2/k^2$. The high $k$ content is amplified, so the optimiser prefers smooth densities. The constant/mean component sits in the nullspace and is not penalised. The constraint set continues to control positivity/monotonicity and DW just damps ripples that those constraints do not eliminate.

We apply \eqref{eq:dw-penalty} \emph{along $m$ only} on each slice, with Neumann
boundaries and Chebyshev–Lobatto weights. The difference operator $D_m$ only takes
interior forward differences,
\[
(D_m \phi )_i=\phi_{i+1}-\phi_i \qquad i=1,\dots,M_m-1,
\]
so the associated Laplacian $L_m = D_m^\top M_m^{-1} D_m$ has the constant vector in
its nullspace: adding a constant to $\phi$ does not change $D_m\phi$ or the quadratic
form. In the $H^{-1}$ penalty, this means the mean component of $\rho$ is left
unpenalised and only fluctuations around the mean contribute to $\|\rho\|_{H^{-1}}$;
this is the discrete zero–flux (Neumann) condition at the ends.

Equivalently, on each slice we can view $\rho$ as a one–dimensional ``charge
distribution'' along $m$. The matrix $L_m$ is a discrete Neumann Laplacian on
the nodes, and $L_m^{+}$ acts as its inverse on mean–zero densities. For any
profile $f\in\R^{M_m}$ with zero $M_m$–mean there exists a potential
$\phi$ (unique up to an additive constant) solving the discrete Poisson problem
\[
L_m \phi = f \qquad \text{(Neumann in $m$)}.
\]
Among all such potentials, the one with the smallest discrete Dirichlet energy
$\phi^\top L_m \phi$ satisfies
\[
\phi^\top L_m \phi \;=\; f^\top L_m^{+} f \;=\; \|f\|_{H^{-1}(m)}^2.
\]
So $\|f\|_{H^{-1}(m)}^2$ measures how much ``bending'' of the potential $\phi$
is needed to support the density: sharply oscillating $f$ requires a highly
curved potential and incurs a large penalty, while slowly varying $f$ can be
supported by a gentle potential and is cheap. The constant component of $f$
generates no potential at all and is left unpenalised.

Chebyshev-Lobatto nodes cluster near the endpoints, and without quadrature weights, any discrete $L^2$ inner product would outweigh the ends and underweight the middle. $M_m = diag(w^{(m)})$ fixes that. For a smooth function $g$,
\[
\sum_{j=1}^{M_m} w_j^{(m)}g(m_j) \approx \int_{m_{min}}^{m_{max}}g(m) dm.
\]
So $\phi ^\top M_m \phi$ is a proper discretisation of $\int \phi(m)^2dm$, independent of how densely sampled near the ends.
The weights are obtained using standard Clenshaw-Curtis (Gauss-Lobatto) quadrature on the Chebyshev-Lobatto nodes in the reference variable $n \in [-1,1]$ and then rescaled to $m \in [m_{min}, m_{max}]$. 

If $m=\frac{m_{max}-m_{min}}{2}n+\frac{m_{max}+m_{min}}{2}$, the Jacobian is constant and 
\[
w_j^{(m)}= \frac{m_{max}-m_{min}}{2} w_j^{(n)}.
\]
The weight is tapered in maturity:
$\lambda_{\mathrm{DW}}(\tau_g)=\lambda_{\mathrm{DW}}^{(0)}\cdot \min\{1,\tau_\star/\tau_g\}$ to
dampen short maturity ripples; defaults $\tau_\star=5$ trading days. For $\tau_g \leq \tau_*$, we use the full DW smoothing strength $\lambda_{\mathrm{DW}}^{(0)}$ and for $\tau_g > \tau_*$ the smoothing weight decays like $1/\tau_g$ so that the long end is not over-smoothed.

\begin{lemma}[DW penalty as a fixed quadratic form]
\label{lem:dw-quadratic}
Fix the grid operators
\[
  L^+ \in \R^{M_m(M_\tau+1)\times M_m(M_\tau+1)},
  \qquad
  E := S A_{KK} \in \R^{M_m(M_\tau+1)\times P},
\]
as above, and let
$\lambda_{\mathrm{DW}}\ge 0$ be given. Then for every $a\in\R^P$, let
\[
\mathcal{R}_{\mathrm{DW}}(a)
\;=\; \frac{\lambda_{\mathrm{DW}}}{2}\,\|\rho(a)\|_{H^{-1}}^2
\;=\; \frac{1}{2}\, a^\top Q_{\mathrm{DW}} a,
\qquad
Q_{\mathrm{DW}} := \lambda_{\mathrm{DW}}\,E^\top L^{+} E,
\]
with $Q_{\mathrm{DW}}$ symmetric positive semidefinite and independent of $a$.
Then $\mathcal{R}_{\mathrm{DW}}$ is a convex quadratic function of the
coefficient vector $a$ and enters the global problem as a fixed quadratic term,
so the formulation remains a convex QP.
\end{lemma}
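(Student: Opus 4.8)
The plan is to mirror the argument already used for the spectral ridge in Lemma~\ref{lem:ridge-quadratic}: exhibit $\mathcal{R}_{\mathrm{DW}}$ as a congruence transform of a symmetric positive semidefinite matrix precomposed with a fixed linear map of $a$, and then read off that it is a fixed convex quadratic form. The only genuinely new ingredient relative to the ridge case is that the central matrix is not diagonal but a Moore--Penrose pseudoinverse of a discrete Laplacian, so the first task is to record that this pseudoinverse inherits symmetry and semidefiniteness.

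\textbf{Step 1 (the slice Laplacian and its pseudoinverse).} Recall $L_m = D_m^\top M_m^{-1} D_m$ with $M_m=\diag(w^{(m)})\succ 0$ (strictly positive Clenshaw--Curtis weights). Then $L_m$ is symmetric, and $\phi^\top L_m\phi=(D_m\phi)^\top M_m^{-1}(D_m\phi)\ge 0$ for all $\phi$, so $L_m\succeq 0$. The pseudoinverse of a symmetric positive semidefinite matrix is again symmetric positive semidefinite: writing $L_m=V\Sigma V^\top$ with $\Sigma\succeq 0$ gives $L_m^{+}=V\Sigma^{+}V^\top$ with $\Sigma^{+}\succeq 0$. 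Hence the block-diagonal assembly $L^{+}=\mathrm{blkdiag}(L_m^{+},\dots,L_m^{+})$ is symmetric positive semidefinite, being a direct sum of such matrices.

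\textbf{Step 2 (linearity of the density map and assembly of $Q_{\mathrm{DW}}$).} The reshaping/selection matrix $S$ is fixed and $A_{KK}$ is the fixed coefficient-to-strike-convexity block of Chapter~\ref{sec:noarb-discrete}, so $E=S A_{KK}$ is a fixed linear map and $\rho(a)=E a$ in slice-stacked order. By the definition of the discrete Neumann $H^{-1}$ seminorm, $\|\rho(a)\|_{H^{-1}}^2=\rho(a)^\top L^{+}\rho(a)=a^\top E^\top L^{+} E\,a$, which is exactly the form in \eqref{eq:dw-penalty}. Setting $Q_{\mathrm{DW}}:=\lambda_{\mathrm{DW}}\,E^\top L^{+}E$, symmetry of $L^{+}$ gives $(E^\top L^{+}E)^\top=E^\top L^{+}E$, and for any $x$ one has $x^\top E^\top L^{+}E\,x=(Ex)^\top L^{+}(Ex)\ge 0$, so congruence preserves $\succeq 0$; since $\lambda_{\mathrm{DW}}\ge 0$, also $Q_{\mathrm{DW}}\succeq 0$. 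Therefore $\mathcal{R}_{\mathrm{DW}}(a)=\tfrac12 a^\top Q_{\mathrm{DW}}a$ is a convex quadratic function, and for fixed grid sizes, nodes, quadrature weights, the operators $D_m,M_m,A_{KK},S$ (hence $L^{+}$ and $E$) and a fixed $\lambda_{\mathrm{DW}}$, the matrix $Q_{\mathrm{DW}}$ is completely determined and independent of $a$; thus it enters the global objective as a fixed quadratic term and preserves the convex-QP structure.

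I expect the main difficulty to be bookkeeping rather than mathematics: one must check that the slice-stacked ordering is consistent between $L^{+}$ and $E$, so that the block-diagonal $L^{+}$ really acts slice-by-slice on $\rho(a)$; once the orderings match, symmetry and semidefiniteness follow mechanically from Step~1. If the maturity taper $\lambda_{\mathrm{DW}}(\tau_g)=\lambda_{\mathrm{DW}}^{(0)}\min\{1,\tau_\star/\tau_g\}$ is activated, the scalar $\lambda_{\mathrm{DW}}$ is replaced by a fixed positive semidefinite diagonal weighting that commutes with the slice structure, and the identical congruence argument shows $Q_{\mathrm{DW}}$ remains symmetric, positive semidefinite, and independent of $a$.
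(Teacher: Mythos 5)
Your proof follows essentially the same route as the paper: express $\rho(a)=Ea$, substitute into the $H^{-1}$ seminorm to get $a^\top E^\top L^{+}E\,a$, and observe that $Q_{\mathrm{DW}}$ is fixed and positive semidefinite by congruence. The only difference is that you spell out in Step~1 why $L_m^{+}$ (hence the block-diagonal $L^{+}$) inherits symmetry and positive semidefiniteness via the spectral decomposition, which the paper simply asserts ``by construction''; this is a welcome elaboration rather than a different argument.
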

\begin{proof}
By construction, $\rho(a)$ is linear in $a$: on the slice-stacked grid,
$\rho(a) = Ea$ with $E = S A_{KK}$ independent of $a$. The discrete
$H^{-1}$ seminorm is
\[
\|\rho(a)\|_{H^{-1}}^2
= \rho(a)^\top L^+ \rho(a)
= (Ea)^\top L^+ (Ea)
= a^\top E^\top L^+ E a.
\]
Thus
\[
\mathcal{R}_{\mathrm{DW}}(a)
= \frac{\lambda_{\mathrm{DW}}}{2}\,a^\top E^\top L^+ E a
= \frac{1}{2}\,a^\top Q_{\mathrm{DW}} a
\]
with $Q_{\mathrm{DW}} := \lambda_{\mathrm{DW}} E^\top L^+ E$. The operator $L^+$ is symmetric positive semidefinite by construction, as the
Moore–Penrose pseudoinverse of the symmetric positive semidefinite block–diagonal
matrix whose blocks are $L_m = D_m^\top M_m^{-1} D_m$. Hence for any $z$,
$z^\top L^+ z \ge 0$ (by definition of symmetric positive semidefinite), and in particular
\[
a^\top Q_{\mathrm{DW}} a
= \lambda_{\mathrm{DW}} (Ea)^\top L^+ (Ea)
\;\ge\; 0
\]
for all $a$, so $Q_{\mathrm{DW}}\succeq 0$ whenever $\lambda_{\mathrm{DW}}\ge 0$.
All ingredients $(E,L^+,\lambda_{\mathrm{DW}})$ are fixed once the grid,
quadrature, and smoothing weight are chosen, so $Q_{\mathrm{DW}}$ does not
depend on $a$. Therefore $\mathcal{R}_{\mathrm{DW}}$ is a fixed convex
quadratic function of the coefficients.
\end{proof}

\begin{remark}[Maturity taper]
If the smoothing weight is tapered in maturity, with slice weights
$\lambda_{\mathrm{DW}}(\tau_g)\ge 0$ as above, one can collect them in a
diagonal matrix $\Lambda_{\mathrm{DW}}$ acting on the slice--stacked density,
and write
\[
\mathcal{R}_{\mathrm{DW}}(a)
= \frac{1}{2}(Ea)^\top \Lambda_{\mathrm{DW}} L^+ (Ea)
= \frac{1}{2}\,a^\top Q_{\mathrm{DW}} a
\]
with $Q_{\mathrm{DW}} := E^\top \Lambda_{\mathrm{DW}} L^+ E$ (or
$Q_{\mathrm{DW}} := E^\top \Lambda_{\mathrm{DW}}^{1/2} L^+ \Lambda_{\mathrm{DW}}^{1/2} E$).
This is again symmetric positive semidefinite and independent of $a$, so the
DW term remains a fixed quadratic form under tapering.
\end{remark}

\section{RN–module: near–maturity residual and calendar flattening}
The objective here is on very short maturities to pull the forward-discounted call surface towards its $\tau \downarrow 0$ limit and suppress calendar wiggles at fixed strike. Both are quadratic, so the overall QP stays convex.

On $\Gamma_{0^+}=\{(m_g,\tau_g):\tau_g\le \tau_\star\}$, anchor the price to the intrinsic
limit and penalise calendar drift at fixed strike:
\begin{equation}
\label{eq:rn}
\mathcal{R}_{\mathrm{RN}}(a)
\;=\;\frac{\lambda_{\mathrm{RN}}}{2}\,\| Aa- C_{0^+}\|_{2,\Gamma_{0^+}}^2
\;+\;\frac{\eta_{\mathrm{RN}}}{2}\,\| A_{\tau|K} a\|_{2,\Gamma_{0^+}}^2,
\end{equation}
where $C_{0^+}(m)=F_0(1-e^{m})_+$ and $\|\cdot\|_{2,\Gamma_{0^+}}$ is the $\ell_2$ norm restricted
to indices in $\Gamma_{0^+}$. The terms mean the following:
\begin{enumerate}
    \item \underline{Near-maturity anchor:} $C_{0^+}(m)$ is the intrinsic value of a forward-discounted call. As $\tau \downarrow 0$, no-arb implies that $C_{f}(m, \tau) \rightarrow C_{0^+}(m)$. The term $\| Aa- C_{0^+}\|_{2,\Gamma_{0^+}}^2$ enforces this only on the short-end grid $\Gamma_{0^+}$.
    \item \underline{Calendar flattening at fixed strike:} $\| A_{\tau|K} a\|_{2,\Gamma_{0^+}}^2$ penalises the $\tau$-slope at fixed $K$ near $\tau = 0$. This damps spurious day-to-day oscillations that data sparsity and noise can introduce at the short end.
\end{enumerate}

We must derive $A_{\tau|K}$, which was done in \ref{cal_der}. Since the operator is linear in $a$, then it is a quadratic penalty.

To suppress odd–in–$m$ artifacts at $\tau\downarrow 0$ we optionally project coefficients onto the even subspace along $m$: let $P_{\mathrm{even}}$ be the
diagonal projector with $(P_{\mathrm{even}})_{(k,\ell),(k,\ell)}=1$ for even $k$ and $0$ for odd $k$;
replace $a$ by $P_{\mathrm{even}}a$ when evaluating the first term in \eqref{eq:rn}. This is linear
and preserves QP structure.

We set $\tau_\star\in[5,10]$ trading days. Default weights:
$\lambda_{\mathrm{RN}}$ chosen so that the first term’s RMS on $\Gamma_{0^+}$ matches the median
band width there; $\eta_{\mathrm{RN}}$ is set to achieve $\le 0.5\%$ calendar violations on the
shortest two slices once combined with the no–arb penalties. Parity projection is off by default
($P_{\mathrm{even}}=I$) unless short–end butterflies appear.

\begin{lemma}[RN penalty as a fixed quadratic in the coefficients]
\label{lem:rn-quadratic}
Let $\Gamma_{0^+}$ be the short-maturity index set and let
$\Pi_{0^+}\in\R^{N\times N}$ be the diagonal selector with
$(\Pi_{0^+})_{ii}=1$ if $i\in\Gamma_{0^+}$ and $0$ otherwise. Then for
any $a\in\R^P$ the RN penalty \eqref{eq:rn} can be written as
\[
\mathcal{R}_{\mathrm{RN}}(a)
\;=\; \frac{1}{2}\,a^\top Q_{\mathrm{RN}} a + c_{\mathrm{RN}}^\top a + \text{\em const},
\]
with
\[
Q_{\mathrm{RN}}
:= \lambda_{\mathrm{RN}}\,A^\top \Pi_{0^+} A
   + \eta_{\mathrm{RN}}\,A_{\tau|K}^\top \Pi_{0^+} A_{\tau|K}
\succeq 0,
\qquad
c_{\mathrm{RN}}
:= -\,\lambda_{\mathrm{RN}}\,A^\top \Pi_{0^+} C_{0^+}.
\]
In particular, $\mathcal{R}_{\mathrm{RN}}$ is a convex quadratic function of
the coefficient vector $a$ with fixed Hessian $Q_{\mathrm{RN}}$, so adding
$\mathcal{R}_{\mathrm{RN}}$ to the objective preserves the convex QP structure.
\end{lemma}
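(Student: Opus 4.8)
The plan is to mirror the arguments used for the spectral ridge (Lemma~\ref{lem:ridge-quadratic}) and the DW penalty (Lemma~\ref{lem:dw-quadratic}): rewrite the restricted $\ell_2$ norms as honest quadratic forms through the diagonal selector $\Pi_{0^+}$, expand the single inhomogeneous square, and read off the Hessian, the linear part and the constant. First I would record the elementary identity that, since $\Pi_{0^+}$ is a $0/1$ diagonal matrix, $\Pi_{0^+}=\Pi_{0^+}^\top=\Pi_{0^+}^2$, so for any vector $z$ one has $\|z\|_{2,\Gamma_{0^+}}^2=\sum_{i\in\Gamma_{0^+}}z_i^2=z^\top\Pi_{0^+}z=\|\Pi_{0^+}^{1/2}z\|_2^2$. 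This turns each of the two terms of \eqref{eq:rn} into a quadratic form in the quantity being normed.

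Next I would treat the two pieces separately. For the calendar-flattening term the argument $z=A_{\tau|K}a$ is linear in $a$ — this is exactly the content of the calendar operator construction in Section~\ref{sec:calendar-operator} together with Proposition~\ref{prop:derivative-blocks} — so $\|A_{\tau|K}a\|_{2,\Gamma_{0^+}}^2=a^\top A_{\tau|K}^\top\Pi_{0^+}A_{\tau|K}a$, a purely quadratic contribution with no linear or constant part. For the near-maturity anchor, $z=Aa-C_{0^+}$, where $C_{0^+}$ is the fixed (coefficient-independent) vector of intrinsic values $F_0(1-e^{m})_+$ sampled on the relevant nodes; expanding the square gives $\|Aa-C_{0^+}\|_{2,\Gamma_{0^+}}^2 = a^\top A^\top\Pi_{0^+}A\,a - 2\,C_{0^+}^\top\Pi_{0^+}A\,a + C_{0^+}^\top\Pi_{0^+}C_{0^+}$. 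Multiplying the first expansion by $\lambda_{\mathrm{RN}}/2$ and the second by $\eta_{\mathrm{RN}}/2$ and summing yields precisely $\tfrac12 a^\top Q_{\mathrm{RN}}a + c_{\mathrm{RN}}^\top a + \text{const}$ with the stated $Q_{\mathrm{RN}}$ and $c_{\mathrm{RN}}$, the constant being $\tfrac{\lambda_{\mathrm{RN}}}{2}\,C_{0^+}^\top\Pi_{0^+}C_{0^+}$.

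Finally I would check the structural claims. Symmetry of $Q_{\mathrm{RN}}$ is immediate, being a nonnegative combination of matrices of the form $M^\top\Pi_{0^+}M$. For positive semidefiniteness, for every $a$ we have $a^\top Q_{\mathrm{RN}}a = \lambda_{\mathrm{RN}}\|\Pi_{0^+}^{1/2}Aa\|_2^2 + \eta_{\mathrm{RN}}\|\Pi_{0^+}^{1/2}A_{\tau|K}a\|_2^2\ge 0$ because $\lambda_{\mathrm{RN}},\eta_{\mathrm{RN}}\ge 0$, so $Q_{\mathrm{RN}}\succeq 0$ and $\mathcal{R}_{\mathrm{RN}}$ is convex quadratic. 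Since $A$, $A_{\tau|K}$, $\Pi_{0^+}$, $C_{0^+}$, $\lambda_{\mathrm{RN}}$ and $\eta_{\mathrm{RN}}$ are all determined once the grid, the short-end index set $\Gamma_{0^+}$ and the weights are fixed, $Q_{\mathrm{RN}}$ and $c_{\mathrm{RN}}$ do not depend on $a$, so adding $\mathcal{R}_{\mathrm{RN}}$ to the global objective preserves the convex QP structure. If the optional even-in-$m$ projector $P_{\mathrm{even}}$ is active one simply replaces $A$ by $AP_{\mathrm{even}}$ in the first term; this is still linear in $a$ and $(AP_{\mathrm{even}})^\top\Pi_{0^+}(AP_{\mathrm{even}})\succeq 0$, so the conclusion is unchanged. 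There is no genuine obstacle here; the only things to be careful about are the bookkeeping of the affine term and constant (the factors of $\tfrac12$ versus $\lambda_{\mathrm{RN}},\eta_{\mathrm{RN}}$) and the invocation of the earlier results to justify that $A_{\tau|K}$ — not merely $A$ — is an exact linear operator on coefficient space.
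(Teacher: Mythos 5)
Your proposal is correct and takes essentially the same approach as the paper's proof: both introduce the diagonal selector $\Pi_{0^+}$ to rewrite the restricted norms as quadratic forms, expand the inhomogeneous anchor term to identify the Hessian, linear coefficient, and constant, and establish $Q_{\mathrm{RN}}\succeq 0$ by recognising $a^\top Q_{\mathrm{RN}}a$ as a nonnegative sum of restricted squared norms. The only additions beyond the paper's argument — the explicit identity $\Pi_{0^+}=\Pi_{0^+}^2$ and the remark about the $P_{\mathrm{even}}$ projector — are harmless.
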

\begin{proof}
By definition of the restricted norm, there exists a diagonal selector
$\Pi_{0^+}$ such that for any $x\in\R^N$,
\[
\|x\|_{2,\Gamma_{0^+}}^2 = \|\Pi_{0^+} x\|_2^2 = x^\top \Pi_{0^+} x.
\]
Therefore the two terms in \eqref{eq:rn} can be written as
\[
\|Aa - C_{0^+}\|_{2,\Gamma_{0^+}}^2
= (Aa - C_{0^+})^\top \Pi_{0^+} (Aa - C_{0^+}),
\]
\[
\|A_{\tau|K} a\|_{2,\Gamma_{0^+}}^2
= (A_{\tau|K} a)^\top \Pi_{0^+} (A_{\tau|K} a).
\]
Expanding the first term gives
\[
(Aa - C_{0^+})^\top \Pi_{0^+} (Aa - C_{0^+})
= a^\top A^\top \Pi_{0^+} A a
  - 2\,C_{0^+}^\top \Pi_{0^+} A a
  + C_{0^+}^\top \Pi_{0^+} C_{0^+},
\]
while the second term is already of the form
\[
(A_{\tau|K} a)^\top \Pi_{0^+} (A_{\tau|K} a)
= a^\top A_{\tau|K}^\top \Pi_{0^+} A_{\tau|K} a.
\]
Plugging into \eqref{eq:rn} yields
\[
\mathcal{R}_{\mathrm{RN}}(a)
= \frac{1}{2}\,a^\top Q_{\mathrm{RN}} a
  + c_{\mathrm{RN}}^\top a
  + \text{const},
\]
with $Q_{\mathrm{RN}}$ and $c_{\mathrm{RN}}$ as claimed, and a constant term
$\frac{\lambda_{\mathrm{RN}}}{2}\,C_{0^+}^\top \Pi_{0^+} C_{0^+}$ which does
not depend on $a$.

For any $a\in\R^P$,
\[
a^\top Q_{\mathrm{RN}} a
= \lambda_{\mathrm{RN}}\,\|A a\|_{2,\Gamma_{0^+}}^2
  + \eta_{\mathrm{RN}}\,\|A_{\tau|K} a\|_{2,\Gamma_{0^+}}^2
\;\ge\; 0
\]
whenever $\lambda_{\mathrm{RN}},\eta_{\mathrm{RN}}\ge 0$, so
$Q_{\mathrm{RN}}\succeq 0$ and the Hessian of
$\mathcal{R}_{\mathrm{RN}}$ is positive semidefinite. All matrices
$A, A_{\tau|K}, \Pi_{0^+}$ and the vector $C_{0^+}$ are fixed once the
grid, short-maturity set $\Gamma_{0^+}$, and weights
$\lambda_{\mathrm{RN}},\eta_{\mathrm{RN}}$ are chosen; hence
$Q_{\mathrm{RN}}$ and $c_{\mathrm{RN}}$ are independent of $a$ and
$\mathcal{R}_{\mathrm{RN}}$ is a fixed convex quadratic function of the
coefficients.
\end{proof}

\section{\texorpdfstring
  {$\Omega$--module: high--frequency taper and commutator hook}
  {Omega--module: high-frequency taper and commutator hook}}
\label{sec:omega}
Let $U_\omega\in\R^{P\times P}$ be a fixed orthogonal change of basis to a frequency chart
(separable 2D DCT aligned with the $(k,\ell)$ grid). Denote $\widehat a=U_\omega a$ and
let $M_\omega$ be a diagonal mask selecting high–frequency indices. The taper is
\begin{equation}
\label{eq:omega}
\mathcal{R}_{\Omega}(a)
\;=\; \frac{\lambda_{\Omega}}{2}\, \| M_\omega \widehat a \|_2^2
\;=\; \frac{\lambda_{\Omega}}{2}\, a^\top U_\omega^\top M_\omega^\top M_\omega U_\omega a.
\end{equation}
We use a maturity–dependent mask: on slices with $\tau_g\le \tau_\star$ only the top third of
$m$–frequencies are penalised; for $\tau_g>2\tau_\star$ the mask is zero.

\paragraph{Liouville hook (commutator residual).}
Let $\mathcal{L}$ denote the forward--flow generator at fixed strike,
$\mathcal{L} C_f := (\partial_\tau C_f)\big|_K$, and let $\partial_K$ be the
strike derivative. At the continuum level, mixed derivatives commute and we
can write the commutator
\[
[\mathcal{L},\partial_K] C_f
\;:=\;
\mathcal{L}(\partial_K C_f)\;-\;\partial_K(\mathcal{L} C_f),
\]
which vanishes for smooth $C_f$.

On the collocation grid we work with nodal price vectors $u\in\mathbb{R}^G$ and
linear operators
\[
\widetilde A_K,\;\widetilde A_{\tau|K} \in \mathbb{R}^{G\times G}
\]
acting on $u$ and approximating, respectively, $\partial_K C_f$ and
$(\partial_\tau C_f)\big|_K$ at the grid nodes. They are chosen consistently
with the coefficient--space design blocks $A_K, A_{\tau|K}\in\mathbb{R}^{G\times P}$
in the sense that for all coefficient vectors $a\in\mathbb{R}^P$,
\[
\widetilde A_K (A a) \approx A_K a,\qquad
\widetilde A_{\tau|K}(A a) \approx A_{\tau|K} a,
\]
where $u(a):=Aa$ denotes the nodal surface implied by $a$.

Define the discrete commutator on nodal prices by
\[
[\widetilde A_{\tau|K},\widetilde A_K]\,u
:= \widetilde A_{\tau|K}(\widetilde A_K u) - \widetilde A_K(\widetilde A_{\tau|K} u),
\]
and set
\[
C := \widetilde A_{\tau|K}\,\widetilde A_K - \widetilde A_K\,\widetilde A_{\tau|K}
\;\in\mathbb{R}^{G\times G}.
\]
Evaluated at the model surface $u(a)=Aa$, this yields the $G$--vector
commutator defect
\[
c(a) := [\widetilde A_{\tau|K},\widetilde A_K]\,u(a)
      = C(Aa).
\]
We penalise the hook residual via
\[
\mathcal{R}_{\mathrm{hook}}(a)
:= \frac{\lambda_{\mathrm{hook}}}{2}\,\|c(a)\|_2^2
 =  \frac{\lambda_{\mathrm{hook}}}{2}\,\|C(Aa)\|_2^2,
\]
with a small stabilising weight $\lambda_{\mathrm{hook}}\ll\lambda_{\mathrm{NA}}$.

$U_\omega$ is the separable 2D DCT on $(k,\ell)$; $M_\omega$ masks the
top $33\%$ highest $m$–frequencies for $\tau\le\tau_\star$ and is zero otherwise; default
$\lambda_{\Omega}$ is picked so that the high–frequency modal energy share
$\mathcal{E}_{\mathrm{hi}}\le 5\%$; $\lambda_{\mathrm{hook}}$ is set to a small fraction
($10^{-3}$–$10^{-2}$) of $\lambda_{\mathrm{NA}}$.

\begin{lemma}[$\Omega$ taper and hook as fixed quadratics in the coefficients]
\label{lem:omega-quadratic}
Let $U_\omega\in\R^{P\times P}$ be an orthogonal matrix
($U_\omega^\top U_\omega = I$), let $M_\omega\in\R^{P\times P}$ be a fixed
diagonal mask, and define $\widehat a = U_\omega a$. Let
\[
\mathcal{R}_{\Omega}(a)
:= \frac{\lambda_{\Omega}}{2}\,\|M_\omega \widehat a\|_2^2
\]
and, with $C := \widetilde A_{\tau|K}\widetilde A_K - \widetilde A_K\widetilde A_{\tau|K}
\in\R^{G\times G}$ as above,
\[
c(a) := C(Aa), \qquad
\mathcal{R}_{\mathrm{hook}}(a)
:= \frac{\lambda_{\mathrm{hook}}}{2}\,\|c(a)\|_2^2
= \frac{\lambda_{\mathrm{hook}}}{2}\,\|C(Aa)\|_2^2.
\]
Then both penalties are fixed convex quadratic functions of the coefficient
vector $a$:
\[
\mathcal{R}_{\Omega}(a)
= \frac{1}{2}\,a^\top Q_{\Omega} a,
\qquad
\mathcal{R}_{\mathrm{hook}}(a)
= \frac{1}{2}\,a^\top Q_{\mathrm{hook}} a,
\]
with
\[
Q_{\Omega} := \lambda_{\Omega}\,U_\omega^\top M_\omega^\top M_\omega U_\omega
\succeq 0,
\qquad
Q_{\mathrm{hook}} := \lambda_{\mathrm{hook}}\,A^\top C^\top C A \succeq 0.
\]
All matrices $U_\omega, M_\omega, A, \widetilde A_K, \widetilde A_{\tau|K}$, and thus
$Q_{\Omega}, Q_{\mathrm{hook}}$, are independent of $a$. In particular, adding
$\mathcal{R}_{\Omega}$ and $\mathcal{R}_{\mathrm{hook}}$ to the objective preserves the convex
QP structure.
\end{lemma}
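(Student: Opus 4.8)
The plan is to treat this exactly as the three preceding ``fixed quadratic form'' lemmas (Lemmas~\ref{lem:ridge-quadratic}, \ref{lem:dw-quadratic}, \ref{lem:rn-quadratic}): in each case the penalty is one half of a squared Euclidean norm of a \emph{linear} image of the coefficient vector $a$, so expanding the square produces $\tfrac12 a^\top Q a$ with $Q=B^\top B$ for the relevant matrix $B$, which is automatically symmetric and positive semidefinite, and is built only from fixed basis/grid data, hence independent of $a$. Two short computations, one per term, together with the standard facts that $B^\top B\succeq 0$ and that a sum of convex quadratics is a convex quadratic, complete the argument.

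For the $\Omega$ taper, I would substitute $\widehat a = U_\omega a$ into $\|M_\omega\widehat a\|_2^2$ and expand: $\|M_\omega U_\omega a\|_2^2 = a^\top U_\omega^\top M_\omega^\top M_\omega U_\omega a$. Setting $B_\Omega:=\sqrt{\lambda_\Omega}\,M_\omega U_\omega$ gives $\mathcal{R}_\Omega(a)=\tfrac12 a^\top Q_\Omega a$ with $Q_\Omega=B_\Omega^\top B_\Omega=\lambda_\Omega U_\omega^\top M_\omega^\top M_\omega U_\omega$; symmetry is immediate and $z^\top Q_\Omega z=\|B_\Omega z\|_2^2\ge 0$ for all $z$ (using $\lambda_\Omega\ge 0$), so $Q_\Omega\succeq 0$. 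Note that orthogonality of $U_\omega$ is not actually needed for any of these claims — it only underlies the interpretation of $\widehat a$ as a modal energy spectrum — and since $M_\omega$ is diagonal one may equivalently write $M_\omega^\top M_\omega=M_\omega^2$. The matrices $U_\omega,M_\omega$ are fixed once the DCT chart and the (maturity-dependent) high-frequency mask are chosen, so $Q_\Omega$ does not depend on $a$.

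For the hook residual, the key observation is that $c(a)=C(Aa)=(CA)a$ is a linear function of $a$: the design matrix $A$ is fixed, and $C:=\widetilde A_{\tau|K}\widetilde A_K-\widetilde A_K\widetilde A_{\tau|K}\in\R^{G\times G}$ is itself a fixed matrix, being a difference of products of the fixed discrete operators $\widetilde A_K,\widetilde A_{\tau|K}$ (the approximate consistency relations $\widetilde A_K(Aa)\approx A_K a$, etc., play no role in the \emph{definition} of the penalty, only in its motivation). Hence $\|c(a)\|_2^2=\|(CA)a\|_2^2=a^\top A^\top C^\top C A\,a$, and with $B_{\mathrm{hook}}:=\sqrt{\lambda_{\mathrm{hook}}}\,CA$ we get $\mathcal{R}_{\mathrm{hook}}(a)=\tfrac12 a^\top Q_{\mathrm{hook}} a$, $Q_{\mathrm{hook}}=B_{\mathrm{hook}}^\top B_{\mathrm{hook}}=\lambda_{\mathrm{hook}} A^\top C^\top C A\succeq 0$, symmetric and independent of $a$ exactly as before. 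Adding $\mathcal{R}_\Omega+\mathcal{R}_{\mathrm{hook}}$ to the objective therefore only augments the global Hessian by $Q_\Omega+Q_{\mathrm{hook}}\succeq 0$ and contributes no new linear or constraint terms, so the formulation remains a convex QP. I expect no genuine obstacle in this lemma; the one point that deserves to be spelled out explicitly is precisely that $C$ — despite being written as a commutator of operators — is a \emph{constant} matrix, so that $c(\cdot)$ really is linear and the quadratic-form reduction applies verbatim.
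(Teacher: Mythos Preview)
Your proposal is correct and follows essentially the same route as the paper: both penalties are written as $\tfrac12\|Ba\|_2^2$ for a fixed matrix $B$ (namely $M_\omega U_\omega$ and $CA$), expanded to $\tfrac12 a^\top B^\top B a$, with positive semidefiniteness via $\|Ba\|_2^2\ge 0$ and independence of $a$ noted from the fixedness of the basis/grid operators. Your additional observations---that orthogonality of $U_\omega$ is not actually used in the argument, and that the crucial point for the hook is that the commutator $C$ is a constant matrix so $c(\cdot)$ is genuinely linear---are accurate and mildly sharpen the presentation, but do not change the method.
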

\begin{proof}
For the taper, write
\[
\mathcal{R}_{\Omega}(a)
= \frac{\lambda_{\Omega}}{2}\,\|M_\omega U_\omega a\|_2^2
= \frac{\lambda_{\Omega}}{2}\,(U_\omega a)^\top M_\omega^\top M_\omega (U_\omega a).
\]
Set $Q_{\Omega} := \lambda_{\Omega}\,U_\omega^\top M_\omega^\top M_\omega U_\omega$.
Then
\[
\mathcal{R}_{\Omega}(a)
= \frac{1}{2}\,a^\top Q_{\Omega} a.
\]
For any $a$,
\[
a^\top Q_{\Omega} a
= \lambda_{\Omega}\,\|M_\omega U_\omega a\|_2^2 \;\ge\; 0
\]
whenever $\lambda_{\Omega}\ge 0$, so $Q_{\Omega}\succeq 0$. Once the grid,
frequency chart, and maturity–dependent mask are chosen, both $U_\omega$
and $M_\omega$ are fixed, and therefore $Q_{\Omega}$ is fixed (independent
of $a$).

For the hook, note that $A$ is a fixed linear map from coefficients to nodal
prices and $C$ is a fixed linear operator on grid space, so the commutator
residual is linear in $a$:
\[
c(a) = C(Aa) = (C A)\,a.
\]
Let $B := C A \in\R^{G\times P}$. Then
\[
\mathcal{R}_{\mathrm{hook}}(a)
= \frac{\lambda_{\mathrm{hook}}}{2}\,\|B a\|_2^2
= \frac{\lambda_{\mathrm{hook}}}{2}\,a^\top B^\top B a.
\]
Setting $Q_{\mathrm{hook}} := \lambda_{\mathrm{hook}}\,A^\top C^\top C A
= \lambda_{\mathrm{hook}}\,B^\top B$ gives
\[
\mathcal{R}_{\mathrm{hook}}(a)
= \frac{1}{2}\,a^\top Q_{\mathrm{hook}} a.
\]
For any $a$,
\[
a^\top Q_{\mathrm{hook}} a
= \lambda_{\mathrm{hook}}\,\|B a\|_2^2 \;\ge\; 0
\]
whenever $\lambda_{\mathrm{hook}}\ge 0$, so $Q_{\mathrm{hook}}\succeq 0$.
All ingredients are fixed once $A$, $\widetilde A_K$, $\widetilde A_{\tau|K}$ and
$\lambda_{\mathrm{hook}}$ are chosen, so $Q_{\mathrm{hook}}$ is independent
of $a$.

Thus both $\mathcal{R}_{\Omega}$ and $\mathcal{R}_{\mathrm{hook}}$ are fixed convex quadratic
functions of the coefficients.
\end{proof}

\section{Summary of fixed choices used}
\begin{itemize}[leftmargin=1.2em]
\item \textbf{Ridge:} $\alpha=\beta=1$, $s=2$; $\lambda_{\mathrm{ridge}}$ by GCV on an $8\%$
WLS subsample (no hinge, no penalties); fixed per date, reused in the full QP.
\item \textbf{$\boldsymbol{\Lambda}$ reparameterisation:} $U$ built by blockwise (per–$\tau$) QR/Gram–Schmidt
and column rescaling on $(AU)$ over the quotes; all blocks post–multiplied by $U$; ridge uses
$\widetilde\Lambda=U^\top \Lambda U$.
\item \textbf{DW:} $H^{-1}$ along $m$ per slice with Neumann ends; Chebyshev–Lobatto $M_m$;
$\lambda_{\mathrm{DW}}(\tau)=\lambda_{\mathrm{DW}}^{(0)}\min\{1,\tau_\star/\tau\}$,
$\tau_\star\in[5,10]$ trading days.
\item \textbf{RN:} window $\Gamma_{0^+}=\{\tau\le\tau_\star\}$; weights $(\lambda_{\mathrm{RN}},
\eta_{\mathrm{RN}})$ calibrated to short–end RMS and calendar share; $P_{\mathrm{even}}=I$ by default.
\item \textbf{$\boldsymbol{\Omega}$:} separable 2D DCT, mask top $33\%$ $m$–frequencies for $\tau\le\tau_\star$, off beyond $2\tau_\star$; $\lambda_{\Omega}$ chosen to cap high–frequency energy at $\le 5\%$.
\item \textbf{Hook:} commutator penalty $\frac{\lambda_{\mathrm{hook}}}{2}
  \,\|C(Aa)\|_2^2$ with
  $C := \widetilde A_{\tau|K}\widetilde A_K - \widetilde A_K\widetilde A_{\tau|K}
  \in\R^{G\times G}$ and
  $\lambda_{\mathrm{hook}} \in [10^{-3},10^{-2}]\,\lambda_{\mathrm{NA}}$.
\end{itemize}

All terms above are quadratic in $a$ (or $\tilde a$) and are entered additively into the QP objective.
They stabilise the global fit, suppress short–maturity artifacts, and improve conditioning while
preserving convexity and the solver class.

\chapter{No–arbitrage constraints and soft penalties}\label{sec:global-noarb-grid}
We impose the three shape conditions (monotone in $K$, convex in $K$, and calendar nonnegativity at fixed $K$) as soft penalties evaluated on the collocation grid, using the linear operators defined previously:
\[
A_K,\quad A_{KK},\quad A_{\tau|K}\in\R^{G\times P},\qquad
\text{and the price block }A\in\R^{G\times P}.
\]
All vectors below are understood componentwise and $(x)_+=\max\{x,0\}$.

\section{Penalty definitions (soft versions of the shape constraints)}\label{sec:noarb-soft}

\begin{align}
\mathcal{P}_{\text{mono}}(a)
&=\frac{\lambda_{\text{NA}}}{2}\,\big\|(A_K a)_+\big\|_2^2
&&\text{(targets $\partial_K C_f\le 0$)},\\
\mathcal{P}_{\text{conv}}(a)
&=\frac{\lambda_{\text{NA}}}{2}\,\big\|(-A_{KK} a)_+\big\|_2^2
&&\text{(targets $\partial_{KK} C_f\ge 0$)},\\
\mathcal{P}_{\text{cal}}(a)
&=\frac{\lambda_{\text{NA}}}{2}\,\big\|(-A_{\tau|K} a)_+\big\|_2^2
&&\text{(targets $(\partial_\tau C_f)|_K\ge 0$)},\\
\mathcal{P}_{\text{bnd}}(a)
&=\frac{\lambda_{\text{B}}}{2}\left(\big\|(-A a)_+\big\|_2^2+\big\|(A a-F)_+\big\|_2^2\right)
&&\text{(targets $0\le C_f\le F$)}.
\end{align}
These are sums of squares of convex functions of an affine map of $a$, hence convex and QP–compatible.

\paragraph{QP form (auxiliary slacks).}
Exactly as in the band–hinge reformulation, each penalty admits a slack representation. For example,
\[
\frac{1}{2}\big\|(A_K a)_+\big\|_2^2
=\min_{u\in\R^G}\ \frac{1}{2}\|u\|_2^2\quad\text{s.t.}\quad u\ge A_K a,\ \ u\ge 0,
\]
and similarly
\[
\frac{1}{2}\big\|(-A_{KK} a)_+\big\|_2^2=\min_{v\ge 0,\ v\ge -A_{KK}a}\ \frac{1}{2}\|v\|_2^2,\quad
\frac{1}{2}\big\|(-A_{\tau|K} a)_+\big\|_2^2=\min_{w\ge 0,\ w\ge -A_{\tau|K}a}\ \frac{1}{2}\|w\|_2^2,
\]
and for bounds
\[
\frac{1}{2}\big\|(-Aa)_+\big\|^2+\frac{1}{2}\big\|(Aa-F)_+\big\|^2
=\min_{s,t\ge 0,\ s\ge -Aa,\ t\ge Aa-F}\ \frac{1}{2}\big(\|s\|^2+\|t\|^2\big).
\]

\begin{lemma}
    For any $x\in \R ^G$,
    \[
    \frac{1}{2}\| (x)_+ \|^2_2 = \min_{u\in \R ^G}\{ \frac{1}{2}\| (u)_+ \|^2_2 : u \geq x, u \geq 0 \}
    \]
    and the unique minimiser is $u^*=(x)_+$.
\end{lemma}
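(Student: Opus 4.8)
The plan is to exploit separability: both the objective $\tfrac12\|(u)_+\|_2^2 = \tfrac12\sum_{g=1}^G (u_g)_+^2$ and the constraints $u\ge x$, $u\ge 0$ decouple across the $G$ coordinates, so the right-hand minimisation splits into $G$ independent scalar problems and it suffices to solve the one-dimensional case. For a fixed coordinate I would consider
\[
\min_{u_g\in\R}\ \tfrac12 (u_g)_+^2
\quad\text{s.t.}\quad u_g\ge x_g,\ \ u_g\ge 0 .
\]
The two linear constraints are jointly equivalent to the single constraint $u_g\ge\max\{x_g,0\}=(x_g)_+$, so the feasible set is the closed half-line $[(x_g)_+,\infty)$, which is nonempty.

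On this feasible set every admissible $u_g$ is nonnegative, hence $(u_g)_+=u_g$ and the scalar objective is just $\tfrac12 u_g^2$. Since $t\mapsto\tfrac12 t^2$ is strictly increasing on $[0,\infty)$, it attains its minimum over $[(x_g)_+,\infty)$ uniquely at the left endpoint, so $u_g^\star=(x_g)_+$ with optimal value $\tfrac12\big((x_g)_+\big)^2$. I would also verify feasibility of this candidate directly: $(x_g)_+\ge 0$ and $(x_g)_+\ge x_g$ hold for every $x_g\in\R$, so $u_g^\star$ indeed lies in the feasible set.

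Summing the $G$ scalar identities gives
\[
\min_{u\ge x,\ u\ge 0}\ \tfrac12\|(u)_+\|_2^2
= \tfrac12\sum_{g=1}^G\big((x_g)_+\big)^2
= \tfrac12\|(x)_+\|_2^2 ,
\]
and since the minimiser is unique in each coordinate, the joint minimiser $u^\star=(x)_+$ is unique as well. I do not expect a genuine obstacle here; the only points needing a little care are (i) collapsing the pair $u_g\ge x_g$, $u_g\ge 0$ into $u_g\ge (x_g)_+$, and (ii) using \emph{strict} monotonicity of $t\mapsto t^2$ on $[0,\infty)$, which is what upgrades ``a minimiser'' to ``the unique minimiser.'' Both are elementary. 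One could alternatively write down the KKT conditions for this convex program and solve them, but the direct monotonicity argument is shorter and already delivers uniqueness.
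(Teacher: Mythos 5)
Your proof is correct and follows essentially the same route as the paper's: separate across coordinates, collapse the two constraints into $u_g \ge (x_g)_+$, and use strict monotonicity of $t\mapsto \tfrac12 t^2$ on $[0,\infty)$ to identify the left endpoint as the unique minimiser. If anything you are slightly more careful than the paper in making explicit the step $(u_g)_+ = u_g$ on the feasible set, which the paper's proof uses silently when it writes the scalar objective as $\tfrac12 u^2$.
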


\begin{proof}
    The problem separates across coordinates. For scalar $u\in \R$,
    \[
    \min_{u\in \R} \frac{1}{2} u^2 \quad \text{st} \quad u \geq x, \; u\geq 0
    \]
    has feasible set $u\geq max\{ x,\; 0\}$. The objective $\frac{1}{2} u^2$ is strictly increasing on $[0,\infty)$, so the minimum is attained at the smallest feasible point:
    \[
    u^*=\max\{x,0\}=x_+,
    \]
    with value $\frac{1}{2}(x_+)^2$. Summing over coordinates gives the vector result, and strict convexity yields uniqueness.
\end{proof} 

\section{Row scaling and invariance}\label{sec:scaling-invariance}
As in \S\ref{sec:scaling}, we scale each block (after any $U$–reparameterisation) by a positive scalar so typical row norms are comparable:
\[
\widetilde A_K=\frac{1}{s_K}A_K,\quad
\widetilde A_{KK}=\frac{1}{s_{KK}}A_{KK},\quad
\widetilde A_{\tau|K}=\frac{1}{s_{\tau}}A_{\tau|K},
\]
with $s_\bullet$ the empirical p95 of row $\ell_2$ norms (\ref{sec:row-scaling-single-weight}). Hard constraints are invariant under positive row/ block scaling, and with soft penalties this makes a single $\lambda_{\text{NA}}$ control all three terms on a comparable numeric scale.

\section{\texorpdfstring
  {Compatibility with the $\Lambda$–module}
  {Compatibility with the Lambda-module}}
Under the price–invariant reparameterisation $a=U\tilde a$ (see the $\Lambda$–module), all operators post–multiply by $U$:
\[
A_\bullet\leftarrow A_\bullet U,\qquad \bullet\in\{\, ,K,KK,\tau|K\,\}.
\]
By construction, $A_\bullet U\,\tilde a=A_\bullet a$, so the penalty values are unchanged and convexity is preserved. The spectral ridge is updated by congruence $\Lambda\mapsto\widetilde\Lambda=U^\top\Lambda U$ as already stated.

\section{Binned variant (optional)}
To stabilise very sparse regions, aggregate quotes by a selector $G\in\{0,1\}^{B\times N}$ (bins in $(m,\tau)$) and replace the per–quote band–hinge term $\sum_i \ell_{\text{band}}((Aa)_i;b_i,a_i)$ by 
\[
\sum_{b=1}^B \ell_{\text{band}}((GAa)_b;(Gb)_b,(Ga)_b);
\]
the slack QP form carries over verbatim.

After scaling, we use a single $\lambda_{\text{NA}}$ for $(\widetilde A_K,\widetilde A_{KK},\widetilde A_{\tau|K})$ and select it (once per date) to reach $\le 1\%$ violations on the evaluation grid; $\lambda_{\text{B}}$ is kept separate for price bounds.

\chapter{The convex program}\label{sec:convex_prog}
We collect all terms and write the problem as a single quadratic program (QP). When the
$\Lambda$–module is active (\ref{sec:Lambda-module}), we solve in $\tilde a=U^{-1}a$ with all
blocks post–multiplied by $U$ and $\Lambda$ replaced by $\widetilde\Lambda=U^\top\Lambda U$; to
avoid clutter we keep the symbol $a$ below (read as $\tilde a$ in that case).

\section{Slack QP (standard form)}
Let $u,v\in\R^N$ be the band slacks from \S\ref{sec:band-hinge-qp}, and let
$u_K,v_{KK},w_\tau\in\R^G$ and $s_{\mathrm{lo}},s_{\mathrm{hi}}\in\R^G$ be nonnegative grid slacks
for the three shape operators and price bounds, respectively (all inequalities
componentwise):
\begin{align*}
u &\ge b-Aa, & u &\ge 0,
&
v &\ge Aa-a, & v &\ge 0,\\
u_K &\ge A_K a, & u_K &\ge 0,
&
v_{KK} &\ge -A_{KK} a, & v_{KK} &\ge 0,\\
w_\tau &\ge -A_{\tau|K} a, & w_\tau &\ge 0,\\
s_{\mathrm{lo}} &\ge -Aa, & s_{\mathrm{lo}} &\ge 0,
&
s_{\mathrm{hi}} &\ge Aa-F, & s_{\mathrm{hi}} &\ge 0.
\end{align*}
With these slacks, the objective collects the data term (\S\ref{sec:coverage-data}), the
quadratic regularisers (\S\ref{sec:ridge-etc}), and the soft no–arb penalties
(\S\ref{sec:noarb-soft}):
\begin{equation}\label{eq:master-qp}
\begin{aligned}
\min_{a,u,v,u_K,v_{KK},w_\tau,s_{\mathrm{lo}},s_{\mathrm{hi}}}\quad
&\frac{1}{2}\|W^{1/2}(Aa-y)\|_2^2
+ \frac{\mu}{2}\big(\|u\|_2^2+\|v\|_2^2\big)\\
&\ +\frac{\lambda_{\mathrm{ridge}}}{2}\,a^\top \Lambda a
+ \frac{\lambda_{\mathrm{DW}}}{2}\,a^\top E^\top L^{+} E a
+ \frac{\eta_{\mathrm{RN}}}{2}\,\|A_{\tau|K} a\|_{2,\Gamma_{0^+}}^2\\
&\ + \frac{\lambda_{\Omega}}{2}\,a^\top U_\omega^\top M_\omega^\top M_\omega U_\omega a
+ \frac{\lambda_{\mathrm{hook}}}{2}\,\|C(Aa)\|_2^2\\
&\ + \frac{\lambda_{\mathrm{RN}}}{2}\,\|Aa-C_{0^+}\|_{2,\Gamma_{0^+}}^2\\
&\ + \frac{\lambda_{\mathrm{NA}}}{2}\big(\|u_K\|_2^2+\|v_{KK}\|_2^2+\|w_\tau\|_2^2\big)
+ \frac{\lambda_{\mathrm{B}}}{2}\big(\|s_{\mathrm{lo}}\|_2^2+\|s_{\mathrm{hi}}\|_2^2\big).
\end{aligned}
\end{equation}
All matrices ($A$, $A_K$, $A_{KK}$, $A_{\tau|K}$, $E$, $U_\omega$, $M_\omega$, $L^+$, $C$)
are fixed from earlier sections; $\|\cdot\|_{2,\Gamma_{0^+}}$ denotes restriction to the short–maturity
index set. Row–scaled operators (\S\ref{sec:scaling}) may be used in place of unscaled ones. 

\paragraph{Why \eqref{eq:master-qp} is a QP:}
Every term in the objective is a convex quadratic form in $a$ or a sum of squared slacks; all
constraints are linear inequalities. 

\section{Convexity, existence, and uniqueness}\label{sec:cv-ex-uq}
\begin{prop}[Convexity and global optimality]\label{prop:convex-master}
The program \eqref{eq:master-qp} is convex. If a convex QP solver returns a feasible
primal–dual point satisfying KKT, then the associated $a^\star$ is a \emph{global} minimiser.
\end{prop}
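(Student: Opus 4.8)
The statement has two parts: (i) \eqref{eq:master-qp} is a convex problem, and (ii) a KKT point of it is globally optimal. For (i) the plan is to read off the objective as a sum of terms, each of which has already been shown to be a fixed convex quadratic form: the WLS anchor $\tfrac12\|W^{1/2}(Aa-y)\|_2^2$ is convex since $W\succeq 0$ and $a\mapsto Aa-y$ is affine; the squared–slack terms $\tfrac{\mu}{2}(\|u\|_2^2+\|v\|_2^2)$, $\tfrac{\lambda_{\mathrm{NA}}}{2}(\|u_K\|_2^2+\|v_{KK}\|_2^2+\|w_\tau\|_2^2)$ and $\tfrac{\lambda_{\mathrm B}}{2}(\|s_{\mathrm{lo}}\|_2^2+\|s_{\mathrm{hi}}\|_2^2)$ are convex in their own variables; and the remaining $a$–terms are covered by Lemma~\ref{lem:ridge-quadratic} (ridge, $Q_{\mathrm{ridge}}=\lambda_{\mathrm{ridge}}\Lambda\succeq 0$), Lemma~\ref{lem:dw-quadratic} (DW, $Q_{\mathrm{DW}}=\lambda_{\mathrm{DW}}E^\top L^+E\succeq 0$), Lemma~\ref{lem:rn-quadratic} ($Q_{\mathrm{RN}}\succeq 0$ plus an affine term), and Lemma~\ref{lem:omega-quadratic} ($Q_{\Omega},Q_{\mathrm{hook}}\succeq 0$). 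Since no term couples the slack variables to $a$ \emph{inside the objective}, the full Hessian in $z:=(a,u,v,u_K,v_{KK},w_\tau,s_{\mathrm{lo}},s_{\mathrm{hi}})$ is block–diagonal with each block PSD, hence $\succeq 0$, and the objective is a convex quadratic in $z$. All constraints listed before \eqref{eq:master-qp} are affine inequalities in $z$, so the feasible set is polyhedral and convex. A convex quadratic objective over a convex polyhedron is a convex QP, establishing (i).

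For (ii) I would invoke the standard sufficiency of KKT for convex programs. Write the problem as $\min_z f(z)$ s.t.\ $g_j(z)\le 0$ with $f$ convex (from part (i)) and each $g_j$ affine. Let $(z^\star,\lambda^\star)$ with $\lambda^\star\ge 0$ satisfy primal feasibility, dual feasibility, complementary slackness $\lambda_j^\star g_j(z^\star)=0$, and stationarity $\nabla f(z^\star)+\sum_j\lambda_j^\star\nabla g_j(z^\star)=0$. For any feasible $z$, convexity of $f$ gives $f(z)\ge f(z^\star)+\nabla f(z^\star)^\top(z-z^\star)=f(z^\star)-\sum_j\lambda_j^\star\nabla g_j(z^\star)^\top(z-z^\star)$; using affinity of $g_j$, $\nabla g_j(z^\star)^\top(z-z^\star)=g_j(z)-g_j(z^\star)$, so $f(z)\ge f(z^\star)-\sum_j\lambda_j^\star\bigl(g_j(z)-g_j(z^\star)\bigr)=f(z^\star)-\sum_j\lambda_j^\star g_j(z)\ge f(z^\star)$, the last step using $\lambda_j^\star\ge 0$, $g_j(z)\le 0$, and complementary slackness to drop the $g_j(z^\star)$ terms. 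Hence $z^\star$ is a global minimiser, and in particular its $a$–component $a^\star$ solves \eqref{eq:master-qp}. Because the constraints are affine, no constraint qualification is needed for this \emph{sufficiency} direction; if the solver is stated to return a point satisfying KKT we may take that as given.

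\textbf{Main obstacle.} There is no deep difficulty: the work is entirely bookkeeping. The one place to be careful is confirming that convexity is \emph{joint} in $(a,\text{slacks})$ rather than merely separate — this follows from the block–diagonal, coupling–free structure of the objective Hessian, with the slack variables entering only through the linear constraints. A secondary point worth flagging is that the proposition claims only global optimality, not uniqueness: the Hessian need not be positive \emph{definite} (e.g.\ if $A$ is rank–deficient and no strictly positive ridge is active), so one should not overclaim — uniqueness is the separate subject of Proposition~\ref{prop:strict-convex-data} and its ridge addendum. Finally, the affine (not merely convex) nature of the constraints should be used explicitly, since it is what licenses the identity $\nabla g_j(z^\star)^\top(z-z^\star)=g_j(z)-g_j(z^\star)$ in the optimality argument.
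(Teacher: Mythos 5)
Your proposal is correct and follows essentially the same route as the paper's proof: the objective is recognised as a sum of convex quadratics (with the slack terms decoupled from $a$ in the Hessian), the feasible set is a polyhedron, and KKT sufficiency for convex programs closes the argument. The paper states these three observations in a single short paragraph, whereas you spell out the termwise convexity via the earlier lemmas and write out the KKT sufficiency inequality explicitly; this is more detailed but not a different argument.
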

\begin{proof}
The feasible set is a polyhedron (linear inequalities), hence convex and closed. The objective
is a sum of convex quadratics, hence convex and lower semicontinuous. KKT conditions are
necessary and sufficient for convex QPs; any feasible KKT point is globally optimal.
\end{proof}

\begin{prop}[Strict convexity conditions and uniqueness]\label{prop:uniq}
Define $C := \widetilde A_{\tau|K}\widetilde A_K - \widetilde A_K\widetilde A_{\tau|K} \in\R^{G\times G}$. and $\Pi_{\Gamma_{0^+}}$ selects the short maturity grid.If the quadratic form in $a$,
\begin{equation}
\begin{aligned}
Q \;=\; A^\top W A + \lambda_{\mathrm{ridge}}\Lambda
+ \lambda_{\mathrm{DW}} E^\top L^+ E 
+ \lambda_{\Omega} U_\omega^\top M_\omega^\top M_\omega U_\omega  \\
+ \lambda_{\mathrm{hook}} A^\top C^\top C A
+ \lambda_{\mathrm{RN}}\,A^\top \Pi_{0^+} A
+ \eta_{\mathrm{RN}}\,A_{\tau|K}^\top \Pi_{0^+} A_{\tau|K},
\end{aligned}
\end{equation}

is positive definite, then the objective is strictly convex in $(a,u,v,\dots)$ and the minimiser is unique.
In particular, it suffices that $A^\top W A$ be positive definite on $\mathcal R(A)$ and
$\lambda_{\mathrm{ridge}}>0$ (cf.\ \S\ref{sec:strict-convex-data}).
\end{prop}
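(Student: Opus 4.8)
The plan is to reduce the master program \eqref{eq:master-qp} to a single model quadratic $\Phi(z)=\tfrac12 z^\top H z + c^\top z + \text{const}$ in the stacked decision vector $z=(a,u,v,u_K,v_{KK},w_\tau,s_{\mathrm{lo}},s_{\mathrm{hi}})$, to identify its Hessian $H$, and then to invoke the elementary fact that a strictly convex quadratic over a nonempty polyhedron has a unique minimiser --- exactly the argument of Proposition~\ref{prop:strict-convex-data}, only with more summands.

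First I would expand the objective term by term. The WLS term $\tfrac12\|W^{1/2}(Aa-y)\|_2^2$ contributes $A^\top W A$ to the $a$-block of $H$ (together with the linear part $-A^\top W y$ and the constant $\tfrac12 y^\top W y$); by Lemmas~\ref{lem:ridge-quadratic}, \ref{lem:dw-quadratic} and \ref{lem:omega-quadratic} the ridge, DW, $\Omega$-taper and hook terms contribute the fixed symmetric PSD forms $\lambda_{\mathrm{ridge}}\Lambda$, $\lambda_{\mathrm{DW}}E^\top L^+E$, $\lambda_{\Omega}U_\omega^\top M_\omega^\top M_\omega U_\omega$ and $\lambda_{\mathrm{hook}}A^\top C^\top C A$; and by Lemma~\ref{lem:rn-quadratic} the RN term contributes $\lambda_{\mathrm{RN}}A^\top\Pi_{0^+}A+\eta_{\mathrm{RN}}A_{\tau|K}^\top\Pi_{0^+}A_{\tau|K}$ (plus linear/constant pieces). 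The sum of these seven blocks is exactly the matrix $Q$ in the statement. The slack penalties $\tfrac{\mu}{2}(\|u\|_2^2+\|v\|_2^2)$, $\tfrac{\lambda_{\mathrm{NA}}}{2}(\|u_K\|_2^2+\|v_{KK}\|_2^2+\|w_\tau\|_2^2)$ and $\tfrac{\lambda_{\mathrm B}}{2}(\|s_{\mathrm{lo}}\|_2^2+\|s_{\mathrm{hi}}\|_2^2)$ contribute positive multiples of the identity in their own coordinate blocks; crucially --- and this is the one place that requires looking at \eqref{eq:master-qp} rather than just citing lemmas --- the objective contains no products between $a$ and any slack (the slacks meet $a$ only in the linear inequality constraints), so $H$ is block diagonal, $H=\diag\!\big(Q,\ \mu I,\ \mu I,\ \lambda_{\mathrm{NA}}I,\ \lambda_{\mathrm{NA}}I,\ \lambda_{\mathrm{NA}}I,\ \lambda_{\mathrm B}I,\ \lambda_{\mathrm B}I\big)$.

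Next, using that the slack weights $\mu,\lambda_{\mathrm{NA}},\lambda_{\mathrm B}$ are strictly positive (as fixed in the preceding sections), the slack blocks of $H$ are positive definite, so $H\succ0$ if and only if $Q\succ0$. Under the hypothesis $Q\succ0$ we therefore have $H\succ0$, hence $\Phi$ is strictly convex on the whole decision space, in particular on the feasible set. That feasible set is a nonempty closed convex polyhedron: for any $a$ one may pick each slack equal to its hinge value (e.g.\ $u=(b-Aa)_+\ge 0$, which also satisfies $u\ge b-Aa$, and similarly for the other slacks), so feasibility never fails. A strictly convex quadratic with $H\succ0$ is coercive, so a minimiser exists, and strict convexity forces uniqueness; applying Proposition~\ref{prop:convex-master} (or KKT for convex QPs) identifies it with the unique global optimum.

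Finally, for the ``in particular'' claim I would note that every summand of $Q$ is symmetric PSD by the lemmas just used, while $\lambda_{\mathrm{ridge}}\Lambda\succ0$ whenever $\lambda_{\mathrm{ridge}}>0$ because $\Lambda=\diag\big((1+\alpha k^2+\beta\ell^2)^s\big)$ has strictly positive diagonal entries (\S\ref{sec:ridge-gcv}); hence $Q\succeq\lambda_{\mathrm{ridge}}\Lambda\succ0$ and the general statement applies. (Positive definiteness of $A^\top W A$ on $\mathcal R(A)$ is automatic once $W\succ0$ and is the relevant condition only if $\lambda_{\mathrm{ridge}}=0$, where one instead needs some other summand to be positive definite on $\ker A$; the ridge route is the one used in practice.) I do not expect any genuine obstacle: the whole content is the block-diagonality of $H$ together with the standing positivity of the slack weights, after which existence and uniqueness are the same routine facts already invoked for the data QP in Proposition~\ref{prop:strict-convex-data}.
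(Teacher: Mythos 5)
Your proposal is correct and takes essentially the same approach as the paper: both identify the block-diagonal structure of the full Hessian $\operatorname{blkdiag}(Q,\mu I,\mu I,\lambda_{\mathrm{NA}}I,\dots)$ and deduce positive definiteness of the whole from $Q\succ0$ together with $\mu,\lambda_{\mathrm{NA}},\lambda_{\mathrm{B}}>0$. You supply more detail than the paper's one-line proof (in particular the explicit feasibility/coercivity argument for existence and the verification that $\lambda_{\mathrm{ridge}}\Lambda\succ0$ already forces $Q\succ0$), but the decomposition and the key step are the same.
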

\begin{proof}
Block Hessian is $\operatorname{blkdiag}(Q,\mu I,\mu I,\lambda_{\mathrm{NA}} I,\dots)$; if $Q\succ0$ and
$\mu,\lambda_{\mathrm{NA}},\lambda_{\mathrm{B}}>0$, the whole Hessian is positive definite.
\end{proof}

\begin{definition}[Global coefficient-space metric]\label{def:Mh}
Define the symmetric positive \emph{semidefinite} matrix
\begin{align*}
  M_h \;:=\;
  A^\top W A
  \;+\; \lambda_{\mathrm{ridge}} \Lambda
  \;+\; \lambda_{\mathrm{DW}} E^\top L_{+} E
  \;+\; \lambda_{\Omega} U_{\omega}^\top M_{\omega}^\top M_{\omega} U_{\omega} \\
  \;+\; \lambda_{\mathrm{hook}} A^\top C^\top C A
  \;+\; \lambda_{\mathrm{RN}} A^\top \Pi_{0^+} A 
  \;+\; \eta_{\mathrm{RN}}\,A_{\tau|K}^\top \Pi_{0^+} A_{\tau|K}.
\end{align*}
\end{definition}

\begin{assump}[Strictly positive ridge shape]
\label{ass:ridge-shape}
The ridge shape matrix $\Lambda$ is symmetric positive definite
(e.g.\ diagonal with strictly positive entries).
\end{assump}

\begin{prop}[When $M_h$ is positive definite]
\label{prop:Mh-SPD}
Suppose Assumption~\ref{ass:ridge-shape} holds and 
$\lambda_{\mathrm{ridge}} > 0$.
Then the matrix $M_h$ from Definition~\ref{def:Mh} is symmetric
positive definite. In particular, $M_h$ is invertible and induces
a norm $\|a\|_{M_h}^2 := a^\top M_h a$ on $\mathbb{R}^P$.
\end{prop}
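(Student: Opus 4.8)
The plan is to decompose $M_h$ as a sum of symmetric matrices, note that each summand is positive semidefinite while the single ridge block $\lambda_{\mathrm{ridge}}\Lambda$ is strictly positive definite, and then use the elementary fact that the sum of a positive semidefinite matrix and a positive definite matrix is positive definite; positive definiteness then immediately yields invertibility and the fact that $\|a\|_{M_h}^2 := a^\top M_h a$ defines a genuine norm. All of the needed semidefiniteness facts are already in hand from Lemmas~\ref{lem:ridge-quadratic}, \ref{lem:dw-quadratic}, \ref{lem:rn-quadratic} and \ref{lem:omega-quadratic}; the proof is essentially bookkeeping.

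Concretely, I would proceed in four steps. \textbf{Step 1 (symmetry).} Each term in Definition~\ref{def:Mh} is either a nonnegative scalar times a symmetric matrix or of the form $B^\top S B$ with $S=S^\top$ (namely $A^\top W A$, $E^\top L_+ E$, $A^\top C^\top C A$, $A^\top\Pi_{0^+}A$, $A_{\tau|K}^\top\Pi_{0^+}A_{\tau|K}$, $U_\omega^\top M_\omega^\top M_\omega U_\omega$); hence $M_h=M_h^\top$. \textbf{Step 2 (each block is PSD).} For any $a$ one has $a^\top A^\top W A\,a=\|W^{1/2}Aa\|_2^2\ge 0$ since $W\succeq 0$; $a^\top E^\top L_+E\,a=(Ea)^\top L_+(Ea)\ge 0$ because $L_+\succeq 0$ as the Moore--Penrose pseudoinverse of a PSD matrix (as in Lemma~\ref{lem:dw-quadratic}); $\lambda_\Omega\,\|M_\omega U_\omega a\|_2^2\ge 0$; $\lambda_{\mathrm{hook}}\,\|CAa\|_2^2\ge 0$; and $\lambda_{\mathrm{RN}}\,\|Aa\|_{2,\Gamma_{0^+}}^2\ge 0$, $\eta_{\mathrm{RN}}\,\|A_{\tau|K}a\|_{2,\Gamma_{0^+}}^2\ge 0$ because $\Pi_{0^+}$ is a $0/1$ diagonal selector, hence PSD. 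Here I would explicitly invoke the standing convention that all regulariser weights $\lambda_{\mathrm{DW}},\lambda_\Omega,\lambda_{\mathrm{hook}},\lambda_{\mathrm{RN}},\eta_{\mathrm{RN}}$ are nonnegative, so that each of these contributions is $\ge 0$.

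\textbf{Step 3 (the ridge block is strictly PD).} By Assumption~\ref{ass:ridge-shape}, $\Lambda\succ 0$, and $\lambda_{\mathrm{ridge}}>0$, so $a^\top(\lambda_{\mathrm{ridge}}\Lambda)a>0$ for every $a\neq 0$. \textbf{Step 4 (combine and conclude).} For $a\neq 0$, summing Steps 2 and 3 gives $a^\top M_h a \ge \lambda_{\mathrm{ridge}}\,a^\top\Lambda a>0$, so $M_h\succ 0$. Consequently $M_h$ is invertible (if $M_h a=0$ then $a^\top M_h a=0$ forces $a=0$), and $\langle a,b\rangle_{M_h}:=a^\top M_h b$ is a genuine inner product on $\mathbb{R}^P$, whose induced quadratic form $\|a\|_{M_h}^2=a^\top M_h a$ is therefore a norm (homogeneity is clear, and the triangle inequality follows from Cauchy--Schwarz for $\langle\cdot,\cdot\rangle_{M_h}$). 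I expect no genuine obstacle here; the only point needing a little care is making the ``sum of PSD blocks'' argument airtight, i.e. being explicit that $W\succeq 0$ suffices (even though $W\succ 0$ elsewhere), that $L_+\succeq 0$, and that all tuning weights are nonnegative, so that the strict positivity is carried entirely by the ridge term under the stated hypotheses.
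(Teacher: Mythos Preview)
Your proposal is correct and follows essentially the same approach as the paper: decompose $M_h$ into a sum of symmetric PSD blocks, observe that the ridge block $\lambda_{\mathrm{ridge}}\Lambda$ is strictly positive definite under the stated hypotheses, and conclude $a^\top M_h a \ge \lambda_{\mathrm{ridge}}\,a^\top\Lambda a > 0$ for $a\neq 0$. Your write-up is in fact more detailed than the paper's (explicitly invoking Lemmas~\ref{lem:ridge-quadratic}--\ref{lem:omega-quadratic} for each PSD block and spelling out the norm/invertibility consequences), but the underlying argument is identical.
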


\begin{proof}
Each term in the definition of $M_h$ is symmetric and positive
semidefinite. Under Assumption~\ref{ass:ridge-shape}, $\Lambda \succ 0$,
so for any $a \neq 0$ we have
\[
a^\top \bigl(\lambda_{\mathrm{ridge}}\Lambda\bigr) a
= \lambda_{\mathrm{ridge}} \, a^\top \Lambda a > 0
\]
whenever $\lambda_{\mathrm{ridge}} > 0$.
All the remaining terms in $M_h$ are positive semidefinite, so
\[
a^\top M_h a
= a^\top\bigl( \lambda_{\mathrm{ridge}}\Lambda \bigr)a
  + a^\top(\text{psd terms})a
\ge \lambda_{\mathrm{ridge}} \, a^\top \Lambda a
> 0
\]
for all $a\neq 0$. Hence $M_h \succ 0$.
\end{proof}

\begin{prop}[Metric projection form of the global solution]
Let $\mathcal{C}_h \subset \mathbb{R}^P$ be the polyhedron defined by the hard
constraints (if any) in $a$ (no--arbitrage, bounds, etc.). Consider the strictly
convex quadratic problem
\begin{equation}
  \label{eq:global-metric-projection}
  \min_{a \in \mathcal{C}_h}
  \;\frac12 \,a^\top M_h a \;-\; b^\top a,
\end{equation}
where $M_h$ is the matrix from Definition \ref{def:Mh} and
\begin{equation}
  b := A^\top W y - c_{\mathrm{RN}},
\end{equation}
with $c_{\mathrm{RN}}$ the linear coefficient from Lemma \ref{lem:rn-quadratic}.

Set
\begin{equation}
  \hat a \;:=\; M_h^{-1} b.
  \label{eq:a-hat}
\end{equation}
Then the unique minimiser $a^\star$ of \eqref{eq:global-metric-projection} can be
written as the metric projection of $\hat a$ onto $\mathcal{C}_h$ in the $M_h$--inner
product:
\begin{equation}
  a^\star
  \;=\;
  \arg\min_{a \in \mathcal{C}_h}
  \frac12 \,\|a - \hat a\|_{M_h}^2,
  \qquad
  \|z\|_{M_h}^2 := z^\top M_h z .
\end{equation}
\end{prop}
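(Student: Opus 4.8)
The plan is to complete the square in the objective of \eqref{eq:global-metric-projection} and then identify the result as a metric projection. This is a standard fact about strictly convex quadratic programs, so the proof is mostly bookkeeping; the only point requiring care is justifying that $M_h$ is invertible so that $\hat a = M_h^{-1}b$ is well-defined, but this is exactly Proposition~\ref{prop:Mh-SPD} (under Assumption~\ref{ass:ridge-shape} and $\lambda_{\mathrm{ridge}}>0$), which I will invoke at the outset.

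**First I would** record that $M_h \succ 0$ by Proposition~\ref{prop:Mh-SPD}, so $M_h^{-1}$ exists and $\|z\|_{M_h}^2 = z^\top M_h z$ is a genuine norm on $\mathbb R^P$. Next I would expand the shifted quadratic: for any $a$,
\[
\tfrac12\,\|a-\hat a\|_{M_h}^2
= \tfrac12\,a^\top M_h a - a^\top M_h \hat a + \tfrac12\,\hat a^\top M_h \hat a.
\]
Since $\hat a = M_h^{-1}b$ gives $M_h\hat a = b$, the middle term is exactly $-b^\top a$, so
\[
\tfrac12\,\|a-\hat a\|_{M_h}^2
= \Bigl(\tfrac12\,a^\top M_h a - b^\top a\Bigr) + \tfrac12\,\hat a^\top M_h \hat a .
\]
The trailing term $\tfrac12\,\hat a^\top M_h\hat a = \tfrac12\, b^\top M_h^{-1} b$ is a constant independent of $a$.

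**Then I would** conclude: minimising $a\mapsto \tfrac12 a^\top M_h a - b^\top a$ over $a\in\mathcal C_h$ is equivalent to minimising $a\mapsto \tfrac12\|a-\hat a\|_{M_h}^2$ over $a\in\mathcal C_h$, since the two objectives differ by an additive constant. Because $\mathcal C_h$ is a (nonempty) closed convex polyhedron and $M_h\succ 0$, the map $a\mapsto \tfrac12\|a-\hat a\|_{M_h}^2$ is strictly convex and coercive, so the minimiser exists and is unique; it is by definition the metric projection of $\hat a$ onto $\mathcal C_h$ in the $M_h$-inner product. Hence $a^\star = \arg\min_{a\in\mathcal C_h}\tfrac12\|a-\hat a\|_{M_h}^2$, which is the claimed identity. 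I would also note in passing that this matches the unconstrained stationarity condition: if $\mathcal C_h=\mathbb R^P$ then the minimiser is $a^\star=\hat a$, consistent with $M_h a^\star = b$.

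**The main obstacle** is essentially nonexistent here — the statement is a direct completion-of-the-square argument — so the only thing to be vigilant about is not conflating the semidefinite matrix of Definition~\ref{def:Mh} with its guaranteed-definite specialisation: the projection interpretation genuinely needs $M_h\succ 0$ (otherwise $\hat a$ need not be well-defined and $\|\cdot\|_{M_h}$ is only a seminorm), which is why the hypotheses of Proposition~\ref{prop:Mh-SPD} must be in force. I would state this dependence explicitly at the start of the proof.
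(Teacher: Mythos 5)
Your proof is correct and takes essentially the same approach as the paper: establish $M_h \succ 0$ (you cite Proposition~\ref{prop:Mh-SPD}; the paper cites the hypotheses of Proposition~\ref{prop:uniq}, which amount to the same thing), complete the square using $M_h \hat a = b$, and conclude via strict convexity of $\tfrac12\|a-\hat a\|_{M_h}^2$ on the closed convex set $\mathcal C_h$. Your additional remark about coercivity and the unconstrained sanity check $a^\star=\hat a$ when $\mathcal C_h=\mathbb R^P$ are fine but not load-bearing.
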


\begin{proof}
Write the objective in \eqref{eq:global-metric-projection} as
\[
  J(a) := \frac12\,a^\top M_h a - b^\top a, \qquad a \in \mathcal{C}_h.
\]
By Definition~\ref{def:Mh} the matrix $M_h$ is symmetric and positive semidefinite.
Under the assumptions of Proposition~\ref{prop:uniq} (in particular,
$A^\top W A$ positive definite on the span of $A$ and $\lambda_{\mathrm{ridge}}>0$),
the quadratic form $a \mapsto a^\top M_h a$ is positive definite, and hence $M_h$
is symmetric positive definite. In particular, $M_h$ is invertible, so $\hat a$ is
well defined by
\[
  \hat a := M_h^{-1} b
  \quad\Longleftrightarrow\quad
  M_h \hat a = b.
\]

We first rewrite $J$ in terms of $(a-\hat a)$. Using $b = M_h \hat a$ and the
symmetry of $M_h$, for any $a \in \mathbb{R}^P$,
\[
  J(a)
  = \frac12\,a^\top M_h a - b^\top a
  = \frac12\,a^\top M_h a - (M_h \hat a)^\top a
  = \frac12\,a^\top M_h a - \hat a^\top M_h a.
\]
On the other hand,
\[
  (a-\hat a)^\top M_h (a-\hat a)
  = a^\top M_h a - 2\,\hat a^\top M_h a + \hat a^\top M_h \hat a,
\]
again by symmetry of $M_h$. Hence,
\[
  \frac12\,(a-\hat a)^\top M_h (a-\hat a)
  - \frac12\,\hat a^\top M_h \hat a
  = \frac12\,a^\top M_h a - \hat a^\top M_h a
  = J(a).
\]
Thus, for all $a \in \mathbb{R}^P$,
\[
  J(a)
  = \frac12\,(a-\hat a)^\top M_h (a-\hat a)
    - \frac12\,\hat a^\top M_h \hat a.
\]
The second term on the right-hand side does not depend on $a$. Therefore,
minimising $J$ over $a \in \mathcal{C}_h$ is equivalent to minimising
\[
  a \;\longmapsto\; \frac12\,(a-\hat a)^\top M_h (a-\hat a)
  = \frac12\,\|a-\hat a\|_{M_h}^2
\]
over $a \in \mathcal{C}_h$. Since $M_h$ is positive definite, this functional is
strictly convex in $a$, so it has a unique minimiser in the closed convex set
$\mathcal{C}_h$; by definition, this minimiser is the metric projection of
$\hat a$ onto $\mathcal{C}_h$ in the $M_h$–inner product. This is exactly the
claim.
\end{proof}

When discussing the hard–constraint limit it will be convenient to make the global
feasibility assumption explicit. Namely, we assume that the static no–arbitrage and
bound inequalities admit at least one coefficient vector, i.e.
\begin{equation}\label{eq:global-feasible-set}
\mathcal{C}_{\mathrm{NA}}
:=
\bigl\{ a \in \mathbb{R}^P :
A_K a \le 0,\;
- A_{KK} a \le 0,\;
- A_{\tau|K} a \le 0,\;
0 \le A a \le F
\bigr\}
\neq \varnothing.
\end{equation}
This is a modelling condition stating that the chosen Chebyshev approximation space
contains at least one globally static no–arbitrage surface.

\paragraph{Hard–constraint limits.}
Hard–constraint limits. Under \eqref{eq:global-feasible-set}, letting
$\lambda_{\mathrm{NA}},\lambda_B \to \infty$ drives the corresponding slacks to $0$ and
recovers the constrained solution of the remaining strictly convex quadratic objective. If the
intersection is empty, the finite–$\lambda$ problem yields the minimum–violation compromise
(\S\ref{sec:noarb-soft}).

\section{Invariance and scaling}
If the $\Lambda$–module is used (\S\ref{sec:Lambda-module}), replace every block by its
multiplied version and $\Lambda$ by $\widetilde\Lambda=U^\top\Lambda U$; the feasible set and all
objective values are unchanged (Proposition~\ref{prop:equiv}). Row–scaling the no–arb blocks
(\S\ref{sec:scaling}) multiplies them by positive scalars and only equilibrates numeric weights; it
does not alter feasibility.

\section{Solution procedure (used)}
Solve \eqref{eq:master-qp} with OSQP (warm starts). The no–arb weight
$\lambda_{\mathrm{NA}}$ is set after row scaling to hit $\le1\%$ grid violations
(\S\ref{sec:scaling}); $\lambda_{\mathrm{ridge}}$ is fixed by GCV on a small WLS subsample
(\S\ref{sec:ridge-gcv}); $\mu$ is increased by a short controller until target coverage is reached
(\S\ref{sec:coverage-data}). All other quadratic weights follow \S\ref{sec:ridge-etc}.

\begin{remark}[Why the $99/1$ target is attainable]
The data term prices \emph{into} the bid–ask bands (hinge), while $A_K,A_{KK},A_{\tau|K}$ are
enforced densely on the grid with p95 row scaling, so a single $\lambda_{\mathrm{NA}}$ controls the
violation budget. Short–maturity defects are suppressed by the RN anchoring and the DW/$\Omega$
terms, which remove the usual butterfly/aliasing artifacts.
\end{remark}

\chapter{\texorpdfstring
  {Scaling, schedules, and\\ the $\mu$-controller}
  {Scaling, schedules, and the mu-controller}}
\label{sec:scaling}


\section{Row scaling (summary)}\label{sec:scaling-row-summary}
We use the p95 block-scalar normalisation of §\ref{sec:row-scaling-single-weight}: on the grid,
set $\widetilde A_K=A_K/s_K$, $\widetilde A_{KK}=A_{KK}/s_{KK}$, 
$\widetilde A_{\tau|K}=A_{\tau|K}/s_{\tau}$ with $s_\bullet=\mathrm{q}_{0.95}$ of row $\ell_2$ norms, 
computed \emph{after} the $\Lambda$–module transform. This preserves hard feasibility and allows a 
single $\lambda_{\mathrm{NA}}$ to control all three terms on a comparable scale.

\section{\texorpdfstring
  {Short schedule for $\lambda_{\mathrm{NA}}$}
  {Short schedule for lambda-NA}}
\label{sec:short-sch-lam}
We select $\lambda_{\mathrm{NA}}$ on a thinned setup to save time while preserving the target
violation share.

\paragraph{Thinned probe:} Build a reduced grid (every other Chebyshev node in $m$ and a
coarser subset in $\tau$) and a tiny quote subset ($5$-$10\%$ uniformly across $(m,\tau)$).
Fix a moderate $\mu$ (the previous day’s value) and all other weights.

\paragraph{Grid search:} For a short geometric ladder
$\Lambda_{\mathrm{trial}}=\{1,2,4,8,16,32,64,128,256\}$ solve the QP on the thinned setup and
measure the violation rate:
\begin{equation}
\begin{aligned}
\mathrm{viol}(\lambda)
&= \frac{1}{3G}\sum_{g=1}^G\Big(
\mathbf{1}\{(A_K a(\lambda))_g>\tau_K\} \\
&\quad + \mathbf{1}\{(-A_{KK} a(\lambda))_g>\tau_{KK}\}
+ \mathbf{1}\{(-A_{\tau\mid K} a(\lambda))_g>\tau_{\tau}\}\Big).
\end{aligned}
\end{equation}
with small numerical tolerances $\tau_\bullet$ (in scaled units). Pick the smallest
$\lambda\in\Lambda_{\mathrm{trial}}$ such that $\mathrm{viol}(\lambda)\le 1\%$, and \emph{fix}
that $\lambda_{\mathrm{NA}}$ for the full grid and book. 

\paragraph{Explanation:} Solve the QP (\ref{eq:master-qp}) repeatedly, but with $\lambda_{\mathrm{NA}} \leftarrow \lambda$ for each $\lambda \in \Lambda_{\mathrm{trial}}$ and all other weights fixed. Each solve returns a different $a(\lambda)$ and then using this $a(\lambda)$, compute the violation rate $\mathrm{viol}(\lambda)$. From the violation rates pick the smallest $\lambda$ achieving $\leq 1 \%$, and use that as $\lambda_{\mathrm{NA}}$ for the full problem.

\begin{remark}[Invariance to scaling]
Because each block was divided by $s_B$, the selected $\lambda_{\mathrm{NA}}$ is stable
day-to-day and across underliers; without scaling, the same ladder would over/under–penalise
whichever block happens to have the largest raw norms.
\end{remark}

\section{\texorpdfstring
  {The $\mu$–controller (coverage target)}
  {The mu-controller (coverage target)}}
Recall the coverage–seeking data term (\S\ref{sec:coverage-data}): the mid–squared error plus
$\mu$ times the quadratic band–hinge. Let
\[
\mathrm{Hinge}(a)\;:=\;\sum_{i=1}^N \ell_{\mathrm{band}}\big((Aa)_i;b_i,a_i\big)
=\tfrac12\big\|\mathrm{dist}(Aa,\,[b,a])\big\|_2^2,
\]\[
\mathrm{Cov}(a)\;:=\;\frac{1}{N}\sum_{i=1}^N \mathbf{1}\{b_i\le (Aa)_i\le a_i\}.
\]
We adjust $\mu$ (similar to \ref{sec:short-sch-lam}) so that $\mathrm{Cov}(a^\star(\mu))$ reaches a target ($99\%$).
Although coverage is a discrete functional (hence may have plateaus), the hinge at the
optimiser is nonincreasing in $\mu$:

\begin{lemma}[Monotonicity of optimal hinge]
\label{lem:hinge-monotone}
Let $g(a)$ denote the full objective without the hinge weight (all terms in \eqref{eq:master-qp}
except $\mu\,\mathrm{Hinge}(a)$). For $\mu_1<\mu_2$, let
$a_j\in\arg\min_a\{g(a)+\mu_j\,\mathrm{Hinge}(a)\}$ for $j=1, \; 2$. Then
$\mathrm{Hinge}(a_2)\le \mathrm{Hinge}(a_1)$.
\end{lemma}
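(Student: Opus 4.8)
The statement is the standard ``monotone comparative statics'' fact for penalised objectives, and the plan is to prove it by the classical two-inequality (revealed-preference) argument, not by any differentiability or envelope reasoning. First I would record optimality of each $a_j$ against the \emph{other} point: since $a_1$ minimises $g(a)+\mu_1\,\mathrm{Hinge}(a)$ we have
\[
g(a_1)+\mu_1\,\mathrm{Hinge}(a_1)\;\le\; g(a_2)+\mu_1\,\mathrm{Hinge}(a_2),
\]
and since $a_2$ minimises $g(a)+\mu_2\,\mathrm{Hinge}(a)$ we have
\[
g(a_2)+\mu_2\,\mathrm{Hinge}(a_2)\;\le\; g(a_1)+\mu_2\,\mathrm{Hinge}(a_1).
\]
Adding these two inequalities cancels $g(a_1)+g(a_2)$ from both sides, leaving
\[
\mu_1\,\mathrm{Hinge}(a_1)+\mu_2\,\mathrm{Hinge}(a_2)\;\le\;\mu_1\,\mathrm{Hinge}(a_2)+\mu_2\,\mathrm{Hinge}(a_1),
\]
which rearranges to $(\mu_2-\mu_1)\bigl(\mathrm{Hinge}(a_2)-\mathrm{Hinge}(a_1)\bigr)\le 0$. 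Since $\mu_2-\mu_1>0$ by hypothesis, dividing gives $\mathrm{Hinge}(a_2)\le\mathrm{Hinge}(a_1)$, which is exactly the claim.

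Two small bookkeeping points need to be handled cleanly. First, existence of the minimisers $a_j$: $g$ is a convex quadratic (sum of the fixed PSD forms and squared slacks in \eqref{eq:master-qp}) and $\mathrm{Hinge}$ is convex and bounded below by $0$, so $g+\mu_j\mathrm{Hinge}$ is convex, continuous, and — under the strict-convexity hypotheses of Proposition~\ref{prop:uniq}, or at worst modulo $\ker(A)$ — coercive on the feasible polyhedron; hence a minimiser exists, and the argument above does not require uniqueness, only that $a_1,a_2$ are \emph{some} minimisers. Second, I should note that the inequalities used are the global optimality inequalities on the (convex, closed) feasible set $\mathcal{C}_h$, evaluated at the competitor point, which is itself feasible; this is why $a_2$ is an admissible competitor in the first inequality and vice versa. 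No constraint qualification or KKT machinery is needed.

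There is essentially no hard step here: the only thing to be careful about is not to assume the optimal hinge is \emph{strictly} decreasing or that $a_1=a_2$ is excluded — the lemma only claims $\le$, and indeed on a plateau (e.g. once full coverage is attained) one can have $\mathrm{Hinge}(a_1)=\mathrm{Hinge}(a_2)=0$ with equality throughout, which the argument correctly permits. I would close with a one-line remark that the same computation shows the other ``crossing'' quantity, $g(a_\mu)$, is nondecreasing in $\mu$, and that this is why the $\mu$-controller can safely increase $\mu$ to chase the coverage target without ever increasing the hinge residual — though that remark is optional and can be omitted if space is tight.
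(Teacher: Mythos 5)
Your proposal is correct and is essentially identical to the paper's proof: both use the standard revealed-preference argument, evaluating each minimiser's optimality inequality at the other point, summing the two inequalities to cancel the $g$ terms, and dividing by $\mu_2-\mu_1>0$. The extra remarks on existence, plateaux, and the monotonicity of $g(a_\mu)$ are correct supplementary observations but not needed for the lemma itself.
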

\begin{proof}
By definition of the minimisers, for all $x, \; y$ we have the following:
\[
g(a_1)+\mu_1 H(a_1)\le g(x)+\mu_1 H(x);
\]
\[
g(a_2)+\mu_2 H(a_2)\le g(y)+\mu_2 H(y).
\]
Taking $x=a_2$ and $y=a_1$ yields the following:
\[g(a_1)+\mu_1 H(a_1)\le g(a_2)+\mu_1 H(a_2)\]
\[g(a_2)+\mu_2 H(a_2)\le g(a_1)+\mu_2 H(a_1)\]
Summing and dividing by $\mu_2-\mu_1>0$ yields $H(a_2)\le H(a_1)$.
\end{proof}

\paragraph{Controller (bracket \& bisection).}
\begin{enumerate}[leftmargin=2em,itemsep=2pt]
\item \emph{Bracket.} Start from $(\mu_{\min},\mu_{\max})$ (reusing prior-day values when
available). If coverage at $\mu_{\max}$ is below target, expand $\mu_{\max}\leftarrow c\,\mu_{\max}$
(e.g.\ $c=4$) until $\mathrm{Cov}(a^\star(\mu_{\max}))$ crosses the target or a cap is reached.
\item \emph{Bisection.} While $\mu_{\max}-\mu_{\min}$ is above tolerance and coverage not yet at
target, set $\mu\leftarrow \sqrt{\mu_{\min}\mu_{\max}}$ (geometric bisection), solve once, and
update the endpoint whose coverage is on the wrong side of the target.
\end{enumerate}

\paragraph{What is held fixed:}
All other weights ($\lambda_{\mathrm{ridge}},\lambda_{\mathrm{NA}},\lambda_{\mathrm{DW}},
\lambda_{\Omega},\lambda_{\mathrm{RN}},\lambda_{\mathrm{B}}$) and the scaled operators are held
fixed while $\mu$ is adjusted.

\section{Practical notes}
\begin{itemize}[leftmargin=1.5em,itemsep=2pt]
\item \emph{Warm starts.} Reuse $a^\star$ when moving along the $\lambda_{\mathrm{NA}}$ ladder
and the $\mu$ bracket; OSQP converges in a few iterations from a nearby point.
\item \emph{Tolerances.} Use small positive tolerances $\tau_\bullet$ when counting violations to
avoid flagging solver noise; report violations in unscaled operator units.
\item \emph{Stability.} If coverage oscillates near the target, accept the smallest $\mu$ in the
final bracket that achieves the target.
\end{itemize}

\chapter{Short-maturity remedy}
\label{sec:rn}\label{sec:dw-omega}
Recall the calendar operator at fixed strike from \S\ref{sec:no-arb-grid} (see \eqref{eq:A_tK}):
\[
A_{\tau|K} \;=\; A_\tau \;+\;\diag\!\big(-\rho(\tau_g)\big)\,A_m,
\quad \rho(\tau)\equiv \tfrac{d}{d\tau}\log F(\tau).
\]
When $\tau$ is very small, noise in $\partial_m C_f$ is fed into
$(\partial_\tau C_f)|_K$ through the $\rho(\tau)$ term, so small ripples in $m$ can flip
the calendar sign. Counter this with three convex, model–agnostic devices that act only
near the boundary and vanish smoothly as maturity grows.

\section{\texorpdfstring
  {Boundary anchoring and calendar flattening on $\Gamma_{0^+}$}
  {Boundary anchoring and calendar flattening on Gamma 0+}}

Let $\Gamma_{0^+}=\{g:\tau_g\le\tau_\star\}$ be the short–maturity window (usually
$\tau_\star=5$–$10$ trading days). Define the intrinsic forward–discounted limit
\[
C_{0^+}(m)\;=\;F_0\,(1-e^m)_+ \;=\; \big(F_0-K\big)^+,\qquad m=\log(K/F_0).
\]
We use the convex quadratic (see \ref{eq:rn})
\begin{equation}
\label{eq:rn-penalty}
\mathcal{R}_{\mathrm{RN}}(a)
\;=\;\frac{\lambda_{\mathrm{RN}}}{2}\,\|Aa-C_{0^+}\|_{2,\Gamma_{0^+}}^2
\;+\;\frac{\eta_{\mathrm{RN}}}{2}\,\|A_{\tau|K} a\|_{2,\Gamma_{0^+}}^2.
\end{equation}

\begin{lemma}[Consistency with the short–time limit]\label{lem:rn-consistency}
Assume that for fixed $K\neq F_0$, $C_f(K,\tau)\to(F_0-K)^+$ and
$\partial_\tau C_f(K,\tau)\big|_{K}\to 0$ as $\tau\downarrow 0$. 
Let $\Gamma_{0^+}=\{g:\tau_g\le\tau_\star\}$ and write
\[
R_1(a):=\|Aa-C_{0^+}\|_{2,\Gamma_{0^+}}^2,\qquad
R_2(a):=\|A_{\tau|K}a\|_{2,\Gamma_{0^+}}^2,
\]
so that the RN penalty in \eqref{eq:rn-penalty} is $\tfrac{\lambda_{\mathrm{RN}}}{2}R_1(a)+\tfrac{\eta_{\mathrm{RN}}}{2}R_2(a)$.
Let $\mathcal F\subset\mathbb R^P$ be a closed convex feasible set and let $g:\mathcal F\to\mathbb R$ 
collect all other (convex) terms of the objective in \eqref{eq:master-qp}.
Assume the boundary conditions are attainable on $\Gamma_{0^+}$, i.e.
\[
\mathcal C:=\{a\in\mathcal F:\ R_1(a)=0,\ R_2(a)=0\}\neq\varnothing.
\]
For $\lambda,\eta>0$ define
\[
J_{\lambda,\eta}(a):=g(a)+\frac{\lambda}{2}R_1(a)+\frac{\eta}{2}R_2(a),
\quad
a_{\lambda,\eta}\in\arg\min_{a\in\mathcal F} J_{\lambda,\eta}(a).
\]
Then, as $\min\{\lambda,\eta\}\to\infty$,
\[
R_1(a_{\lambda,\eta})\to 0
\qquad\text{and}\qquad
R_2(a_{\lambda,\eta})\to 0,
\]
so $Aa_{\lambda,\eta}\to C_{0^+}$ and $A_{\tau|K}a_{\lambda,\eta}\to 0$ on $\Gamma_{0^+}$.
Moreover, every cluster point $a^\star$ of $\{a_{\lambda,\eta}\}$ solves the equality–constrained problem
\[
\min\{\,g(a):\ a\in\mathcal C\,\}.
\]
\end{lemma}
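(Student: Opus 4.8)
The plan is to use a standard penalty/$\Gamma$-convergence argument. First I would fix any reference point $a_0\in\mathcal C$ (nonempty by hypothesis), so that $R_1(a_0)=R_2(a_0)=0$ and $J_{\lambda,\eta}(a_0)=g(a_0)$ for all $\lambda,\eta>0$. By optimality of $a_{\lambda,\eta}$,
\[
g(a_{\lambda,\eta})+\frac{\lambda}{2}R_1(a_{\lambda,\eta})+\frac{\eta}{2}R_2(a_{\lambda,\eta})\;\le\; g(a_0).
\]
Since $g$ is bounded below on the relevant set (it is a sum of convex quadratics; in the intended application one may also invoke coercivity from the ridge term, or simply assume $\inf_{\mathcal F}g>-\infty$, which I would state explicitly), rearranging gives
\[
\frac{\lambda}{2}R_1(a_{\lambda,\eta})+\frac{\eta}{2}R_2(a_{\lambda,\eta})\;\le\; g(a_0)-\inf_{\mathcal F}g\;=:\;M<\infty .
\]
Because $R_1,R_2\ge 0$, this forces $0\le R_1(a_{\lambda,\eta})\le 2M/\lambda\to 0$ and $0\le R_2(a_{\lambda,\eta})\le 2M/\eta\to 0$ as $\min\{\lambda,\eta\}\to\infty$. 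Since $R_1(a)=\|Aa-C_{0^+}\|_{2,\Gamma_{0^+}}^2$ and $R_2(a)=\|A_{\tau|K}a\|_{2,\Gamma_{0^+}}^2$, this is exactly the claimed convergence $Aa_{\lambda,\eta}\to C_{0^+}$ and $A_{\tau|K}a_{\lambda,\eta}\to 0$ on $\Gamma_{0^+}$.

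For the cluster-point claim, let $a^\star$ be the limit of a subsequence $a_{\lambda_n,\eta_n}$ with $\min\{\lambda_n,\eta_n\}\to\infty$ (its existence is what must be assumed or derived — see the obstacle below). Continuity of $R_1,R_2$ and the bound above give $R_1(a^\star)=R_2(a^\star)=0$, and closedness of $\mathcal F$ gives $a^\star\in\mathcal F$, so $a^\star\in\mathcal C$. For optimality, take any competitor $a\in\mathcal C$; then $R_1(a)=R_2(a)=0$, so $J_{\lambda_n,\eta_n}(a)=g(a)$, and optimality of $a_{\lambda_n,\eta_n}$ gives
\[
g(a_{\lambda_n,\eta_n})\;\le\; g(a_{\lambda_n,\eta_n})+\frac{\lambda_n}{2}R_1(a_{\lambda_n,\eta_n})+\frac{\eta_n}{2}R_2(a_{\lambda_n,\eta_n})\;\le\; g(a).
\]
Passing to the limit along the subsequence and using lower semicontinuity of $g$ (it is continuous, being a quadratic) yields $g(a^\star)\le g(a)$ for every $a\in\mathcal C$, i.e.\ $a^\star$ solves $\min\{g(a):a\in\mathcal C\}$.

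The main obstacle is the existence of cluster points: if $\mathcal F$ is unbounded and $g$ has flat directions (e.g.\ $\ker A$ when no ridge is present), the minimisers $a_{\lambda,\eta}$ need not stay bounded. I would handle this in one of two ways: either (i) add the standing assumption that $\mathcal F$ is bounded, or that the global Hessian $Q$ of Proposition~\ref{prop:uniq} is positive definite (so $g$ is coercive and the whole sublevel-set argument gives an a priori bound $\|a_{\lambda,\eta}\|\le$ const, hence precompactness in $\mathbb R^P$); or (ii) phrase the conclusion modulo the common nullspace of $g$, $R_1$, $R_2$, projecting onto its orthogonal complement where coercivity is automatic. In the paper's setting the ridge term $\lambda_{\mathrm{ridge}}\Lambda$ with $\Lambda\succ 0$ (Assumption~\ref{ass:ridge-shape}) already makes $g$ coercive, so option (i) is clean and I would invoke it. The rest is routine: boundedness below of $g$, nonnegativity of the penalties, and continuity/semicontinuity are all immediate from the quadratic structure established in Lemmas~\ref{lem:ridge-quadratic}--\ref{lem:omega-quadratic}.
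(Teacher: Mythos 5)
Your proof is correct and follows essentially the same penalty-method argument as the paper: a feasible $a_0\in\mathcal C$ provides the upper bound $J_{\lambda,\eta}(a_{\lambda,\eta})\le g(a_0)$, nonnegativity of the penalties and boundedness below of $g$ force $R_1,R_2\to 0$, and coercivity of $g$ (from the ridge term with $\Lambda\succ 0$) gives boundedness of the minimisers and hence cluster points, which are then shown to lie in $\mathcal C$ and minimise $g$ there. You correctly identify the existence-of-cluster-points step as the one requiring coercivity, which the paper handles in the same way via the ridge; your explicit rate bounds $R_1\le 2M/\lambda$, $R_2\le 2M/\eta$ are a nice addition but not a different argument.
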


\begin{proof}
\underline{\textit{Step 1 (residuals vanish).}}
Pick $a^0\in\mathcal C$ (exists by assumption, discussed later), so $R_1(a^0)=R_2(a^0)=0$.
By optimality of $a_{\lambda,\eta}$,
\[
J_{\lambda,\eta}(a_{\lambda,\eta})
\;\le\;J_{\lambda,\eta}(a^0)
\;=\;g(a^0).
\]
Hence, for all $\lambda,\eta>0$,
\[
g(a_{\lambda,\eta})+\frac{\lambda}{2}R_1(a_{\lambda,\eta})
+\frac{\eta}{2}R_2(a_{\lambda,\eta})
\;\le\; g(a^0).
\]
Since $g$ is bounded below on $\mathcal F$ (true in our QP, e.g.\ by the ridge term; (Remark \ref{rem:g-lower-bound})), there exists $m>-\infty$
with $g(a)\ge m$ for all $a\in\mathcal F$. Therefore
\[
\frac{\lambda}{2}R_1(a_{\lambda,\eta})+\frac{\eta}{2}R_2(a_{\lambda,\eta})
\;\le\; g(a^0)-g(a_{\lambda,\eta})
\;\le\; g(a^0)-m \;=:\; C < \infty.
\]
Let $\min\{\lambda,\eta\}\to\infty$. The left-hand side is a sum of nonnegative terms with coefficients
diverging to $+\infty$, so necessarily
\[
R_1(a_{\lambda,\eta})\to 0
\quad\text{and}\quad
R_2(a_{\lambda,\eta})\to 0.
\]
By linearity of $A$ and $A_{\tau|K}$, this yields $Aa_{\lambda,\eta}\to C_{0^+}$ and
$A_{\tau|K}a_{\lambda,\eta}\to 0$ on $\Gamma_{0^+}$.

\smallskip
\underline{\textit{Step 2 (limit points solve the constrained problem).}}
Because $g$ is convex and (by the ridge) coercive on $\mathcal F$, the sequence $\{a_{\lambda,\eta}\}$ is bounded; thus it has cluster points. Let $a_{\lambda,\eta}\to a^\star$ along some subsequence. From Step~1 and continuity of the linear maps, $a^\star\in\mathcal C$.

For any $a\in\mathcal C$, we have $J_{\lambda,\eta}(a)=g(a)$, hence
\[
J_{\lambda,\eta}(a_{\lambda,\eta})\le J_{\lambda,\eta}(a)=g(a)
\quad\Rightarrow\quad
g(a_{\lambda,\eta})\le g(a)\qquad\forall\,\lambda,\eta.
\]
Taking $\limsup$ and using lower semicontinuity of $g$,
\[
g(a^\star)\ \le\ \liminf g(a_{\lambda,\eta})\ \le\ \limsup g(a_{\lambda,\eta})\ \le\ g(a)\quad\forall\,a\in\mathcal C,
\]
so $g(a^\star)=\min_{x\in\mathcal C} g(x)$. This completes the proof.
\end{proof}

\begin{remark}
If exact feasibility on $\Gamma_{0^+}$ is relaxed (e.g.\ excluding an ATM tube $|m|\le c\sqrt{\tau}$),
interpret $R_1,R_2$ with that restriction; the same argument applies. If $\mathcal C=\varnothing$, then
$R_1(a_{\lambda,\eta})$ and $R_2(a_{\lambda,\eta})$ converge to their joint infimum and $a_{\lambda,\eta}$
approaches the minimum–violation compromise.
\end{remark}

\begin{remark}[Lower bound for $g$]\label{rem:g-lower-bound}
In \eqref{eq:master-qp} the function $g(a)$ (all terms except the RN penalty) is a sum of
positive–semidefinite quadratics and squared norms with positive weights:
$\tfrac12\|W^{1/2}(Aa-y)\|_2^2$, $\tfrac{\lambda_{\mathrm{ridge}}}{2}a^\top\Lambda a$,
$\tfrac{\lambda_{\mathrm{DW}}}{2}a^\top E^\top L^+ E a$, 
$\tfrac{\lambda_{\Omega}}{2}a^\top U_\omega^\top M_\omega^\top M_\omega U_\omega a$, etc.
After dropping the constant $\tfrac12 y^\top W y$ from the LS term, we have $g(a)\ge 0$
for all $a\in\mathcal F$. Hence $m:=\inf_{a\in\mathcal F} g(a)\ge 0$ is finite, which is the bound
used in the proof of Lemma~\ref{lem:rn-consistency}.
\end{remark}

\begin{remark}[Attainability is an assumption]\label{rem:rn-attainability}
Lemma~\ref{lem:rn-consistency} assumes the boundary conditions are attainable on $\Gamma_{0^+}$,
i.e.\ $\mathcal C:=\{a\in\mathcal F:\ R_1(a)=0,\ R_2(a)=0\}\neq\varnothing$.
This is not automatic; it depends on the basis and the feasible set $\mathcal F$.
\end{remark}

\begin{prop}[Sufficient discrete condition for attainability]\label{prop:rn-attainability}
Let $A_{\Gamma}\in\R^{G_0\times P}$ and $A_{\tau|K,\Gamma}\in\R^{G_0\times P}$ denote
$A$ and $A_{\tau|K}$ restricted to the rows $g\in\Gamma_{0^+}$ (with $G_0:=|\Gamma_{0^+}|$).
Stack the constraints into
\[
S \;:=\; 
\begin{bmatrix}
A_{\Gamma}\\[2pt] A_{\tau|K,\Gamma}
\end{bmatrix} \in \R^{(2G_0)\times P},
\qquad
c \;:=\;
\begin{bmatrix}
C_{0^+}\\[2pt] 0
\end{bmatrix}\in\R^{2G_0}.
\]
If $\mathrm{rank}(S)=2G_0$ (full row rank) and $\mathcal F=\R^P$ (or $\mathcal F$ is any convex set
that contains a solution of $Sa=c$), then $\mathcal C\neq\varnothing$.
In particular, when $P\ge 2G_0$ and $S$ has full row rank, the system $Sa=c$ is solvable and any solution $a$
satisfies $R_1(a)=R_2(a)=0$.
\end{prop}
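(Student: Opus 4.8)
The plan is to reduce the statement to the elementary fact that a full--row--rank linear system is always solvable, and then to verify that solvability of $Sa=c$ is \emph{exactly} the vanishing of both residuals $R_1,R_2$. First I would unpack the definitions: by construction $R_1(a)=\|A_{\Gamma}a-C_{0^+}\|_2^2$ and $R_2(a)=\|A_{\tau|K,\Gamma}a\|_2^2$, so $R_1(a)=R_2(a)=0$ holds iff $A_{\Gamma}a=C_{0^+}$ and $A_{\tau|K,\Gamma}a=0$ simultaneously, which is precisely $Sa=c$ for $S$ and $c$ as stacked in the statement. Hence $\mathcal C=\{a\in\mathcal F: R_1(a)=R_2(a)=0\}$ is nonempty as soon as $Sa=c$ has a solution lying in $\mathcal F$; when $\mathcal F=\R^P$ this is just solvability of $Sa=c$.

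Next I would invoke the rank hypothesis. If $\mathrm{rank}(S)=2G_0$, then the linear map $a\mapsto Sa$ has image all of $\R^{2G_0}$ (which forces $P\ge 2G_0$), so $Sa=c$ is solvable for \emph{every} right-hand side, in particular for $c=(C_{0^+};0)$. I can even exhibit a solution explicitly: full row rank makes $SS^\top\in\R^{2G_0\times 2G_0}$ invertible, so $a^\star:=S^\top(SS^\top)^{-1}c$ satisfies $Sa^\star=c$ (the minimum-norm solution). Then $a^\star\in\mathcal C$: either $\mathcal F=\R^P$ and $a^\star\in\mathcal F$ trivially, or by hypothesis $\mathcal F$ already contains some solution $a$ of $Sa=c$, and any such $a$ has $R_1(a)=R_2(a)=0$, so $\mathcal C\neq\varnothing$. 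This proves the main claim, and the ``in particular'' clause is immediate: when $P\ge 2G_0$ and $S$ has full row rank, the above shows $Sa=c$ is solvable and every solution $a$ satisfies $R_1(a)=R_2(a)=0$, i.e.\ $\mathcal R_{\mathrm{RN}}(a)=0$ on $\Gamma_{0^+}$.

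There is no genuine obstacle here; the proof is a one–line rank argument. The only points that deserve care, and which I would flag explicitly, are: (i) one should \emph{not} claim uniqueness — the solution set of $Sa=c$ is an affine subspace of dimension $P-2G_0$, and the RN penalty is flat along $\ker S$; and (ii) ``$\mathrm{rank}(S)=2G_0$'' is a real assumption on the interaction between the Chebyshev design blocks $A$ and $A_{\tau|K}$ restricted to the short-maturity rows and the chosen short-end grid, which can fail if $\Gamma_{0^+}$ is too fine relative to $P$ (so $2G_0>P$), or if the rows of $A_{\Gamma}$ and $A_{\tau|K,\Gamma}$ become linearly dependent (for instance through near-collinear warp Jacobians at small $\tau$). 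I would end by remarking that, combined with Remark~\ref{rem:rn-attainability} and Lemma~\ref{lem:rn-consistency}, this gives a checkable sufficient condition under which the RN penalty drives the fit to the exact short-time limit.
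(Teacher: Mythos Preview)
Your proof is correct and follows essentially the same route as the paper: both argue that full row rank of $S$ makes $a\mapsto Sa$ surjective, hence $Sa=c$ is solvable, and that $Sa=c$ is equivalent to $R_1(a)=R_2(a)=0$. Your version is more detailed (explicit minimum-norm solution, remarks on non-uniqueness and when the rank hypothesis can fail), but the core argument is identical.
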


\begin{proof}
If $\mathrm{rank}(S)=2G_0\le P$, then $\mathrm{Range}(S)=\R^{2G_0}$, so for any right-hand side $c$
there exists $a\in\R^P$ with $Sa=c$. Such an $a$ obeys $A_{\Gamma}a=C_{0^+}$ and
$A_{\tau|K,\Gamma}a=0$, hence $R_1(a)=R_2(a)=0$; if $\mathcal F$ contains one such $a$, then $a\in\mathcal C$.
\end{proof}

\begin{cor}[Practical sufficient conditions]\label{cor:rn-attainability}
Attainability holds if (i) the coefficient space is rich enough so that $P\ge 2G_0$
and the slice-restricted design matrices have independent rows (so $S$ has full row rank), and
(ii) the feasible set $\mathcal F$ does not exclude these solutions.
For Chebyshev tensor bases in $(m,\tau)$, increasing degrees $(K,L)$ makes $S$ generically full row rank
on a fixed grid $\Gamma_{0^+}$; moreover, the boundary value $C_{0^+}\in[0,F_0]$ and the condition
$A_{\tau|K,\Gamma}a=0$ are compatible with usual hard constraints (monotonicity/convexity and bounds).
\end{cor}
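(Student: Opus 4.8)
The first assertion of the corollary is essentially a repackaging of Proposition~\ref{prop:rn-attainability}: condition (i) says exactly that $S$ has full row rank $2G_0$ (and $P\ge 2G_0$ so that this is possible), and condition (ii) says $\mathcal F$ contains a solution of $Sa=c$. Proposition~\ref{prop:rn-attainability} then gives $\mathcal C\neq\varnothing$ directly, so nothing new is required for that half. The substantive content is the two ``for Chebyshev tensor bases'' claims: that full row rank of $S$ is generic (indeed eventual) as the degrees $(K,L)$ grow, and that the equality system $Sa=c$ is compatible with the usual hard constraints. I would prove these two in turn.

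\emph{Genericity / eventuality of full row rank.} I would read each row of $S$ as a linear functional on the coefficient space $\R^P$, obtained by pulling back, through the map $a\mapsto C_f(\cdot\,;a)$, one of $2G_0$ functionals on bivariate polynomials: the point evaluation $f\mapsto f(m_g,\tau_g)$ and the first-order warped calendar functional $f\mapsto \Phi_\tau'(\tau_g)\,\partial_y f-\rho(\tau_g)\Phi_m'(m_g)\,\partial_x f$ at the same node, for each $g\in\Gamma_{0^+}$. Since the short-maturity nodes are distinct, the point evaluations are linearly independent on $\R[m,\tau]$ (Lagrange interpolation), and each calendar functional carries a genuine $\partial_y$ component absent from the point evaluations, so the full family of $2G_0$ functionals is linearly independent on $\R[m,\tau]$ for distinct nodes and the warp/rate data occurring in practice. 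A finite linearly independent family of functionals admits a dual system $p_1,\dots,p_{2G_0}\in\R[m,\tau]$ with $\phi_i(p_j)=\delta_{ij}$; taking $(K,L)$ large enough that the tensor space $V_{K,L}$ of dimension $P=(K{+}1)(L{+}1)$ contains every $p_j$ (which in particular forces $P\ge 2G_0$), the restricted functionals remain independent, i.e.\ $S=S_{K,L}$ has rank $2G_0$. Equivalently, rank deficiency of $S$ is the common zero locus of its $2G_0\times 2G_0$ minors, viewed as real-analytic functions of the node coordinates, warp parameters and rates; exhibiting one full-rank configuration — which the dual-basis argument does — shows this locus is a proper, measure-zero subset, the precise sense of ``generically full row rank''.

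\emph{Compatibility with the hard constraints.} The target on $\Gamma_{0^+}$ is the intrinsic forward-discounted payoff $C_{0^+}(m)=F_0(1-e^{m})_+=(F_0-K)^+$ together with zero calendar slope. This $C_{0^+}$ is itself a bona fide static-arbitrage-free profile: it is nonincreasing and convex in $K$, satisfies $0\le C_{0^+}\le F_0$, and, being independent of $\tau$, has vanishing calendar slope at fixed $K$; on the short-end grid $\Gamma_{0^+}$, where $F_g\approx F_0$, it therefore also respects the price bounds $0\le Aa\le F$. Hence the equalities $A_\Gamma a=C_{0^+}$, $A_{\tau|K,\Gamma}a=0$ ask the surface to match, on the short-end rows, an object already satisfying $A_K a\le 0$, $-A_{KK}a\le 0$, $-A_{\tau|K}a\le 0$, $0\le Aa\le F$ there — the equality system does not contradict the inequalities on $\Gamma_{0^+}$. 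When $P$ comfortably exceeds $2G_0$ (which, by the previous paragraph, holds for large degrees), the $P-2G_0$ free directions in the affine solution set $\{a:Sa=c\}$ can absorb the interior-$\tau$ rows of the no-arbitrage polyhedron as well: one starts from a feasible backbone coefficient vector and corrects it by a nullspace element of $S$. Thus $\mathcal F\cap\{a:Sa=c\}\neq\varnothing$, and Proposition~\ref{prop:rn-attainability} yields $\mathcal C\neq\varnothing$, which is exactly the attainability hypothesis needed in Lemma~\ref{lem:rn-consistency}.

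\emph{Main obstacle.} The delicate point is not the linear algebra of $S$ per se but making ``generic'' honest: one must rule out accidental dependence between a calendar-derivative row and the point-evaluation rows, where the warp factors $\Phi_m',\Phi_\tau'$ and the carry derivative $\rho(\tau_g)$ enter multiplicatively and could in principle conspire. The clean way around this is the dual-basis / large-degree argument above, which gives full row rank unconditionally once $(K,L)$ are large, at the cost of only asserting \emph{eventual} rather than literally generic full rank. The second fiddly point is the bookkeeping behind ``compatible with hard constraints'': converting the qualitative fact that $C_{0^+}$ is arbitrage-free into the quantitative statement that $\mathcal F$ actually contains a point of $\{Sa=c\}$ requires either taking $\mathcal F=\R^P$ (as Proposition~\ref{prop:rn-attainability} permits) or the dimension count showing the $P-2G_0$ free directions suffice; I would present the $\mathcal F=\R^P$ case as the clean statement and note the general case holds under the mild richness hypothesis already assumed.
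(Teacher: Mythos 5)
The paper gives no proof for this corollary: it is stated as an informal observation following Proposition~\ref{prop:rn-attainability}, with the Chebyshev-specific claims (generic full row rank for large $(K,L)$, compatibility of $C_{0^+}$ with the hard constraints) left as assertions. Your proposal therefore goes \emph{beyond} the paper in attempting to substantiate them, which is legitimate and useful. You correctly recognise the first part as a restatement of Proposition~\ref{prop:rn-attainability}, and the dual-basis / Hermite-interpolation reading of the rows of $S$ as $2G_0$ functionals on $\mathbb{R}[m,\tau]$ is the natural way to make ``generically full row rank'' honest. You are also right that the stronger clean statement is \emph{eventual} (all large enough $(K,L)$) rather than generic at fixed degree, and that the paper is being informal by conflating the two.

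Two concrete issues. First, the functional-independence step is asserted, not proved: you need a short lemma that point evaluations together with first-order directional derivatives at distinct nodes form an independent family on bivariate polynomials (the Hermite/Birkhoff interpolation fact), and you should note that this requires the calendar functional to have a nonvanishing $\partial_y$ component at each short-end node, i.e.\ $\Phi_\tau'(\tau_g)\neq 0$ on $\Gamma_{0^+}$ — which is exactly where the warp's endpoint singularity and the paper's ``small positive floor'' caveat live. Second, the sentence ``one starts from a feasible backbone coefficient vector and corrects it by a nullspace element of $S$'' does not work as written: adding $\delta\in\ker S$ to a backbone $a_0$ leaves $Sa_0$ unchanged, so you never hit $Sa=c$ unless you already had it. The correct construction runs the other way: take a particular solution $\hat a$ of $S\hat a=c$ and look for $\delta\in\ker S$ with $\hat a+\delta\in\mathcal F$ — and even then, nonemptiness of this intersection is a genuine claim that a dimension count alone does not deliver (an affine subspace of codimension $2G_0$ can easily miss a nonempty polyhedron). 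You do acknowledge this at the end by retreating to $\mathcal F=\mathbb{R}^P$, which is the only case the corollary actually proves cleanly; it is worth being explicit that the ``compatible with hard constraints'' clause is plausibility, not proof, for general $\mathcal F$, and that this is consistent with the paper's own heuristic phrasing.
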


\begin{remark}[If attainability fails]\label{rem:rn-approx}
If $\mathcal C=\varnothing$ (e.g.\ degrees too low, or additional constraints forbid exact matching),
then the conclusions of Lemma~\ref{lem:rn-consistency} hold in the approximate sense:
$R_1(a_{\lambda,\eta})\to R_1^\star$ and $R_2(a_{\lambda,\eta})\to R_2^\star$ where
$R_1^\star+R_2^\star=\inf_{a\in\mathcal F}\{R_1(a)+R_2(a)\}$; the minimisers converge to the
minimum-violation compromise on $\Gamma_{0^+}$.
\end{remark}

\paragraph{ATM tube (optional).}
To avoid over–penalizing the thin region where the $O(\sqrt{\tau})$ time value concentrates,
one may exclude $|m|\le c\sqrt{\tau}$ from $\Gamma_{0^+}$ (small $c$) or down–weight those rows.

\paragraph{Parity projection (optional).}
As a linear preprocessing that preserves QP structure, set $a\leftarrow P_{\mathrm{even}}a$ with
$(P_{\mathrm{even}})_{(k,\ell),(k,\ell)}=1$ for even $k$ and $0$ for odd $k$ when evaluating the
first term in \eqref{eq:rn-penalty}; this removes odd–in–$m$ glitches near $\tau\downarrow 0$.

\section{\texorpdfstring
  {Frequency truncation: $\Omega$ taper near the boundary}
  {Frequency truncation: Omega taper near the boundary}}

Use the spectral mask from \S\ref{sec:omega}, but only on short maturities. Let
$U_\omega$ be the fixed frequency chart (e.g.\ separable 2D DCT) and define a maturity–dependent
mask $M_\omega(\tau_g)$ that zeros the highest $m$–frequencies on slices with
$\tau_g\le\tau_\star$ and ramps to $0$ by $2\tau_\star$:
\[
\mathcal{R}_{\Omega}(a)
\;=\;\frac{1}{2}\sum_{g:\,\tau_g\ \text{grid}}
\lambda_{\Omega}(\tau_g)\,\|M_\omega(\tau_g)\,U_\omega a\|_2^2,
\]
\[
\lambda_{\Omega}(\tau)\!=\!\lambda_{\Omega}^{(0)}\!\times\!
\begin{cases}
1,& \tau\le\tau_\star,\\
2-\tau/\tau_\star,& \tau_\star<\tau\le 2\tau_\star,\\
0,& \tau>2\tau_\star.
\end{cases}
\]
This convex quadratic suppresses only the high–$k$ content where butterfly ripples originate;
low modes (ATM/term structure) are left intact by construction.

\section{\texorpdfstring
  {Transport damping: $H^{-1}$ smoothing of density at short maturities}
  {Transport damping: H-1 smoothing of density at short maturities}}
Let $\rho=\partial_{KK}C_f$ and recall the discrete $H^{-1}$ seminorm from \S\ref{sec:dw-mod}:
$\|f\|_{H^{-1}}^2=f^\top L^{+} f$ with $L^{+}\succeq 0$ fixed. We weight it more at short $\tau$:
\[
\mathcal{R}_{\mathrm{DW}}(a)
\;=\;\frac{1}{2}\sum_{g:\,\tau_g\ \text{grid}}
\lambda_{\mathrm{DW}}(\tau_g)\,\|(E a)_g\|_{H^{-1}}^2,
\qquad
\lambda_{\mathrm{DW}}(\tau)=\lambda_{\mathrm{DW}}^{(0)}\min\{1,\tau_\star/\tau\},
\]
where $Ea$ stacks $A_{KK}a$ slice–wise. This biases the optimiser toward \emph{spreading} density
rather than oscillating it in short strips, eliminating spurious negative lobes.

\section{Convexity and invariance}

Each addend in this chapter is a sum of squares of affine functions of $a$ (or a congruence of a
fixed SPD quadratic), hence convex and QP–compatible. Under the $\Lambda$–module
(\S\ref{sec:Lambda-module}), post–multiply all blocks by $U$ and replace $\Lambda$ by
$\widetilde\Lambda=U^\top\Lambda U$; penalty values on $(Aa, A_{\tau|K}a, A_{KK}a)$ are unchanged.

\section{Practical choices and interaction with scaling}

\begin{itemize}[leftmargin=1.25em]
\item \textbf{Window:} $\tau_\star=5$–$10$ trading days. Apply \eqref{eq:rn-penalty} only on
$\Gamma_{0^+}$; optionally exclude an ATM tube $|m|\le c\sqrt{\tau}$.
\item \textbf{Weights:} Choose $\lambda_{\mathrm{RN}}$ so that the RMS of $Aa-C_{0^+}$ on
$\Gamma_{0^+}$ matches the median band width there; choose $\eta_{\mathrm{RN}}$ to bring short–end
calendar violations under the global $1\%$ budget when combined with the scaled
no–arb penalties (\S\ref{sec:scaling}). Use the short–maturity ramps
$\lambda_{\Omega}(\tau)$ and $\lambda_{\mathrm{DW}}(\tau)$ above.
\item \textbf{Row scaling:} Apply the same p95 row scaling (\S\ref{sec:scaling}) to
$A_{\tau|K}$ when used inside \eqref{eq:rn-penalty}; this keeps the knob $\eta_{\mathrm{RN}}$
comparable to $\lambda_{\mathrm{NA}}$.
\end{itemize}

\chapter{Diagnostics and Implementation}\label{sec:diag_and_imp}

\section{Structure monitors}
We carry a set of diagnostics that do not enter the optimisation but certify stability.

\paragraph{MON1 (symplectic/volume/reversibility).}
If an auxiliary Hamiltonian stepper is used (e.g.\ to generate $U$ or kicks),
we compute the discrete symplectic defect and the map‐determinant on random probes.
Both are monitor scalars and must remain below pre‐set tolerances.

\paragraph{MON2 (RN residual/commuting defect).}
Report \emph{RN residual}
$\norm{Aa - C_{0^+}}_{\Gamma_{0^+}}$,
and \emph{calendar commutator}
$\norm{C(Aa)}_2$ on the grid, where
$C := \widetilde A_{\tau|K}\widetilde A_K - \widetilde A_K \widetilde A_{\tau|K}
\in\R^{G\times G}$ is the grid-space commutator from
Section~\ref{sec:omega}.

\paragraph{MON3 (aliasing).}
Let $\mathcal{E}_{\text{hi}}$ be the share of modal energy in the upper third of
$(k,\ell)$ indices. Large values predict convexity noise; the $\Omega$ penalty is tuned to
cap $\mathcal{E}_{\text{hi}}$.

\paragraph{Q (Egorov bridge).}
Under short‐time linearisation, classical transport of observables commutes with
quantum propagation up to $\mathcal{O}(\tau)$ (Egorov’s theorem).
We monitor the deviation of $C_f$ pushed through the (linearised) forward drift
versus the surface rebuilt at $\tau+\delta\tau$; the resulting defect is reported
as absolute/relative scalars. (Purely diagnostic; no constraints are added.)

\section{Implementation notes}
\begin{itemize}[leftmargin=1.25em]
\item \textbf{Basis sizes:} $K\!\in[28,40]$, $L\!\in[22,32]$; grid $G\!\approx\! (2K){\times}(2L)$.
\item \textbf{Scaling:} row p95 to 1; convexity block optionally scaled by a small boost ($\times 2$--$5$).
\item \textbf{Penalties:} $\lambda_{\text{NA}}$ via short probe; $\lambda_{\text{ridge}}$ by GCV on a random 8\% subsample.
\item \textbf{Short--maturity:} set $\tau_\star\approx$ 5--10 trading days; ramp $\lambda_{\text{RN}},\lambda_{\Omega},\lambda_{\text{DW}}$ to zero after $2\tau_\star$.
\item \textbf{Solver:} OSQP with $\varepsilon_{\text{abs}}=\varepsilon_{\text{rel}}=4.5\times 10^{-7}$, adaptive $\rho$ (default $0.1$), polishing enabled, and a 900-second time cap; other options stay at their defaults.
\end{itemize}

\chapter[Hamiltonian Fog Post-Fit in Price Space]%
  {Hamiltonian Fog Post-Fit in\\ Price Space}
\label{sec:postfit-hamiltonian-fog}

The global Chebyshev–QP fit constructed in the previous Chapters already enforces
static no–arbitrage on a dense collocation grid and is tuned to achieve
approximately $99\%$ within–band coverage with at most $1\%$ grid violations.
However, some trading dates and regions of the $(m,\tau)$–plane
remain unsatisfactory even after the main fit and
the short–maturity remedy.

The main QP already delivers a strong baseline surface on most dates. For \emph{calm} years such as 2019 and 2022-23, a single choice of $\mu$ and $\lambda_{NA}$ is enough to reach the target. 
In these regimes, the badness field $\omega$ is close to zero everywhere, so any local post-fit is minimal.

However, in 2020-21 quotes are noisier, and cross-sectional inconsistencies between strikes and maturities are more common. On these stressed dates, the baseline QP is forced into local compromises (either coverage or no-arbitrage deteriorates). Empirically, the misfit is concentrated in small regions.

In this chapter, we describe a second \emph{local} post-fit layer that takes, for
each trading date \(t\), a forward-discounted option baseline surface
\(C_f^0(m,\tau)\) and returns a corrected nodal surface \(u_t^\star\) on a
structured \((m,\tau)\)-grid. The post-fit acts only on regions where the baseline
fit is locally problematic (poor band coverage and/or fragile static
no-arbitrage).

\section{Baseline grid surface and quotes}\label{sec:baseline-grid}

Fix a trading date $t$ and let
\[
\mathcal{G} := \{(m_i,\tau_j) : i = 1,\dots,n_m,\ j = 1,\dots,n_\tau\}
\]
be a structured working grid in log-moneyness $m$ and maturity $\tau$. Denote its
cardinality by
\[
G := |\mathcal{G}| = n_m n_\tau,
\]
fixing any one-to-one enumeration of $\mathcal{G}$ by indices
$g \in \{1,\dots,G\} \leftrightarrow (i(g),j(g)) \in \mathcal{G}$.

Let $C_f^0(m,\tau)$ be the baseline forward-discounted call surface obtained from
the main QP fit on date $t$. The corresponding nodal values on $\mathcal{G}$ are given by
\[
u^0_{i,j} := C_f^0(m_i,\tau_j), \qquad (i,j)\in\mathcal{G}.
\]
Collecting them into a vector, we obtain $u^0 \in \mathbb{R}^{G}$.

For the same date $t$, consider a set of cleaned forward-discounted quote bands
\[
\{(m_q,\tau_q,b_q,a_q)\}_{q=1}^{Q}, \qquad 0 \leq b_q \leq a_q,
\]
where $(m_q,\tau_q)$ denotes the quote location in $(m,\tau)$ and
$[b_q,a_q]$ is the corresponding bid–ask interval in forward-discounted units.

Let $S \in \mathbb{R}^{Q\times G}$ be the (fixed) bilinear interpolation operator
that maps nodal values on $\mathcal{G}$ to model prices at the quote locations.
For any nodal field $u \in \mathbb{R}^{G}$ we write
\[
C_q(u) := (S u)_q, \qquad q = 1,\dots,Q,
\]
so $C_q(u)$ is the model forward-discounted call price at $(m_q,\tau_q)$ implied
by the nodal surface $u$.

\section{Badness map and patch decomposition}

We now detect where the baseline surface is locally problematic.

\begin{definition}[Baseline band misfit]\label{def:baseline-band-misfit}
For each quote $q \in \{1,\dots,Q\}$, the baseline band violation is defined as
\[
d_q(u^0)
:=
\operatorname{dist}\big((S u^0)_q,\ [b_q,a_q]\big)
=
\max\{b_q - (S u^0)_q,\ 0,\ (S u^0)_q - a_q\}\geq 0.
\]
Collecting all quote-wise misfits into a vector
\[
d(u^0) := \big(d_q(u^0)\big)_{q=1}^Q \in \mathbb{R}^Q_{\ge 0},
\]
regard $d(u^0)$ as the baseline distance-to-band profile at the true quote locations.
\end{definition}

We transport these quote-level misfits to the working grid $\mathcal{G}$ via a fixed
linear operator
\[
R^{\mathrm{band}} \in \mathbb{R}^{G\times Q},
\]
(for example by locally averaging nearby quotes around each grid node). The purpose is to observe these violations at the grid nodes rather than at the quote locations. We write
\[
w^{\mathrm{band}} := R^{\mathrm{band}} d(u^0) \in \mathbb{R}^G,\qquad
w^{\mathrm{band}}_{i,j} \ge 0,\ (i,j)\in\mathcal{G}.
\]
In practice the entries of $R^{\mathrm{band}}$ are chosen nonnegative and supported
only on quotes $(m_q,\tau_q)$ lying in a small neighbourhood of $(m_i,\tau_j)$, so
that $w^{\mathrm{band}}_{i,j}$ is a local aggregated band-misfit around that node.

Next, build a single scalar at each grid node to see how badly the baseline surface $u^0$ violates static no-arbitrage at that grid node. Let $w^{\mathrm{noarb}}\in\mathbb{R}^G_{\ge 0}$ be any nonnegative
\emph{static no-arbitrage defect field} obtained by aggregating local violations of
discrete bounds, strike monotonicity, strike convexity, and calendar monotonicity
on the grid when evaluated at $u^0$. Concretely, one may define at each node
$(i,j)$:

\begin{definition}[Bound violation $v^{\mathrm{bnd}}_{i,j}$]
Bounds are $0\leq u^0_{i,j} \leq F_{i,j}$. At each node, define the distance to the interval $[0, F_{i,j}]$
\[
v^{\mathrm{bnd}}_{i,j} = \max \big\{ -u^0_{i,j},\; 0,\; u^0_{i,j}-F_{i,j} \big\}.
\]
Therefore,
\begin{itemize}
  \item If $u^0_{i,j}\in [0, F_{i,j}]$, then $v^{\mathrm{bnd}}_{i,j}=0$.
  \item If $u^0_{i,j}<0$, then $v^{\mathrm{bnd}}_{i,j}=-u^0_{i,j}$.
  \item If $u^0_{i,j}>F_{i,j}$, then $v^{\mathrm{bnd}}_{i,j}=u^0_{i,j}-F_{i,j}$.
\end{itemize}
\end{definition}

\begin{definition}[Strike monotonicity violation $v^{\mathrm{mono}}_{i,j}$]
Static no-arbitrage requires call prices to be non-increasing in strike at fixed maturity. On the grid,
this means that along each maturity slice $j$ we should have
\[
u^0_{i+1,j} - u^0_{i,j} \le 0,\qquad i=1,\dots,n_m-1.
\]
Define the forward slope in $m$ between nodes $i$ and $i+1$ by
\[
\Delta^{\mathrm{mono}}_{i+\frac12,j}
:=
u^0_{i+1,j} - u^0_{i,j},
\qquad i=1,\dots,n_m-1.
\]
If $\Delta^{\mathrm{mono}}_{i+\frac12,j}\le 0$, there is no monotonicity issue on that edge; if
$\Delta^{\mathrm{mono}}_{i+\frac12,j}>0$, the price goes up in strike there (violation).
Define the edge violation
\[
d^{\mathrm{mono}}_{i+\frac12,j}
:=
\max\big\{\Delta^{\mathrm{mono}}_{i+\frac12,j},\,0\big\}.
\]
We then attach a node-based violation by aggregating the incident edges. For interior nodes
$2\le i\le n_m-1$,
\[
v^{\mathrm{mono}}_{i,j}
:=
\max\big\{
d^{\mathrm{mono}}_{i-\frac12,j},
d^{\mathrm{mono}}_{i+\frac12,j}
\big\},
\]
and at the boundaries we use the single available edge,
\[
v^{\mathrm{mono}}_{1,j} := d^{\mathrm{mono}}_{1+\frac12,j},\qquad
v^{\mathrm{mono}}_{n_m,j} := d^{\mathrm{mono}}_{n_m-\frac12,j}.
\]
Thus,
\begin{itemize}
  \item If all nearby slopes around $(i,j)$ are non-positive, then $v^{\mathrm{mono}}_{i,j}=0$.
  \item If some local slope is upward, $v^{\mathrm{mono}}_{i,j}$ records the largest upward jump
        touching that node.
\end{itemize}
\end{definition}

\begin{definition}[Strike convexity violation $v^{\mathrm{conv}}_{i,j}$]
Static no-arbitrage also requires convexity in strike. On a uniform $m$-grid, this is encoded by
non-negative second differences
\[
\Delta^2_{i,j} := u^0_{i+1,j} - 2u^0_{i,j} + u^0_{i-1,j} \ge 0,
\qquad i=2,\dots,n_m-1.
\]
Define the convexity defect at the central node by
\[
v^{\mathrm{conv}}_{i,j}
:=
\begin{cases}
\max\big\{-\Delta^2_{i,j},\,0\big\}, & 2\le i\le n_m-1,\\[3pt]
0, & i=1 \text{ or } i=n_m \text{ (no centred stencil).}
\end{cases}
\]
Therefore,
\begin{itemize}
  \item If $\Delta^2_{i,j}\ge 0$, there is no convexity issue and $v^{\mathrm{conv}}_{i,j}=0$.
  \item If $\Delta^2_{i,j}<0$, the profile is locally concave in strike (violation) and
        $v^{\mathrm{conv}}_{i,j}=-\Delta^2_{i,j}$ measures how badly convexity is violated.
\end{itemize}
On a non-uniform $m$-grid one can replace $\Delta^2_{i,j}$ by the standard three-point
second-derivative formula with unequal spacings; for the purposes of constructing a badness
indicator $w^{\mathrm{noarb}}$ the simple second difference is typically sufficient.
\end{definition}

\begin{definition}[Calendar violation $v^{\mathrm{cal}}_{i,j}$ via $A_{\tau|K}$]
Let $a^\star$ be the baseline coefficient vector and let $A_{\tau|K}$ be the
calendar operator at fixed strike.
Evaluate
\[
h := A_{\tau|K} a^\star \in \mathbb{R}^G,
\]
and reshape $h$ on the grid as $h_{i,j}$.
Define the node-wise calendar defect by
\[
v^{\mathrm{cal}}_{i,j} := \max\{-h_{i,j},\,0\}.
\]
Then $v^{\mathrm{cal}}_{i,j} = 0$ whenever $(\partial_\tau C_f)(K,\tau)\big|_{K} \ge 0$
at $(m_i,\tau_j)$, and $v^{\mathrm{cal}}_{i,j}$ measures the local size of negative
calendar slopes at fixed strike.
\end{definition}

With the bound violation $v^{\mathrm{bnd}}_{i,j}$, strike monotonicity violation $v^{\mathrm{mono}}_{i,j}$, strike convexity violation $v^{\mathrm{conv}}_{i,j}$ and calendar violation $v^{\mathrm{cal}}_{i,j}$ define the static no-arbitrage defect field $w^{\mathrm{noarb}}$.

\begin{definition}[Static no-arbitrage defect field $w^{\mathrm{noarb}}$]
Given the node-wise violations
\[
v^{\mathrm{bnd}}_{i,j},\quad
v^{\mathrm{mono}}_{i,j},\quad
v^{\mathrm{conv}}_{i,j},\quad
v^{\mathrm{cal}}_{i,j}\ \ge 0,
\]
defined respectively for bounds, strike monotonicity, strike convexity, and calendar
constraints at $(m_i,\tau_j)$, the \emph{static no-arbitrage defect field} is
\[
w^{\mathrm{noarb}}_{i,j}
:=
\max\big\{
v^{\mathrm{bnd}}_{i,j},
v^{\mathrm{mono}}_{i,j},
v^{\mathrm{conv}}_{i,j},
v^{\mathrm{cal}}_{i,j}
\big\},
\qquad (i,j)\in\mathcal{G}.
\]
Thus $w^{\mathrm{noarb}}_{i,j}\ge 0$ for all $(i,j)$, and
$w^{\mathrm{noarb}}_{i,j}=0$ whenever all discrete no-arbitrage inequalities
(bounds, strike monotonicity, strike convexity, and calendar monotonicity) hold
without violation in a neighbourhood of $(m_i,\tau_j)$.

Any alternative construction of a field $w^{\mathrm{noarb}}\in\mathbb{R}^G_{\ge 0}$
with the same qualitative properties
\begin{itemize}
  \item $w^{\mathrm{noarb}}_{i,j}\ge 0$ for all $(i,j)$, and
  \item $w^{\mathrm{noarb}}_{i,j}=0$ whenever all discrete no-arbitrage
        inequalities are satisfied (with margin) near $(m_i,\tau_j)$,
\end{itemize}
is equally admissible for the purposes of the badness map construction below.
\end{definition}

The band-misfit field $w^{\mathrm{band}}$ and the static no-arbitrage defect field
$w^{\mathrm{noarb}}$ provide two complementary scalar diagnostics on the grid
$\mathcal{G}$: the former reflects how hard it is for the baseline surface to
respect the bid-ask bands, while the latter reflects how fragile the static
shape constraints are in a neighborhood of each node. For the purposes of
patch detection we now compress these two pieces of information into a single
scalar badness field on $\mathcal{G}$, allowing for a tunable trade-off
between band fit and no-arbitrage robustness.

\begin{definition}[Raw and smoothed badness field]
Fix positive scalars $\alpha_{\mathrm{band}},\alpha_{\mathrm{noarb}}>0$. The
\emph{raw badness field} on $\mathcal{G}$ is
\[
\tilde w_{i,j}
:= \alpha_{\mathrm{band}}\,w^{\mathrm{band}}_{i,j}
 + \alpha_{\mathrm{noarb}}\,w^{\mathrm{noarb}}_{i,j},
\qquad (i,j)\in\mathcal{G}.
\]
Let $K_{\sigma}$ be a fixed separable Gaussian kernel on the grid,
$K_{\sigma}(i,j) = k_{\sigma}^{(m)}(i)\,k_{\sigma}^{(\tau)}(j)$, and let
$*$ denote discrete convolution on $\mathcal{G}$:
\[
(K_{\sigma} * \tilde w)_{i,j}
:= \sum_{(i',j')\in\mathcal{G}} K_{\sigma}(i-i',j-j')\,\tilde w_{i',j'}.
\]
Define the (componentwise) clipping operator
\[
\operatorname{Clip}_{[0,1]}(x)_{i,j}
:= \min\{1,\max\{0,x_{i,j}\}\}.
\]
The \emph{smoothed badness field} is then
\[
w := \operatorname{Clip}_{[0,1]}(K_{\sigma} * \tilde w)\ \in\ [0,1]^G.
\]
\end{definition}

The fixed separable Gaussian kernel on the grid $K_{\sigma}$ is a bell-shaped weight function centered at 0 and decaying as you move away. Separable means that it can factor into a product of a 1D kernel in $m$ and a 1D kernel inn $\tau$. 
The convolution $(K_{\sigma} * \tilde w)_{i,j}$ is a weighted average of the raw badness $\tilde w_{i,j}$ in a neighborhood of $(i,j)$ with weights given by the Gaussian kernel evaluated at offsets $(i-i',j-j')$.

\begin{example}[Single Spike]
Consider a 1D grid with indices $i=1,\dots,7$ and a raw badness vector
\[
\tilde w = (0,\ 0,\ 0,\ 1,\ 0,\ 0,\ 0),
\]
so there is a single spike of badness at $i=4$. Take a simple discrete kernel
\[
K = \Big(\tfrac14,\ \tfrac12,\ \tfrac14\Big),
\]
interpreted as $K(-1)=\tfrac14$, $K(0)=\tfrac12$, $K(1)=\tfrac14$, and $K(k)=0$
for $|k|>1$. The convolution $(K * \tilde w)_i$ with zero-padding at the
boundaries is
\[
(K * \tilde w)_i
= \tfrac14\,\tilde w_{i-1} + \tfrac12\,\tilde w_i + \tfrac14\,\tilde w_{i+1},
\qquad i=1,\dots,7.
\]
A direct computation gives
\[
(K * \tilde w)
= (0,\ 0,\ 0.25,\ 0.5,\ 0.25,\ 0,\ 0).
\]
Thus the original spike of height $1$ at $i=4$ is smoothed into a smaller peak
of height $0.5$ at $i=4$ with nonzero neighbours of height $0.25$ at $i=3$ and
$i=5$: the mass has been spread out and diluted.
\end{example}

\begin{example}[Cluster]
Now consider a cluster of three bad nodes
\[
\tilde w = (0,\ 0,\ 1,\ 1,\ 1,\ 0,\ 0),
\]
again on indices $i=1,\dots,7$, with the same kernel
\[
K = \Big(\tfrac14,\ \tfrac12,\ \tfrac14\Big).
\]
Using the same convolution formula
\[
(K * \tilde w)_i
= \tfrac14\,\tilde w_{i-1} + \tfrac12\,\tilde w_i + \tfrac14\,\tilde w_{i+1},
\qquad i=1,\dots,7,
\]
we obtain
\[
\begin{aligned}
(K * \tilde w)_2 &= \tfrac14\cdot 0 + \tfrac12\cdot 0 + \tfrac14\cdot 1 = 0.25,\\
(K * \tilde w)_3 &= \tfrac14\cdot 0 + \tfrac12\cdot 1 + \tfrac14\cdot 1 = 0.75,\\
(K * \tilde w)_4 &= \tfrac14\cdot 1 + \tfrac12\cdot 1 + \tfrac14\cdot 1 = 1.00,\\
(K * \tilde w)_5 &= \tfrac14\cdot 1 + \tfrac12\cdot 1 + \tfrac14\cdot 0 = 0.75,\\
(K * \tilde w)_6 &= \tfrac14\cdot 1 + \tfrac12\cdot 0 + \tfrac14\cdot 0 = 0.25,
\end{aligned}
\]
and $(K * \tilde w)_1 = (K * \tilde w)_7 = 0$. Hence
\[
(K * \tilde w)
= (0,\ 0.25,\ 0.75,\ 1.0,\ 0.75,\ 0.25,\ 0).
\]
In this case a contiguous cluster of bad nodes remains a single coherent bump:
the central node retains height $1$ and its neighbours are only slightly
reduced to $0.75$, while a halo of smaller values $0.25$ appears around the
cluster. This illustrates how Gaussian smoothing preserves genuine regions of
badness while softening their edges and suppressing isolated spikes.

\end{example}

Therefore, a single noisy spike in $\tilde w$ will spread out to neighbors, clusters of large $\tilde w$ will be smoothed into a broader hot region rather than an isolated piece and a smooth, spatially coherent badness image is created.

The clipping operator is applied to obtain a dimensionless, bounded heatmap which is comparable across dates and assets. One can therefore chose a threshold $\theta \in (0,1)$ and obtain patches from the connected components of $\{\omega_{i,j}>\theta \}$. Moreover, the smoothing step ensures that patches correspond to regions rather than isolated single nodes.

Graphically, $w$ can be viewed as a heatmap on the $(m,\tau)$-plane:
nodes with $w_{i,j}\approx 0$ are locally well-behaved (good band coverage and
robust static no-arbitrage), while nodes with $w_{i,j}$ close to $1$ lie in fragile or hard-to-fit regions.

\begin{figure}[htbp]
  \centering
  \begin{tikzpicture}[scale=0.7]

    \draw[->] (0,0) -- (6.2,0) node[right] {$m$};
    \draw[->] (0,0) -- (0,4.2) node[above] {$\tau$};

    \foreach \i in {0,...,5} {
      \foreach \j in {0,...,3} {
        \fill[gray!10] (\i,\j) rectangle ++(1,1);
        \draw[gray!40, line width=0.2pt] (\i,\j) rectangle ++(1,1);
      }
    }

    \foreach \i/\j/\c in {
      2/1/gray!60,
      3/1/gray!70,
      4/1/gray!60,
      2/2/gray!70,
      3/2/gray!90,
      4/2/gray!70,
      3/3/gray!60
    }{
      \fill[\c] (\i,\j) rectangle ++(1,1);
      \draw[gray!40, line width=0.2pt] (\i,\j) rectangle ++(1,1);
    }

    \draw[red, very thick, rounded corners]
      (2,1) rectangle (5,3);
    \node[red] at (3.5,3.2) {$\Omega_p$};

    \node[anchor=west] at (6.4,2.8) {\small darker = larger $w_{i,j}$};
    \node[anchor=west] at (6.4,2.3) {\small lighter = smaller $w_{i,j}$};

  \end{tikzpicture}
  \caption{Schematic badness map $w_{i,j}$ on the $(m,\tau)$ grid $\mathcal{G}$.
  Darker cells indicate regions where the smoothed badness field is large; a
  connected high-badness region is shown as a patch $\Omega_p$.}
\end{figure}
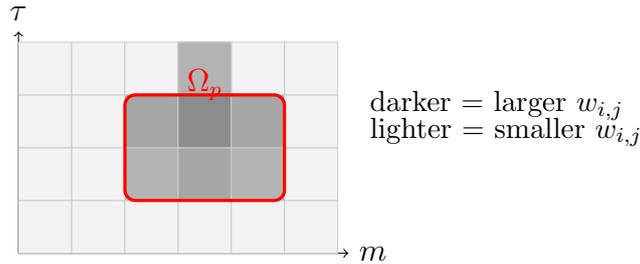

We now threshold $w$ and decompose the high-badness region into connected
components.

\begin{definition}[Active set and patches]
Fix a threshold $\theta\in(0,1)$. The \emph{active set} is
\[
A := \{(i,j)\in\mathcal{G} : w_{i,j} > \theta\}.
\]
Equip $\mathcal{G}$ with a nearest-neighbour graph structure, either
\emph{4-neighbour} (edges between $(i,j)$ and $(i\pm1,j)$, $(i,j\pm1)$) or
\emph{8-neighbour} (4-neighbour plus diagonals $(i\pm1,j\pm1)$). Two nodes of
$A$ are said to be connected if they are joined by a path of neighbours in this
graph. The connected components of $A$ are then
\[
A = \bigsqcup_{p\in\mathcal{P}} \Omega_p,
\]
where each $\Omega_p\subset \mathcal{G}$ is a maximal connected subset of $A$
(with respect to the chosen neighbourhood) and is called a \emph{patch}.
\end{definition}

Nodes in $\bigcup_{p}\Omega_p$ belong to locally difficult regions and are candidates
for post-fit adjustment; nodes in
$\mathcal{G}\setminus\bigcup_{p}\Omega_p$ are left unchanged by the post-fit. The
construction is entirely local and depends only on the baseline misfit and static
defect fields at the given date, not on the calendar regime; both calm and stressed
dates are treated identically.

From now on we fix a single patch $\Omega\subset\mathcal{G}$ and describe the
patch-level post-fit problem.

\section{Discrete 3D fog on \texorpdfstring{\((m,\tau,u)\)}{(m,τ,u)}}\label{sec:3D-fog}

On a fixed patch $\Omega\subset\mathcal{G}$ we will represent local pricing
uncertainty and potential quote noise by a discretised probability density over
the three-dimensional space $(m,\tau,u)$, where $u$ denotes forward-discounted
call price.

\subsection{3D lattice and fog variables}

\begin{definition}[3D fog lattice]
Let $\Omega\subset\mathcal{G}$ be a patch with cardinality $N_\Omega := |\Omega|$.
Fix a finite, strictly increasing sequence of price levels
\[
U := \{u_k\}_{k=1}^{n_u} \subset \mathbb{R},
\qquad
u_1 < u_2 < \dots < u_{n_u},
\]
spanning a relevant price range (for example from $0$ up to a suitable multiple
of the local forward). The associated three-dimensional lattice of
$(m,\tau,u)$-nodes on $\Omega$ is
\[
\mathcal{L}_\Omega
:=
\{(i,j,k) : (i,j)\in\Omega,\ k=1,\dots,n_u\}.
\]
\end{definition}

\begin{definition}[Fog variables and normalisation]
For each $(i,j,k)\in\mathcal{L}_\Omega$ introduce a nonnegative \emph{fog
variable} $\pi_{i,j,k}\ge 0$. Collecting all such variables into a single vector
\[
\pi := (\pi_{i,j,k})_{(i,j,k)\in\mathcal{L}_\Omega}
\in \mathbb{R}^{N_\Omega n_u},
\]
we impose the global normalisation
\[
\sum_{(i,j)\in\Omega} \sum_{k=1}^{n_u} \pi_{i,j,k} = 1.
\]
The associated feasible set of fog configurations on $\Omega$ is the simplex
\[
\mathcal{C}_\pi(\Omega)
:=
\Big\{
\pi \in \mathbb{R}^{N_\Omega n_u}_{\ge 0}
:
\sum_{(i,j)\in\Omega} \sum_{k=1}^{n_u} \pi_{i,j,k} = 1
\Big\}.
\]
\end{definition}

Thus $\pi$ is a discrete probability measure on the finite set $\mathcal{L}_\Omega$:
each $\pi_{i,j,k}$ is the probability mass (or fog mass) attached to the node
$(m_i,\tau_j,u_k)$, and $\mathcal{C}_\pi(\Omega)$ is a standard $(N_\Omega n_u-1)$-dimensional
simplex.

At fixed $(i,j)\in\Omega$, the vertical profile
\[
\{\pi_{i,j,k}\}_{k=1}^{n_u}
\]
describes the distribution of fog across price levels $\{u_k\}$ at that grid
location. In particular, we may interpret:
\begin{itemize}
  \item $\pi_{i,j,k}$ large for some $k$ as assigning high plausibility to local
        prices near $u_k$ at $(m_i,\tau_j)$;
  \item a spread-out vertical profile as expressing substantial local uncertainty
        over $u$;
  \item a concentrated profile as expressing locally precise information.
\end{itemize}

\subsection{2D noise marginal on the patch}

\begin{definition}[2D marginal noise density]
Given $\pi\in\mathcal{C}_\pi(\Omega)$, define the \emph{2D marginal} of the fog
on $\Omega$ by
\[
n_{i,j} := \sum_{k=1}^{n_u} \pi_{i,j,k},
\qquad (i,j)\in\Omega.
\]
\end{definition}

The marginal $n := (n_{i,j})_{(i,j)\in\Omega}$ satisfies
\[
n_{i,j}\ge 0,\qquad
\sum_{(i,j)\in\Omega} n_{i,j} = 1,
\]
and can be interpreted as a probability distribution on $\Omega$:
\[
n_{i,j} = \mathbb{P}\{(m,\tau)\text{ lies at node }(m_i,\tau_j)\}.
\]
Equivalently, $n_{i,j}$ is the \emph{total fog mass} sitting above $(m_i,\tau_j)$
when one integrates out the price dimension $u$.

Where $n_{i,j}$ is relatively large, the fog is thick in the $(m,\tau)$-plane
and the local order book in that region is regarded as noisy or unreliable;
where $n_{i,j}$ is small, very little fog mass resides and the local book is
regarded as comparatively clean.

When $n_{i,j}>0$, it is often convenient to speak of the \emph{conditional
vertical distribution} at $(i,j)$,
\[
p_{i,j,k}
:=
\frac{\pi_{i,j,k}}{n_{i,j}},
\qquad k=1,\dots,n_u,
\]
which is a discrete probability distribution on $\{u_k\}$ satisfying
$\sum_{k=1}^{n_u} p_{i,j,k} = 1$. In terms of this factorisation,
\[
\pi_{i,j,k} = n_{i,j}\,p_{i,j,k},
\]
so that $n$ encodes where fog mass is located on the patch, and $p$ encodes
how that fog is distributed vertically in price at each node.

\subsection{Schematic geometry}

The geometry of the fog on a patch can be visualised as a stack of vertical
columns above the $(m,\tau)$ nodes in $\Omega$, each column sampled at levels
$\{u_k\}$:

\begin{figure}[H]
  \centering
  \tdplotsetmaincoords{65}{120}
  \begin{tikzpicture}[scale=1.0,tdplot_main_coords]

    \draw[->] (0,0,0) -- (4.5,0,0) node[below right] {$m$};
    \draw[->] (0,0,0) -- (0,4.0,0) node[left] {$\tau$};
    \draw[->] (0,0,0) -- (0,0,3.0) node[above] {$u$};

    \foreach \i in {0,...,3} {
      \foreach \j in {0,...,2} {
        \draw[gray!40] (\i,\j,0) -- ++(1,0,0) -- ++(0,1,0) -- ++(-1,0,0) -- cycle;
      }
    }

    \newcommand{\fogcolumn}[3]{%
      \pgfmathsetmacro{\ix}{#1}
      \pgfmathsetmacro{\jy}{#2}
      \pgfmathsetmacro{\hh}{#3}
      \fill[gray!30] (\ix,\jy,0) -- ++(1,0,0) -- ++(0,0,\hh) -- ++(-1,0,0) -- cycle;
      \fill[gray!45] (\ix+1,\jy,0) -- ++(0,1,0) -- ++(0,0,\hh) -- ++(0,-1,0) -- cycle;
      \fill[gray!60] (\ix,\jy+1,0) -- ++(1,0,0) -- ++(0,0,\hh) -- ++(-1,0,0) -- cycle;
      \fill[gray!75] (\ix,\jy,\hh) -- ++(1,0,0) -- ++(0,1,0) -- ++(-1,0,0) -- cycle;
      \draw[gray!70] (\ix,\jy,0) -- ++(1,0,0) -- ++(0,1,0) -- ++(-1,0,0) -- cycle;
      \draw[gray!70] (\ix,\jy,\hh) -- ++(1,0,0) -- ++(0,1,0) -- ++(-1,0,0) -- cycle;
      \draw[gray!70] (\ix,\jy,0) -- ++(0,0,\hh);
      \draw[gray!70] (\ix+1,\jy,0) -- ++(0,0,\hh);
      \draw[gray!70] (\ix,\jy+1,0) -- ++(0,0,\hh);
      \draw[gray!70] (\ix+1,\jy+1,0) -- ++(0,0,\hh);
    }

    \fogcolumn{1}{1}{0.9}  
    \fogcolumn{2}{1}{1.2}  
    \fogcolumn{1}{2}{1.8}  
    \fogcolumn{2}{2}{2.2}  

    \draw[red, dashed, thick]
      (1,1,0) -- (3,1,0) -- (3,3,0) -- (1,3,0) -- cycle;
    \node[red] at (2,3.3,0) {$\Omega$};

  \end{tikzpicture}
  \caption{Schematic 3D fog on a patch $\Omega$. The $(m,\tau)$-plane is spanned
  by the horizontal $m$-axis and the in-plane $\tau$-axis; the $u$-axis is vertical.
  Each node $(m_i,\tau_j)\in\Omega$ carries a vertical column of fog mass discretised
  at price levels $\{u_k\}$. The variables $\pi_{i,j,k}$ encode the mass at
  $(m_i,\tau_j,u_k)$; their 2D marginal $n_{i,j}$ is the total mass in the column
  above $(m_i,\tau_j)$.}
\end{figure}

In the optimisation below, the fog configuration $\pi\in\mathcal{C}_\pi(\Omega)$
will be coupled to the nodal price field on $\Omega$ via band-based potentials
and a Hamiltonian energy, with $n_{i,j}$ controlling how strongly local
bid-ask information is enforced at each node.

\section{Patch-level price field and static no-arbitrage}\label{sec:patch-assembly}

Fix a patch $\Omega\subset\mathcal{G}$ with cardinality $N_\Omega:=|\Omega|$ and
baseline nodal surface $u^0\in\mathbb{R}^G$ as in Section~\ref{sec:baseline-grid}.
On $\Omega$ we will allow nodal prices to move, while all off-patch values are
kept fixed at their baseline levels. Static no-arbitrage is imposed on the
\emph{assembled} full-grid surface.

\subsection{Interior price variables and assembly map}

\begin{definition}[Interior price variables and assembled surface]
Choose any one-to-one enumeration of the patch
\[
\Omega = \{(i_\ell,j_\ell)\}_{\ell=1}^{N_\Omega} \subset \mathcal{G}.
\]
The \emph{interior price vector} (unknown) on $\Omega$ is
\[
u_I := (u_{i_\ell,j_\ell})_{\ell=1}^{N_\Omega}\in\mathbb{R}^{N_\Omega},
\]
whose entries correspond to the nodal prices on $\Omega$. Define the \emph{assembly map}
\[
\mathcal{A}_\Omega : \mathbb{R}^{N_\Omega} \to \mathbb{R}^{G},\qquad
u_I \mapsto u(u_I),
\]
by
\[
u(u_I)_{i,j}
:=
\begin{cases}
u_{i,j}, & (i,j)\in\Omega,\\[3pt]
u^0_{i,j}, & (i,j)\notin\Omega.
\end{cases}
\]
\end{definition}

Equivalently, if we write $u\in\mathbb{R}^G$ in the same enumeration as $\mathcal{G}$
and let $P_\Omega\in\mathbb{R}^{G\times N_\Omega}$ be the binary matrix that injects
$u_I$ into the coordinates corresponding to $\Omega$ (and zeros elsewhere), then
the assembly map can be written as
\begin{equation}
\label{eq:assembly-affine}
u(u_I) = P_\Omega u_I + u^{0,\mathrm{off}},
\end{equation}
where $u^{0,\mathrm{off}}\in\mathbb{R}^G$ coincides with $u^0$ on
$\mathcal{G}\setminus\Omega$ and is zero on $\Omega$. Thus $\mathcal{A}_\Omega$ is an
affine map, with linear part $P_\Omega$.

\subsection{Global discrete static no-arbitrage on the grid}

We recall that in Chapters~\ref{sec:noarb-discrete}–\ref{sec:global-noarb-grid} the
discrete static no-arbitrage conditions on the nodal grid $\mathcal{G}$ were encoded
as a finite system of linear inequalities in the full nodal vector
$u\in\mathbb{R}^G$. Concretely, there exists an index set
$\mathcal{I} = \mathcal{I}_{\mathrm{bnd}}\cup\mathcal{I}_{\mathrm{mono}}
\cup\mathcal{I}_{\mathrm{conv}}\cup\mathcal{I}_{\mathrm{cal}}$ and, for each
$\alpha\in\mathcal{I}$, a row vector $\ell_\alpha^\top\in\mathbb{R}^{1\times G}$
and a scalar $r_\alpha\in\mathbb{R}$ such that:
\begin{itemize}
  \item for $\alpha\in\mathcal{I}_{\mathrm{bnd}}$, the inequality
        $\ell_\alpha^\top u \le r_\alpha$ encodes a bound constraint
        $0\le u_{i,j} \le F_{i,j}$ at some node $(i,j)\in\mathcal{G}$;
  \item for $\alpha\in\mathcal{I}_{\mathrm{mono}}$, the inequality encodes a
        discrete strike-monotonicity condition $\partial_K u\le 0$ on a maturity
        slice (e.g.\ $u_{i+1,j}-u_{i,j}\le 0$);
  \item for $\alpha\in\mathcal{I}_{\mathrm{conv}}$, the inequality encodes a
        discrete strike-convexity condition $\partial_{KK} u\ge 0$ on a slice
        (e.g.\ a local second-difference inequality);
  \item for $\alpha\in\mathcal{I}_{\mathrm{cal}}$, the inequality encodes a
        discrete calendar condition $(\partial_\tau u)|_K \ge 0$ at fixed strike,
        implemented via the fixed-strike calendar operator $A_{\tau|K}$ from
        Section~\ref{sec:calendar-operator}.
\end{itemize}
Collecting these, the global static no-arbitrage feasible set on the full grid is
\begin{equation}
\label{eq:Cglobal}
\mathcal{C}_{\mathrm{glob}}
:=
\Big\{
u\in\mathbb{R}^G :
\ell_\alpha^\top u \le r_\alpha,\quad \forall \alpha\in\mathcal{I}
\Big\}.
\end{equation}
This is precisely the intersection of finitely many closed half-spaces in
$\mathbb{R}^G$.

\begin{definition}[Global no-arbitrage operators]\label{def:global-noarb-operators}
Let $\mathcal{G} = \{(i(g),j(g)) : g=1,\dots,G\}$ be an enumeration of the nodal
grid and let $e^{(g)}\in\mathbb{R}^G$ denote the $g$-th standard basis vector.
Write $g(i,j)$ for the index such that $(i(g(i,j)),j(g(i,j)))=(i,j)$.

We define index sets and pairs $(\ell_\alpha,r_\alpha)$ as follows, where $e^{(g)}$ is the $g$-th standard basis vector in $\mathbb{R}^G$ (vector with $1$ in position $g$ and $0$ elsewhere):
\begin{itemize}
  \item \emph{(Bounds)}
    For each $(i,j)\in\mathcal{G}$ with $g=g(i,j)$ define lower and upper indices
    $\alpha=(i,j,\mathrm{lo})$, $\alpha'=(i,j,\mathrm{up})$ and
    \[
      \ell_{(i,j,\mathrm{lo})}^\top := - (e^{(g)})^\top,\quad r_{(i,j,\mathrm{lo})} := 0,\qquad
      \ell_{(i,j,\mathrm{up})}^\top := (e^{(g)})^\top,\quad r_{(i,j,\mathrm{up})} := F_{i,j}.
    \]
    Collect all such indices into $\mathcal{I}_{\mathrm{bnd}}$.
  \item \emph{(Monotonicity)}    
    For each maturity $j$ and $i=1,\dots,n_m-1$, with
    $g_1=g(i,j)$, $g_2=g(i+1,j)$, define $\alpha=(i,j)\in\mathcal{I}_{\mathrm{mono}}$ and
    \[
      \ell_{(i,j)}^\top := (e^{(g_2)}-e^{(g_1)})^\top,\qquad r_{(i,j)} := 0.
    \]
  \item \emph{(Convexity)}    
    For each maturity $j$ and $i=2,\dots,n_m-1$, with
    $g_- = g(i-1,j)$, $g_0 = g(i,j)$, $g_+ = g(i+1,j)$, define $\alpha=(i,j)\in\mathcal{I}_{\mathrm{conv}}$ and
    \[
      \ell_{(i,j)}^\top := (-e^{(g_+)} + 2 e^{(g_0)} - e^{(g_-)})^\top,\qquad r_{(i,j)} := 0.
    \]
  \item \emph{(Calendar)}    
    Let $A_{\tau|K}\in\mathbb{R}^{G\times G}$ be the fixed-strike calendar operator.
    For each $g\in\{1,\dots,G\}$ define $\alpha=g\in\mathcal{I}_{\mathrm{cal}}$ and
    \[
      \ell_g^\top := - (A_{\tau|K})_{g,\cdot},\qquad r_g := 0.
    \]
\end{itemize}
Set $\mathcal{I}:=\mathcal{I}_{\mathrm{bnd}}\cup\mathcal{I}_{\mathrm{mono}}
\cup\mathcal{I}_{\mathrm{conv}}\cup\mathcal{I}_{\mathrm{cal}}$.
\end{definition}

This definition can be seen as follows:

\paragraph{Bounds.}
We first explain how to encode the pointwise bounds
\[
0 \le u_{i,j} \le F_{i,j}, \qquad (i,j)\in\mathcal{G},
\]
as linear inequalities of the form $\ell_\alpha^\top u \le r_\alpha$.

Fix a node $(i,j)\in\mathcal{G}$ and let $g = g(i,j)$ be its index in the flattened
nodal vector $u\in\mathbb{R}^G$, so that $u_g = u_{i,j}$. For this node we introduce
\emph{two} constraint indices:
\[
\alpha = (i,j,\mathrm{lo}),\qquad \alpha' = (i,j,\mathrm{up}).
\]

\emph{Lower bound.} For $\alpha = (i,j,\mathrm{lo})$ we define
\[
\ell_{(i,j,\mathrm{lo})}^\top := - (e^{(g)})^\top,\qquad r_{(i,j,\mathrm{lo})} := 0,
\]
where $e^{(g)}\in\mathbb{R}^G$ is the $g$-th standard basis vector. Then
\[
\ell_{(i,j,\mathrm{lo})}^\top u
= - e^{(g)\top} u
= - u_g
= - u_{i,j},
\]
so the inequality $\ell_{(i,j,\mathrm{lo})}^\top u \le r_{(i,j,\mathrm{lo})}$ reads
\[
- u_{i,j} \le 0
\quad\Longleftrightarrow\quad
u_{i,j} \ge 0.
\]

\emph{Upper bound.} For $\alpha' = (i,j,\mathrm{up})$ we set
\[
\ell_{(i,j,\mathrm{up})}^\top := (e^{(g)})^\top,\qquad r_{(i,j,\mathrm{up})} := F_{i,j}.
\]
Then
\[
\ell_{(i,j,\mathrm{up})}^\top u
= e^{(g)\top} u
= u_g
= u_{i,j},
\]
so $\ell_{(i,j,\mathrm{up})}^\top u \le r_{(i,j,\mathrm{up})}$ is exactly
\[
u_{i,j} \le F_{i,j}.
\]

Thus for each node $(i,j)$ we obtain the two bounds $0\le u_{i,j}\le F_{i,j}$ as the
pair of inequalities
\[
\ell_{(i,j,\mathrm{lo})}^\top u \le r_{(i,j,\mathrm{lo})},\qquad
\ell_{(i,j,\mathrm{up})}^\top u \le r_{(i,j,\mathrm{up})}.
\]
All such indices $(i,j,\mathrm{lo})$ and $(i,j,\mathrm{up})$ are collected in
$\mathcal{I}_{\mathrm{bnd}}$.

\medskip
\paragraph{Monotonicity.}
We now encode discrete strike monotonicity, namely
\[
u_{i+1,j} - u_{i,j} \le 0,\qquad i=1,\dots,n_m-1,\quad j=1,\dots,n_\tau.
\]

Fix a maturity $j$ and $i\in\{1,\dots,n_m-1\}$. Let
\[
g_1 := g(i,j),\qquad g_2 := g(i+1,j)
\]
be the indices of the adjacent nodes $(i,j)$ and $(i+1,j)$ in the flattened
vector $u$. We introduce a single index $\alpha=(i,j)\in\mathcal{I}_{\mathrm{mono}}$
and define
\[
\ell_{(i,j)}^\top := (e^{(g_2)} - e^{(g_1)})^\top,\qquad r_{(i,j)} := 0.
\]
Then
\[
\ell_{(i,j)}^\top u
= e^{(g_2)\top} u - e^{(g_1)\top} u
= u_{g_2} - u_{g_1}
= u_{i+1,j} - u_{i,j}.
\]
Thus the inequality $\ell_{(i,j)}^\top u \le r_{(i,j)}$ is precisely
\[
u_{i+1,j} - u_{i,j} \le 0,
\]
the desired monotonicity condition. Each adjacent pair of strikes at fixed $j$
contributes one such index $(i,j)$ to $\mathcal{I}_{\mathrm{mono}}$.

\medskip
\paragraph{Convexity.}
Discrete strike convexity requires
\[
u_{i+1,j} - 2 u_{i,j} + u_{i-1,j} \ge 0,\qquad i=2,\dots,n_m-1,\quad j=1,\dots,n_\tau.
\]
Equivalently,
\[
- u_{i+1,j} + 2 u_{i,j} - u_{i-1,j} \le 0.
\]

Fix a maturity $j$ and an interior strike index $i\in\{2,\dots,n_m-1\}$. Let
\[
g_- := g(i-1,j),\qquad g_0 := g(i,j),\qquad g_+ := g(i+1,j).
\]
We introduce an index $\alpha=(i,j)\in\mathcal{I}_{\mathrm{conv}}$ and set
\[
\ell_{(i,j)}^\top := (-e^{(g_+)} + 2 e^{(g_0)} - e^{(g_-)})^\top,\qquad
r_{(i,j)} := 0.
\]
Then
\[
\ell_{(i,j)}^\top u
= -u_{g_+} + 2 u_{g_0} - u_{g_-}
= -u_{i+1,j} + 2 u_{i,j} - u_{i-1,j},
\]
so the inequality $\ell_{(i,j)}^\top u \le r_{(i,j)}$ is exactly
\[
- u_{i+1,j} + 2 u_{i,j} - u_{i-1,j} \le 0
\quad\Longleftrightarrow\quad
u_{i+1,j} - 2 u_{i,j} + u_{i-1,j} \ge 0.
\]
All such indices $(i,j)$ form the convexity index set $\mathcal{I}_{\mathrm{conv}}$.

\medskip
\paragraph{Calendar (fixed strike).}
Finally, we encode the discrete calendar condition at fixed strike via the operator
$A_{\tau|K}\in\mathbb{R}^{G\times G}$. By construction, $(A_{\tau|K}u)_g$ is the
discrete approximation of $(\partial_\tau C_f)(K,\tau)|_K$ at the grid node
indexed by $g$. The condition
\[
(A_{\tau|K}u)_g \ge 0,\qquad g=1,\dots,G,
\]
can be written as
\[
-(A_{\tau|K}u)_g \le 0.
\]

For each grid index $g\in\{1,\dots,G\}$ we take $\alpha = g\in\mathcal{I}_{\mathrm{cal}}$
and define
\[
\ell_g^\top := - (A_{\tau|K})_{g,\cdot},\qquad r_g := 0,
\]
i.e.\ $\ell_g^\top$ is the negative of the $g$-th row of $A_{\tau|K}$. Then
\[
\ell_g^\top u = - (A_{\tau|K}u)_g,
\]
so the inequality $\ell_g^\top u \le r_g$ is precisely
\[
-(A_{\tau|K}u)_g \le 0
\quad\Longleftrightarrow\quad
(A_{\tau|K}u)_g \ge 0.
\]
Thus each row $g$ of $A_{\tau|K}$ generates one calendar inequality, and all these
indices $g$ belong to $\mathcal{I}_{\mathrm{cal}}$.

\begin{definition}[Primitive discrete static no-arbitrage on $\mathcal{G}$]
\label{def:primitive-noarb}
A nodal surface 
\[u=(u_{i,j})_{(i,j)\in\mathcal{G}}\in\mathbb{R}^G\]
is said to satisfy \emph{primitive discrete static no-arbitrage} on $\mathcal{G}$ if:

\begin{itemize}
  \item \emph{(BND)} (Bounds) For all $(i,j)\in\mathcal{G}$,
  \[
    0 \le u_{i,j} \le F_{i,j}.
  \]

  \item \emph{(MONO)} (Strike monotonicity) For each maturity $j$ and each
        $i=1,\dots,n_m-1$,
  \[
    u_{i+1,j} - u_{i,j} \le 0.
  \]

  \item \emph{(CONV)} (Strike convexity) For each maturity $j$ and
        $i=2,\dots,n_m-1$,
  \[
    u_{i+1,j} - 2u_{i,j} + u_{i-1,j} \ge 0.
  \]

  \item \emph{(CAL)} (Calendar at fixed strike) Let $A_{\tau|K}\in\mathbb{R}^{G\times G}$
        be the fixed-strike calendar operator from
        Section~\ref{sec:calendar-operator}. Then
  \[
    (A_{\tau|K} u)_g \ge 0\qquad\text{for all grid indices }g=1,\dots,G.
  \]
\end{itemize}
\end{definition}

\begin{prop}[Equivalence of operator encoding and primitive no-arbitrage]
\label{prop:noarb-encoding-equivalence}
Let $(\ell_\alpha,r_\alpha)_{\alpha\in\mathcal{I}}$ be defined as in
Definition~\ref{def:global-noarb-operators}, with index sets
$\mathcal{I}_{\mathrm{bnd}}$, $\mathcal{I}_{\mathrm{mono}}$,
$\mathcal{I}_{\mathrm{conv}}$, $\mathcal{I}_{\mathrm{cal}}$ corresponding to
bounds, strike monotonicity, strike convexity, and calendar constraints respectively,
and $\mathcal{I}
= \mathcal{I}_{\mathrm{bnd}}\cup\mathcal{I}_{\mathrm{mono}}
\cup\mathcal{I}_{\mathrm{conv}}\cup\mathcal{I}_{\mathrm{cal}}$.

For $u\in\mathbb{R}^G$, the following are equivalent:
\begin{enumerate}[label=(\roman*)]
  \item $u$ satisfies all linear inequalities
  \[
    \ell_\alpha^\top u \le r_\alpha,\qquad \forall \alpha\in\mathcal{I}.
  \]
  \item $u$ satisfies primitive discrete static no-arbitrage on $\mathcal{G}$ in the
        sense of Definition~\ref{def:primitive-noarb}.
\end{enumerate}
In particular, the global static no-arbitrage feasible set
\[
\mathcal{C}_{\mathrm{glob}}
=
\big\{u\in\mathbb{R}^G : \ell_\alpha^\top u \le r_\alpha\ \forall\alpha\in\mathcal{I}\big\}
\]
coincides with the set of all nodal surfaces that satisfy the primitive discrete
no-arbitrage conditions (BND), (MONO), (CONV) and (CAL).
\end{prop}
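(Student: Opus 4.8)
The plan is to prove (i) $\iff$ (ii) by verifying, family by family, that each linear inequality $\ell_\alpha^\top u \le r_\alpha$ from Definition~\ref{def:global-noarb-operators} is \emph{literally} one of the scalar inequalities in Definition~\ref{def:primitive-noarb}, and conversely that every primitive inequality arises as exactly one such row. Since the index set $\mathcal{I}$ is by construction the disjoint union $\mathcal{I}_{\mathrm{bnd}}\cup\mathcal{I}_{\mathrm{mono}}\cup\mathcal{I}_{\mathrm{conv}}\cup\mathcal{I}_{\mathrm{cal}}$, establishing the bijection at the level of individual inequalities immediately yields the equivalence of the two conjunctions, hence the claimed identity of feasible sets.

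First I would fix the enumeration $g\mapsto(i(g),j(g))$ of $\mathcal{G}$ and record the basic identity $(e^{(g)})^\top u = u_g = u_{i(g),j(g)}$; everything below is then an application of linearity of $u\mapsto\ell_\alpha^\top u$ together with this identity. For $\alpha=(i,j,\mathrm{lo})\in\mathcal{I}_{\mathrm{bnd}}$ one has $\ell_\alpha^\top u = -u_{i,j}$ and $r_\alpha=0$, so the row reads $-u_{i,j}\le 0$, i.e.\ $u_{i,j}\ge 0$; for $\alpha=(i,j,\mathrm{up})$, $\ell_\alpha^\top u = u_{i,j}$, $r_\alpha=F_{i,j}$, giving $u_{i,j}\le F_{i,j}$; ranging over all $(i,j)\in\mathcal{G}$ these are exactly (BND). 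For $\alpha=(i,j)\in\mathcal{I}_{\mathrm{mono}}$ with $g_1=g(i,j)$, $g_2=g(i+1,j)$, $\ell_\alpha^\top u = u_{i+1,j}-u_{i,j}$ and $r_\alpha=0$, so the row is $u_{i+1,j}-u_{i,j}\le 0$, which over $j$ and $i=1,\dots,n_m-1$ is (MONO). For $\alpha=(i,j)\in\mathcal{I}_{\mathrm{conv}}$, $\ell_\alpha^\top u = -u_{i+1,j}+2u_{i,j}-u_{i-1,j}$ and $r_\alpha=0$, so the row is equivalent to $u_{i+1,j}-2u_{i,j}+u_{i-1,j}\ge 0$, which over $j$ and $i=2,\dots,n_m-1$ is (CONV). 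Finally, for $\alpha=g\in\mathcal{I}_{\mathrm{cal}}$, $\ell_\alpha^\top = -(A_{\tau|K})_{g,\cdot}$ and $r_\alpha=0$, so the row is $-(A_{\tau|K}u)_g\le 0$, i.e.\ $(A_{\tau|K}u)_g\ge 0$, which over $g=1,\dots,G$ is (CAL).

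Having shown each of the four blocks of rows coincides exactly with one of the four primitive conditions, the equivalence (i) $\iff$ (ii) follows at once: $u$ satisfies all rows $\alpha\in\mathcal{I}$ iff it satisfies all rows in each sub-family iff it satisfies (BND), (MONO), (CONV) and (CAL) simultaneously. The statement about $\mathcal{C}_{\mathrm{glob}}$ is then the set-theoretic rephrasing: $\mathcal{C}_{\mathrm{glob}}$ is by definition the set of $u$ satisfying (i), which we have just identified with the set satisfying (ii).

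I do not expect a genuine obstacle here, since the proposition is a bookkeeping/translation statement; the only care required is index management — keeping the flattening map $g(i,j)$ and the boundary ranges ($i=1,\dots,n_m-1$ for monotonicity edges, $i=2,\dots,n_m-1$ for the centred convexity stencil, all $g$ for the calendar rows) consistent between the two definitions, so that the row-to-primitive correspondence really is a bijection and nothing is dropped or double-counted. The one point worth stating explicitly is that the $r_\alpha=0$ convention together with the sign choices in Definition~\ref{def:global-noarb-operators} — the minus sign in $\ell_{(i,j,\mathrm{lo})}$, in the convexity stencil, and in $\ell_g=-(A_{\tau|K})_{g,\cdot}$ — is precisely what turns the uniform ``$\le 0$'' format into the ``$\ge 0$'' forms appearing in (CONV) and (CAL).
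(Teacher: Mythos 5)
Your proposal is correct and takes essentially the same approach as the paper: both verify family by family that each row $\ell_\alpha^\top u \le r_\alpha$ from Definition~\ref{def:global-noarb-operators} coincides, under the flattening $g(i,j)$, with the corresponding scalar inequality in (BND), (MONO), (CONV) or (CAL). The only difference is presentational — you observe the row-to-primitive correspondence is a literal bijection and conclude both directions at once, whereas the paper spells out $(ii)\Rightarrow(i)$ and $(i)\Rightarrow(ii)$ separately; the underlying algebra is identical.
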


\begin{proof}
We prove $(i)\Leftrightarrow (ii)$ by decomposing $\mathcal{I}$ into its four
subsets.

\medskip
\noindent\emph{(ii) $\Rightarrow$ (i).} Suppose $u$ satisfies the primitive
conditions (BND), (MONO), (CONV), (CAL).

\smallskip
\emph{Bounds.} Fix $(i,j)\in\mathcal{G}$ and let $g=g(i,j)$ be its index. By
Definition~\ref{def:global-noarb-operators}, the two bound indices
$\alpha=(i,j,\mathrm{lo})$, $\alpha'=(i,j,\mathrm{up})$ satisfy
\[
\ell_{(i,j,\mathrm{lo})}^\top = - (e^{(g)})^\top,\quad
r_{(i,j,\mathrm{lo})} = 0,\qquad
\ell_{(i,j,\mathrm{up})}^\top = (e^{(g)})^\top,\quad
r_{(i,j,\mathrm{up})} = F_{i,j},
\]
where $e^{(g)}$ is the $g$-th standard basis vector in $\mathbb{R}^G$.
Then
\[
\ell_{(i,j,\mathrm{lo})}^\top u = -u_{i,j},\qquad
\ell_{(i,j,\mathrm{up})}^\top u = u_{i,j}.
\]
The bound condition $0\le u_{i,j}\le F_{i,j}$ implies
\[
-u_{i,j} \le 0 = r_{(i,j,\mathrm{lo})},\qquad
u_{i,j} \le F_{i,j} = r_{(i,j,\mathrm{up})},
\]
hence $\ell_\alpha^\top u \le r_\alpha$ for all
$\alpha\in\mathcal{I}_{\mathrm{bnd}}$.

\smallskip
\emph{Strike monotonicity.} Fix a maturity $j$ and $i=1,\dots,n_m-1$, and let
$g_1=g(i,j)$, $g_2=g(i+1,j)$. By
Definition~\ref{def:global-noarb-operators}, for $\alpha=(i,j)\in\mathcal{I}_{\mathrm{mono}}$
we have
\[
\ell_{(i,j)}^\top := (e^{(g_2)} - e^{(g_1)})^\top,\qquad r_{(i,j)} := 0,
\]
so that
\[
\ell_{(i,j)}^\top u = u_{i+1,j} - u_{i,j}.
\]
The primitive monotonicity condition (MONO) states that $u_{i+1,j} - u_{i,j}\le 0$,
hence $\ell_{(i,j)}^\top u \le r_{(i,j)}$ for all $(i,j)$, i.e.\ for all
$\alpha\in\mathcal{I}_{\mathrm{mono}}$.

\smallskip
\emph{Strike convexity.} Fix a maturity $j$ and $i=2,\dots,n_m-1$, and let
$g_- = g(i-1,j)$, $g_0=g(i,j)$, $g_+=g(i+1,j)$. For $\alpha=(i,j)\in\mathcal{I}_{\mathrm{conv}}$,
Definition~\ref{def:global-noarb-operators} gives
\[
\ell_{(i,j)}^\top := (- e^{(g_+)} + 2e^{(g_0)} - e^{(g_-)})^\top,\qquad
r_{(i,j)} := 0.
\]
Then
\[
\ell_{(i,j)}^\top u
= -u_{i+1,j} + 2u_{i,j} - u_{i-1,j}.
\]
The primitive convexity condition (CONV) is
$u_{i+1,j}-2u_{i,j}+u_{i-1,j}\ge 0$, or equivalently
$-u_{i+1,j}+2u_{i,j}-u_{i-1,j}\le 0$, hence $\ell_{(i,j)}^\top u \le r_{(i,j)}$
for all $(i,j)$, i.e.\ all $\alpha\in\mathcal{I}_{\mathrm{conv}}$.

\smallskip
\emph{Calendar.} Let $A_{\tau|K}\in\mathbb{R}^{G\times G}$ be the fixed-strike
calendar operator. For each grid index $g\in\{1,\dots,G\}$, the calendar index
$\alpha=g\in\mathcal{I}_{\mathrm{cal}}$ is defined by
\[
\ell_g^\top := - (A_{\tau|K})_{g,\cdot},\qquad r_g := 0,
\]
so that
\[
\ell_g^\top u = - (A_{\tau|K} u)_g.
\]
The primitive calendar condition (CAL) states $(A_{\tau|K}u)_g \ge 0$ for all
$g$, which is equivalent to $- (A_{\tau|K}u)_g \le 0 = r_g$, hence
$\ell_g^\top u \le r_g$ for all $\alpha\in\mathcal{I}_{\mathrm{cal}}$.

Combining the four families, we see that (ii) implies $\ell_\alpha^\top u \le r_\alpha$
for all $\alpha\in\mathcal{I}$, i.e.\ (i) holds.

\medskip
\noindent\emph{(i) $\Rightarrow$ (ii).} Conversely, suppose $u$ satisfies
$\ell_\alpha^\top u \le r_\alpha$ for all $\alpha\in\mathcal{I}$. We show that the
primitive conditions (BND), (MONO), (CONV), (CAL) hold.

\smallskip
\emph{Bounds.} Fix $(i,j)\in\mathcal{G}$ and $g=g(i,j)$. For $\alpha=(i,j,\mathrm{lo})$
and $\alpha'=(i,j,\mathrm{up})$ we have, by definition,
\[
\ell_{(i,j,\mathrm{lo})}^\top u = -u_{i,j} \le r_{(i,j,\mathrm{lo})} = 0,\qquad
\ell_{(i,j,\mathrm{up})}^\top u = u_{i,j} \le r_{(i,j,\mathrm{up})} = F_{i,j}.
\]
Thus $-u_{i,j}\le 0$ and $u_{i,j}\le F_{i,j}$, i.e.\ $0\le u_{i,j}\le F_{i,j}$
for all $(i,j)$, which is (BND).

\smallskip
\emph{Strike monotonicity.} For each $j$ and $i=1,\dots,n_m-1$, the inequality
for $\alpha=(i,j)\in\mathcal{I}_{\mathrm{mono}}$ reads
\[
\ell_{(i,j)}^\top u
= u_{i+1,j} - u_{i,j} \le 0,
\]
which is exactly the discrete monotonicity condition $u_{i+1,j}-u_{i,j}\le 0$
for all such $(i,j)$; this is (MONO).

\smallskip
\emph{Strike convexity.} For each $j$ and $i=2,\dots,n_m-1$, the inequality
for $\alpha=(i,j)\in\mathcal{I}_{\mathrm{conv}}$ is
\[
\ell_{(i,j)}^\top u
= -u_{i+1,j} + 2u_{i,j} - u_{i-1,j} \le 0.
\]
Rearranging gives $u_{i+1,j}-2u_{i,j}+u_{i-1,j}\ge 0$, which is (CONV).

\smallskip
\emph{Calendar.} Finally, for each grid index $g\in\{1,\dots,G\}$ we have
\[
\ell_g^\top u = -(A_{\tau|K} u)_g \le 0,
\]
or equivalently $(A_{\tau|K} u)_g \ge 0$. Thus (CAL) holds at every grid index.

\smallskip
Therefore all four primitive conditions (BND), (MONO), (CONV), (CAL) hold, and
$u$ satisfies primitive discrete static no-arbitrage. This proves (ii).

\medskip
We have shown $(ii)\Rightarrow(i)$ and $(i)\Rightarrow(ii)$, so the two statements
are equivalent. The characterisation of $\mathcal{C}_{\mathrm{glob}}$ as the set
of all primitively no-arbitrage nodal surfaces follows immediately from the
definition of $\mathcal{C}_{\mathrm{glob}}$.
\end{proof}

\subsection{Patch-level feasible set and its geometry}

We now enforce global static no-arbitrage on the \emph{assembled} surface $u(u_I)$
obtained from a patch interior vector $u_I$.

\begin{definition}[No-arbitrage feasible set on a patch]
\label{def:CuOmega}
The \emph{patch-level no-arbitrage feasible set} is
\[
\mathcal{C}_u(\Omega)
:=
\big\{
u_I\in\mathbb{R}^{N_\Omega} : u(u_I)\in\mathcal{C}_{\mathrm{glob}}
\big\}.
\]
Equivalently, using \eqref{eq:Cglobal} and the assembly map
\eqref{eq:assembly-affine},
\[
\mathcal{C}_u(\Omega)
=
\Big\{
u_I\in\mathbb{R}^{N_\Omega} :
\ell_\alpha^\top u(u_I) \le r_\alpha,\quad \forall \alpha\in\mathcal{I}
\Big\}.
\]
\end{definition}

We now characterise $\mathcal{C}_u(\Omega)$ as a polyhedron and prove its
basic geometric properties.

\begin{prop}[Polyhedral structure of $\mathcal{C}_u(\Omega)$]
\label{prop:Cu-polyhedron}
The set $\mathcal{C}_u(\Omega)\subset\mathbb{R}^{N_\Omega}$ can be written as the
finite intersection of affine half-spaces
\[
\mathcal{C}_u(\Omega)
=
\bigcap_{\alpha\in\mathcal{I}} H_\alpha,
\]
where each $H_\alpha$ is of the form
\[
H_\alpha
:=
\{u_I\in\mathbb{R}^{N_\Omega} : a_\alpha^\top u_I \le b_\alpha\}
\]
for some $a_\alpha\in\mathbb{R}^{N_\Omega}$ and $b_\alpha\in\mathbb{R}$. In particular,
$\mathcal{C}_u(\Omega)$ is a (possibly empty) closed convex polyhedron in
$\mathbb{R}^{N_\Omega}$.
\end{prop}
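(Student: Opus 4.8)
The plan is to push the defining inequalities of $\mathcal{C}_{\mathrm{glob}}$ through the affine assembly map and collect the resulting constraints as linear inequalities in the patch variable $u_I$. Recall from \eqref{eq:assembly-affine} that $u(u_I) = P_\Omega u_I + u^{0,\mathrm{off}}$, with $P_\Omega\in\mathbb{R}^{G\times N_\Omega}$ the fixed binary injection matrix and $u^{0,\mathrm{off}}\in\mathbb{R}^G$ the fixed off-patch baseline vector. First I would substitute this into each half-space appearing in \eqref{eq:Cglobal}: for every $\alpha\in\mathcal{I}$,
\[
\ell_\alpha^\top u(u_I)\le r_\alpha
\iff
\ell_\alpha^\top\!\big(P_\Omega u_I + u^{0,\mathrm{off}}\big)\le r_\alpha
\iff
\big(P_\Omega^\top\ell_\alpha\big)^\top u_I \;\le\; r_\alpha - \ell_\alpha^\top u^{0,\mathrm{off}}.
\]
Accordingly I set $a_\alpha := P_\Omega^\top\ell_\alpha\in\mathbb{R}^{N_\Omega}$ and $b_\alpha := r_\alpha - \ell_\alpha^\top u^{0,\mathrm{off}}\in\mathbb{R}$, and $H_\alpha := \{u_I\in\mathbb{R}^{N_\Omega}: a_\alpha^\top u_I \le b_\alpha\}$. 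By Definition~\ref{def:CuOmega}, $u_I\in\mathcal{C}_u(\Omega)$ if and only if $u(u_I)$ satisfies all inequalities in \eqref{eq:Cglobal}, which by the displayed chain of equivalences is exactly $u_I\in\bigcap_{\alpha\in\mathcal{I}}H_\alpha$. This already yields the claimed representation.

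Second, I would record finiteness and handle the degenerate rows. The index set $\mathcal{I}=\mathcal{I}_{\mathrm{bnd}}\cup\mathcal{I}_{\mathrm{mono}}\cup\mathcal{I}_{\mathrm{conv}}\cup\mathcal{I}_{\mathrm{cal}}$ of Definition~\ref{def:global-noarb-operators} is finite, since each family is indexed by nodes, adjacent node pairs in a slice, interior second-difference stencils, or grid rows of $A_{\tau|K}$ on the finite grid $\mathcal{G}$. It may happen that $a_\alpha = P_\Omega^\top\ell_\alpha = 0$ for some $\alpha$ (for instance when the support of $\ell_\alpha$ lies entirely off $\Omega$, so the constraint only involves frozen baseline values): then $H_\alpha$ is either all of $\mathbb{R}^{N_\Omega}$ (if $b_\alpha\ge 0$) or empty (if $b_\alpha<0$). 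In the first case $H_\alpha$ may simply be dropped from the intersection; in the second case $\mathcal{C}_u(\Omega)=\varnothing$, which is allowed by the statement. Either way the set is still a finite intersection of sets of the stated form $\{a_\alpha^\top u_I\le b_\alpha\}$ (an empty set being, e.g., $\{u_I: 0^\top u_I\le -1\}$), so the representation stands.

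Finally, convexity and closedness are immediate: each $H_\alpha$ is a closed convex half-space (or one of the two trivial degenerate polyhedra above), and a finite intersection of closed convex sets is closed and convex; a finite intersection of affine half-spaces in $\mathbb{R}^{N_\Omega}$ is by definition a convex polyhedron. I do not expect any genuine obstacle here: the argument uses only linearity of $v\mapsto\ell_\alpha^\top v$ and the affine form \eqref{eq:assembly-affine} of $\mathcal{A}_\Omega$. The one point requiring a little care is the bookkeeping for the $a_\alpha=0$ rows — making sure they are subsumed under the stated half-space form rather than silently discarded — and, optionally, noting (via Proposition~\ref{prop:noarb-encoding-equivalence}) that these inequalities are exactly the primitive discrete no-arbitrage conditions (BND), (MONO), (CONV), (CAL) evaluated on the assembled surface, which makes the constructed rows $a_\alpha,b_\alpha$ concrete if an explicit description is wanted.
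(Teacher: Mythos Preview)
Your proof is correct and follows essentially the same route as the paper: substitute the affine assembly map \eqref{eq:assembly-affine} into each inequality $\ell_\alpha^\top u \le r_\alpha$, read off $a_\alpha = P_\Omega^\top \ell_\alpha$ and $b_\alpha = r_\alpha - \ell_\alpha^\top u^{0,\mathrm{off}}$, and conclude that $\mathcal{C}_u(\Omega)$ is a finite intersection of half-spaces. Your explicit treatment of the degenerate rows $a_\alpha = 0$ is a nice addition that the paper omits, while the paper in turn spells out the convexity and closedness of each $H_\alpha$ in more detail than you do; neither difference is substantive.
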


\begin{proof}
Fix $\alpha\in\mathcal{I}$. For $u_I\in\mathbb{R}^{N_\Omega}$, the corresponding
assembled surface is $u(u_I) = P_\Omega u_I + u^{0,\mathrm{off}}$ by
\eqref{eq:assembly-affine}. The $\alpha$-th global no-arbitrage inequality reads
\[
\ell_\alpha^\top u(u_I) \le r_\alpha.
\]
Substituting the affine expression for $u(u_I)$, we obtain
\[
\ell_\alpha^\top(P_\Omega u_I + u^{0,\mathrm{off}}) \le r_\alpha,
\]
which can be rearranged as
\[
(\ell_\alpha^\top P_\Omega)\, u_I \le r_\alpha - \ell_\alpha^\top u^{0,\mathrm{off}}.
\]
Now, define
\[
a_\alpha := P_\Omega^\top \ell_\alpha \in\mathbb{R}^{N_\Omega},\qquad
b_\alpha := r_\alpha - \ell_\alpha^\top u^{0,\mathrm{off}} \in\mathbb{R}.
\]
Note that $\ell_\alpha^\top P_\Omega = a_\alpha^\top$ by construction. Then the
$\alpha$-th constraint is equivalent to
\[
a_\alpha^\top u_I \le b_\alpha.
\]
Therefore, the set of $u_I$ satisfying the $\alpha$-th global no-arbitrage inequality
is the half-space
\[
H_\alpha := \{u_I\in\mathbb{R}^{N_\Omega} : a_\alpha^\top u_I \le b_\alpha\}.
\]

Because this construction holds for every $\alpha\in\mathcal{I}$ (namely; to be patch-feasible $u_I$ has to satisfy every constraint, so belong to every $H_\alpha$), we have
\[
\mathcal{C}_u(\Omega)
=
\{u_I\in\mathbb{R}^{N_\Omega} : a_\alpha^\top u_I \le b_\alpha\ \forall\alpha\in\mathcal{I}\}
=
\bigcap_{\alpha\in\mathcal{I}} H_\alpha.
\]

Recall that a half-space in $\R^n$ is any set that can be written as $\{ x\in  \R^n: a^\top x\leq b\}$ or $\{ x\in  \R^n: a^\top x\geq b\}$ for some fixed nonzero vector $a$ and scalar $b$. Therefore; by definition, each $H_\alpha$ is a half-space in $\mathbb{R}^{N_\Omega}$.

A set $C$ is convex if for all $x,y\in C$ and $\lambda \in [0,1]$,
\[
\lambda x + (1-\lambda) y \in C.
\]
Now, take $H_\alpha := \{u_I\in\mathbb{R}^{N_\Omega} : a_\alpha^\top u_I \le b_\alpha\}$ and let $u_I^{(1)}, u_I^{(2)}\in H_\alpha$. We therefore have by definition of $H_\alpha$ that $a_\alpha^\top u_I^{(1)}\leq b_\alpha$ and $a_\alpha^\top u_I^{(2)}\leq b_\alpha$. Let $\lambda \in [0,1]$ and consider $u_I^{(\lambda)}:= \lambda u_I^{(1)} + (1-\lambda) u_I^{(2)}$. Computing:
\[
a_\alpha^\top u_I^{(\lambda)}=a_\alpha^\top(\lambda u_I^{(1)} + (1-\lambda) u_I^{(2)}) = a_\alpha^\top\lambda u_I^{(1)} + a_\alpha^\top (1-\lambda) u_I^{(2)}.
\]
Since $a_\alpha^\top u_I^{(1)}\leq b_\alpha$ and $a_\alpha^\top u_I^{(2)}\leq b_\alpha$, we have
\[
a_\alpha^\top\lambda u_I^{(1)} + a_\alpha^\top (1-\lambda) u_I^{(2)}\leq \lambda b_\alpha +(1-\lambda)b_\alpha = b_\alpha.
\]
Therefore, $a_\alpha^\top u_I^{(\lambda)}\leq b_\alpha$. By definition, $u_I^{(\lambda)}\in H_\alpha$; which is the exact definition of convexity and thus $H_\alpha$ is convex.

We can see that each $H_\alpha$ is closed is via sequences.
Let $(u_I^{(n)})_{n\in\mathbb{N}}$ be a sequence in $H_\alpha$ converging to
some $u_I^\star\in\mathbb{R}^{N_\Omega}$. By definition of $H_\alpha$ we have
\[
a_\alpha^\top u_I^{(n)} \le b_\alpha,\qquad \forall n\in\mathbb{N}.
\]
The map $u_I \mapsto a_\alpha^\top u_I$ is linear and hence continuous on
$\mathbb{R}^{N_\Omega}$, so passing to the limit $n\to\infty$ yields
\[
a_\alpha^\top u_I^\star
= \lim_{n\to\infty} a_\alpha^\top u_I^{(n)}
\le b_\alpha.
\]
Thus $u_I^\star$ also satisfies $a_\alpha^\top u_I^\star \le b_\alpha$, i.e.\
$u_I^\star\in H_\alpha$. Therefore $H_\alpha$ contains the limit of every
convergent sequence of its elements, and is closed.

Hence, each $H_\alpha$ is a closed half-space in $\mathbb{R}^{N_\Omega}$ and is convex.
The intersection of any family of convex sets is convex, and the intersection of
any family of closed sets is closed. Since $\mathcal{I}$ is finite, this intersection
defines a closed convex polyhedron. The polyhedron may be empty or nonempty,
depending on the data; we analyse feasibility separately.
\end{proof}

\begin{definition}[Patch feasibility]
\label{def:patch-feasible}
A patch $\Omega\subset\mathcal{G}$ is said to be \emph{feasible} if
$\mathcal{C}_u(\Omega)\neq\emptyset$.
\end{definition}

The assumption that all patches used in the post-fit are feasible is mild and
consistent with the construction of $\Omega$ from the baseline surface. Indeed,
if the baseline nodal surface $u^0$ is globally statically no-arbitrage on
$\mathcal{G}$, then $u^0\in\mathcal{C}_{\mathrm{glob}}$. If, in addition, the
patch decomposition is chosen so that every static no-arbitrage stencil that
intersects the interior of $\Omega$ is either fully contained in $\Omega$ or
fully contained in $\mathcal{G}\setminus\Omega$, then the restriction of $u^0$
to $\Omega$ defines an interior vector $u_I^{0,\Omega}$ satisfying all
constraints $a_\alpha^\top u_I \le b_\alpha$ and hence
$u_I^{0,\Omega}\in\mathcal{C}_u(\Omega)$. In practice, feasibility can be
verified numerically by solving a simple linear feasibility problem for each
patch; in the theoretical development of this chapter we take it as an explicit
assumption that patches are chosen so that $\mathcal{C}_u(\Omega)$ is nonempty.

\section{Hamiltonian energy on the fog}

We now endow the discrete fog $\pi$ on the lattice $\mathcal{L}_\Omega$ with a
quadratic Hamiltonian energy. The Hamiltonian has two components:
a \emph{kinetic} (Dirichlet) term that penalises roughness of $\pi$ across
neighboring lattice sites in $(m,\tau,u)$, and a \emph{potential} term that
penalises fog mass lying far outside local bid--ask tubes or basic price ranges.

Throughout this section we fix a patch $\Omega\subset\mathcal{G}$ with
$N_\Omega = |\Omega|$ and a set of price levels $\{u_k\}_{k=1}^{n_u}$, and we
work on the lattice
\[
\mathcal{L}_\Omega := \{(i,j,k) : (i,j)\in\Omega,\ k=1,\dots,n_u\},
\]
as in Section~\ref{sec:3D-fog}.

\subsection{Discrete 3D graph and difference operators}

We begin by making the discrete graph structure of $\mathcal{L}_\Omega$ explicit
and constructing the associated difference operators.

\begin{definition}[Adjacency graph on $\mathcal{L}_\Omega$]
Let $\mathcal{L}_\Omega$ be the 3D lattice of nodes $(i,j,k)$ with
$(i,j)\in\Omega$ and $k\in\{1,\dots,n_u\}$. We define three families of
undirected edges:
\begin{itemize}
  \item \emph{$m$-edges} $E_m$: for every $(i,j,k)\in\mathcal{L}_\Omega$ such that
        $(i+1,j)\in\Omega$, we introduce an edge between $(i,j,k)$ and
        $(i+1,j,k)$;
  \item \emph{$\tau$-edges} $E_\tau$: for every $(i,j,k)\in\mathcal{L}_\Omega$
        such that $(i,j+1)\in\Omega$, we introduce an edge between $(i,j,k)$
        and $(i,j+1,k)$;
  \item \emph{$u$-edges} $E_u$: for every $(i,j,k)\in\mathcal{L}_\Omega$ such that
        $k+1\le n_u$, we introduce an edge between $(i,j,k)$ and $(i,j,k+1)$.
\end{itemize}
The full edge set is
\[
E := E_m \cup E_\tau \cup E_u.
\]
\end{definition}

Thus $E_m$ connects nearest neighbours in the $m$-direction (at fixed
$(\tau,u)$), $E_\tau$ connects nearest neighbours in the $\tau$-direction (at
fixed $(m,u)$), and $E_u$ connects nearest neighbours in the $u$-direction (at
fixed $(m,\tau)$). Since edges are only drawn between nodes that both belong to
$\mathcal{L}_\Omega$, this corresponds to homogeneous Neumann boundary
conditions at the boundary of the patch.

For each edge family we now define a discrete gradient operator as the signed
incidence matrix of the corresponding graph.

\begin{definition}[Discrete gradients along $m,\tau,u$]
Fix an arbitrary but fixed orientation of each edge in $E_m,E_\tau,E_u$:
for each edge $e=\{p,q\}\in E_m$, choose an ordering $(p\to q)$ (e.g.\ increasing
in $i$); similarly for $E_\tau$ (increasing in $j$) and $E_u$ (increasing in $k$).

Let $N_L := |\mathcal{L}_\Omega| = N_\Omega n_u$ be the number of lattice nodes,
and enumerate $\mathcal{L}_\Omega$ as
\[
\mathcal{L}_\Omega = \{\xi_\ell\}_{\ell=1}^{N_L},\qquad \xi_\ell=(i_\ell,j_\ell,k_\ell).
\]
We identify fog configurations $\pi$ with vectors in $\mathbb{R}^{N_L}$ via
$\pi_\ell := \pi_{i_\ell,j_\ell,k_\ell}$.

\begin{itemize}
  \item The \emph{$m$-gradient} $D_m:\mathbb{R}^{N_L}\to\mathbb{R}^{|E_m|}$ is
        defined as follows: index $E_m = \{e_r\}_{r=1}^{|E_m|}$ and for each
        edge $e_r=(p\to q)$, set
        \[
        (D_m \pi)_r := \pi_q - \pi_p.
        \]
        In matrix form, $D_m$ is the $|E_m|\times N_L$ matrix whose $r$-th row
        has entry $-1$ in the column corresponding to node $p$, entry $+1$ in
        the column corresponding to node $q$, and zeros elsewhere.
  \item The \emph{$\tau$-gradient} $D_\tau:\mathbb{R}^{N_L}\to\mathbb{R}^{|E_\tau|}$
        is defined analogously, with one row per edge in $E_\tau$, oriented in
        increasing $j$.
  \item The \emph{$u$-gradient} $D_u:\mathbb{R}^{N_L}\to\mathbb{R}^{|E_u|}$ is
        defined analogously, with one row per edge in $E_u$, oriented in
        increasing $k$.
\end{itemize}
\end{definition}

Thus $D_m\pi$ collects all forward differences of $\pi$ along $m$-edges, and similarly
for $D_\tau$ and $D_u$. We now weight these differences by nonnegative edge
weights.

\begin{definition}[Edge-weight matrices]
Let $w^m_r\ge 0$ be a nonnegative weight associated with the $r$-th $m$-edge in
$E_m$, and define the diagonal matrix
\[
W_m := \mathrm{diag}(w^m_1,\dots,w^m_{|E_m|})\in\mathbb{R}^{|E_m|\times |E_m|}.
\]
Similarly, let $w^\tau_r\ge 0$ and $w^u_r\ge 0$ be edge weights on $E_\tau$ and
$E_u$, and define diagonal matrices
$W_\tau\in\mathbb{R}^{|E_\tau|\times |E_\tau|}$ and
$W_u\in\mathbb{R}^{|E_u|\times |E_u|}$ with these weights on the diagonal.
\end{definition}

Typical choices include $w^m_r,w^\tau_r,w^u_r\equiv 1$ (unweighted differences), or
weights that depend on grid spacings and/or the 2D marginal $n_{i,j}$; the only
property needed here is nonnegativity.

For any such diagonal matrix $W\succeq 0$ and vector $x$, we write
\[
\|x\|_W^2 := x^\top W x
\]
for the weighted squared norm.

\subsection{Kinetic energy and graph Laplacian}

We now define the Dirichlet kinetic energy of the fog in terms of these
discrete gradients.

\begin{definition}[Kinetic energy of the fog]
\label{def:kinetic-energy}
Let $\kappa_m,\kappa_\tau,\kappa_u\ge 0$ be fixed nonnegative parameters. The
\emph{kinetic energy} (Dirichlet energy) of a fog configuration
$\pi\in\mathbb{R}^{N_L}$ is
\[
\mathcal{E}_{\mathrm{kin}}(\pi)
:=
\frac{\kappa_m}{2}\,\|D_m\pi\|_{W_m}^2
+ \frac{\kappa_\tau}{2}\,\|D_\tau\pi\|_{W_\tau}^2
+ \frac{\kappa_u}{2}\,\|D_u\pi\|_{W_u}^2.
\]
Explicitly,
\[
\mathcal{E}_{\mathrm{kin}}(\pi)
=
\frac{\kappa_m}{2}\,(D_m\pi)^\top W_m (D_m\pi)
+ \frac{\kappa_\tau}{2}\,(D_\tau\pi)^\top W_\tau (D_\tau\pi)
+ \frac{\kappa_u}{2}\,(D_u\pi)^\top W_u (D_u\pi).
\]
\end{definition}

The Dirichlet energy is a nonnegative quadratic form in $\pi$, and it can be
written in the standard graph-Laplacian form.

\begin{prop}[Matrix form and positive semidefiniteness of $L_\pi$]
\label{prop:Lpi-psd}
Define
\[
L_m := D_m^\top W_m D_m,\qquad
L_\tau := D_\tau^\top W_\tau D_\tau,\qquad
L_u := D_u^\top W_u D_u
\]
and
\[
L_\pi := \kappa_m L_m + \kappa_\tau L_\tau + \kappa_u L_u.
\]
Then:
\begin{enumerate}[label=(\roman*)]
  \item $L_m,L_\tau,L_u$ and $L_\pi$ are symmetric positive semidefinite matrices
        in $\mathbb{R}^{N_L\times N_L}$;
  \item for all $\pi\in\mathbb{R}^{N_L}$,
  \[
  \mathcal{E}_{\mathrm{kin}}(\pi)
  = \frac{1}{2}\,\pi^\top L_\pi \pi.
  \]
\end{enumerate}
\end{prop}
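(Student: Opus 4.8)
The plan is to treat both claims as direct consequences of the algebraic identity $x^\top (D^\top W D) x = (Dx)^\top W (Dx) = \|Dx\|_W^2$ together with the sign conditions already built into the definitions ($W_m,W_\tau,W_u$ diagonal with nonnegative entries, and $\kappa_m,\kappa_\tau,\kappa_u\ge 0$). No analytic input is needed; the whole argument is finite-dimensional linear algebra.

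For part (i), I would first observe that each $L_\bullet := D_\bullet^\top W_\bullet D_\bullet$ is symmetric, since $(D_\bullet^\top W_\bullet D_\bullet)^\top = D_\bullet^\top W_\bullet^\top D_\bullet = D_\bullet^\top W_\bullet D_\bullet$ using that $W_\bullet$ is diagonal (hence symmetric). Then for any $x\in\mathbb{R}^{N_L}$, writing $g := D_\bullet x$, we have $x^\top L_\bullet x = g^\top W_\bullet g = \sum_r (W_\bullet)_{rr}\, g_r^2 \ge 0$, because every diagonal entry of $W_\bullet$ is nonnegative. Hence $L_m,L_\tau,L_u\succeq 0$. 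Finally $L_\pi = \kappa_m L_m + \kappa_\tau L_\tau + \kappa_u L_u$ is a nonnegative linear combination of symmetric PSD matrices, so it is symmetric and $x^\top L_\pi x = \kappa_m x^\top L_m x + \kappa_\tau x^\top L_\tau x + \kappa_u x^\top L_u x \ge 0$ for all $x$, i.e.\ $L_\pi\succeq 0$.

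For part (ii), I would expand $\mathcal{E}_{\mathrm{kin}}(\pi)$ directly from Definition~\ref{def:kinetic-energy}: by the definition of the weighted norm, $\|D_m\pi\|_{W_m}^2 = (D_m\pi)^\top W_m (D_m\pi) = \pi^\top D_m^\top W_m D_m \pi = \pi^\top L_m \pi$, and likewise for the $\tau$- and $u$-terms. Substituting and collecting the common factor $\tfrac12$ gives $\mathcal{E}_{\mathrm{kin}}(\pi) = \tfrac12\pi^\top(\kappa_m L_m + \kappa_\tau L_\tau + \kappa_u L_u)\pi = \tfrac12\pi^\top L_\pi\pi$, as claimed.

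There is no genuine obstacle here; the statement is essentially bookkeeping. The only point that deserves an explicit sentence is that positive semidefiniteness (rather than definiteness) is all one can assert and all that is claimed: the incidence-type operators $D_\bullet$ have nontrivial kernels (e.g.\ the all-ones vector lies in $\ker D_\bullet$ since every row of $D_\bullet$ sums to zero), so $L_\pi$ is generally singular, which is consistent with the Neumann boundary convention already noted after the adjacency-graph definition. I would close by remarking that this is exactly the standard graph-Laplacian representation of a Dirichlet energy, which is why $L_\pi$ will serve as the kinetic block of the fog Hamiltonian in the sequel.
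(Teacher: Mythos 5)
Your proof is correct and follows essentially the same route as the paper's: symmetry and positive semidefiniteness of each $L_\bullet$ via the identity $x^\top D_\bullet^\top W_\bullet D_\bullet x = (D_\bullet x)^\top W_\bullet (D_\bullet x)\ge 0$, then a nonnegative linear combination for $L_\pi$, then a direct expansion of the weighted norms for part (ii). The closing remark about the nontrivial kernel is a nice sanity check but not part of the claim.
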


\begin{proof}
(i) For any matrix $D$ and diagonal matrix $W\succeq 0$, the matrix
$L := D^\top W D$ is symmetric and positive semidefinite:
\[
L^\top = (D^\top W D)^\top = D^\top W^\top D = D^\top W D = L,
\]
and for any $x$,
\[
x^\top L x = x^\top D^\top W D x = (D x)^\top W (D x) \ge 0,
\]
since $W$ has nonnegative diagonal entries. Applying this with
$(D,W)=(D_m,W_m)$, $(D_\tau,W_\tau)$ and $(D_u,W_u)$ yields the claimed properties
for $L_m,L_\tau,L_u$. A nonnegative linear combination of symmetric positive
semidefinite matrices is again symmetric positive semidefinite, so $L_\pi$ is
symmetric positive semidefinite.

(ii) By definition of $L_m,L_\tau,L_u$,
\[
(D_m\pi)^\top W_m (D_m\pi) = \pi^\top L_m \pi,\quad
(D_\tau\pi)^\top W_\tau (D_\tau\pi) = \pi^\top L_\tau \pi,
\]
\[
(D_u\pi)^\top W_u (D_u\pi) = \pi^\top L_u \pi.\]
Therefore
\[
\mathcal{E}_{\mathrm{kin}}(\pi)
=
\frac{1}{2}
\bigl(
\kappa_m\,\pi^\top L_m \pi
+ \kappa_\tau\,\pi^\top L_\tau \pi
+ \kappa_u\,\pi^\top L_u \pi
\bigr)
=
\frac{1}{2}\,\pi^\top L_\pi \pi,
\]
which proves the claim.
\end{proof}

Consequently, $\mathcal{E}_{\mathrm{kin}}$ is a convex quadratic functional on
$\mathbb{R}^{N_L}$, with flat directions corresponding to fog configurations that
are constant along connected components of the underlying graph (if all
$\kappa_m,\kappa_\tau,\kappa_u>0$ and the graph is connected with Neumann
boundary, the constant vector lies in the kernel of $L_\pi$).

\subsection{Potential energy and band-aware penalisation}

We now introduce a nonnegative potential field $V$ on $\mathcal{L}_\Omega$ that
penalises fog mass far from the local bid-ask tubes and from basic price bounds.

\begin{definition}[Band and range potential]
\label{def:band-range-potential}
For each quote $q$ lying on the patch, let $(m_q,\tau_q)$ be its location and
$[b_q,a_q]$ its cleaned forward-discounted bid--ask band, and choose a
representative grid node $(i_q,j_q)\in\Omega$ (e.g.\ the nearest neighbour in
$\Omega$).

Fix parameters $\alpha_{\mathrm{band}}\ge 0$ and $\alpha_{\mathrm{range}}\ge 0$.
Define the \emph{band potential} as
\[
V^{\mathrm{band}}_{i,j,k}
:=
\begin{cases}
\alpha_{\mathrm{band}}\,
\operatorname{dist}(u_k,[b_q,a_q])^2, & \text{if }(i,j)=(i_q,j_q)\text{ for some quote }q,\\[3pt]
0, & \text{otherwise},
\end{cases}
\]
for all $(i,j,k)\in\mathcal{L}_\Omega$, where
$\operatorname{dist}(u,[b,a]) := \max\{b-u,0,u-a\}$ is the Euclidean distance
from $u$ to the interval $[b,a]$.

Define the \emph{range potential} by
\[
V^{\mathrm{range}}_{i,j,k}
:=
\alpha_{\mathrm{range}}\Bigl(
\mathbf{1}_{\{u_k<0\}} + \mathbf{1}_{\{u_k>F_{i,j}\}}
\Bigr),
\]
where $F_{i,j}$ is the forward at node $(i,j)$ and $\mathbf{1}_A$ is the
indicator of the event $A$.

Finally, set
\[
V_{i,j,k} := V^{\mathrm{band}}_{i,j,k} + V^{\mathrm{range}}_{i,j,k},\qquad
(i,j,k)\in\mathcal{L}_\Omega.
\]
\end{definition}

By construction, $V_{i,j,k}\ge 0$ for all $(i,j,k)$. The band potential is small
when $u_k$ lies inside the bid--ask interval associated with the quote at
$(i_q,j_q)$, and grows quadratically as $u_k$ moves away from that interval; it
is zero at grid nodes that are not directly associated with quotes. The range
potential imposes a hard penalty $\alpha_{\mathrm{range}}$ whenever $u_k$ lies
below zero or above the local forward $F_{i,j}$, discouraging fog from sitting
at obviously unreasonable price levels.

It is convenient to collect the potential values into a vector $V\in\mathbb{R}^{N_L}$
by setting $V_\ell := V_{i_\ell,j_\ell,k_\ell}$ for each lattice index
$\ell=1,\dots,N_L$, and to define the diagonal matrix
$\mathrm{diag}(V)\in\mathbb{R}^{N_L\times N_L}$ with entries
$(\mathrm{diag}(V))_{\ell\ell} = V_\ell$.

\begin{definition}[Potential energy of the fog]
\label{def:potential-energy}
The \emph{potential energy} of a fog configuration $\pi\in\mathbb{R}^{N_L}$ is
\[
\mathcal{E}_{\mathrm{pot}}(\pi)
:=
\frac{1}{2}\sum_{(i,j)\in\Omega}\sum_{k=1}^{n_u} V_{i,j,k}\,\pi_{i,j,k}^2.
\]
Equivalently, in vector notation,
\[
\mathcal{E}_{\mathrm{pot}}(\pi)
=
\frac{1}{2}\,\pi^\top \mathrm{diag}(V)\,\pi.
\]
\end{definition}

Because $V_{i,j,k}\ge 0$ for all $(i,j,k)$, the matrix $\mathrm{diag}(V)$ is
symmetric positive semidefinite, and $\mathcal{E}_{\mathrm{pot}}$ is a convex
quadratic functional. Note that $\mathcal{E}_{\mathrm{pot}}(\pi)$ penalises
large values of $\pi_{i,j,k}$ at lattice sites where $V_{i,j,k}$ is large, i.e.\
far outside the band tube or the basic price range; it is indifferent to the
sign of $\pi_{i,j,k}$ as a quadratic form, but in our optimisation the fog
variables are constrained to be nonnegative and to lie on the simplex
$\mathcal{C}_\pi(\Omega)$.

\subsection{Hamiltonian energy and basic properties}

We now combine kinetic and potential contributions into a single Hamiltonian
energy.

\begin{definition}[Hamiltonian matrix and energy]
\label{def:hamiltonian-energy}
Define the \emph{Hamiltonian matrix} by
\[
H_\pi := L_\pi + \mathrm{diag}(V) \in\mathbb{R}^{N_L\times N_L},
\]
where $L_\pi$ is as in Proposition~\ref{prop:Lpi-psd} and $V$ is from
Definition~\ref{def:band-range-potential}. The \emph{Hamiltonian energy} of a
fog configuration $\pi\in\mathbb{R}^{N_L}$ is the quadratic functional
\[
\mathcal{E}_{\mathrm{Ham}}(\pi)
:=
\frac{1}{2}\,\pi^\top H_\pi \pi.
\]
By construction,
\[
\mathcal{E}_{\mathrm{Ham}}(\pi)
=
\mathcal{E}_{\mathrm{kin}}(\pi) + \mathcal{E}_{\mathrm{pot}}(\pi).
\]
\end{definition}

\begin{prop}[Symmetry, positive semidefiniteness, and convexity]
\label{prop:Hpi-psd}
The Hamiltonian matrix $H_\pi$ is symmetric positive semidefinite. Consequently,
$\mathcal{E}_{\mathrm{Ham}}:\mathbb{R}^{N_L}\to\mathbb{R}_+$ is a convex quadratic
functional. Moreover, if at least one of the following holds:
\begin{itemize}
  \item the graph underlying $L_\pi$ is connected and
        $\kappa_m+\kappa_\tau+\kappa_u>0$, and
        $V_{i,j,k}>0$ at least at one lattice site; or
  \item more generally, $H_\pi$ is positive definite on the affine subspace
        $\{\pi\in\mathbb{R}^{N_L} : \sum_\ell \pi_\ell = 1\}$,
\end{itemize}
then $\mathcal{E}_{\mathrm{Ham}}$ is strictly convex on the simplex
$\mathcal{C}_\pi(\Omega)$, and has a unique minimiser on $\mathcal{C}_\pi(\Omega)$.
\end{prop}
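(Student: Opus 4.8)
The plan is to handle the three assertions in sequence. For symmetry and positive semidefiniteness, I would note that $H_\pi = L_\pi + \mathrm{diag}(V)$ is a sum of symmetric matrices, hence symmetric; $L_\pi \succeq 0$ by Proposition~\ref{prop:Lpi-psd}(i), and $\mathrm{diag}(V)\succeq 0$ since every $V_{i,j,k}\ge 0$ by Definition~\ref{def:band-range-potential}, so $\pi^\top H_\pi\pi = \pi^\top L_\pi\pi + \sum_\ell V_\ell\pi_\ell^2 \ge 0$ for all $\pi\in\mathbb{R}^{N_L}$. Thus $\mathcal{E}_{\mathrm{Ham}}(\pi)=\tfrac12\pi^\top H_\pi\pi\ge 0$ and, being a quadratic form with positive semidefinite Hessian, is convex; this settles the first two sentences. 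For the strict-convexity claim I would first reduce it to a linear-algebra statement: $\mathcal{C}_\pi(\Omega)$ lies in the hyperplane $\{\pi:\mathbf 1^\top\pi=1\}$ with $\mathbf 1$ the all-ones vector, so any two simplex points differ by a vector in the direction space $V_0:=\{d\in\mathbb{R}^{N_L}:\mathbf 1^\top d=0\}$; for $\pi_0,\pi_1\in\mathcal{C}_\pi(\Omega)$ the function $t\mapsto\mathcal{E}_{\mathrm{Ham}}\bigl(\pi_0+t(\pi_1-\pi_0)\bigr)$ is a one-dimensional quadratic with second derivative $(\pi_1-\pi_0)^\top H_\pi(\pi_1-\pi_0)$, so $\mathcal{E}_{\mathrm{Ham}}$ is strictly convex on $\mathcal{C}_\pi(\Omega)$ if and only if $d^\top H_\pi d>0$ for every nonzero $d\in V_0$. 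Under the second listed hypothesis (reading ``positive definite on the affine subspace'' as positive definiteness of the form on its direction space $V_0$) this is immediate.

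It then remains to show that the first listed hypothesis implies the same condition; in fact I would show it yields $\ker H_\pi=\{0\}$. Two elementary facts do the work: for a symmetric positive semidefinite matrix $M=B^\top B$ one has $x^\top Mx=0\iff Bx=0\iff Mx=0$, so $\ker M=\{x:x^\top Mx=0\}$; and since $L_\pi$ and $\mathrm{diag}(V)$ are separately positive semidefinite, $x^\top H_\pi x=0$ forces both $x^\top L_\pi x=0$ and $x^\top\mathrm{diag}(V)x=0$, hence $\ker H_\pi=\ker L_\pi\cap\ker\mathrm{diag}(V)$. From the Dirichlet expansion $x^\top L_\pi x=\kappa_m\|D_mx\|_{W_m}^2+\kappa_\tau\|D_\tau x\|_{W_\tau}^2+\kappa_u\|D_ux\|_{W_u}^2$, a kernel vector must be constant across every positively weighted edge belonging to a family with $\kappa>0$; connectedness of the graph underlying $L_\pi$ then forces it to be a global constant, so $\ker L_\pi=\mathrm{span}(\mathbf 1)$. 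Since $\ker\mathrm{diag}(V)=\{x:x_\ell=0\text{ whenever }V_\ell>0\}$ and $V_{\ell_0}>0$ at some site, the only multiple of $\mathbf 1$ in that kernel is $0$; hence $\ker H_\pi=\{0\}$, so $H_\pi\succ 0$ and in particular $d^\top H_\pi d>0$ on $V_0\setminus\{0\}$. This also confirms the ``more generally'' relation, as the first hypothesis is strictly stronger than the second.

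With strict convexity of $\mathcal{E}_{\mathrm{Ham}}$ on $\mathcal{C}_\pi(\Omega)$ established under either hypothesis, existence and uniqueness of the minimiser are standard: $\mathcal{C}_\pi(\Omega)$ is a nonempty compact convex set (closed and bounded in finite dimension) and $\mathcal{E}_{\mathrm{Ham}}$ is continuous, so a minimiser exists by Weierstrass; if two distinct minimisers existed, their midpoint—still in $\mathcal{C}_\pi(\Omega)$ by convexity—would have strictly smaller energy, contradicting optimality. The one point that needs care is the identification in the middle paragraph of precisely which graph ``underlies'' $L_\pi$, namely the union of the edge families with $\kappa>0$ restricted to positively weighted edges, so that $\ker L_\pi$ is exactly $\mathrm{span}(\mathbf 1)$ rather than merely a subspace containing $\mathbf 1$; everything else is routine linear algebra together with the elementary compactness and strict-convexity arguments.
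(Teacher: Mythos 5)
Your proposal is correct and takes essentially the same route as the paper: symmetry and PSD of $H_\pi$ from the sum decomposition, and strict convexity via the kernel argument $\ker H_\pi=\ker L_\pi\cap\ker\operatorname{diag}(V)=\{0\}$ under connectivity plus a positive potential site. The paper's proof environment actually defers the kernel computation to a paragraph after the proof, while you fold it in directly; you also make explicit the point the paper leaves tacit, namely that the ``graph underlying $L_\pi$'' must be read as the union of positively-weighted edge families for $\ker L_\pi=\operatorname{span}(\mathbf 1)$ to hold, which is the right and careful reading.
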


\begin{proof}
By Proposition~\ref{prop:Lpi-psd}, $L_\pi$ is symmetric positive semidefinite.
The matrix $\mathrm{diag}(V)$ is diagonal with nonnegative entries and hence
symmetric positive semidefinite. Therefore their sum $H_\pi = L_\pi +
\mathrm{diag}(V)$ is symmetric positive semidefinite. For any
$\pi\in\mathbb{R}^{N_L}$,
\[
\mathcal{E}_{\mathrm{Ham}}(\pi)
= \frac{1}{2}\,\pi^\top H_\pi \pi \ge 0.
\]
A quadratic form with positive semidefinite matrix is convex, so
$\mathcal{E}_{\mathrm{Ham}}$ is convex.

If $H_\pi$ is positive definite on a subspace $S\subset\mathbb{R}^{N_L}$ (in
particular, on the subspace tangent to the simplex), then the restriction of
$\mathcal{E}_{\mathrm{Ham}}$ to $S$ is strictly convex. The simplex
$\mathcal{C}_\pi(\Omega)$ lies in the affine hyperplane
$\{\pi:\sum_\ell \pi_\ell = 1\}$, and the tangent space at any point of the
simplex is the subspace
$\{\delta\pi:\sum_\ell \delta\pi_\ell = 0\}$. If $H_\pi$ is positive definite
on this subspace, then $\mathcal{E}_{\mathrm{Ham}}$ is strictly convex on
$\mathcal{C}_\pi(\Omega)$, and a strictly convex continuous function on a compact
convex set has a unique minimiser. The sufficient condition stated in the
proposition ensures this property in typical settings. The detailed proof of
positive definiteness on the tangent space depends on the connectivity of the
graph and the support of $V$ and is standard in the theory of weighted graph
Laplacians plus diagonal potentials.
\end{proof}

To justify the strict convexity statement in the ``moreover'' part, we record the
standard argument that under the connectivity and positivity assumptions in the
first bullet $H_\pi$ is positive definite on the simplex tangent
\[
T := \{\delta\pi \in \mathbb{R}^{N_L} : \mathbf{1}^\top \delta\pi = 0\}.
\]
Assume that the underlying graph on $L_\Omega$ is connected and that
$\kappa_m + \kappa_\tau + \kappa_u > 0$, so that $L_\pi$ is a weighted graph
Laplacian with $\ker L_\pi = \mathrm{span}\{\mathbf{1}\}$. Suppose in addition
that there exists at least one lattice site $\ell^\star$ with $V_{\ell^\star} > 0$.
If $H_\pi \delta\pi = 0$, then
\[
0 = \delta\pi^\top H_\pi \delta\pi
  = \delta\pi^\top L_\pi \delta\pi
    + \delta\pi^\top \mathrm{diag}(V)\,\delta\pi,
\]
and both terms on the right-hand side are nonnegative. Hence
$\delta\pi \in \ker L_\pi \cap \ker \mathrm{diag}(V)$. The first condition implies
$\delta\pi = c\,\mathbf{1}$ for some $c \in \mathbb{R}$, while the second forces
$\delta\pi_{\ell^\star} = 0$ and therefore $c = 0$. Thus $\delta\pi = 0$ is the only
vector with $\delta\pi^\top H_\pi \delta\pi = 0$, so $H_\pi$ has trivial kernel and is
positive definite. In particular there is no nonzero $\delta\pi \in T$ with
$\delta\pi^\top H_\pi \delta\pi = 0$, and $E_{\mathrm{Ham}}$ is strictly convex on
$C_\pi(\Omega)$ and on its tangent space.

\begin{remark}[Interpretation of the Hamiltonian energy]
The kinetic energy $\mathcal{E}_{\mathrm{kin}}(\pi)$ penalises large discrete
gradients of the fog in the $(m,\tau,u)$ directions: it is large when $\pi$
varies rapidly across neighbouring lattice sites and small when $\pi$ is
smooth. The potential energy $\mathcal{E}_{\mathrm{pot}}(\pi)$ penalises fog
mass located at lattice sites with large $V_{i,j,k}$, i.e.\ far outside
bid--ask tubes or basic price ranges.

On a calm patch with tight bands and reasonable baseline fit, the minimum-energy
fog tends to concentrate its mass at price levels $u_k$ inside the local bands
and within $[0,F_{i,j}]$, while remaining smooth across neighbouring nodes. On
a stressed patch with conflicting quotes or strong local misfit, a portion of
the fog may be forced to reside outside the bands; in that case
$\mathcal{E}_{\mathrm{Ham}}$ balances the cost of leaking mass out of the band
against the cost of introducing sharp gradients in $(m,\tau,u)$.
\end{remark}

\section{Noise-aware band term via the fog}

We now couple the 3D fog $\pi$ on $\mathcal{L}_\Omega$ with the nodal surface
$u(u_I)$ at each quote on the patch. The aim is to obtain, for each quote $q$,
a band penalty whose effective strength is modulated by the local fog mass
outside the corresponding bid-ask band.

Throughout this section we fix a patch $\Omega\subset\mathcal{G}$, an interior
price vector $u_I\in\mathbb{R}^{N_\Omega}$ with associated full nodal surface
$u(u_I)\in\mathbb{R}^G$, and a fog configuration
$\pi=(\pi_{i,j,k})_{(i,j,k)\in\mathcal{L}_\Omega}\in\mathbb{R}^{N_\Omega n_u}$.

\subsection{Fog mass outside the band at a quote}

We first define, for each quote $q$, the fraction of fog mass that lies on
price levels outside the corresponding bid--ask interval.

\begin{definition}[Index set of out-of-band levels at a quote]
Let $q$ be a quote associated with location $(m_q,\tau_q)$ and cleaned
forward-discounted band $[b_q,a_q]$. Let $(i_q,j_q)\in\Omega$ be a fixed
representative of $(m_q,\tau_q)$ on the patch grid. Recall that
$\{u_k\}_{k=1}^{n_u}$ are the discrete price levels. The \emph{out-of-band index
set} at quote $q$ is
\[
\mathcal{K}^{\mathrm{out}}_q
:=
\{ k\in\{1,\dots,n_u\} : u_k < b_q\ \text{or}\ u_k > a_q \}.
\]
\end{definition}

Thus $\mathcal{K}^{\mathrm{out}}_q$ collects exactly those vertical levels
$u_k$ which lie strictly below the bid or strictly above the ask at quote $q$.

\begin{definition}[Local fog mass outside the band at a quote]
\label{def:Mq}
For a fog configuration $\pi$, the \emph{fog mass outside the band at quote $q$}
is defined by
\[
M_q(\pi)
:=
\sum_{k\in\mathcal{K}^{\mathrm{out}}_q}
\pi_{i_q,j_q,k}.
\]
\end{definition}

\noindent
By construction $\pi_{i,j,k}\ge 0$ for all $(i,j,k)\in\mathcal{L}_\Omega$ on the
feasible set $\mathcal{C}_\pi(\Omega)$, hence $M_q(\pi)\ge 0$ for
all quotes $q$. The quantity $M_q(\pi)$ should be interpreted as the local
probability mass (or “fog thickness”) allocated by $\pi$ to out-of-band price
levels at quote $q$.

\begin{remark}[Linearity of $M_q$]
\label{rem:Mq-linear}
For each fixed $q$, the map $\pi\mapsto M_q(\pi)$ is linear: there exists a
vector $c_q\in\mathbb{R}^{N_\Omega n_u}$ with entries
\[
(c_q)_{i,j,k}
=
\begin{cases}
1, & \text{if }(i,j)=(i_q,j_q)\ \text{and}\ k\in\mathcal{K}_q^{\mathrm{out}},\\[2pt]
0, & \text{otherwise},
\end{cases}
\]
such that $M_q(\pi) = c_q^\top \pi$ for all $\pi$. In particular, $M_q$ is both
linear and continuous.
\end{remark}

\subsection{Band misfit and noise-aware penalty}

We now recall the band misfit at a quote and introduce the noise-aware band
penalty, whose strength is modulated by the local fog mass outside the band.

\begin{definition}[Band misfit at a quote]
\label{def:dq}
Let $S\in\mathbb{R}^{Q\times G}$ be the fixed sampling operator mapping nodal
values $u\in\mathbb{R}^G$ to model prices at quote locations. For a given
nodal surface $u(u_I)$, the model price at quote $q$ is
\[
C_q(u) := (S u)_q.
\]
The corresponding \emph{band violation} is
\[
d_q(u)
:=
\operatorname{dist}\big( C_q(u), [b_q,a_q] \big)
=
\max\{b_q - C_q(u),\ 0,\ C_q(u) - a_q\} \ge 0.
\]
\end{definition}

\noindent
Since $u\mapsto C_q(u)$ is affine and $\operatorname{dist}(\cdot,[b_q,a_q])$ is
the pointwise maximum of three affine functions (see
Definition~\ref{def:baseline-band-misfit}), the composition $u\mapsto d_q(u)$
is a convex function on $\mathbb{R}^G$. Therefore $u_I\mapsto d_q(u(u_I))$ is
also convex on $\mathbb{R}^{N_\Omega}$ because $u(u_I)$ depends affinely on
$u_I$.

We now define the noise-aware band penalty, which couples the misfit $d_q(u)$
and the fog mass outside the band $M_q(\pi)$.

\begin{definition}[Fog simplex on a patch]
\label{def:Cpi}
The fog feasible set on $\Omega$ is the probability simplex
\[
\mathcal{C}_\pi(\Omega)
:=
\left\{
\pi\in\mathbb{R}^{N_\Omega n_u} :
\pi_{i,j,k} \ge 0\ \forall (i,j,k)\in\mathcal{L}_\Omega,\quad
\sum_{(i,j)\in\Omega}\sum_{k=1}^{n_u} \pi_{i,j,k} = 1
\right\}.
\]
\end{definition}

\noindent
On $\mathcal{C}_\pi(\Omega)$, the quantity $M_q(\pi)$ defined in
Definition~\ref{def:Mq} satisfies $0\le M_q(\pi)\le 1$ for each $q$.

\begin{definition}[Noise-aware band penalty at a quote]
\label{def:phi-q}
Fix parameters $\lambda_{\mathrm{noise}}\geq0$ and $\varepsilon>0$. For a given
fog $\pi$ and quote $q$, define
\[
\nu_q(\pi)
:=
\varepsilon + M_q(\pi)
=
\varepsilon + \sum_{k\in\mathcal{K}_q^{\mathrm{out}}} \pi_{i_q,j_q,k}.
\]
Then $\nu_q(\pi) \in [\varepsilon,1+\varepsilon]$ for all $\pi\in\mathcal{C}_\pi(\Omega)$.
Given an interior price vector $u_I\in\mathbb{R}^{N_\Omega}$, with associated
nodal surface $u = u(u_I)$ and band violation $d_q(u)$, the \emph{noise-aware
band penalty} at quote $q$ is
\[
\phi_q(u_I,\pi)
:=
\frac{d_q(u)^2}{\nu_q(\pi)}
+
\lambda_{\mathrm{noise}}\,\nu_q(\pi).
\]
\end{definition}

\noindent
Intuitively, $\nu_q(\pi)$ is a local “noise scale” at quote $q$:
if the fog is almost entirely inside the band, then $M_q(\pi)$ is small and
$\nu_q(\pi)\approx \varepsilon$, so any nonzero violation $d_q(u)>0$ is heavily
penalised by the term $d_q(u)^2/\nu_q(\pi)$. Conversely, if a significant
fraction of the local fog mass lies outside the band, then $M_q(\pi)$ and hence
$\nu_q(\pi)$ are larger, making violations $d_q(u)>0$ cheaper; however, large
$\nu_q(\pi)$ is itself penalised linearly through
$\lambda_{\mathrm{noise}}\nu_q(\pi)$.

\subsection{Convexity of the noise-aware band term}

We now establish joint convexity of the noise-aware band term in its two
arguments $(u_I,\pi)$, which is crucial for the global convexity of the patch
objective.

The key tool is the \emph{perspective} of a convex function.

\begin{definition}[Perspective of a convex function]
\label{def:perspective}
Let $g:\mathbb{R}\to\mathbb{R}$ be a convex function with $g(x)\ge 0$ for all
$x\in\mathbb{R}$. The \emph{perspective} of $g$ is the function
$\tilde g:\mathbb{R}\times(0,\infty)\to\mathbb{R}$ defined by
\[
\tilde g(d,\nu)
:=
\nu\,g\!\left(\frac{d}{\nu}\right),\qquad \nu>0.
\]
\end{definition}

\begin{lemma}[Convexity of the perspective]
\label{lem:perspective-convex}
Let $g:\mathbb{R}\to[0,\infty)$ be convex. Then its perspective
$\tilde g(d,\nu)=\nu g(d/\nu)$ is convex on $\mathbb{R}\times(0,\infty)$.
\end{lemma}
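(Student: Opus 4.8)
The plan is to prove convexity of the perspective $\tilde g(d,\nu) = \nu g(d/\nu)$ on $\mathbb{R} \times (0,\infty)$ by the standard epigraph argument, which is the cleanest route and does not require any smoothness of $g$. First I would recall that a function is convex if and only if its epigraph is a convex set, so it suffices to show that
\[
\operatorname{epi} \tilde g = \{(d,\nu,t) : \nu > 0,\ \nu g(d/\nu) \le t\}
\]
is convex in $\mathbb{R} \times (0,\infty) \times \mathbb{R}$. The key observation is that for $\nu > 0$ the inequality $\nu g(d/\nu) \le t$ is equivalent to $g(d/\nu) \le t/\nu$, which says exactly that the point $(d/\nu, t/\nu)$ lies in $\operatorname{epi} g$. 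Since $g$ is convex, $\operatorname{epi} g$ is a convex subset of $\mathbb{R}^2$, and $\operatorname{epi} \tilde g$ is precisely the image of $(\operatorname{epi} g) \times (0,\infty)$ under the map $(p,s,\nu) \mapsto (\nu p, \nu, \nu s)$, or more directly the set of $(d,\nu,t)$ with $\nu > 0$ and $(d/\nu, t/\nu) \in \operatorname{epi} g$.

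The core computation I would carry out is the direct verification that this set is convex. Take two points $(d_1,\nu_1,t_1)$ and $(d_2,\nu_2,t_2)$ in $\operatorname{epi}\tilde g$ and $\theta \in [0,1]$; set $(d,\nu,t) = \theta(d_1,\nu_1,t_1) + (1-\theta)(d_2,\nu_2,t_2)$, noting $\nu = \theta\nu_1 + (1-\theta)\nu_2 > 0$ since both $\nu_i > 0$. Write $\mu_1 := \theta\nu_1/\nu$ and $\mu_2 := (1-\theta)\nu_2/\nu$, so that $\mu_1,\mu_2 \ge 0$ and $\mu_1 + \mu_2 = 1$. Then $d/\nu = \mu_1 (d_1/\nu_1) + \mu_2 (d_2/\nu_2)$ is a convex combination, and by convexity of $g$,
\[
g\!\left(\frac{d}{\nu}\right)
\le \mu_1\, g\!\left(\frac{d_1}{\nu_1}\right) + \mu_2\, g\!\left(\frac{d_2}{\nu_2}\right)
\le \mu_1 \frac{t_1}{\nu_1} + \mu_2 \frac{t_2}{\nu_2}
= \frac{\theta t_1 + (1-\theta) t_2}{\nu}
= \frac{t}{\nu}.
\]
Multiplying through by $\nu > 0$ gives $\nu g(d/\nu) \le t$, i.e.\ $(d,\nu,t) \in \operatorname{epi}\tilde g$, establishing convexity of the epigraph and hence of $\tilde g$. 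I would remark that nonnegativity of $g$ is not actually needed for this argument — it is imposed in the statement only because it is all that is used downstream — and that the map $(p,s) \mapsto (\nu p, \nu, \nu s)$ is affine in $(p,s)$ for fixed $\nu$, which is the geometric reason the scaling works.

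The only subtlety worth flagging, rather than a genuine obstacle, is bookkeeping around the domain: the perspective is only defined for $\nu > 0$, so one must check the convex combination stays in the open half-space $\{\nu > 0\}$ (it does, as a positive combination of positive numbers) and that the reduction "$\nu g(d/\nu) \le t \iff g(d/\nu) \le t/\nu$" is legitimate — which it is precisely because we may divide by $\nu > 0$. There is no need to invoke any continuity, differentiability, or closedness of $g$; the argument is purely order-theoretic on epigraphs. I expect the write-up to be short, with the weighted-combination identity $d/\nu = \mu_1(d_1/\nu_1) + \mu_2(d_2/\nu_2)$ being the one line that deserves explicit display.
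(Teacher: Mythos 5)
Your epigraph formulation is a repackaging of the paper's direct verification: the key step in both is exactly the reweighted-barycentre identity $d/\nu = \mu_1(d_1/\nu_1)+\mu_2(d_2/\nu_2)$ with $\mu_i = \theta_i \nu_i/\nu$ followed by convexity of $g$, so this is essentially the same argument viewed through the equivalent set-theoretic characterisation of convexity. Your side observation that nonnegativity of $g$ is never actually used in the proof is correct.
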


\begin{proof}
This is a standard result in convex analysis; we recall the argument for
completeness. Let $(d_1,\nu_1)$ and $(d_2,\nu_2)$ be in
$\mathbb{R}\times(0,\infty)$ and let $\theta\in[0,1]$. Set
\[
(d,\nu) := \theta(d_1,\nu_1) + (1-\theta)(d_2,\nu_2)
= (\theta d_1 + (1-\theta)d_2,\ \theta\nu_1+(1-\theta)\nu_2),
\]
with $\nu>0$ by convexity of $(0,\infty)$. Then
\[
\frac{d}{\nu}
=
\frac{\theta\nu_1}{\nu}\,\frac{d_1}{\nu_1}
+
\frac{(1-\theta)\nu_2}{\nu}\,\frac{d_2}{\nu_2},
\]
where the coefficients
\[
\alpha_1 := \frac{\theta\nu_1}{\nu},\qquad
\alpha_2 := \frac{(1-\theta)\nu_2}{\nu}
\]
are nonnegative and satisfy $\alpha_1+\alpha_2=1$. By convexity of $g$,
\[
g\!\left(\frac{d}{\nu}\right)
\le
\alpha_1 g\!\left(\frac{d_1}{\nu_1}\right)
+
\alpha_2 g\!\left(\frac{d_2}{\nu_2}\right).
\]
Multiplying both sides by $\nu>0$ yields
\[
\tilde g(d,\nu)
=
\nu g\!\left(\frac{d}{\nu}\right)
\le
\theta\nu_1 g\!\left(\frac{d_1}{\nu_1}\right)
+
(1-\theta)\nu_2 g\!\left(\frac{d_2}{\nu_2}\right)
=
\theta \tilde g(d_1,\nu_1) + (1-\theta)\tilde g(d_2,\nu_2).
\]
Thus $\tilde g$ is convex on $\mathbb{R}\times(0,\infty)$.
\end{proof}

We now apply this lemma with $g(x)=x^2$.

\begin{prop}[Convexity of the noise-aware band term]
\label{prop:phi-convex}
Let $q$ be a quote on the patch. The map
\[
(u_I,\pi) \mapsto \phi_q(u_I,\pi)
\]
defined in Definition~\ref{def:phi-q} is jointly convex on
$\mathcal{C}_u(\Omega)\times\mathcal{C}_\pi(\Omega)$, where
$\mathcal{C}_u(\Omega)$ and $\mathcal{C}_\pi(\Omega)$ are as in
Definitions~\ref{def:CuOmega} and~\ref{def:Cpi}.
\end{prop}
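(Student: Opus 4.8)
The plan is to split $\phi_q$ into its two summands and handle each separately, the linear term being immediate and the rational term being exactly a perspective in disguise. Write $D(u_I):=d_q(u(u_I))$ and $N(\pi):=\nu_q(\pi)$. By Definition~\ref{def:dq} and the remark immediately after it, $D$ is convex on $\mathcal{C}_u(\Omega)$ (composition of the affine assembly map $u_I\mapsto u(u_I)$ with the convex map $u\mapsto d_q(u)$), and crucially $D(u_I)\ge 0$ everywhere since $d_q$ is a distance. By Remark~\ref{rem:Mq-linear}, $M_q(\pi)=c_q^\top\pi$ is linear, so $N(\pi)=\varepsilon+M_q(\pi)$ is affine in $\pi$ and, on $\mathcal{C}_\pi(\Omega)$ (Definition~\ref{def:Cpi}), takes values in $[\varepsilon,1+\varepsilon]\subset(0,\infty)$. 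The domain $\mathcal{C}_u(\Omega)\times\mathcal{C}_\pi(\Omega)$ is a product of convex sets, hence convex, so it suffices to prove convexity there of $(u_I,\pi)\mapsto D(u_I)^2/N(\pi)$ and of $(u_I,\pi)\mapsto\lambda_{\mathrm{noise}}N(\pi)$.

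The second piece is affine in $\pi$ and constant in $u_I$, hence convex. For the first piece I would introduce the auxiliary function $g(x):=\max\{x,0\}^2$, which is nonnegative and convex on $\mathbb{R}$ — it is $t\mapsto t^2$, convex and nondecreasing on $[0,\infty)$, precomposed with the convex hinge $x\mapsto\max\{x,0\}$, exactly the argument already used in the proof of Lemma~\ref{lem:band-convex}. Its perspective $\tilde g(d,\nu):=\nu\,g(d/\nu)=\max\{d,0\}^2/\nu$ is then jointly convex on $\mathbb{R}\times(0,\infty)$ by Lemma~\ref{lem:perspective-convex}. In addition $\tilde g$ is nondecreasing in its first argument for each fixed $\nu>0$, since $x\mapsto\max\{x,0\}$ is nondecreasing and $t\mapsto t^2/\nu$ is nondecreasing on $[0,\infty)$.

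Now I would invoke the standard vector composition rule for convexity: $\tilde g$ is jointly convex and nondecreasing in the slot occupied by the merely-convex inner function $D$, while the inner function $N$ in the remaining slot is affine, so the composite $(u_I,\pi)\mapsto\tilde g(D(u_I),N(\pi))$ is convex on $\mathcal{C}_u(\Omega)\times\mathcal{C}_\pi(\Omega)$. Because $D(u_I)\ge 0$ throughout, $\max\{D(u_I),0\}=D(u_I)$, so this composite coincides with $D(u_I)^2/N(\pi)=d_q(u(u_I))^2/\nu_q(\pi)$, the first term of $\phi_q$ in Definition~\ref{def:phi-q}. Adding the affine term $\lambda_{\mathrm{noise}}N(\pi)$ and using stability of convexity under finite sums yields joint convexity of $\phi_q$, which is the claim.

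The one delicate point I anticipate is precisely the composition step: the naive rational $d\mapsto d^2/\nu$ is convex but \emph{not} monotone on all of $\mathbb{R}$, so it cannot be composed directly with the merely-convex (not affine) map $D$. Replacing $d^2$ by $\max\{d,0\}^2$ is the clean fix — harmless because $d_q$ is a distance and hence nonnegative, and it restores the monotonicity in the first slot that the composition theorem requires. Everything else (linearity of $M_q$, affineness and strict positivity of $\nu_q$, convexity of $d_q$) is already recorded in the preceding definitions and remarks, so no further work is needed.
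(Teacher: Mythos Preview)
Your proof is correct and follows essentially the same path as the paper's: both establish convexity of $d_q$ and affinity/positivity of $\nu_q$, then use the perspective of a squared function together with monotonicity in the first slot (and $d_q\ge 0$) to handle the composition with the merely-convex inner map $D$. The only cosmetic difference is that the paper keeps $g(x)=x^2$, notes monotonicity of $d\mapsto d^2/\nu$ on $[0,\infty)$, and carries out the convex-combination inequality by hand, whereas you replace $g$ by $\max\{x,0\}^2$ to obtain global monotonicity in $d$ and invoke the standard composition rule as a black box --- a slightly slicker packaging of the same idea.
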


\begin{proof}
We proceed in steps.

\emph{(1) Convexity of $d_q(u_I)$ in $u_I$.}
The map $u_I\mapsto u(u_I)$ is affine by construction of the assembly map
(equation~\eqref{eq:assembly-affine}). The band misfit
$d_q(u) = \operatorname{dist}(C_q(u),[b_q,a_q])$ can be written as
\[
d_q(u)
=
\max\{b_q - C_q(u),\ 0,\ C_q(u) - a_q\},
\]
where $u\mapsto C_q(u)$ is affine. A pointwise maximum of finitely many affine
functions is convex, hence $u\mapsto d_q(u)$ is convex on $\mathbb{R}^G$.
Composing with the affine map $u(u_I)$, we obtain that
\[
u_I \mapsto d_q(u(u_I))
\]
is convex on $\mathbb{R}^{N_\Omega}$ and, in particular, on $\mathcal{C}_u(\Omega)$.

\emph{(2) Affinity and positivity of $\nu_q(\pi)$ in $\pi$.}
By Remark~\ref{rem:Mq-linear}, $M_q(\pi)$ is a linear functional of $\pi$ and
is therefore affine. Adding the constant $\varepsilon>0$, we obtain
\[
\nu_q(\pi) = \varepsilon + M_q(\pi),
\]
which is an affine function of $\pi$. On the simplex $\mathcal{C}_\pi(\Omega)$
we have $M_q(\pi)\ge 0$, hence
\[
\nu_q(\pi) \ge \varepsilon > 0
\]
for all $\pi\in\mathcal{C}_\pi(\Omega)$. Thus the pair
$(d_q(u_I),\nu_q(\pi))$ always lies in $\mathbb{R}\times(0,\infty)$ on the
feasible domain.

\emph{(3) Convexity of $(d,\nu)\mapsto \frac{d^2}{\nu}$.}
Consider the function $g:\mathbb{R}\to[0,\infty)$ defined by $g(x)=x^2$.
It is convex and nonnegative. Its perspective is
\[
\tilde g(d,\nu) := \nu g(d/\nu) = \frac{d^2}{\nu},\qquad \nu>0.
\]
By Lemma~\ref{lem:perspective-convex}, $\tilde g$ is convex on
$\mathbb{R}\times(0,\infty)$. Hence the map
\[
(d,\nu) \mapsto \frac{d^2}{\nu}
\]
is convex on $\mathbb{R}\times(0,\infty)$.

(4) Convexity of $(d,\nu)\mapsto \frac{d^2}{\nu}+\lambda_{\mathrm{noise}}\nu$. The function
\[
(d,\nu) \mapsto \lambda_{\mathrm{noise}}\nu
\]
is affine (hence convex) on $\mathbb{R}\times(0,\infty)$. The sum of a convex function and an affine
function is convex, so the map
\[
h(d,\nu)
:=
\frac{d^2}{\nu} + \lambda_{\mathrm{noise}}\nu
\]
is convex on $\mathbb{R}\times(0,\infty)$. Moreover, for each fixed $\nu>0$, the function
$d\mapsto h(d,\nu)$ is nondecreasing on $[0,\infty)$, since
\[
\frac{\partial}{\partial d}\,h(d,\nu) = \frac{2d}{\nu} \ge 0 \quad\text{for all } d\ge 0.
\]

(5) Joint convexity of $\varphi_q(u_I,\pi)$. Let $(u_I^1,\pi^1)$ and $(u_I^2,\pi^2)$ be arbitrary
points in $\mathcal{C}_u(\Omega)\times\mathcal{C}_\pi(\Omega)$ and let $\theta\in[0,1]$. Define
\[
(u_I^\theta,\pi^\theta)
:=
\theta (u_I^1,\pi^1) + (1-\theta)(u_I^2,\pi^2)
\in \mathcal{C}_u(\Omega)\times\mathcal{C}_\pi(\Omega).
\]
For $i=1,2$ set
\[
d_i := d_q(u(u_I^i)),\qquad \nu_i := \nu_q(\pi^i),
\]
and similarly
\[
d_\theta := d_q(u(u_I^\theta)),\qquad \nu_\theta := \nu_q(\pi^\theta).
\]

By construction $d_q(\cdot)$ is a distance to the interval $[b_q,a_q]$, hence $d_q(u)\ge 0$ for all
$u$, and therefore $d_i,d_\theta\ge 0$. From (1), the map $u_I\mapsto d_q(u(u_I))$ is convex, so
\[
d_\theta
= d_q\bigl(u(u_I^\theta)\bigr)
\le \theta\,d_q\bigl(u(u_I^1)\bigr) + (1-\theta)\,d_q\bigl(u(u_I^2)\bigr)
= \theta d_1 + (1-\theta)d_2.
\]
From (2), $\nu_q(\pi)$ is affine in $\pi$, hence
\[
\nu_\theta
= \nu_q(\pi^\theta)
= \theta \nu_q(\pi^1) + (1-\theta)\nu_q(\pi^2)
= \theta \nu_1 + (1-\theta)\nu_2 .
\]

Recall that $\varphi_q(u_I,\pi) = h(d_q(u(u_I)),\nu_q(\pi))$ with $h$ convex on
$\mathbb{R}\times(0,\infty)$ by (4), and that for each $\nu>0$ the map $d\mapsto h(d,\nu)$ is
nondecreasing on $[0,\infty)$. Using $d_\theta\ge 0$ and $d_\theta \le \theta d_1 + (1-\theta)d_2$,
monotonicity in the first argument yields
\[
h(d_\theta,\nu_\theta)
\le h\bigl(\theta d_1 + (1-\theta)d_2,\; \nu_\theta\bigr)
= h\bigl(\theta d_1 + (1-\theta)d_2,\; \theta \nu_1 + (1-\theta)\nu_2\bigr).
\]
The pair on the right-hand side is exactly the convex combination
\[
\bigl(\theta d_1 + (1-\theta)d_2,\; \theta \nu_1 + (1-\theta)\nu_2\bigr)
= \theta (d_1,\nu_1) + (1-\theta)(d_2,\nu_2).
\]
By convexity of $h$ we therefore have
\[
h\bigl(\theta d_1 + (1-\theta)d_2,\; \theta \nu_1 + (1-\theta)\nu_2\bigr)
\le \theta h(d_1,\nu_1) + (1-\theta) h(d_2,\nu_2).
\]
Combining the two inequalities gives
\[
h(d_\theta,\nu_\theta)
\le \theta h(d_1,\nu_1) + (1-\theta) h(d_2,\nu_2).
\]

Rewriting in terms of $\varphi_q$,
\[
\varphi_q(u_I^\theta,\pi^\theta)
= h(d_\theta,\nu_\theta)
\le \theta h(d_1,\nu_1) + (1-\theta) h(d_2,\nu_2)
= \theta \varphi_q(u_I^1,\pi^1) + (1-\theta)\varphi_q(u_I^2,\pi^2).
\]
This is exactly the defining inequality for joint convexity of
$(u_I,\pi)\mapsto \varphi_q(u_I,\pi)$ on $\mathcal{C}_u(\Omega)\times\mathcal{C}_\pi(\Omega)$.
\end{proof}

\begin{remark}[Interpretation of the noise-aware band penalty]
The penalty $\phi_q(u_I,\pi)$ can be seen as an adaptive band penalty whose
effective stiffness is controlled by the fog. When the local fog mass outside
the band is small ($M_q(\pi)\approx 0$, thus $\nu_q(\pi)\approx\varepsilon$),
any nonzero violation $d_q(u)$ incurs a large cost $d_q(u)^2/\nu_q(\pi)$,
forcing the surface $u$ to stay tightly inside the band. When the fog assigns
significant mass to out-of-band price levels ($M_q(\pi)$ large), violations
become cheaper but increase the “noise budget” $\lambda_{\mathrm{noise}}
\nu_q(\pi)$. The optimiser can therefore treat a subset of quotes as noisy
(outliers) by allowing the fog to populate out-of-band regions, but must pay a
linear cost for doing so, while still operating within a globally convex
framework.
\end{remark}

\section{Surface energy and closeness to the baseline}

On each patch $\Omega$ we regularise the surface in two complementary ways:
\begin{enumerate}[label=(\roman*)]
    \item by penalising roughness of the implied risk-neutral density in a patch-level influence region,
    \item by penalising deviations from the baseline nodal values on $\Omega$.
\end{enumerate}
Both terms are quadratic in the interior vector $u_I$ and yield convex contributions to the patch objective.

\subsection{Discrete density operator and patch restriction}

Recall that the (continuum) risk-neutral density associated with the forward-
discounted call surface $C_f(K,\tau)$ is
\[
\rho(K,\tau) := \partial_{KK} C_f(K,\tau).
\]
On the nodal grid $\mathcal{G}$, and on any additional grid used to represent
densities (for instance a collocation grid in $(K,\tau)$), the Breeden-Litzenberger
relation is implemented by a fixed linear operator that maps nodal prices to
discretised densities.

\begin{assump}[Global discrete density operator]\label{ass:density-operator}
There exists a finite set of density evaluation points
\[
\mathcal{G}_\rho = \{(K_r,\tau_r)\}_{r=1}^{N_\rho},
\]
and a matrix $D_\rho \in \mathbb{R}^{N_\rho\times G}$ such that, for every
nodal vector $u\in\mathbb{R}^G$, the vector
\[
\rho(u) := D_\rho u \in \mathbb{R}^{N_\rho}
\]
represents the discrete risk-neutral density evaluated at the points in
$\mathcal{G}_\rho$.
\end{assump}

We are interested only in those density points that are influenced by the patch
$\Omega$, namely points whose density values depend (possibly together with
off-patch values) on at least one interior node in $\Omega$.

\begin{definition}[Patch influence region in density space]\label{def:rho-influence}
Let $P_\Omega\in\mathbb{R}^{G\times N_\Omega}$ be the patch assembly matrix
from~\eqref{eq:assembly-affine}, which injects an interior vector
$u_I\in\mathbb{R}^{N_\Omega}$ into the full nodal vector. We write the full
surface as
\[
u(u_I) = P_\Omega u_I + u^{0,\mathrm{off}},
\]
where $u^{0,\mathrm{off}}\in\mathbb{R}^G$ is the off-patch baseline contribution
($u^{0,\mathrm{off}}_{i,j} = 0$ for $(i,j)\in\Omega$ and
$u^{0,\mathrm{off}}_{i,j} = u^0_{i,j}$ otherwise).

Define the \emph{patch influence index set} in density space by
\[
\mathcal{I}_\rho(\Omega)
:=
\Big\{
r\in\{1,\dots,N_\rho\} :
\bigl(D_\rho P_\Omega\bigr)_{r,\cdot} \neq 0
\Big\},
\]
namely those density rows whose value depends on at least one interior node in
$\Omega$. Let $N_{\rho,\Omega} := |\mathcal{I}_\rho(\Omega)|$, and define the
restriction operator $R_\Omega\in\mathbb{R}^{N_{\rho,\Omega}\times N_\rho}$ that
extracts the components with indices in $\mathcal{I}_\rho(\Omega)$.
\end{definition}

Thus, given $u_I$, the vector $R_\Omega \rho(u(u_I))$ collects precisely those
density values that are affected by the patch $\Omega$.

\begin{definition}[Patch-level density map]
\label{def:rho-Omega}
With $D_\rho$ and $R_\Omega$ as above, define
\[
\rho_\Omega(u_I)
:=
R_\Omega \rho(u(u_I))
=
R_\Omega D_\rho (P_\Omega u_I + u^{0,\mathrm{off}}).
\]
We write this as an affine map
\[
\rho_\Omega(u_I) = B_\Omega u_I + \rho_{\mathrm{off}},
\]
where
\[
B_\Omega := R_\Omega D_\rho P_\Omega \in \mathbb{R}^{N_{\rho,\Omega}\times N_\Omega},
\qquad
\rho_{\mathrm{off}} := R_\Omega D_\rho u^{0,\mathrm{off}}\in\mathbb{R}^{N_{\rho,\Omega}}.
\]
\end{definition}

By construction, $\rho_\Omega(u_I)$ collects exactly the density values on the
patch influence region, and depends affinely on the interior vector $u_I$.

\subsection{Surface density energy}

We now penalise rough or irregular density configurations on the patch influence
region via a quadratic form in $\rho_\Omega(u_I)$.

\begin{definition}[Surface density energy]
\label{def:Esurf}
Let $H_\rho\in\mathbb{R}^{N_{\rho,\Omega}\times N_{\rho,\Omega}}$ be a fixed
symmetric positive semidefinite matrix, $H_\rho\succeq 0$. For example, $H_\rho$
may encode a discrete $H^{-1}$-type smoothing operator or a weighted graph
Laplacian on $\mathcal{G}_\rho$ restricted to $\mathcal{I}_\rho(\Omega)$. The
\emph{surface density energy} associated with an interior vector $u_I$ is
\[
E_{\mathrm{surf}}(u_I)
:=
\frac{1}{2}\,\rho_\Omega(u_I)^\top H_\rho\,\rho_\Omega(u_I).
\]
\end{definition}

Because $\rho_\Omega(u_I) = B_\Omega u_I + \rho_{\mathrm{off}}$ is affine in
$u_I$, $E_{\mathrm{surf}}$ is a quadratic functional of $u_I$. We make this
explicit.

\begin{prop}[Quadratic form and convexity of $E_{\mathrm{surf}}$]
\label{prop:Esurf-convex}
The surface density energy can be written as
\[
E_{\mathrm{surf}}(u_I)
=
\frac{1}{2}\,u_I^\top Q_\rho u_I + c_\rho^\top u_I + c_0,
\]
where
\[
Q_\rho := B_\Omega^\top H_\rho B_\Omega \succeq 0,\qquad
c_\rho := B_\Omega^\top H_\rho \rho_{\mathrm{off}},\qquad
c_0 := \frac{1}{2}\,\rho_{\mathrm{off}}^\top H_\rho \rho_{\mathrm{off}}.
\]
In particular, $E_{\mathrm{surf}}$ is a convex quadratic function of $u_I$ with
Hessian $Q_\rho$.
\end{prop}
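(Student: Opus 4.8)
The plan is to substitute the affine parametrisation of the patch density into the definition of $E_{\mathrm{surf}}$, expand the resulting quadratic form, read off the three coefficients, and verify positive semidefiniteness by a one-line argument.

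First I would recall from Definition~\ref{def:rho-Omega} that $\rho_\Omega(u_I) = B_\Omega u_I + \rho_{\mathrm{off}}$ with $B_\Omega := R_\Omega D_\rho P_\Omega$ and $\rho_{\mathrm{off}} := R_\Omega D_\rho u^{0,\mathrm{off}}$, both fixed and independent of $u_I$. Plugging this into $E_{\mathrm{surf}}(u_I) = \tfrac12\,\rho_\Omega(u_I)^\top H_\rho\,\rho_\Omega(u_I)$ and expanding the symmetric bilinear form produces three groups of terms: a pure quadratic $\tfrac12\,u_I^\top B_\Omega^\top H_\rho B_\Omega u_I$; two cross terms $\tfrac12\,u_I^\top B_\Omega^\top H_\rho \rho_{\mathrm{off}} + \tfrac12\,\rho_{\mathrm{off}}^\top H_\rho B_\Omega u_I$; and a constant $\tfrac12\,\rho_{\mathrm{off}}^\top H_\rho \rho_{\mathrm{off}}$. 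Since $H_\rho = H_\rho^\top$, the scalar $u_I^\top B_\Omega^\top H_\rho \rho_{\mathrm{off}}$ equals its own transpose $\rho_{\mathrm{off}}^\top H_\rho B_\Omega u_I$, so the two cross terms coincide and sum to $(B_\Omega^\top H_\rho \rho_{\mathrm{off}})^\top u_I$. Setting $Q_\rho := B_\Omega^\top H_\rho B_\Omega$, $c_\rho := B_\Omega^\top H_\rho \rho_{\mathrm{off}}$, and $c_0 := \tfrac12\,\rho_{\mathrm{off}}^\top H_\rho \rho_{\mathrm{off}}$ then reproduces exactly the claimed expression $E_{\mathrm{surf}}(u_I) = \tfrac12\,u_I^\top Q_\rho u_I + c_\rho^\top u_I + c_0$.

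Next I would verify the two properties of $Q_\rho$. Symmetry is immediate: $(B_\Omega^\top H_\rho B_\Omega)^\top = B_\Omega^\top H_\rho^\top B_\Omega = B_\Omega^\top H_\rho B_\Omega$. Positive semidefiniteness follows from the hypothesis $H_\rho \succeq 0$: for any $z\in\mathbb{R}^{N_\Omega}$, $z^\top Q_\rho z = (B_\Omega z)^\top H_\rho (B_\Omega z) \ge 0$. Finally, a quadratic function with constant, symmetric, positive semidefinite Hessian $Q_\rho$ is convex, which closes the argument; this mirrors the reasoning used for the global regularisers in Lemmas~\ref{lem:ridge-quadratic} and~\ref{lem:dw-quadratic}.

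There is essentially no serious obstacle here. The only step requiring any care is the bookkeeping of the cross terms, where one must invoke symmetry of $H_\rho$ to merge $u_I^\top B_\Omega^\top H_\rho \rho_{\mathrm{off}}$ with its transpose rather than carrying them as two distinct linear contributions; getting the factor of $\tfrac12$ right there is what fixes $c_\rho = B_\Omega^\top H_\rho \rho_{\mathrm{off}}$ without an extraneous $\tfrac12$. All matrices involved ($D_\rho$, $P_\Omega$, $R_\Omega$, $H_\rho$, and hence $B_\Omega$, $Q_\rho$, $c_\rho$, $c_0$) are fixed once the grid, the density stencil from Assumption~\ref{ass:density-operator}, the patch $\Omega$, and the smoothing operator $H_\rho$ are chosen, so the quadratic form is genuinely data-independent, which is all that is needed downstream for the convexity of the patch objective.
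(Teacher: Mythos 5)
Your proof is correct and takes the same route as the paper: substitute $\rho_\Omega(u_I)=B_\Omega u_I+\rho_{\mathrm{off}}$, expand the quadratic form, read off $Q_\rho, c_\rho, c_0$, and verify $Q_\rho\succeq 0$ from $H_\rho\succeq 0$ via $z^\top Q_\rho z=(B_\Omega z)^\top H_\rho(B_\Omega z)\ge 0$. Your extra remark about merging the two cross terms using symmetry of $H_\rho$ is a welcome detail but does not change the argument.
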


\begin{proof}
Substituting the affine form $\rho_\Omega(u_I) = B_\Omega u_I + \rho_{\mathrm{off}}$
into Definition~\ref{def:Esurf}, we obtain
\[
E_{\mathrm{surf}}(u_I)
=
\frac{1}{2}\,(B_\Omega u_I + \rho_{\mathrm{off}})^\top
H_\rho\,(B_\Omega u_I + \rho_{\mathrm{off}}).
\]
Expanding the quadratic form yields
\[
E_{\mathrm{surf}}(u_I)
=
\frac{1}{2}\,u_I^\top B_\Omega^\top H_\rho B_\Omega u_I
+ u_I^\top B_\Omega^\top H_\rho \rho_{\mathrm{off}}
+ \frac{1}{2}\,\rho_{\mathrm{off}}^\top H_\rho \rho_{\mathrm{off}}.
\]
Identifying
\[
Q_\rho := B_\Omega^\top H_\rho B_\Omega,\quad
c_\rho := B_\Omega^\top H_\rho \rho_{\mathrm{off}},\quad
c_0 := \frac{1}{2}\,\rho_{\mathrm{off}}^\top H_\rho \rho_{\mathrm{off}},
\]
we obtain the claimed quadratic representation. Since $H_\rho\succeq 0$, we have
for any $x\in\mathbb{R}^{N_\Omega}$,
\[
x^\top Q_\rho x
=
x^\top B_\Omega^\top H_\rho B_\Omega x
=
(B_\Omega x)^\top H_\rho (B_\Omega x) \ge 0.
\]
Thus $Q_\rho\succeq 0$, and the Hessian of $E_{\mathrm{surf}}$ with respect to
$u_I$ is positive semidefinite. A quadratic function with positive semidefinite
Hessian is convex, hence $E_{\mathrm{surf}}$ is convex in $u_I$.
\end{proof}

In particular, $E_{\mathrm{surf}}$ penalises interior configurations $u_I$ that
produce “rough” or oscillatory risk-neutral densities in the patch influence
region, with the exact notion of roughness encoded by $H_\rho$.

\subsection{Closeness to the baseline}

We also penalise departures of the patch interior from the baseline nodal
values, in order to avoid gratuitous changes that are not required by the data
and no-arbitrage constraints.

\begin{definition}[Closeness to the baseline]
\label{def:Ecl}
Let $u_I^0\in\mathbb{R}^{N_\Omega}$ be the vector of baseline nodal values on
$\Omega$, extracted from $u^0$, and fix a parameter $\lambda_{\mathrm{cl}}>0$.
The \emph{closeness} (or Tikhonov) term on $\Omega$ is
\[
E_{\mathrm{cl}}(u_I)
:=
\frac{\lambda_{\mathrm{cl}}}{2}\,\|u_I - u_I^0\|_2^2
=
\frac{\lambda_{\mathrm{cl}}}{2}\,(u_I - u_I^0)^\top (u_I - u_I^0).
\]
\end{definition}

This is a standard $\ell^2$-type regulariser that penalises deviations from the
baseline. Its convexity and strict positive definiteness are immediate.

\begin{prop}[Strict convexity of $E_{\mathrm{cl}}$]
\label{prop:Ecl-convex}\label{prop:Ecl-strong-convex}
The functional $E_{\mathrm{cl}}:\mathbb{R}^{N_\Omega}\to\mathbb{R}_+$ is a
strictly convex quadratic function of $u_I$ with Hessian
$\lambda_{\mathrm{cl}} I_{N_\Omega}\succ 0$.
\end{prop}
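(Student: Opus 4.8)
The plan is to treat $E_{\mathrm{cl}}$ directly as a shifted quadratic form and read off its Hessian, then invoke the standard fact that a twice-differentiable function with positive-definite Hessian is strictly convex.

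First I would expand the squared Euclidean norm: writing $u_I - u_I^0$ and using $\|z\|_2^2 = z^\top z$, one gets
\[
E_{\mathrm{cl}}(u_I)
= \frac{\lambda_{\mathrm{cl}}}{2}\,u_I^\top u_I
\;-\; \lambda_{\mathrm{cl}}\,(u_I^0)^\top u_I
\;+\; \frac{\lambda_{\mathrm{cl}}}{2}\,(u_I^0)^\top u_I^0,
\]
which is exactly of the form $\tfrac12 u_I^\top Q u_I + c^\top u_I + \text{const}$ with $Q = \lambda_{\mathrm{cl}} I_{N_\Omega}$, $c = -\lambda_{\mathrm{cl}} u_I^0$, and the constant term independent of $u_I$. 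Hence $E_{\mathrm{cl}}$ is a quadratic function of $u_I$ with (constant) Hessian $\nabla^2 E_{\mathrm{cl}} = \lambda_{\mathrm{cl}} I_{N_\Omega}$.

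Next I would verify positive definiteness: for any $x\in\mathbb{R}^{N_\Omega}$ with $x\neq 0$, $x^\top(\lambda_{\mathrm{cl}} I_{N_\Omega})x = \lambda_{\mathrm{cl}}\|x\|_2^2 > 0$ precisely because $\lambda_{\mathrm{cl}} > 0$. So $\lambda_{\mathrm{cl}} I_{N_\Omega}\succ 0$. Finally, a quadratic with positive-definite Hessian is strictly convex (equivalently, for $\theta\in(0,1)$ and $x\neq y$ the identity $E_{\mathrm{cl}}(\theta x + (1-\theta)y) = \theta E_{\mathrm{cl}}(x) + (1-\theta)E_{\mathrm{cl}}(y) - \tfrac{\lambda_{\mathrm{cl}}}{2}\theta(1-\theta)\|x-y\|_2^2$ shows the strict inequality directly), which gives the claim.

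There is essentially no obstacle here: the only point worth a sentence is that strictness — as opposed to mere convexity — uses $\lambda_{\mathrm{cl}}>0$ rather than $\lambda_{\mathrm{cl}}\ge 0$, so that the Hessian is positive definite and not merely positive semidefinite. I would keep the write-up to a few lines, either via the Hessian characterisation or via the explicit parallelogram-type identity above.
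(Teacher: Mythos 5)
Your proof is correct and follows essentially the same route as the paper: expand the squared norm, read off the constant Hessian $\lambda_{\mathrm{cl}} I_{N_\Omega}$, and conclude strict convexity from positive definiteness using $\lambda_{\mathrm{cl}}>0$. The optional parallelogram-type identity you mention is a nice elementary alternative but is not needed; the paper stops at the Hessian argument.
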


\begin{proof}
Expanding the square, we have
\[
E_{\mathrm{cl}}(u_I)
=
\frac{\lambda_{\mathrm{cl}}}{2}\,(u_I^\top u_I - 2 u_I^\top u_I^0 + u_I^{0\top}u_I^0),
\]
so
\[
E_{\mathrm{cl}}(u_I)
=
\frac{\lambda_{\mathrm{cl}}}{2}\,u_I^\top u_I
- \lambda_{\mathrm{cl}}\,u_I^{0\top} u_I
+ \frac{\lambda_{\mathrm{cl}}}{2}\,u_I^{0\top}u_I^0.
\]
The Hessian with respect to $u_I$ is $\lambda_{\mathrm{cl}} I_{N_\Omega}$,
which is positive definite since $\lambda_{\mathrm{cl}}>0$. A quadratic
functional with positive definite Hessian is strictly convex, so
$E_{\mathrm{cl}}$ is strictly convex on $\mathbb{R}^{N_\Omega}$.
\end{proof}

\begin{remark}[Combined surface regularisation]
Both $E_{\mathrm{surf}}$ and $E_{\mathrm{cl}}$ are convex quadratic functionals
of $u_I$. The combined surface regulariser
\[
u_I \mapsto E_{\mathrm{cl}}(u_I) + \lambda_{\mathrm{surf}} E_{\mathrm{surf}}(u_I),
\qquad \lambda_{\mathrm{surf}}\ge 0,
\]
is therefore convex. If either $\lambda_{\mathrm{cl}}>0$ or the matrix
$Q_\rho = B_\Omega^\top H_\rho B_\Omega$ is positive definite on the relevant
subspace, the combined regulariser is strictly convex, which contributes to
uniqueness of the patch-level minimiser.
\end{remark}

\section{Patch-level post-fit optimisation problem}

We now assemble the various ingredients introduced above into a single
patch-level objective and formulate the post-fit optimisation problem on a
patch $\Omega\subset\mathcal{G}$.

\subsection{Fog feasible set and quote index set}

Recall that on $\Omega$ the fog variables are
\[
\pi = (\pi_{i,j,k})_{(i,j,k)\in\mathcal{L}_\Omega}
\in\mathbb{R}^{N_\Omega n_u},
\]
where $\pi_{i,j,k}$ represents the fog mass at $(m_i,\tau_j,u_k)$ and
$\mathcal{L}_\Omega = \{(i,j,k): (i,j)\in\Omega,\ k=1,\dots,n_u\}$.

\begin{definition}[Fog feasible set on a patch]
\label{def:CpiOmega}
The \emph{fog feasible set} on $\Omega$ is the probability simplex
\[
\mathcal{C}_\pi(\Omega)
:=
\left\{
\pi \in \mathbb{R}^{N_\Omega n_u} :
\pi_{i,j,k} \ge 0\ \forall (i,j,k)\in\mathcal{L}_\Omega,\quad
\sum_{(i,j)\in\Omega}\sum_{k=1}^{n_u} \pi_{i,j,k} = 1
\right\}.
\]
\end{definition}

\begin{lemma}[Geometry of $\mathcal{C}_\pi(\Omega)$]
\label{lem:Cpi-geometry}
The set $\mathcal{C}_\pi(\Omega)$ is a nonempty, compact, convex polytope in
$\mathbb{R}^{N_\Omega n_u}$.
\end{lemma}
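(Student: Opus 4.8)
The plan is to show that $\mathcal{C}_\pi(\Omega)$ is simultaneously nonempty, convex, closed and bounded; compactness in $\mathbb{R}^{N_\Omega n_u}$ then follows from the Heine--Borel theorem, and being a polytope follows from its description as a bounded intersection of finitely many half-spaces and one hyperplane.

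First I would address nonemptiness explicitly: since $N_L := N_\Omega n_u \ge 1$, the uniform vector $\pi^\star$ with every entry equal to $1/N_L$ has all components nonnegative and sums to $1$, so $\pi^\star\in\mathcal{C}_\pi(\Omega)$. Next, convexity: $\mathcal{C}_\pi(\Omega)$ is the intersection of the nonnegative orthant $\{\pi : \pi_\ell \ge 0,\ \ell=1,\dots,N_L\}$ (an intersection of $N_L$ closed half-spaces) with the affine hyperplane $\{\pi : \mathbf{1}^\top\pi = 1\}$. Each half-space is convex and closed (the same two-line arguments --- linear functional, convex combination, limit of a sequence --- used in the proof of Proposition~\ref{prop:Cu-polyhedron}), the hyperplane is convex and closed, and an intersection of convex (resp.\ closed) sets is convex (resp.\ closed), so $\mathcal{C}_\pi(\Omega)$ is a closed convex set.

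Then I would establish boundedness, which is where a little care is needed. For any $\pi\in\mathcal{C}_\pi(\Omega)$ we have $\pi_\ell \ge 0$ for every $\ell$, hence $0 \le \pi_\ell \le \sum_{m=1}^{N_L}\pi_m = 1$; therefore $\|\pi\|_\infty \le 1$ and in particular $\|\pi\|_2 \le \sqrt{N_L}$, so $\mathcal{C}_\pi(\Omega)$ is contained in a ball of radius $\sqrt{N_L}$. Combining closedness and boundedness, Heine--Borel gives compactness. Finally, a polytope is by definition a bounded polyhedron, i.e.\ a bounded intersection of finitely many closed half-spaces; writing the equality $\mathbf{1}^\top\pi = 1$ as the pair of inequalities $\mathbf{1}^\top\pi \le 1$ and $-\mathbf{1}^\top\pi \le -1$, $\mathcal{C}_\pi(\Omega)$ is exactly such a set ($N_L + 2$ half-spaces in total), so it is a polytope.

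I do not expect any genuine obstacle here: the only mild subtlety is making sure boundedness is deduced from nonnegativity \emph{together with} the normalisation constraint (neither alone suffices), and correctly recalling that ``polytope'' means ``bounded polyhedron'' so that the finite half-space description plus the boundedness bound is what must be exhibited. Everything else is the same routine convex-geometry bookkeeping already carried out for $\mathcal{C}_u(\Omega)$ in Proposition~\ref{prop:Cu-polyhedron}, specialised to the standard simplex.
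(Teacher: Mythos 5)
Your proof is correct and follows essentially the same route as the paper's: exhibiting the uniform vector for nonemptiness, recognising the set as a finite intersection of closed half-spaces with a hyperplane for closed convexity, using nonnegativity plus the unit-mass constraint to bound every coordinate in $[0,1]$ for compactness via Heine--Borel, and concluding polytope status from being a bounded polyhedron. Your version simply spells out a few of these steps (the $\ell_\infty$/$\ell_2$ bound, the equality-as-two-inequalities rewrite) in slightly more detail than the paper does.
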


\begin{proof}
Nonemptiness is obvious, for example the uniform vector
$\pi_{i,j,k} = 1/(N_\Omega n_u)$ belongs to $\mathcal{C}_\pi(\Omega)$. The
constraints defining $\mathcal{C}_\pi(\Omega)$ consist of finitely many linear
equalities and inequalities:
\[
\pi_{i,j,k} \ge 0,\quad
\sum_{(i,j)\in\Omega}\sum_{k=1}^{n_u} \pi_{i,j,k} = 1.
\]
Thus $\mathcal{C}_\pi(\Omega)$ is the intersection of a finite number of closed
half-spaces (one per inequality) and a hyperplane (the equality constraint), so
it is a closed convex polyhedron. The additional equality fixing the total mass
to $1$ implies boundedness: all coordinates are nonnegative and sum to $1$, so
$0 \le \pi_{i,j,k} \le 1$ for every $(i,j,k)$. A closed and bounded subset of
$\mathbb{R}^{N_\Omega n_u}$ is compact. Being a bounded polyhedron, it is in
fact a polytope.
\end{proof}

We also need to know which quotes interact with a given patch.

\begin{definition}[Quote index set attached to a patch]
\label{def:QOmega}
Let $S\in\mathbb{R}^{Q\times G}$ be the sampling operator mapping nodal prices
to quote locations, so that $C_q(u) = (Su)_q$ for $q=1,\dots,Q$. Each row of
$S$ has finite support (the interpolation stencil of that quote). We define
\[
Q_\Omega
:=
\Bigl\{
q \in \{1,\dots,Q\} :
\text{the stencil of row $q$ of $S$ intersects }\Omega
\Bigr\}.
\]
Equivalently, $q\in Q_\Omega$ if and only if there exists $(i,j)\in\Omega$ such
that the nodal value $u_{i,j}$ enters $(Su)_q$ with nonzero weight.
\end{definition}

Thus $Q_\Omega$ collects exactly those quotes whose model prices depend on at
least one interior node of $\Omega$; the remaining quotes are insensitive to
changes on $\Omega$ and need not appear in the patch objective.

\subsection{Patch energy functional}

We now define the patch-level energy as a sum of four components:
noise-aware band penalties, closeness-to-baseline, density regularisation, and
Hamiltonian energy of the fog.

Recall:
\begin{itemize}
  \item $\phi_q(u_I,\pi)$ is the noise-aware band penalty at quote $q$, defined
        in Definition~\ref{def:phi-q}, with $u=u(u_I)$ assembled from $u_I$;
  \item $E_{\mathrm{cl}}(u_I)$ is the closeness energy from
        Definition~\ref{def:Ecl};
  \item $E_{\mathrm{surf}}(u_I)$ is the surface density energy from
        Definition~\ref{def:Esurf};
  \item $\mathcal{E}_{\mathrm{Ham}}(\pi)$ is the Hamiltonian energy of the fog
        from Definition~\ref{def:hamiltonian-energy}.
\end{itemize}

\begin{definition}[Patch energy functional]
\label{def:J-Omega}
Fix nonnegative weights
\[
\lambda_{\mathrm{noise}},\ \lambda_{\mathrm{surf}},\ \lambda_\pi \ \ge 0,
\]
and recall that $\lambda_{\mathrm{cl}}>0$ is part of the definition of
$E_{\mathrm{cl}}$. For an interior vector $u_I\in\mathcal{C}_u(\Omega)$ and a fog
$\pi\in\mathcal{C}_\pi(\Omega)$, the \emph{patch energy} is the functional
$J_\Omega:\mathcal{C}_u(\Omega)\times\mathcal{C}_\pi(\Omega)\to\mathbb{R}$ defined
by
\begin{equation}
\label{eq:J-omega}
\begin{aligned}
J_\Omega(u_I,\pi)
&:=
\sum_{q\in Q_\Omega} \phi_q(u_I,\pi)
+ E_{\mathrm{cl}}(u_I)
+ \lambda_{\mathrm{surf}}\,E_{\mathrm{surf}}(u_I)
+ \lambda_\pi\,\mathcal{E}_{\mathrm{Ham}}(\pi)\\[3pt]
&=
\sum_{q\in Q_\Omega}
\left(
\frac{d_q(u)^2}{\varepsilon + M_q(\pi)}
+ \lambda_{\mathrm{noise}}\bigl(\varepsilon + M_q(\pi)\bigr)
\right)\\
&\quad
+ \frac{\lambda_{\mathrm{cl}}}{2}\,\|u_I-u_I^0\|_2^2
+ \frac{\lambda_{\mathrm{surf}}}{2}\,\rho_\Omega(u_I)^\top H_\rho\,\rho_\Omega(u_I)
+ \frac{\lambda_\pi}{2}\,\pi^\top H_\pi\,\pi,
\end{aligned}
\end{equation}
where $u = u(u_I)$, $d_q(u)$ is the band violation at quote $q$
(Definition~\ref{def:dq}) and $M_q(\pi)$ is the fog mass outside the band at
quote $q$ (Definition~\ref{def:Mq}).
\end{definition}

\begin{remark}[Well-definedness and continuity of $J_\Omega$]
Since $\pi\in\mathcal{C}_\pi(\Omega)$ implies $M_q(\pi)\ge 0$ and $\varepsilon>0$
by construction, the denominators $\varepsilon + M_q(\pi)$ are bounded away from
zero and all terms in~\eqref{eq:J-omega} are finite. Each component
$\phi_q(u_I,\pi)$, $E_{\mathrm{cl}}(u_I)$, $E_{\mathrm{surf}}(u_I)$, and
$\mathcal{E}_{\mathrm{Ham}}(\pi)$ is continuous in its arguments. Therefore
$J_\Omega$ is a continuous real-valued function on
$\mathcal{C}_u(\Omega)\times\mathcal{C}_\pi(\Omega)$.
\end{remark}

We now establish the basic convexity property of $J_\Omega$.

\begin{prop}[Convexity of the patch energy]
\label{prop:J-omega-convex}
For any feasible patch $\Omega$ (i.e.\ $\mathcal{C}_u(\Omega)\neq\emptyset$), the
patch energy functional $J_\Omega$ is jointly convex in $(u_I,\pi)$ on
$\mathcal{C}_u(\Omega)\times\mathcal{C}_\pi(\Omega)$.
\end{prop}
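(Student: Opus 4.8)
The plan is to exhibit $J_\Omega$ as a finite nonnegative linear combination of functions that are each already known to be convex on $\mathcal{C}_u(\Omega)\times\mathcal{C}_\pi(\Omega)$, and then to invoke the standard fact that convexity is preserved under addition and multiplication by nonnegative scalars over a convex domain. No new estimate is required; the work is purely bookkeeping against the results proved earlier in this chapter.

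First I would check that the domain is convex and that $J_\Omega$ is well defined on it. By Proposition~\ref{prop:Cu-polyhedron} the set $\mathcal{C}_u(\Omega)$ is a closed convex polyhedron, and by Lemma~\ref{lem:Cpi-geometry} the set $\mathcal{C}_\pi(\Omega)$ is a compact convex polytope; hence their Cartesian product is a convex subset of $\R^{N_\Omega}\times\R^{N_\Omega n_u}$. Moreover $\varepsilon + M_q(\pi)\ge\varepsilon>0$ for every $\pi\in\mathcal{C}_\pi(\Omega)$ (Definition~\ref{def:phi-q} together with Remark~\ref{rem:Mq-linear}), so every ratio $d_q(u)^2/(\varepsilon+M_q(\pi))$ appearing in~\eqref{eq:J-omega} is finite, and $J_\Omega$ is a well-defined real-valued function on the product.

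Next I would treat the four groups of terms in~\eqref{eq:J-omega} one at a time. (i) For each $q\in Q_\Omega$, the map $(u_I,\pi)\mapsto\phi_q(u_I,\pi)$ is jointly convex on $\mathcal{C}_u(\Omega)\times\mathcal{C}_\pi(\Omega)$ by Proposition~\ref{prop:phi-convex}; since $Q_\Omega$ is finite, the sum $\sum_{q\in Q_\Omega}\phi_q$ is jointly convex. (ii) By Proposition~\ref{prop:Ecl-convex}, $E_{\mathrm{cl}}$ is (strictly) convex in $u_I$; regarded as a function of the pair $(u_I,\pi)$ it is independent of $\pi$, hence still convex on the product. (iii) By Proposition~\ref{prop:Esurf-convex}, $E_{\mathrm{surf}}$ is a convex quadratic in $u_I$, so $\lambda_{\mathrm{surf}}\,E_{\mathrm{surf}}$ with $\lambda_{\mathrm{surf}}\ge 0$ is convex in $u_I$ and, as in (ii), convex in $(u_I,\pi)$. (iv) By Proposition~\ref{prop:Hpi-psd}, $H_\pi\succeq 0$, so $\mathcal{E}_{\mathrm{Ham}}(\pi)=\tfrac12\pi^\top H_\pi\pi$ is a convex quadratic in $\pi$, and $\lambda_\pi\,\mathcal{E}_{\mathrm{Ham}}$ with $\lambda_\pi\ge 0$ is convex in $\pi$ and hence in $(u_I,\pi)$. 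Adding the four jointly convex pieces shows that $J_\Omega$ is jointly convex on $\mathcal{C}_u(\Omega)\times\mathcal{C}_\pi(\Omega)$.

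The only points requiring any care — and they are minor — are the well-definedness of the ratios on the feasible set (handled by the uniform lower bound $\varepsilon$ on the denominators) and the elementary observation that a function depending only on one block of variables, convex in that block, is convex jointly. There is no genuine obstacle here: the substantive convexity content has already been established in Propositions~\ref{prop:phi-convex}, \ref{prop:Ecl-convex}, \ref{prop:Esurf-convex} and~\ref{prop:Hpi-psd}, and the hypotheses $\lambda_{\mathrm{noise}},\lambda_{\mathrm{surf}},\lambda_\pi\ge 0$ and $\lambda_{\mathrm{cl}}>0$ are exactly what guarantees that every coefficient in the combination is nonnegative.
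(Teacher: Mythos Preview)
Your proof is correct and follows essentially the same approach as the paper's: both decompose $J_\Omega$ into the noise-aware band terms, the two surface regularisers, and the Hamiltonian energy, invoke Propositions~\ref{prop:phi-convex}, \ref{prop:Ecl-convex}, \ref{prop:Esurf-convex} and~\ref{prop:Hpi-psd} for the convexity of each piece, and conclude by closure under nonnegative sums. Your version is slightly more careful in explicitly verifying convexity of the domain and well-definedness of the ratios, but the argument is the same.
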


\begin{proof}
By Proposition~\ref{prop:phi-convex}, each individual noise-aware band term
$\phi_q(u_I,\pi)$ is jointly convex in $(u_I,\pi)$ on
$\mathcal{C}_u(\Omega)\times\mathcal{C}_\pi(\Omega)$. By
Proposition~\ref{prop:Esurf-convex}, $E_{\mathrm{surf}}(u_I)$ is a convex
quadratic functional of $u_I$. By Proposition~\ref{prop:Ecl-strong-convex},
$E_{\mathrm{cl}}(u_I)$ is a strictly convex quadratic functional of $u_I$
(hence convex). By Proposition~\ref{prop:Hpi-psd}, $\mathcal{E}_{\mathrm{Ham}}(\pi)$
is a convex quadratic functional of $\pi$.

Multiplying convex functionals by nonnegative scalars preserves convexity, and
summing finitely many convex functionals yields a convex functional. Therefore,
for each $(u_I,\pi)$ in the convex set
$\mathcal{C}_u(\Omega)\times\mathcal{C}_\pi(\Omega)$, the map
\[
(u_I,\pi)\mapsto J_\Omega(u_I,\pi)
\]
is convex. This proves the claim.
\end{proof}

\subsection{Patch-level post-fit problem: existence and uniqueness}

We can now formulate the patch-level convex optimisation problem.

\begin{definition}[Patch-level post-fit problem]
\label{def:patch-problem}
Let $\Omega\subset\mathcal{G}$ be a feasible patch (i.e.\
$\mathcal{C}_u(\Omega)\neq\emptyset$). The \emph{patch-level post-fit problem}
is the constrained optimisation problem
\[
\min\bigl\{ J_\Omega(u_I,\pi) :
u_I\in\mathcal{C}_u(\Omega),\ \pi\in\mathcal{C}_\pi(\Omega)\bigr\}.
\]
Any pair $(u_I^\star,\pi^\star)\in\mathcal{C}_u(\Omega)\times\mathcal{C}_\pi(\Omega)$
achieving this minimum is called a \emph{patch minimiser}.
\end{definition}

We now show that at least one minimiser exists under our standing assumptions.

\begin{prop}[Existence of patch-level minimisers]
\label{prop:existence}
Assume that $\mathcal{C}_u(\Omega)\neq\emptyset$ and that $\lambda_{\mathrm{cl}}>0$
(as in Definition~\ref{def:Ecl}). Then the patch problem admits at least one
minimiser $(u_I^\star,\pi^\star)$.
\end{prop}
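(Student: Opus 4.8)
The plan is to invoke the direct method of the calculus of variations: show the infimum is a finite real number, produce a minimising sequence, extract a convergent subsequence using coercivity in $u_I$ and compactness in $\pi$, and pass to the limit by continuity. First I would record that the feasible domain $\mathcal{C}_u(\Omega)\times\mathcal{C}_\pi(\Omega)$ is nonempty — by hypothesis $\mathcal{C}_u(\Omega)\neq\emptyset$, and $\mathcal{C}_\pi(\Omega)$ is the nonempty compact polytope of Lemma~\ref{lem:Cpi-geometry} — and closed, since $\mathcal{C}_u(\Omega)$ is closed by Proposition~\ref{prop:Cu-polyhedron} and $\mathcal{C}_\pi(\Omega)$ is closed by Lemma~\ref{lem:Cpi-geometry}. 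Next I would observe that $J_\Omega\ge 0$ on this set: on $\mathcal{C}_\pi(\Omega)$ one has $\nu_q(\pi)=\varepsilon+M_q(\pi)\ge\varepsilon>0$, so each noise-aware band term $\phi_q(u_I,\pi)=d_q(u)^2/\nu_q(\pi)+\lambda_{\mathrm{noise}}\nu_q(\pi)$ is a sum of a nonnegative ratio and a nonnegative linear term; and $E_{\mathrm{cl}}$, $\lambda_{\mathrm{surf}}E_{\mathrm{surf}}$, $\lambda_\pi\mathcal{E}_{\mathrm{Ham}}$ are nonnegative by Propositions~\ref{prop:Ecl-strong-convex},~\ref{prop:Esurf-convex} and~\ref{prop:Hpi-psd}. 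Hence $J_\star:=\inf J_\Omega$ lies in $[0,\infty)$.

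The crucial step is coercivity in the interior price vector, which comes from the Tikhonov term. Take a minimising sequence $(u_I^{(n)},\pi^{(n)})$ in $\mathcal{C}_u(\Omega)\times\mathcal{C}_\pi(\Omega)$ with $J_\Omega(u_I^{(n)},\pi^{(n)})\to J_\star$; without loss of generality $J_\Omega(u_I^{(n)},\pi^{(n)})\le J_\star+1$ for all $n$. Discarding the remaining nonnegative terms,
\[
\frac{\lambda_{\mathrm{cl}}}{2}\,\|u_I^{(n)}-u_I^0\|_2^2 \;\le\; J_\Omega(u_I^{(n)},\pi^{(n)}) \;\le\; J_\star+1 ,
\]
so $\{u_I^{(n)}\}$ is bounded in $\mathbb{R}^{N_\Omega}$ (this is where $\lambda_{\mathrm{cl}}>0$ is used). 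Since $\{\pi^{(n)}\}$ already lies in the compact set $\mathcal{C}_\pi(\Omega)$, Bolzano–Weierstrass yields a subsequence along which $u_I^{(n)}\to u_I^\star$ and $\pi^{(n)}\to\pi^\star$. Closedness of $\mathcal{C}_u(\Omega)$ gives $u_I^\star\in\mathcal{C}_u(\Omega)$ and $\pi^\star\in\mathcal{C}_\pi(\Omega)$, so $(u_I^\star,\pi^\star)$ is feasible.

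Finally I would close the argument by continuity: $J_\Omega$ is continuous on $\mathcal{C}_u(\Omega)\times\mathcal{C}_\pi(\Omega)$, because on this set the denominators $\varepsilon+M_q(\pi)$ are bounded below by $\varepsilon$ (so each $\phi_q$ is continuous) and the closeness, surface-density and Hamiltonian terms are continuous quadratics. Therefore $J_\Omega(u_I^\star,\pi^\star)=\lim_n J_\Omega(u_I^{(n)},\pi^{(n)})=J_\star$, so $(u_I^\star,\pi^\star)$ is a minimiser. I would remark that only lower semicontinuity and coercivity are genuinely needed here; joint convexity (Proposition~\ref{prop:J-omega-convex}) is not used for existence but will be needed for the uniqueness statement. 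The main point requiring care — effectively the only obstacle — is that $\mathcal{C}_u(\Omega)$ is in general an unbounded polyhedron, so existence truly hinges on the coercive term $E_{\mathrm{cl}}$; without $\lambda_{\mathrm{cl}}>0$ one would have to argue separately (for instance via positive definiteness of $Q_\rho=B_\Omega^\top H_\rho B_\Omega$ on the relevant subspace together with the band terms, or a recession-cone analysis) that minimising sequences cannot escape to infinity.
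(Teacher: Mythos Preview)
Your proof is correct and follows essentially the same approach as the paper: both argue via coercivity of $E_{\mathrm{cl}}$ (using $\lambda_{\mathrm{cl}}>0$) to bound the $u_I$-component of a minimising sequence, compactness of the simplex $\mathcal{C}_\pi(\Omega)$ to control the $\pi$-component, and continuity of $J_\Omega$ to pass to the limit. The only cosmetic difference is that the paper phrases the compactness step via boundedness and closedness of sublevel sets $\{J_\Omega\le\alpha\}$, whereas you bound the minimising sequence directly; the content is identical.
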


\begin{proof}
The feasible set
\[
\mathcal{F}_\Omega := \mathcal{C}_u(\Omega)\times\mathcal{C}_\pi(\Omega)
\subset \mathbb{R}^{N_\Omega}\times\mathbb{R}^{N_\Omega n_u}
\]
is nonempty by assumption on $\mathcal{C}_u(\Omega)$ and Lemma~\ref{lem:Cpi-geometry}.
By Proposition~\ref{prop:Cu-polyhedron}, $\mathcal{C}_u(\Omega)$ is a closed
convex polyhedron in $\mathbb{R}^{N_\Omega}$ and may be unbounded. By
Lemma~\ref{lem:Cpi-geometry}, $\mathcal{C}_\pi(\Omega)$ is a compact convex
polytope in $\mathbb{R}^{N_\Omega n_u}$. Hence $\mathcal{F}_\Omega$ is closed,
convex, and nonempty, but not necessarily bounded.

To apply the Weierstrass theorem, we consider sublevel sets of $J_\Omega$.
From~\eqref{eq:J-omega}, using $E_{\mathrm{cl}}(u_I) = \frac{\lambda_{\mathrm{cl}}}{2}
\|u_I-u_I^0\|_2^2$ and nonnegativity of all other terms, we have
\[
J_\Omega(u_I,\pi)
\;\ge\;
E_{\mathrm{cl}}(u_I)
=
\frac{\lambda_{\mathrm{cl}}}{2}\,\|u_I - u_I^0\|_2^2.
\]
Let $m := \inf_{\mathcal{F}_\Omega} J_\Omega$ denote the infimum of $J_\Omega$
on the feasible set, which is finite because $J_\Omega\ge 0$ and the baseline
pair $(u_I^0,\pi^{\text{ref}})$ (with any fixed $\pi^{\text{ref}}\in\mathcal{C}_\pi(\Omega)$)
belongs to $\mathcal{F}_\Omega$. For any $\alpha > m$, consider the sublevel set
\[
\mathcal{F}_\Omega(\alpha)
:=
\bigl\{(u_I,\pi)\in\mathcal{F}_\Omega : J_\Omega(u_I,\pi)\le \alpha\bigr\}.
\]

By the inequality above,
\[
\frac{\lambda_{\mathrm{cl}}}{2}\,\|u_I - u_I^0\|_2^2
\le J_\Omega(u_I,\pi)
\le \alpha
\quad\Longrightarrow\quad
\|u_I - u_I^0\|_2^2 \le \frac{2\alpha}{\lambda_{\mathrm{cl}}}.
\]
Thus, for any $(u_I,\pi)\in\mathcal{F}_\Omega(\alpha)$, the interior vector
$u_I$ lies in the closed Euclidean ball of radius
$\sqrt{2\alpha/\lambda_{\mathrm{cl}}}$ centred at $u_I^0$. The fog variable
$\pi$ always lies in the compact set $\mathcal{C}_\pi(\Omega)$ by definition
of $\mathcal{F}_\Omega$. It follows that $\mathcal{F}_\Omega(\alpha)$ is
bounded.

Moreover, $\mathcal{F}_\Omega(\alpha)$ is closed: it is the intersection of the
closed set $\mathcal{F}_\Omega$ with the closed inverse image
$\{(u_I,\pi): J_\Omega(u_I,\pi)\le \alpha\}$ of $(-\infty,\alpha]$ under the
continuous map $(u_I,\pi)\mapsto J_\Omega(u_I,\pi)$. Hence
$\mathcal{F}_\Omega(\alpha)$ is compact.

By construction $m$ is the infimum of $J_\Omega$ over $\mathcal{F}_\Omega$, so
there exists a sequence $(u_I^{(n)},\pi^{(n)})$ in $\mathcal{F}_\Omega$ such
that $J_\Omega(u_I^{(n)},\pi^{(n)})\downarrow m$ as $n\to\infty$. All but
finitely many of these points lie in $\mathcal{F}_\Omega(\alpha)$ for any
fixed $\alpha>m$. By compactness of $\mathcal{F}_\Omega(\alpha)$, the sequence
has a convergent subsequence $(u_I^{(n_k)},\pi^{(n_k)})$ with limit
$(u_I^\star,\pi^\star)\in\mathcal{F}_\Omega(\alpha)\subset\mathcal{F}_\Omega$.
Continuity of $J_\Omega$ implies
\[
J_\Omega(u_I^\star,\pi^\star)
=
\lim_{k\to\infty} J_\Omega(u_I^{(n_k)},\pi^{(n_k)})
= m.
\]
Thus $(u_I^\star,\pi^\star)$ attains the infimum and is a minimiser of
$J_\Omega$ on $\mathcal{F}_\Omega$.
\end{proof}

We now provide sufficient conditions for uniqueness of the patch minimiser.

\begin{prop}[Uniqueness under strict convexity]
\label{prop:uniqueness}
Suppose, in addition to the assumptions of Proposition~\ref{prop:existence}, that:
\begin{enumerate}
  \item The quadratic form in $u_I$ given by
  \[
  u_I \mapsto E_{\mathrm{cl}}(u_I) + \lambda_{\mathrm{surf}} E_{\mathrm{surf}}(u_I)
  \]
  is strictly convex on the affine hull of $\mathcal{C}_u(\Omega)$; equivalently,
  its Hessian $\lambda_{\mathrm{cl}} I_{N_\Omega} + \lambda_{\mathrm{surf}} Q_\Omega$
  is positive definite on the tangent cone of $\mathcal{C}_u(\Omega)$.
  \item $\lambda_\pi>0$ and the Hamiltonian matrix $H_\pi$ is positive definite
  on the affine hull of $\mathcal{C}_\pi(\Omega)$; equivalently, the quadratic
  form $\pi\mapsto \pi^\top H_\pi\pi$ is strictly convex on $\mathcal{C}_\pi(\Omega)$.
\end{enumerate}
Then the minimiser $(u_I^\star,\pi^\star)$ of $J_\Omega$ on
$\mathcal{C}_u(\Omega)\times\mathcal{C}_\pi(\Omega)$ is unique.
\end{prop}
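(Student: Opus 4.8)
The plan is the standard midpoint (strict-convexity) argument, applied term by term to the additive decomposition of $J_\Omega$. Existence of a minimiser is given by Proposition~\ref{prop:existence}, and joint convexity of $J_\Omega$ on the convex set $\mathcal{C}_u(\Omega)\times\mathcal{C}_\pi(\Omega)$ by Proposition~\ref{prop:J-omega-convex}. I would suppose $(u_I^1,\pi^1)$ and $(u_I^2,\pi^2)$ are both minimisers, attaining the common value $m$. Since $\mathcal{C}_u(\Omega)$ and $\mathcal{C}_\pi(\Omega)$ are convex (Proposition~\ref{prop:Cu-polyhedron}, Lemma~\ref{lem:Cpi-geometry}), the midpoint $(\bar u_I,\bar\pi):=\tfrac12\bigl((u_I^1,\pi^1)+(u_I^2,\pi^2)\bigr)$ is feasible, and convexity gives $J_\Omega(\bar u_I,\bar\pi)\le m$; minimality forces equality, $J_\Omega(\bar u_I,\bar\pi)=m$.

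Next I would exploit the structure $J_\Omega = \sum_{q\in Q_\Omega}\phi_q + E_{\mathrm{cl}} + \lambda_{\mathrm{surf}}E_{\mathrm{surf}} + \lambda_\pi\mathcal{E}_{\mathrm{Ham}}$. Each summand is convex on the product domain (jointly so for $\phi_q$ by Proposition~\ref{prop:phi-convex}; in $u_I$ for $E_{\mathrm{cl}}$ and $E_{\mathrm{surf}}$ by Propositions~\ref{prop:Ecl-strong-convex} and~\ref{prop:Esurf-convex}; in $\pi$ for $\mathcal{E}_{\mathrm{Ham}}$ by Proposition~\ref{prop:Hpi-psd}), so at the midpoint each term is at most the average of its two endpoint values. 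Summing these inequalities reproduces $J_\Omega(\bar u_I,\bar\pi)\le m$, which we have just shown is an equality; hence every component inequality must in fact be an equality (a sum can attain the sum of its upper bounds only if each term attains its own bound).

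Finally I would read off equality in the two strictly convex blocks. For the Hamiltonian term, expanding $\tfrac{\lambda_\pi}{2}\bar\pi^\top H_\pi\bar\pi=\tfrac12\bigl(\tfrac{\lambda_\pi}{2}(\pi^1)^\top H_\pi\pi^1+\tfrac{\lambda_\pi}{2}(\pi^2)^\top H_\pi\pi^2\bigr)$ yields $\tfrac{\lambda_\pi}{8}(\pi^1-\pi^2)^\top H_\pi(\pi^1-\pi^2)=0$; since $\pi^1,\pi^2\in\mathcal{C}_\pi(\Omega)$ their difference lies in the tangent space $\{\delta\pi:\mathbf{1}^\top\delta\pi=0\}$ of the simplex's affine hull, on which $H_\pi\succ0$ by hypothesis~(2), and $\lambda_\pi>0$, so $\pi^1=\pi^2$. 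For $E_{\mathrm{cl}}$, whose Hessian $\lambda_{\mathrm{cl}}I_{N_\Omega}$ is positive definite on all of $\mathbb{R}^{N_\Omega}$ (Proposition~\ref{prop:Ecl-strong-convex}), equality at the midpoint gives $\tfrac{\lambda_{\mathrm{cl}}}{8}\|u_I^1-u_I^2\|_2^2=0$, hence $u_I^1=u_I^2$ --- equivalently, the combined strict convexity of $E_{\mathrm{cl}}+\lambda_{\mathrm{surf}}E_{\mathrm{surf}}$ assumed in hypothesis~(1) delivers the same conclusion. Thus the two minimisers coincide and uniqueness follows. I do not expect a genuine obstacle here; the only points requiring care are the passage from the global equality to term-by-term equality, and the verification that the difference vectors $\pi^1-\pi^2$ and $u_I^1-u_I^2$ lie in the subspaces on which the respective Hessians are assumed positive definite, which is immediate since both endpoints lie in the corresponding convex feasible sets.
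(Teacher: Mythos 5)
Your proof is correct and follows essentially the same strategy as the paper: both rely on the additive decomposition of $J_\Omega$, the convexity of all blocks, and the strict convexity of $E_{\mathrm{cl}}$ (and, for the fog, of $\lambda_\pi\mathcal{E}_{\mathrm{Ham}}$ on the simplex's tangent space) to pin down the minimiser. Your midpoint/two-minimiser argument is simply the unrolled form of the paper's ``strict convexity $\Rightarrow$ at most one minimiser'' step, and you correctly note that the $E_{\mathrm{cl}}$ Hessian $\lambda_{\mathrm{cl}} I$ already gives strict convexity in $u_I$ on all of $\mathbb{R}^{N_\Omega}$, making hypothesis~(1) effectively automatic once $\lambda_{\mathrm{cl}}>0$ is assumed.
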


\begin{proof}
By assumption (1), the map
\[
u_I \mapsto E_{\mathrm{cl}}(u_I) + \lambda_{\mathrm{surf}} E_{\mathrm{surf}}(u_I)
\]
is strictly convex on $\mathcal{C}_u(\Omega)$. By assumption (2), the map
\[
\pi \mapsto \lambda_\pi\,\mathcal{E}_{\mathrm{Ham}}(\pi)
= \frac{\lambda_\pi}{2}\,\pi^\top H_\pi \pi
\]
is strictly convex on $\mathcal{C}_\pi(\Omega)$. The remaining contribution to
$J_\Omega$ is the sum of noise-aware band terms
\[
(u_I,\pi)\mapsto \sum_{q\in Q_\Omega} \phi_q(u_I,\pi),
\]
which is convex by Proposition~\ref{prop:J-omega-convex} (and does not affect
strict convexity, since adding a convex function to a strictly convex one
preserves strict convexity).

To see that $J_\Omega$ is strictly convex on the product set
$\mathcal{C}_u(\Omega)\times\mathcal{C}_\pi(\Omega)$, let
$(u_I^{(1)},\pi^{(1)})$ and $(u_I^{(2)},\pi^{(2)})$ be two distinct feasible
points, and let $\theta\in(0,1)$. Then at least one of the components $u_I^{(1)}$
and $u_I^{(2)}$ differs, or $\pi^{(1)}$ and $\pi^{(2)}$ differ. If
$u_I^{(1)}\neq u_I^{(2)}$, strict convexity of the $u_I$-quadratic implies
\[
E_{\mathrm{cl}}(\theta u_I^{(1)} + (1-\theta)u_I^{(2)})
+ \lambda_{\mathrm{surf}} E_{\mathrm{surf}}(\theta u_I^{(1)} + (1-\theta)u_I^{(2)})
\]
\[
<
\theta\bigl(E_{\mathrm{cl}}(u_I^{(1)}) + \lambda_{\mathrm{surf}} E_{\mathrm{surf}}(u_I^{(1)})\bigr)
+ (1-\theta)\bigl(E_{\mathrm{cl}}(u_I^{(2)}) + \lambda_{\mathrm{surf}} E_{\mathrm{surf}}(u_I^{(2)})\bigr).
\]

If instead $u_I^{(1)} = u_I^{(2)}$ but $\pi^{(1)}\neq \pi^{(2)}$, strict
convexity of $\lambda_\pi \mathcal{E}_{\mathrm{Ham}}$ on $\mathcal{C}_\pi(\Omega)$
implies a strict inequality in the $\pi$-component. In either case, adding the
convex sum of band terms preserves strict inequality:
\[
J_\Omega\bigl(\theta u_I^{(1)} + (1-\theta)u_I^{(2)},\ \theta \pi^{(1)} + (1-\theta)\pi^{(2)}\bigr)
\]
\[
<\theta J_\Omega(u_I^{(1)},\pi^{(1)}) + (1-\theta) J_\Omega(u_I^{(2)},\pi^{(2)}).
\]
Thus $J_\Omega$ is strictly convex on the convex feasible set
$\mathcal{C}_u(\Omega)\times\mathcal{C}_\pi(\Omega)$. A strictly convex function
on a convex set has at most one minimiser. Combined with existence
(Proposition~\ref{prop:existence}), this implies that the minimiser of $J_\Omega$
is unique.
\end{proof}

\begin{remark}[Non-quadratic but convex structure]
\label{rem:non-QP}
The patch energy $J_\Omega$ is convex but not quadratic in the joint variables
$(u_I,\pi)$. The non-quadratic structure arises from the perspective-type terms
$d_q(u)^2/(\varepsilon+M_q(\pi))$ in the noise-aware band penalties $\phi_q$,
which couple the surface misfit and the fog mass in a nonlinear way. Introducing
additional slack variables to eliminate the perspective structure would break
the natural probabilistic interpretation of $\pi$ and $\nu_q(\pi)$, and is not
pursued here. Consequently, the patch-level post-fit is formulated and solved
as a general convex optimisation problem, rather than as a quadratic program.
\end{remark}

\section{Global post-fit across patches and dates}

We now describe how the patch-level post-fit is assembled into a global
arbitrage-free surface on each date, and state conditions under which global
static no-arbitrage is preserved.

Throughout this section we fix a calendar date $t$ and suppress explicit $t$-dependence
in the notation when no ambiguity arises. All objects (quotes, bands, forwards,
baseline $u^0$, operators $\ell_\alpha,r_\alpha$, etc.) are understood to be
associated with this fixed date.

\subsection{Patch decomposition and compatibility with no-arbitrage stencils}

Recall that the global discrete static no-arbitrage constraints on the nodal
grid $\mathcal{G}$ are encoded by the index set $\mathcal{I}$ and linear
inequalities
\begin{equation}
\label{eq:global-noarb-ineq-again}
\ell_\alpha^\top u \le r_\alpha,\qquad \alpha\in\mathcal{I},
\end{equation}
as in Definition~\ref{def:global-noarb-operators} and
equation~\eqref{eq:Cglobal}. For each $\alpha\in\mathcal{I}$, the \emph{support}
of the stencil is
\[
\mathrm{supp}(\ell_\alpha)
:=
\bigl\{ g \in \{1,\dots,G\} : (\ell_\alpha)_g \neq 0 \bigr\}.
\]
Equivalently, $\mathrm{supp}(\ell_\alpha)$ is the set of nodal indices at which
$u$ enters the $\alpha$-th constraint with nonzero coefficient.

Let $\{\Omega_p\}_{p\in\mathcal{P}}$ be a finite family of pairwise disjoint
patches in $\mathcal{G}$, i.e.
\[
\Omega_p \subset \mathcal{G},\quad
\Omega_p \cap \Omega_{p'} = \emptyset\ \text{for }p\neq p'.
\]
Define their union and complement by
\[
\Omega_{\mathrm{all}}
:=
\bigcup_{p\in\mathcal{P}} \Omega_p,
\qquad
\Omega_{\mathrm{off}}
:=
\mathcal{G}\setminus \Omega_{\mathrm{all}}.
\]

We explicitly assume that the patch decomposition is compatible with the global
no-arbitrage stencils in the following sense.

\begin{assump}[Stencil compatibility of the patch decomposition]
\label{ass:stencil-compatibility}
For every $\alpha\in\mathcal{I}$, the support of $\ell_\alpha$ is either
contained entirely in one patch or entirely outside all patches; that is, for
each $\alpha\in\mathcal{I}$ there exists either:
\begin{itemize}
  \item a patch index $p\in\mathcal{P}$ such that
        $\mathrm{supp}(\ell_\alpha) \subset \Omega_p$, or
  \item no patch index with this property, in which case
        $\mathrm{supp}(\ell_\alpha) \subset \Omega_{\mathrm{off}}$.
\end{itemize}
Equivalently, there is no $\alpha\in\mathcal{I}$ such that
$\mathrm{supp}(\ell_\alpha)$ intersects both $\Omega_p$ and
$\mathcal{G}\setminus\Omega_p$ for some $p$.
\end{assump}

\noindent
In words, no-arbitrage stencils do not “straddle” patch boundaries: each
discrete bound, monotonicity, convexity, or calendar constraint is supported
either entirely on a single patch, or entirely outside the union of patches.
This is a slightly stronger version of the patch feasibility condition discussed
after Definition~\ref{def:patch-feasible}, and is natural in view of the local
construction of patches from the badness field.

\subsection{Global post-fit surface on a fixed date}

For a fixed date $t$, the patch-level post-fit yields, for each patch
$\Omega_p$, a pair $(u_{I,p}^\star,\pi_p^\star)$ solving the patch problem
(Definition~\ref{def:patch-problem}) on that patch, i.e.
\[
(u_{I,p}^\star,\pi_p^\star)
\in \arg\min\bigl\{
J_{\Omega_p}(u_I,\pi) :
u_I\in\mathcal{C}_u(\Omega_p),\ \pi\in\mathcal{C}_\pi(\Omega_p)
\bigr\}.
\]
By construction, $u_{I,p}^\star\in\mathcal{C}_u(\Omega_p)$, so the assembled
surface $u^{(p)} := u(u_{I,p}^\star)$ (defined as in
Section~\ref{sec:patch-assembly}) satisfies all global no-arbitrage inequalities
\eqref{eq:global-noarb-ineq-again} with off-patch nodes fixed to their baseline
values.

We now combine all patch-level interior solutions into a single global nodal
surface $u^\star$ for date $t$.

\begin{definition}[Global post-fit surface on a date]
\label{def:global-post-fit}
Let $u^0\in\mathbb{R}^G$ be the baseline nodal surface for date $t$, and let
$\{\Omega_p\}_{p\in\mathcal{P}}$ be a stencil-compatible patch decomposition
(Assumption~\ref{ass:stencil-compatibility}) with corresponding interior
solutions $\{u_{I,p}^\star\}_{p\in\mathcal{P}}$. The \emph{global post-fit
nodal surface} $u^\star\in\mathbb{R}^G$ for date $t$ is defined componentwise
by
\[
u^\star_{i,j}
:=
\begin{cases}
(u_{I,p}^\star)_{i,j}, & \text{if }(i,j)\in \Omega_p\ \text{for some }p\in\mathcal{P},\\[3pt]
u^0_{i,j}, & \text{if }(i,j)\in \Omega_{\mathrm{off}}.
\end{cases}
\]
Equivalently, $u^\star$ coincides with the patch-level interior solutions on
each $\Omega_p$ and with the baseline on all nodes outside the union of
patches.
\end{definition}

\noindent
We emphasise that the fog fields $\pi_p^\star$ remain patch-local; they are not
assembled into a single global fog, since only $u^\star$ is used in further
pricing and calibration.

\subsection{Global static no-arbitrage and locality}

We now show that under Assumption~\ref{ass:stencil-compatibility}, the global
post-fit surface $u^\star$ is statically no-arbitrage on $\mathcal{G}$, and that
nodes outside the patches are unchanged.

\begin{prop}[Global static no-arbitrage and locality]
\label{prop:global-noarb-local}
Fix a date $t$ and suppose:
\begin{enumerate}[label=(\roman*)]
  \item the patch decomposition $\{\Omega_p\}_{p\in\mathcal{P}}$ satisfies
        Assumption~\ref{ass:stencil-compatibility};
  \item each patch $\Omega_p$ is feasible in the sense of
        Definition~\ref{def:patch-feasible} and admits a patch-level solution
        $(u_{I,p}^\star,\pi_p^\star)$ as in Definition~\ref{def:patch-problem};
  \item the baseline nodal surface $u^0$ is globally statically
        no-arbitrage, i.e.\ $u^0\in\mathcal{C}_{\mathrm{glob}}$.
\end{enumerate}
Let $u^\star$ be the global post-fit surface defined in
Definition~\ref{def:global-post-fit}. Then:
\begin{enumerate}
  \item $u^\star$ is statically no-arbitrage on $\mathcal{G}$, i.e.\
        $u^\star\in\mathcal{C}_{\mathrm{glob}}$; and
  \item locality holds: $u^\star_{i,j} = u^0_{i,j}$ for all
        $(i,j)\in\Omega_{\mathrm{off}}$.
\end{enumerate}
\end{prop}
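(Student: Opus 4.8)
The plan is to reduce the whole statement to a single bookkeeping observation: for each no-arbitrage functional $\ell_\alpha$, the number $\ell_\alpha^\top u^\star$ depends only on the entries of $u^\star$ on $\mathrm{supp}(\ell_\alpha)$, and by the stencil-compatibility hypothesis (i) that support lies entirely inside one patch $\Omega_p$ or entirely inside $\Omega_{\mathrm{off}}$. On either set $u^\star$ agrees with a nodal vector already known to be feasible for that constraint, so $\ell_\alpha^\top u^\star \le r_\alpha$ follows with no new estimate.

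First I would dispatch the locality claim (claim 2). By Definition~\ref{def:global-post-fit}, $u^\star_{i,j} = u^0_{i,j}$ for every $(i,j) \in \Omega_{\mathrm{off}} = \mathcal{G}\setminus\bigcup_{p}\Omega_p$, and since the patches are pairwise disjoint each node lies in at most one $\Omega_p$, so $u^\star$ is unambiguously defined; there is nothing further to check here.

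Next, for the no-arbitrage claim (claim 1), I would fix $\alpha \in \mathcal{I}$ and write $\ell_\alpha^\top u^\star = \sum_{g\in\mathrm{supp}(\ell_\alpha)} (\ell_\alpha)_g\, u^\star_g$. By Assumption~\ref{ass:stencil-compatibility} there are two cases. If $\mathrm{supp}(\ell_\alpha)\subset\Omega_{\mathrm{off}}$, then $u^\star_g = u^0_g$ on the support, hence $\ell_\alpha^\top u^\star = \ell_\alpha^\top u^0 \le r_\alpha$ because $u^0\in\mathcal{C}_{\mathrm{glob}}$ by hypothesis (iii). If instead $\mathrm{supp}(\ell_\alpha)\subset\Omega_p$ for some (unique) $p$, then on the support $u^\star_g = (u_{I,p}^\star)_g$, which is precisely the value of the assembled surface $u(u_{I,p}^\star)$ there (it coincides with $u_{I,p}^\star$ on $\Omega_p$); therefore $\ell_\alpha^\top u^\star = \ell_\alpha^\top u(u_{I,p}^\star)$. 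By hypothesis (ii), $u_{I,p}^\star\in\mathcal{C}_u(\Omega_p)$, and by Definition~\ref{def:CuOmega} this is exactly the statement that $u(u_{I,p}^\star)\in\mathcal{C}_{\mathrm{glob}}$, so again $\ell_\alpha^\top u^\star \le r_\alpha$. As $\alpha$ was arbitrary, $u^\star$ satisfies all inequalities defining $\mathcal{C}_{\mathrm{glob}}$ in \eqref{eq:Cglobal}, i.e.\ $u^\star\in\mathcal{C}_{\mathrm{glob}}$; invoking Proposition~\ref{prop:noarb-encoding-equivalence} this is equivalent to $u^\star$ satisfying the primitive discrete no-arbitrage conditions (BND), (MONO), (CONV), (CAL) on $\mathcal{G}$.

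The proof has no genuinely hard step; its entire content is combinatorial. The one point that deserves care is the calendar family: the rows $\ell_g = -(A_{\tau|K})_{g,\cdot}$ can have a spatially wide stencil, so without Assumption~\ref{ass:stencil-compatibility} a patch could move a node feeding a calendar inequality whose centre lies outside that patch, and then neither $u^0$ nor $u(u_{I,p}^\star)$ would control $\ell_g^\top u^\star$. I would therefore make explicit in the write-up exactly where hypothesis (i) is used, and note that it is precisely the condition guaranteeing that every constraint can be ``charged'' to a single region on which $u^\star$ is already known to be feasible.
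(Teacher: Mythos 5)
Your proof is correct and follows essentially the same case analysis as the paper's: locality is immediate from Definition~\ref{def:global-post-fit}, and for each $\alpha\in\mathcal{I}$ the support of $\ell_\alpha$ is charged either to $\Omega_{\mathrm{off}}$ (where $u^\star=u^0\in\mathcal{C}_{\mathrm{glob}}$ applies) or to a single patch $\Omega_p$ (where $u^\star$ agrees with $u(u_{I,p}^\star)$ and feasibility follows from $u_{I,p}^\star\in\mathcal{C}_u(\Omega_p)$). Your closing remark on the calendar rows $\ell_g = -(A_{\tau|K})_{g,\cdot}$ correctly isolates why Assumption~\ref{ass:stencil-compatibility} is not vacuous, which the paper's proof does not spell out but is worth keeping.
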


\begin{proof}
Part (b) (locality) is immediate from the definition of $u^\star$: by
Definition~\ref{def:global-post-fit}, for $(i,j)\in\Omega_{\mathrm{off}}$ we set
$u^\star_{i,j} := u^0_{i,j}$. Hence $u^\star$ agrees with the baseline on all
off-patch nodes.

We now prove (a). It suffices to show that all global no-arbitrage inequalities
\eqref{eq:global-noarb-ineq-again} hold for $u^\star$; that is, we must verify
\[
\ell_\alpha^\top u^\star \le r_\alpha,\qquad \forall\alpha\in\mathcal{I}.
\]

Fix $\alpha\in\mathcal{I}$ and consider the support
$\mathrm{supp}(\ell_\alpha)$. By Assumption~\ref{ass:stencil-compatibility},
there are two mutually exclusive cases:

\medskip
\noindent
\emph{Case 1:} $\mathrm{supp}(\ell_\alpha)\subset \Omega_{\mathrm{off}}$.

In this case, the $\alpha$-th inequality involves only off-patch nodes. On
$\Omega_{\mathrm{off}}$, we have $u^\star_{i,j} = u^0_{i,j}$, so the $\alpha$-th
constraint evaluated at $u^\star$ is identical to that evaluated at $u^0$:
\[
\ell_\alpha^\top u^\star
=
\ell_\alpha^\top u^0.
\]
By assumption (iii), $u^0\in\mathcal{C}_{\mathrm{glob}}$, so
$\ell_\alpha^\top u^0 \le r_\alpha$. Hence
$\ell_\alpha^\top u^\star \le r_\alpha$ in Case~1.

\medskip
\noindent
\emph{Case 2:} There exists $p\in\mathcal{P}$ such that
$\mathrm{supp}(\ell_\alpha)\subset \Omega_p$.

In this case, the $\alpha$-th constraint involves only nodes inside the single
patch $\Omega_p$. Let $u^{(p)}$ denote the full nodal surface corresponding to
the patch-level interior solution $u_{I,p}^\star$, i.e.\ the assembled surface
obtained by replacing $u^0$ by $u_{I,p}^\star$ on $\Omega_p$ and keeping all
other nodes at their baseline values. By definition of $\mathcal{C}_u(\Omega_p)$
(Definition~\ref{def:CuOmega}), we have $u_{I,p}^\star\in\mathcal{C}_u(\Omega_p)$,
hence
\[
\ell_\alpha^\top u^{(p)} \le r_\alpha,\qquad \forall \alpha\in\mathcal{I}.
\]
In particular, for the specific index $\alpha$ under consideration,
\[
\ell_\alpha^\top u^{(p)} \le r_\alpha.
\]

We now compare $u^{(p)}$ and $u^\star$ on the support of $\ell_\alpha$. On
$\Omega_p$, both $u^{(p)}$ and $u^\star$ take the same nodal values, namely
$(u_{I,p}^\star)_{i,j}$; on $\mathcal{G}\setminus\Omega_p$, the $\alpha$-th
constraint has zero coefficients (since
$\mathrm{supp}(\ell_\alpha)\subset \Omega_p$). Therefore
\[
\ell_\alpha^\top u^\star
=
\ell_\alpha^\top u^{(p)}.
\]
Hence
\[
\ell_\alpha^\top u^\star
=
\ell_\alpha^\top u^{(p)} \le r_\alpha.
\]

\medskip
\noindent
In both cases we have shown $\ell_\alpha^\top u^\star \le r_\alpha$. Since
$\alpha\in\mathcal{I}$ was arbitrary, it follows that
$\ell_\alpha^\top u^\star \le r_\alpha$ for all $\alpha\in\mathcal{I}$, i.e.
$u^\star\in\mathcal{C}_{\mathrm{glob}}$. This proves (a).
\end{proof}

\begin{remark}[Independence across dates]
\label{rem:date-independence}
The above argument is purely cross-sectional and is applied separately on each
date $t$. There is no coupling in the static no-arbitrage constraints between
different dates, so the global post-fit surfaces $\{u_t^\star\}_t$ across all
dates are obtained by applying the per-date patch decomposition and assembly
independently. Provided that the assumptions of
Proposition~\ref{prop:global-noarb-local} hold for each date, the family
$\{u_t^\star\}_t$ is statically no-arbitrage on every date, and coincides with
the baseline surfaces outside the union of patches on each date.
\end{remark}

\chapter{Conclusion and outlook}\label{sec:conclusion}

We have presented a convex-programming framework for constructing arbitrage-free
option price surfaces based on a global Chebyshev representation on a warped
log-moneyness domain. By encoding static no-arbitrage inequalities as
linear constraints on a dense collocation grid, and fitting directly to prices via
a coverage-seeking quadratic objective, the method yields a surface that is
both smooth and internally consistent.

On the empirical side, our implementation attains high inside-spread coverage and
low rates of static no-arbitrage violations across a multi-year panel of equity
options. These results suggest that Chebyshev/QP formulations, combined with spectral-geometry
and transport-type regularisers, are a viable and competitive alternative to more
widely used parametric and spline-based approaches, particularly when tight control
over arbitrage metrics is required.

Beyond the global QP backbone, we have formulated a local post-fit layer in which
a discrete fog of risk-neutral densities on $(m,\tau,u)$ is endowed with a
Hamiltonian-type energy. On each problematic patch of the $(m,\tau)$-plane, this
fog is coupled convexly to a nodal price field that remains globally
arbitrage-free. The resulting patch problems are jointly convex in the surface
and fog variables and yield noise-aware corrections that improve local band
coverage in stressed regions while preserving static no-arbitrage and locality.

Several limitations and directions for further work remain. First, our study
focuses on a particular choice of warping, regularisation and grid design; different
markets or underlyings may benefit from alternative configurations, and a more
systematic comparison against SVI-type and deep-learning-based surfaces would be
informative. This will be tackled in a separate paper. Second, we have evaluated performance primarily through static
diagnostics (spread coverage, violation rates, smoothness) and local band metrics
on patches. A natural next step is to examine the impact on hedging performance
and risk measures, for example via delta-hedging backtests or scenario analysis of
risk-neutral densities, both for the baseline QP and for the fog-corrected surface.

Finally, the Hamiltonian fog layer is implemented here in a finite-dimensional,
patch-wise discretisation. From a mathematical standpoint, it suggests a continuous
framework in which a fog density $\pi(m,\tau,u)$ on a three-dimensional manifold
evolves under a Hamiltonian or transport-type metric, with the option surface
appearing as a constrained ``sheet'' inside this geometry. Developing this
continuous theory-including PDE and variational formulations, existence and
uniqueness questions, and connections to optimal transport on the space of
risk-neutral measures is beyond the scope of the present paper and will be
pursued in separate work. We view the discrete constructions in this article as a practical, convex
realisation of that program: the global Chebyshev/QP fit provides a transparent
arbitrage-free backbone, and the patch-wise Hamiltonian fog post-fit offers a
local, noise-aware refinement that remains compatible with production-style
constraints and solvers.

\section*{Acknowledgments}

The author made limited use of an AI language model (ChatGPT by OpenAI) as a writing and brainstorming aid; all models, proofs and numerical results presented are the author's own work and have been independently verified.

\nocite{*}
\bibliographystyle{abbrv} 
\bibliography{sample}

\end{document}